\begin{document}

\begin{frontmatter}

\title{Average-Case Lower Bounds and\\ Satisfiability Algorithms for\\ Small Threshold Circuits\titlefootnote{A conference version of this paper appeared in the Proceedings of the 31st Conference on Computational Complexity, 2016~\cite{CSS-conf}.}}

  \author[chen]{Ruiwen Chen}
\author[santhanam]{Rahul Santhanam}
\author[srinivasan]{Srikanth Srinivasan}

\begin{abstract}
We show average-case lower bounds for explicit Boolean functions against bounded-depth threshold circuits with a superlinear number of wires. We show that for each integer $d > 1$, there is a constant $\varepsilon_d > 0$ such that the Parity function on $n$ bits has correlation at most $n^{-\varepsilon_d}$ with depth-$d$ threshold circuits which have at most $n^{1+\varepsilon_d}$ wires, and the Generalized Andreev function on $n$ bits has correlation at most $\exp(-{n^{\varepsilon_d}})$ with depth-$d$ threshold circuits which have at most $n^{1+\varepsilon_d}$ wires. Previously, only worst-case lower bounds in this setting were known (Impagliazzo, Paturi, and Saks (SICOMP 1997)).

We use our ideas to make progress on several related questions. We give satisfiability algorithms beating brute force search for depth-$d$ threshold circuits with a superlinear number of wires. These are the first such algorithms for depth greater than 2. We also show that Parity on $n$ bits cannot be computed by polynomial-size $\textsf{AC}^0$ circuits with $n^{o(1)}$ \emph{general} threshold gates. Previously no lower bound for Parity in this setting could handle more than $\log(n)$ gates. This result also implies subexponential-time learning algorithms for $\textsf{AC}^0$ with $n^{o(1)}$ threshold gates under the uniform distribution. In addition, we give almost optimal bounds for the number of gates in a depth-$d$ threshold circuit computing Parity on average, and show average-case lower bounds for threshold formulas of \emph{any} depth. 

Our techniques include adaptive random restrictions, anti-concentration and the structural theory of linear threshold functions, and bounded-read Chernoff bounds. 
\end{abstract}

\iffalse %
%
%
%
%
%
%
%
%
%
%
%
%
%
%
%
%
%
%
%
%
%

\tocarxivcategory{cs.CC}

\fi %

\end{frontmatter}

\section{Introduction}

\newcommand{\wtOmega}{\widetilde{\Omega}}

\newcommand{\MODm}{\mathrm{MOD}_m}    %
\newcommand{\MODp}{\mathrm{MOD}_p}    %

\newcommand{\AC}{\cclass{AC}}
\newcommand{\THR}{\mathrm{THR}}
\newcommand{\AND}{\mathrm{AND}}
\newcommand{\NS}{\mathrm{NS}}
\newcommand{\Var}{\mathrm{Var}}
\newcommand{\TAC}{\cclass{TAC}}

\newcommand{\MS}{\mathrm{MS}}
\newcommand{\ip}[2]{\langle #1, #2\rangle}
\newcommand{\mc}[1]{\mathcal{#1}}
\newcommand{\prob}[2]{\mathop{\mathrm{Pr}}_{#1}\left[#2\right]}
\newcommand{\avg}[2]{\mathop{\mathbf{E}}_{#1}\left[#2\right]}
\newcommand{\sgn}{\mathrm{sgn}}
\newcommand{\Corr}{\mathrm{Corr}}
\newcommand{\poly}{\mathrm{poly}}
\newcommand{\mynorm}[1]{\lVert #1\rVert}
\newcommand{\fanin}{\mathrm{fan}\text{-}\mathrm{in}}
\newcommand{\Par}{\mathrm{Par}}

\makeatletter
\newtheorem*{rep@theorem}{\rep@title}
\newcommand{\newreptheorem}[2]{%
\newenvironment{rep#1}[1]{%
 \def\rep@title{#2 \ref{##1}}%
 \begin{rep@theorem}}%
 {\end{rep@theorem}}}
\makeatother

\newreptheorem{theorem}{Theorem}
\newreptheorem{lemma}{Lemma}
\newreptheorem{cor}{Corollary}

One of the main goals in complexity theory is to prove circuit lower bounds for explicit functions in \cclass{P} or \cclass{NP}. We seem quite far from being able to prove that there is a problem in \cclass{NP} that requires superlinear Boolean circuits. We have some understanding, via formulations such as the relativization barrier~\cite{Baker-Gill-Solovay75}, the ``natural proofs'' barrier~\cite{Razborov-Rudich97} and the algebrization barrier~\cite{Aaronson-Wigderson08}, of why current techniques are inadequate for this purpose.

However, the community has had more success proving explicit lower bounds against bounded-depth circuits of various kinds. Thanks to pioneering work of Ajtai~\cite{Ajtai83}, Furst-Saxe-Sipser~\cite{Furst-Saxe-Sipser84}, Yao~\cite{Yao85} and H\r{a}stad~\cite{Hastad}, we know that the Parity and Majority functions require bounded-depth unbounded fan-in circuits of exponential size if only
AND, OR, and NEGATION   %
gates are allowed. Later,  Razborov~\cite{Razborov87} and Smolensky~\cite{Smolensky87} showed that Majority requires exponential size even when $\MODp$ gates are allowed in addition to AND and OR gates, for any prime $p$.\footnote{Recall that for an integer $m\geq 2$, a $\MODm$ gate accepts exactly those inputs whose Hamming weight is not divisible by $m$.} The case of bounded-depth circuits with AND, OR and $\MODm$ gates, where $m$ is a composite, has been open for nearly thirty years now, even though Majority is conjectured to be hard for such circuits. Williams~\cite{Williams11}  
recently made significant progress by showing that
nondeterministic     %
exponential time does not have
super-polynomial size circuits with AND, OR and $\MODm$ gates, for any $m$.

For all the bounded-depth circuit classes above, Majority is either known or conjectured to be hard. How about circuit classes which incorporate majority gates, or more generally, gates that are arbitrary linear threshold functions? Note that such gates generalize AND and OR, though not $\MODp$. In the 90s, there was some work on studying the power of bounded-depth threshold circuits.
Paturi and Saks~\cite{PS} showed that depth-2 circuits with majority gates computing Parity require ${\wtOmega}(n^2)$ wires; there is also a nearly matching upper bound for Parity. Impagliazzo, Paturi and Saks~\cite{IPS} considered bounded-depth threshold circuits with arbitrary linear threshold gates, and showed that for each depth $d$, there is a constant $\varepsilon_d > 0$ such that Parity requires $n^{1+\varepsilon_d}$ wires to compute with depth-$d$ threshold circuits.

These lower bounds are \emph{worst-case} lower bounds---they show that for any sequence of small circuits, there exist inputs of every length on which the circuits fail to compute Parity. There are several reasons to be interested in \emph{average-case} lower bounds under the uniform distribution, or equivalently, in \emph{correlation} upper bounds. For one, average-case lower bounds show that a randomly chosen input is likely to be hard, and thus give a way to generate hard instances efficiently. 
Second, average-case lower bounds are closely tied to pseudorandom generators via the work of
Nisan and Wigderson~\cite{nisan91cca, Nisan-Wigderson94},  %
and are indeed a prerequisite for obtaining pseudorandom generators with non-trivial seed length for a circuit class. Third, recent work on satisfiability algorithms~\cite{Santhanam10, Impagliazzo-Matthews-Paturi12, CKKSZ} indicates that the design and analysis of non-trivial satisfiability algorithms is closely tied to proving average-case lower bounds, though there is no formal connection. Fourth, the seminal
paper by  %
Linial, Mansour, and Nisan~\cite{Linial-Mansour-Nisan93}
shows that average-case lower bounds for Parity against a circuit class are tied to non-trivially learning the circuit class under the uniform distribution.

With these different motivations in mind, we systematically study average-case lower bounds for bounded-depth threshold circuits. Our first main result shows correlation upper bounds for Parity and another explicit function known as the Generalized Andreev function with respect to threshold circuits with few wires. No correlation upper bounds for explicit functions against bounded-depth threshold circuits with superlinear wires was known before our work.

\begin{theorem}
\label{MainThm1}
For each depth $d \geq 1$, there is a constant $\varepsilon_d > 0$ such that for all large enough $n$, no threshold circuit of depth $d$ with at most $n^{1+\varepsilon_d}$ wires agrees with Parity on more than $1/2+1/n^{\varepsilon_d}$ fraction of inputs of length $n$, and with the Generalized Andreev function on more than $1/2+1/2^{n^{\varepsilon_d}}$ fraction of inputs of length $n$.
\end{theorem}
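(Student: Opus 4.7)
The plan is to prove Theorem \ref{MainThm1} by an iterated random-restriction argument that simplifies the circuit one layer at a time, combined with the structural theory of linear threshold functions and bounded-read concentration. Concretely, I would apply to the depth-$d$ threshold circuit $C$ a sequence of random restrictions $\rho_1, \rho_2, \ldots$, each keeping every surviving variable alive independently with probability $p = n^{-\delta_d}$ for a suitably small $\delta_d > 0$. Because $C$ has only $n^{1+\varepsilon_d}$ wires, the average fan-in at the bottom layer is small; so for most bottom gates the restriction kills all but $O(1)$ of their inputs, and the gate collapses to a junta of $O(1)$ variables.

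The harder case is a bottom-level threshold gate $g = \sgn(\sum_i w_i x_i - t)$ that still sees many surviving inputs. Here I would invoke the ``critical-index'' decomposition for LTFs: every such $g$ splits into a \emph{head} of $k = \poly(1/\varepsilon)$ top-weight variables and a \emph{tail} whose weights are so dominated by their $\ell_2$-mass that the restricted linear form is anti-concentrated in intervals of width $\varepsilon \cdot \mynorm{w_{\text{tail}}}_2$. After the restriction, either (i) the head variables are pinned, so $g$ becomes a function of at most $k$ surviving head variables (a junta), or (ii) the tail survives and anti-concentration forces $g$ to be $\varepsilon$-close to an unbiased random coin on the remaining input, so $g$ can be replaced by a constant while introducing only $\varepsilon$ error. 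In either case the bottom layer is replaced by $O(1)$-juntas, which can be absorbed into the second layer, reducing the effective depth by one.

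Controlling the total error across many gates that share variables is where the bounded-read Chernoff bound enters: because the wire budget is $n^{1+\varepsilon_d}$, each input variable appears in only $n^{\varepsilon_d}$ bottom gates on average, so the head-variable events and the tail-error events are each read-bounded, and their cumulative failure probability is $\exp(-n^{\Omega(1)})$. Iterating this reduction $d$ times (with $\varepsilon_d$ shrinking like $c^d$) leaves a constant-depth circuit on $n' = n^{\Omega(1)}$ surviving variables that is either a junta of low arity or a single threshold gate on a small set. Such a residual function has correlation $O(1/\sqrt{n'})$ with $\Par$ on the surviving coordinates by the standard Fourier fact that $\Par$ has all its mass on the top level, and $\exp(-(n')^{\Omega(1)})$ with the Generalized Andreev function by its average-case hardness against small decision trees. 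Averaging over the restriction and undoing the $d$ layers of shrinkage yields the claimed bounds.

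The main obstacle I anticipate is coordinating the LTF structural step with the bounded-read concentration: the heads chosen at different gates are not disjoint, and different gates share variables in uncontrolled ways, so one cannot naively apply independent restrictions to each gate's head. I would handle this by first restricting the \emph{union} of heads across all surviving LTFs at the current layer (showing it is small by a wire-counting argument), and only then using a bounded-read Chernoff bound on the anti-concentration failure events for the tails. Getting the parameter tradeoff so that at every one of the $d$ levels the head-union is tiny, the surviving variable count is still $n^{\Omega(1)}$, and the accumulated error is $n^{-\Omega(\varepsilon_d)}$ is the delicate accounting that drives the choice of $\varepsilon_d$.
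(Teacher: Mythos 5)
Your overall template (iterated random restrictions, critical-index analysis of LTFs, bounded-read Chernoff, a restriction-resilient hard function for the exponential bound) matches the paper's strategy, but the central structural step in your plan is inverted, and this is a genuine gap. In case (ii) you claim that when the tail survives, anti-concentration forces the gate to be close to an unbiased coin, ``so $g$ can be replaced by a constant while introducing only $\varepsilon$ error.'' A gate that is close to a fair coin is exactly the gate you \emph{cannot} replace by a constant: doing so incurs error close to $1/2$. The correct use of Berry--Esseen anti-concentration (the paper's main structural lemma) is the opposite conclusion: after the restriction, the shifted threshold $\theta' = \theta - \ip{w''}{y}$ is unlikely to land in the short interval of width $t\cdot\mynorm{w'}_2$ around $0$, so with probability $1-p^{\Omega(1)}$ the restricted gate is $t$-\emph{imbalanced}, hence $\exp(-\Omega(t^2))$-close to a constant, and only then is the constant-replacement error negligible. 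Your case (i) is similarly off: pinning the head leaves a \emph{regular} LTF on the tail (handled by the anti-concentration step), not a junta; and a gate does not become an $O(1)$-junta merely because its average fan-in is small. Relatedly, ``absorbing $O(1)$-juntas into the second layer'' is not available when the second layer consists of general threshold gates: a threshold of juntas is not a threshold of variables, so the depth does not drop. The paper avoids this by making every small-fan-in gate depend on at most one surviving variable (a greedy independent-set argument) and by fixing \emph{all} variables feeding the gates that remain balanced, paying for this with the wire budget; your plan has no mechanism for eliminating the balanced gates, which is where the whole depth reduction actually happens.

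For the exponential Andreev bound there is a second missing ingredient. The read-$k$ Chernoff bound only gives exponentially small failure probability when the number of gates in a given fan-in class is large; when only a handful of gates at the current layer fail to become imbalanced, no concentration helps, and the paper instead ``guesses'' those few gates, formalized via $(t,d-1,w)$-simple circuits, a disjoint-OR correlation decomposition, and an induction hypothesis strengthened to circuits of the form $C\wedge\bigwedge_{g\in S} g$ with a small set $S$ of threshold functions; the base case then bounds correlation against such ANDs of threshold gates by a description-length/Kolmogorov-complexity argument, not hardness against small decision trees. Without the imbalance direction of the structural lemma, a treatment of the residual balanced gates, and this guess-and-strengthen inductive framework, the accounting you describe does not assemble into a proof.
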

 
\expref{Theorem}{MainThm1} captures the content of \expref{Theorem}{thm:corr-parity} and \expref{Theorem}{thm:genstrong} in Section 4. 

Quite often, ideas behind lower bound results for a circuit class $\mathcal{C}$ result in satisfiability algorithms (running in better than brute-force time) for $\mathcal{C}$, and vice-versa (see, \eg,~\cite{ppz}). Here, we leverage the ideas behind our correlation bounds against Threshold circuits to improved satisfiability algorithms for these circuits. More precisely, we constructivize the ideas of the proof of the strong correlation upper bounds for the Generalized Andreev function to get non-trivial satisfiability algorithms for bounded-depth threshold circuits with few wires. Previously, such algorithms were only known for depth-$2$ circuits, due to Impagliazzo--Paturi--Schneider~\cite{ImpagliazzoPaturiSchneider} and Tamaki (unpublished). 
 
 \begin{theorem}
 \label{MainThm2}
 For each depth $d \geq 1$, there is a constant $\epsilon_d > 0$ such that the satisfiability of depth-$d$ threshold circuits with at most $n^{1+\varepsilon_d}$ wires can be solved in randomized time $2^{n-n^{\varepsilon_d}} \poly(n)$.
 \end{theorem}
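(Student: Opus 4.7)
The plan is to induct on the depth $d$, constructivizing the random-restriction strategy from the proof of the correlation bound for the Generalized Andreev function. The base case $d=2$ is handled by the depth-$2$ threshold satisfiability algorithm of Impagliazzo, Paturi, and Schneider, which already achieves running time $2^{n-n^{\varepsilon_2}}\poly(n)$ on depth-$2$ threshold circuits with $n^{1+\varepsilon_2}$ wires.

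For the inductive step, the first task is to extract from the proof of \expref{Theorem}{MainThm1} an effective \emph{simplification lemma}: there exist constants $\alpha_d>0$ and $\varepsilon_{d-1}>0$ such that for a random restriction $\rho$ that leaves each variable alive independently with probability $p = n^{-\alpha_d}$, with probability at least $1-2^{-\Omega(m^{\varepsilon_{d-1}})}$ over $\rho$ the restricted circuit $C|_\rho$ collapses to a threshold circuit of depth at most $d-1$ on $m \approx pn$ live variables with at most $m^{1+\varepsilon_{d-1}}$ wires. The correlation-bound proof already performs this kind of simplification, using anti-concentration together with bounded-read Chernoff bounds to eliminate high-fan-in gates in the top layer; the crucial observation is that the argument is oblivious to the function being computed, so it yields the same structural statement about $C|_\rho$.

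Given the simplification lemma, the satisfiability algorithm proceeds as follows. Sample a random partition of the $n$ variables into a ``restricted'' set $R$ of size $n-m$ and a ``live'' set $L$ of size $m$. Enumerate all $2^{n-m}$ assignments $\sigma$ to $R$; for each, test whether $C|_\sigma$ satisfies the conclusion of the simplification lemma. In the \emph{good} case, recursively invoke the inductive algorithm on the resulting depth-$(d-1)$ instance in time $2^{m-m^{\varepsilon_{d-1}}}\poly(m)$; in the \emph{bad} case, solve $C|_\sigma$ by brute force in time $2^m \poly(n)$. By the simplification lemma only a $2^{-\Omega(m^{\varepsilon_{d-1}})}$ fraction of $\sigma$ are bad in expectation over the partition, so the two contributions balance to a total expected running time of $2^{n-m^{\varepsilon_{d-1}}}\poly(n)$. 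Taking $m = n^{1-\alpha_d}$ and $\varepsilon_d$ slightly below $(1-\alpha_d)\varepsilon_{d-1}$ yields the claimed bound $2^{n-n^{\varepsilon_d}}\poly(n)$; standard repetition converts expected runtime into a high-probability worst-case guarantee.

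The main obstacle is making the simplification step \emph{constructive}: for each $\sigma$, we must algorithmically detect whether $C|_\sigma$ has actually collapsed to the promised depth-$(d-1)$ form and, if so, produce that explicit representation in polynomial time. This requires executing the structural decomposition of linear threshold functions that underlies the correlation bound---iteratively identifying high-influence wires in the top layer and recognising when the remaining gates have collapsed to constants or to gates of the next layer under $\sigma$. The bounded-read Chernoff analysis that yields the probability bound in the simplification lemma should also yield a polynomial-time per-$\sigma$ test, since it decomposes each top-layer gate into independently analysable pieces whose behaviour under $\sigma$ can be read off in polynomial time.
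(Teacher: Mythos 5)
There is a genuine gap, and it is exactly the point the paper identifies as the main new difficulty: your ``simplification lemma'' does not hold in the exact form you need for satisfiability. The structural lemma underlying the correlation bound (\expref{Lemma}{lem:thresh-gate}) only guarantees that, after the restriction, most threshold gates become highly \emph{imbalanced}, i.e., exponentially \emph{biased}; it does not make them constant. In the correlation argument one may replace such a gate by its majority value because the resulting circuit differs from $C|_\rho$ on an exponentially small fraction of inputs, but for satisfiability a single input on which they differ can be the unique satisfying assignment, so the replaced circuit is not equisatisfiable with $C|_\sigma$. Consequently there is no sound polynomial-time per-$\sigma$ test certifying that ``$C|_\sigma$ has actually collapsed'' to a depth-$(d-1)$ circuit, and your recursion can both miss satisfying assignments and report false positives. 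The paper's proof deals with this by an extra algorithmic ingredient you are missing: an efficient enumeration of the minority-value inputs of each imbalanced gate (\expref{Proposition}{prop:ThrEnum}), each of which is checked directly against the original circuit, together with appending the majority-value constraint of each replaced gate as a conjunct at the top so that no false positives survive to the base case.

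A second, related gap is the shape of the recursion. The depth reduction (\expref{Lemma}{lem:depthredn}) does not yield a depth-$(d-1)$ threshold circuit with $m^{1+\varepsilon_{d-1}}$ wires; it yields a $(n^{\delta_d},d-1,n^{1+\varepsilon_d})$-\emph{simple} circuit, i.e., up to $n^{\delta_d}$ surviving balanced threshold gates whose values must be guessed, each guess contributing threshold constraints conjoined with a depth-$(d-1)$ circuit. These conjoined gates cannot be absorbed into the depth-$(d-1)$ circuit without violating the wire bound or the depth bound, which is why the paper recurses on $\AND\circ\THR$-skew circuits and needs Williams' $\AND\circ\THR$ satisfiability algorithm (\expref{Theorem}{thm:AC0ThrSAT}) at the base, with the constant $B$ tuned so the $2^{n^{\delta_d}}$ branching is absorbed. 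Your recursion on pure depth-$(d-1)$ threshold circuits with an Impagliazzo--Paturi--Schneider depth-$2$ base case does not accommodate this extra conjunction of threshold constraints, so the inductive step as you describe it does not go through.
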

 
 \expref{Theorem}{MainThm2} is re-stated and proved as \expref{Theorem}{thm:sat-algorithm} in \expref{Section}{sec:satalgo}.
 
 Using our ideas, we also show correlation bounds against $\AC^0$ circuits with a few threshold gates, as well as learning algorithms under the uniform distribution for such circuits.
 
 \begin{theorem}
 \label{MainThm3}
For each constant $d$,  there is a constant $\gamma > 0$ such that Parity has correlation at most $1/n^{\Omega(1)}$ with $\AC^0$ circuits of depth $d$ and size at most $n^{\log(n)^{0.4}}$ augmented with at most $n^{\gamma}$ threshold gates. Moreover, the class of $\AC^0$ circuits of size at most $n^{\log(n)^{0.4}}$ augmented with at most $n^{\gamma}$ threshold gates can be learned to constant error under the uniform distribution in time $2^{n^{1/4+o(1)}}$. 
 \end{theorem}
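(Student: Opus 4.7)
The plan is to reduce both claims to statements about pure $\AC^0$ circuits by using random restrictions that wipe out the $n^{\gamma}$ threshold gates; once that is done, H\r{a}stad's classical switching lemma takes over. For the correlation bound, I would apply an adaptive random restriction of the style developed earlier in the paper: given a single linear threshold function $\sgn(\sum_i a_i x_i - \theta)$, the structural dichotomy based on the critical index tells us that either a few ``dominating'' variables carry most of the weight (in which case we adaptively restrict those first and the gate collapses) or the weights are flat enough that Berry--Esseen anti-concentration pushes the partial sum of the fixed variables more than one standard deviation past the threshold, again making the gate constant. Choosing the restriction parameters carefully, each gate is fixed to a constant with probability $1 - 1/n^{\omega(\gamma)}$, and a union bound over the $n^{\gamma}$ threshold gates (for small enough $\gamma$) kills them all. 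After this restriction we are left with an $\AC^0$ circuit of depth $d$ and size $n^{(\log n)^{0.4}}$ acting on $n^{1-o(1)}$ surviving variables and computing Parity on those variables (up to a sign). Since $(\log n^{(\log n)^{0.4}})^{d} = (\log n)^{O(d)}$ is far smaller than $n^{1-o(1)}$, H\r{a}stad's correlation bound against Parity gives correlation $\exp(-n^{\Omega(1/d)})$, far stronger than the required $1/n^{\Omega(1)}$.

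For the learning algorithm I would follow the Linial--Mansour--Nisan template: show that every function $f$ in the circuit class has Fourier mass concentrated on coefficients of degree at most $t = n^{1/4+o(1)}/\log n$, i.e., $\sum_{|S|>t} \hat{f}(S)^2 \leq 1/8$; this yields a learner with constant error running in time $n^{O(t)} = 2^{n^{1/4+o(1)}}$ by estimating each low-degree Fourier coefficient via random sampling. The Fourier concentration itself is extracted from the same adaptive restriction via the LMN lemma: if a random $p$-restriction turns $f$ into a decision tree of depth $D$ with high probability, then $\sum_{|S|>O(D/p)} \hat{f}(S)^2$ is small. Tuning $p$ so that the restriction still kills all threshold gates while H\r{a}stad's switching lemma collapses the surviving $\AC^0$ part to a decision tree of depth $D = (\log n)^{O(d)}$, we obtain $D/p \leq t$, as required.

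The main obstacle is threading the parameters. The restriction rate $p$ must be small enough that a single restriction fixes all $n^{\gamma}$ threshold gates simultaneously (which forces $p \leq n^{-\Omega(1)}$, since threshold gates resist random restrictions more strongly than $\AC^0$ subcircuits do), yet large enough that the composed quantity $D/p$ stays below $n^{1/4}/\log n$ for the learning application. The slightly super-polynomial size budget $n^{(\log n)^{0.4}}$ on the $\AC^0$ part is tuned precisely so that H\r{a}stad's switching lemma still applies at such small $p$: the exponent $0.4$ is comfortably below $1$ so that $(\log s)^d$ remains polylogarithmic in $n$. A further technical subtlety is that under a single random restriction the $n^{\gamma}$ threshold gates are highly correlated (they share variables), so the union-bound-style argument must go through a concentration inequality that respects this dependence---precisely the setting of the bounded-read Chernoff bounds emphasized in the abstract.
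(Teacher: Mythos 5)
Your plan has genuine gaps, and the clearest warning sign is your claimed conclusion: correlation $\exp(-n^{\Omega(1/d)})$ between Parity and this class is impossible, since the class contains a single Majority gate, which already has correlation $\Omega(1/\sqrt{n})$ with $\Par_n$. The mechanism you rely on---that a random restriction leaving $n^{1-o(1)}$ (or even $n^{1-\Omega(1)}$) variables alive fixes each threshold gate to a \emph{constant} with probability $1-1/n^{\omega(\gamma)}$---is false: for Majority on $n$ variables, the restricted gate is again a (shifted) Majority on about $pn$ live variables, and it becomes constant only if the sum of the fixed bits exceeds $pn$ in magnitude, which for $pn \gg \sqrt{n}$ happens with exponentially \emph{small} probability. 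The structural lemma of this paper (\expref{Lemma}{lem:thresh-gate}) only gives that the gate becomes highly \emph{biased}, and only with probability $1-p^{\Omega(1)}$; the $p^{\Omega(1)}$ failure probability is what caps the final bound at $1/n^{\Omega(1)}$. A second, independent problem is that in a $\TAC^0[k]$ circuit the threshold gates need not sit at the bottom layer: their inputs are outputs of $\AC^0$ subcircuits, so the Berry--Esseen/critical-index analysis of a linear form in independent uniform bits does not apply to them directly. The paper's proof (\expref{Theorem}{thm:ns-taco}) handles this by first using the switching lemma to turn the gates just below each threshold gate into depth-$D$ decision trees, so that the threshold gate becomes a degree-$D$ PTF of the inputs, and then invoking Kane's noise-sensitivity bound for low-degree PTFs (\expref{Lemma}{lem:kane-ns}); some substitute for this ingredient is essential and is missing from your argument. (Your worry about dependence among the gates, by contrast, is a non-issue here: a plain union bound over $n^{\gamma}$ gates suffices for a polynomially small error, and the bounded-read Chernoff bounds are needed only for the wire-complexity results.)

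The learning half has the same defect in a sharper form. The LMN route you propose needs the restricted function to collapse to a shallow decision tree with high probability, but a Majority gate on $m$ surviving variables has decision-tree depth $m$, so no choice of $p$ makes $D/p$ anywhere near $n^{1/4}$; the switching lemma simply does not ``take over'' once threshold gates are present. The paper instead converts its noise-sensitivity upper bound directly into Fourier concentration and a low-degree learner via the observation of Klivans, O'Donnell, and Servedio (\expref{Lemma}{lem:KOS}): if $\NS_\gamma(f)\le \varepsilon/3$ then the class is learnable in time $n^{O(1/\gamma)}$, and tuning the noise rate in \expref{Theorem}{thm:ns-taco} gives the $2^{n^{1/4+o(1)}}$ bound. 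So for both halves the correct engine is a noise-sensitivity bound for the whole circuit (Peres/Kane plus switching), not an attempt to eliminate the threshold gates outright by restriction.
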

 
 \expref{Theorem}{MainThm3} captures the content of \expref{Corollary}{cor:taco} and \expref{Theorem}{thm:learning-taco} in Section 7.
 
 Having summarized our main results, we now describe related work and our proof techniques in more detail.
 
\subsection{Related work}

There has been a large body of work proving upper and lower bounds  for constant-depth threshold circuits. Much of this work has focused on the setting of small gate complexity, which seems to be the somewhat easier case to handle. A distinction must also be drawn between work that has focused on the setting where the threshold gates are assumed to be \emph{majority gates} (\ie, the linear function sign representing the gate has integer coefficients that are bounded by a polynomial in the number of variables) and work that focuses on general threshold gates, since analytic tools such as \emph{rational approximation} that are available for majority gates do not work in the setting of general threshold gates.

We discuss the work on wire complexity first, followed by the results on gate complexity.

\paragraph{Wire complexity.} Paturi and Saks~\cite{PS} considered depth-$2$ \emph{majority} circuits and showed an ${\wtOmega}(n^2)$ lower bound on the wire complexity required to compute Parity; this nearly matches the upper bound of $O(n^2)$. They also showed that there exist majority circuits of size $n^{1+\Theta(\varepsilon_1^d)}$ and depth $d$ computing Parity; here $\varepsilon_1 = 2/(1+\sqrt{5})$. Impagliazzo, Paturi, and Saks~\cite{IPS} showed a depth-$d$ lower bound for \emph{general} threshold circuits computing Parity: namely, that any such circuit must have wire complexity at least $n^{1+\varepsilon_2^d}$ where $\varepsilon_2 < \varepsilon_1$.

The proof of~\cite{IPS} proceeds by induction on the depth $d$. The main technical lemma shows that a circuit of depth $d$ can be converted to a depth-$(d-1)$ circuit of the same size by setting some of the input variables. The fixed variables are set in a random fashion, but \emph{not} according to the uniform distribution. In fact, this distribution has statistical distance close to $1$ from the uniform distribution and furthermore, depends on the circuit whose depth is being reduced. Therefore, it is unclear how to use this technique to prove a correlation bound with respect to the uniform distribution. In contrast, we are able to reduce the depth of the circuit by setting variables uniformly at random (though the variables that we restrict are sometimes chosen in a way that depends on the circuit), which yields the correlation bounds we want.

\paragraph{Gate complexity.} The aforementioned
paper by  %
Paturi and Saks~\cite{PS} also proved a near optimal ${\wtOmega}(n)$ lower bound on the number of gates in any depth-$2$ majority circuits computing Parity. 

Siu, Roychowdhury, and Kailath~\cite{Siuetal} considered majority circuits of bounded depth and small gate complexity. They showed that Parity can be computed by depth-$d$ majority circuits with $O(dn^{1/(d-1)})$ gates. Building on the ideas of~\cite{PS}, they also proved a near matching lower bound of ${\wtOmega}(dn^{1/(d-1)})$. Further, they also considered the problem of correlation bounds and showed that there exist depth-$d$ majority circuits with $O(dn^{1/2(d-1)})$ gates that compute Parity almost everywhere and that majority circuits of significantly smaller size have $o(1)$ correlation with Parity (\ie, these circuits cannot compute Parity on more than a $1/2 + o(1)$ fraction of inputs; recall that $1/2$ is trivial since a constant function computes Parity correctly on $1/2$ of its inputs). Impagliazzo, Paturi, and Saks~\cite{IPS} extended the worst-case lower bound to general threshold gates, where they proved a slightly weaker lower bound of $\Omega(n^{1/2(d-1)})$. As discussed above, though, it is unclear how to use their technique to prove a correlation bound.

Beigel~\cite{Beigel} extended the result of Siu et al.\ to the setting of $\AC^0$ augmented with a few majority gates. He showed that any subexponential-sized depth-$d$ $\AC^0$ circuit with significantly less than some $k = n^{\Theta(1/d)}$ majority gates has correlation $o(1)$ with Parity. The techniques of all the above results with the exception of~\cite{IPS} were based on the fact that majority gates can be well-approximated by low-degree \emph{rational functions}. However, this is not true for general threshold functions~\cite{Sherstov} and hence, these techniques do not carry over to the case of general threshold gates.

A lower bound technique that does carry over to the setting of general threshold gates is that of showing that the circuit class has low-degree polynomial \emph{sign-representations}. Aspnes, Beigel, Furst and Rudich~\cite{ABFR} used this idea to prove that $\AC^0$ circuits augmented with a single general threshold \emph{output} gate---we refer to these circuits as $\TAC^0$ circuits as in~\cite{GS}---of subexponential-size and constant-depth have correlation $o(1)$ with Parity. More recently, Podolskii~\cite{Podolskii} used this technique along with a trick due to Beigel~\cite{Beigel} to prove similar bounds for subexponential-sized $\AC^0$ circuits augmented with general threshold gates. However, this trick incurs an exponential blow-up with the number of threshold gates and hence, in the setting of the Parity function, we cannot handle $k > \log n$ threshold gates.

Another technique that has proved useful in handling general threshold gates is \emph{Communication Complexity}, where the basic idea is to show that the circuit---perhaps after restricting some variables---has low communication complexity in some suitably defined communication model. We can then use results from communication complexity to infer lower bounds or correlation bounds. Nisan~\cite{Nisan-threshold} used this technique to prove exponential correlation bounds against general threshold circuits (not necessarily even constant-depth) with $n^{1-\Omega(1)}$ threshold gates. Using Beigel's trick and multiparty communication complexity bounds of Babai, Nisan and Szegedy~\cite{BNS}, Lovett and Srinivasan~\cite{LS} (see also~\cite{RW,HM}) proved exponential correlation bounds against any polynomial-sized $\AC^0$ circuits augmented with up to $n^{\frac{1}{2}-\Omega(1)}$ threshold gates. 

We do not use this technique in our setting for many reasons. Firstly, it cannot be used to prove lower bounds or correlation bounds for functions such as Parity (which has small communication complexity in most models). In particular, these ideas do not yield the noise sensitivity bounds we get here. Even more importantly, it is unclear how to use these techniques to prove any sort of superlinear lower bound on wire complexity, since there are functions that have threshold circuits with linearly many wires, but large communication complexity even after applying restrictions (take a generic read-once depth-$2$ majority formula for example).

Perhaps most closely related to our work is that of Gopalan and Servedio~\cite{GS} who use analytic techniques to prove correlation bounds against $\AC^0$ circuits augmented with a few threshold gates. Their idea is to use noise sensitivity bounds (as we do as well) to obtain correlation bounds for Parity against $\TAC^0$ circuits and then extend these results in the same way as in the
paper by  %
Podolskii~\cite{Podolskii} mentioned above. As a result, though, the result only yields non-trivial bounds when the number of threshold gates is bounded by $\log n$, whereas our result yields correlation bounds for up to $n^{1/2(d-1)}$ threshold gates.

\subsection{Proof techniques}

In recent years, there
has been much work  %
exploring the analytic properties (such as noise sensitivity) of linear threshold functions (LTFs) and their generalizations polynomial threshold functions (PTFs) (\eg, \cite{Ser,OS,DGJSV,HKM,DRST,MZ,Kane}). We show here that these techniques can be used in the context of constant-depth threshold circuits as well. In particular, using these techniques, we prove a qualitative refinement of Peres' theorem (see below) that may be of independent interest.

Our first result (\expref{Theorem}{thm:ns-gates} in \expref{Section}{sec:gates-basic}) is a tight correlation bound for Parity with threshold circuits of depth $d$ and gate complexity much smaller than $n^{1/2(d-1)}$. This generalizes both the results of Siu et al.~\cite{Siuetal}, who proved such a result for \emph{majority} circuits, and Impagliazzo, Paturi, and Saks~\cite{IPS}, who proved a worst-case lower bound of the same order. The proof uses a fundamental theorem of Peres~\cite{peres} on the noise sensitivity of LTFs; Peres' theorem has also been used by Klivans, O'Donnell, and Servedio~\cite{KOS} to obtain learning algorithms for functions of a few threshold gates. We use Peres' theorem to prove a noise sensitivity upper bound on small threshold circuits of constant depth. 

The observation underlying the proof is that the noise sensitivity of a function is exactly the expected variance of the function after applying a suitable random restriction (see also~\cite{ODonnell-hardness-amp}). Seen in this light, Peres' theorem says that, on application of a random restriction, any threshold function becomes quite biased in expectation and hence is well approximated by a constant function. Our analysis of the threshold circuit therefore proceeds by applying a random restriction to the circuit and replacing all the threshold gates at depth $d-1$ by the constants that best approximate them to obtain a circuit of depth $d-1$. A straightforward union bound tells us that the new circuit is a good approximation of the original circuit after the restriction. We repeat this procedure with the depth-$(d-1)$ circuit until the entire circuit becomes a constant, at which point we can say that after a suitable random restriction, the original circuit is well approximated by a constant, which means its variance is small. Hence, the noise sensitivity of the original circuit must be small as well and we are done.

This technique is expanded upon in \expref{Section}{sec:tac0r}, where we use a powerful noise sensitivity upper bound for low degree PTFs due to Kane~\cite{Kane} along with standard switching arguments~\cite{Hastad} to prove similar results for $\AC^0$ circuits augmented with almost $n^{1/2(d-1)}$ threshold gates. This yields \expref{Theorem}{MainThm3} by some standard results.

In \expref{Section}{sec:wires}, we consider the problem of extending the above correlation bounds to threshold circuits with small (slightly superlinear) wire complexity. The above proof breaks down even for depth-$2$ threshold circuits with a superlinear number of wires, since such circuits could have a superlinear number of gates and hence the union bound referred to above is no longer feasible.

In the case of depth-$2$ threshold circuits, we are nevertheless able to use Peres' theorem, along with ideas of~\cite{ABFR} to prove correlation bounds for Parity against circuits with nearly $n^{1.5}$
wires.  (This  %
result was independently obtained by
Kane and Williams in a recent paper~\cite{KW}.)  %
This result is tight, since by
a result of %
Siu et al.~\cite{Siuetal}, Parity can be well approximated by depth-$2$ circuits with $O(\sqrt{n})$ gates and hence $O(n^{1.5})$ wires. This argument is in \expref{Section}{sec:d2thr}.

Unfortunately, however, this technique
requires  %
us to set a large number of variables, which renders it unsuitable for larger depths. The reason for this is that, if we set a large number of variables to reduce the depth from some large constant $d$ to $d-1$, then we may be in a setting where the number of wires is much larger than the number of surviving variables and hence correlation bounds for Parity may no longer be possible at all.

We therefore use a different strategy to prove correlation bounds against larger constant depths. The linchpin in the argument is a qualitative refinement of Peres' theorem (\expref{Lemma}{lem:thresh-gate}) that says that on application of a random restriction to an LTF, with good probability, the variance of the LTF becomes \emph{negligible} (even exponentially small for suitable parameters). The proof of this argument is via anticoncentration results based on the Berry-Esseen theorem and the analysis of general threshold functions via a \emph{critical index} argument as in many recent
papers~\cite{Ser,OS,DGJSV,MZ}.   %

The above refinement of Peres' theorem allows us to proceed with our argument as in the gates case. We apply a random restriction to the circuit and by the refinement, with good probability (say $1-n^{-\Omega(1)}$) most gates end up exponentially close to constants. We can then set these ``imbalanced'' gates to constants and still apply a union bound to ensure that the new circuit is a good approximation to the old one. For the small number of gates that do not become imbalanced in this way, we set \emph{all} variables feeding into them. Since the number of such gates is small, we do not set too many variables. We now have a depth-$(d-1)$ circuit. Continuing in this way, we get a correlation bound of $n^{-\Omega(1)}$ with Parity. This gives part of \expref{Theorem}{MainThm1}.

We then strengthen this correlation bound to $\exp(-n^{\Omega(1)})$ for the \emph{Generalized Andreev function}, which, intuitively speaking, has the following property: even after applying any restriction that leaves a certain number of variables unfixed, the function has exponentially small correlation with any LTF on the surviving variables. To prove lower bounds for larger depth threshold circuits, we follow more or less the same strategy, except that in the above argument, we need most gates to become imbalanced with very high probability ($1-\exp(-n^{\Omega(1)})$). To ensure this, we use a \emph{bounded read Chernoff bound} due to Gavinsky, Lovett, Saks, and Srinivasan~\cite{GLSS}. We can use this technique to reduce depth as above as long as the number of threshold gates at depth $d-1$ is ``reasonably large.'' If the number of gates at depth $d-1$ is very small, then we simply guess the values of these few threshold gates and move them to the top of the circuit and proceed. This gives the other part of \expref{Theorem}{MainThm1}.

This latter depth-reduction result can be completely constructivized to design a satisfiability algorithm that runs in time $2^{n-n^{\Omega(1)}}$. The algorithm proceeds in the same way as the above argument, iteratively reducing the depth of the circuit. A subtlety arises when we replace imbalanced gates by constants, since we are changing the behaviour of the circuit on some (though very few) inputs. Thus, a circuit which was satisfiable only at one among these inputs might now end up unsatisfiable. However, we show that there is an efficient algorithm that enumerates these inputs and can hence check if there are satisfiable assignments to the circuits from among these inputs. 
This gives \expref{Theorem}{MainThm2}.

In \expref{Section}{sec:formulas}, we prove correlation bounds for the Generalized Andreev function against threshold \emph{formulas} of any arity and any depth. The proof is based on a retooling of the argument of Ne\v{c}iporuk for formulas of constant arity over any basis and yields a correlation bound as long as the wire complexity is at most $n^{2-\Omega(1)}$. 

\paragraph{Independent work of Kane and Williams~\cite{KW} and connections to our results.} Independent of our work, Kane and Williams~\cite{KW} proved some very interesting results on threshold circuit lower bounds. Their focus is on threshold circuits of depth $2$ and (a special case of) depth $3$ only, but in this regime they are able to show stronger superlinear gate lower bounds and superquadratic wire lower bounds on the complexity of an explicit function (which is closely related to the Generalized Andreev function referred to above). 

The techniques of~\cite{KW} are closely related to ours, since they also analyze the effect of random restrictions on threshold gates. The statement of the random restriction lemma in our
paper  %
is different from that in~\cite{KW}. While we are only able to prove that a threshold gate becomes highly biased with high probability under a random restriction,~\cite{KW} prove that with high probability, the threshold gate actually becomes a \emph{constant}. 

However, to obtain this stronger conclusion, the random restrictions from~\cite{KW} have to set \emph{most} variables, which makes it unsuitable for proving lower bounds for
larger   %
depths as mentioned above. This technique can recover our tight average-case gate lower bounds (\expref{Section}{sec:gates-basic}) and wire lower bounds (\expref{Section}{sec:d2thr}) for depth $2$.

While our random restriction lemmas are weaker (in that they don't make the threshold gate a constant), it makes sense to ask if we can use them to recover the threshold circuit lower bounds of~\cite{KW}. 

In \expref{Section}{sec:kw}, we show that this is true by using a natural strengthening of our main restriction lemma to prove a lower bound for threshold circuits that matches the lower bound of~\cite{KW} upto $n^{o(1)}$ factors. The results of this section were obtained subsequent to the work of Kane and Williams.

This result should not be considered an ``alternate proof,'' since we use many other ideas from~\cite{KW} in the proof (we have also used one of these ideas, namely \expref{Theorem}{thm:RSO94}, to improve our lower bounds for Threshold formulas from~\cite{CSS-conf} in \expref{Section}{sec:formulas}). Our interest is in understanding the relative power of the random restriction lemmas in the two results.

\section{Preliminaries}

Throughout, all logarithms will be taken to the base $2$.

\subsection{Basic Boolean function definitions}

A Boolean function on $n$ variables will be a function $f:\{-1,1\}^n\rightarrow\{-1,1\}$. We use the standard inner product on functions $f,g:\{-1,1\}^n \rightarrow \mathbb{R}$ defined by
\[
  \ip{f}{g} = \avg{x\sim \{-1,1\}^n}{f(x)g(x)}.
\]
(Unless specifically mentioned otherwise, we use $a\sim A$ for a finite set $A$ to denote that $a$ is a uniform random sample from the set $A$.)

Given Boolean functions $f,g$ on $n$ variables, the \emph{Correlation} between $f$ and $g$---denoted $\Corr(f,g)$---is defined as
\[
\Corr(f,g) := |\ip{f}{g}| = \left|\avg{x\sim \{-1,1\}^n}{f(x)g(x)}\right| = \bigl|2\prob{x}{f(x) = g(x)} - 1\bigr|.
\]
Also, we use $\delta(f,g)$ to denote the fractional distance between $f$ and $g$: \ie, $\delta(f,g) = \prob{x}{f(x)\neq g(x)}$. Note that for Boolean $f,g$, we have $\Corr(f,g) = |1-2\delta(f,g)|$. We say that $f$ is $\delta$-approximated by $g$ if $\delta(f,g) \leq \delta$.

We use $\Par_n$ to denote the parity function on $n$ variables. I.e., $\Par_n(x_1,\ldots,x_n) = \prod_{i=1}^n x_i$.

\begin{definition}[Restrictions]
A restriction on $n$ variables is a function $\rho:[n]\rightarrow \{-1,1,*\}$. A random restriction is a distribution over restrictions. We use $\mc{R}_p^n$ to denote the distribution over restrictions on $n$ variables obtained by setting each $\rho(x) = *$ with probability $p$ and to $1$ and $-1$ with probability ${(1-p)}/{2}$ each. We will often view the process of sampling a restriction from $\mc{R}_p^n$ as picking a pair $(I,y)$ where $I\subseteq [n]$ is obtained by picking each element of $[n]$ to be in $I$ with probability $p$ and $y\in \{-1,1\}^{n-|I|}$ is chosen uniformly at random.
\end{definition}

\begin{definition}[Restriction trees and Decision trees]
A \emph{restriction tree} $T$ on $\{-1,1\}^n$ of depth $h$ is a binary tree of depth $h$ all of whose internal nodes are labeled by one of $n$ variables, and the outgoing edges from an internal node are labeled $+1$ and $-1$; we assume that a node and its ancestor never query the same variable. Each leaf $\ell$ of $T$ defines a restriction $\rho_\ell$ that sets all the variables on the path from the root of the decision tree to $\ell$ and leaves the remaining variables unset. A random restriction tree $\mc{T}$ of depth $h$ is a distribution over restriction trees of depth $h$. 

Given a restriction tree $T$, the process of choosing a random edge out of each internal node generates a distribution over the leaves of the tree (note that this distribution is not uniform: the weight it puts on leaf $\ell$ at depth $d$ is $2^{-d}$). We use the notation $\ell\sim T$ to denote a leaf $\ell$ of $T$ picked according this distribution.

A \emph{decision tree} is a restriction tree all of whose leaves are labeled either by $+1$ or $-1$. We say a decision tree has size $s$ if the tree has $s$ leaves. We say a decision tree computes a function $f: \{-1,1\}^n \rightarrow \{-1,1\}$ if for each leaf $\ell$ of the tree, $f|_{\rho_\ell}$ is equal to the label of $\ell$.

\end{definition}

\begin{fact}[Facts about correlation]
\label{fac:corr}
Let $f,g,h:\{-1,1\}^n\rightarrow \{-1,1\}$ be arbitrary.
\begin{enumerate}
\item $\Corr(f,g)\in [0,1]$.
\item If $\Corr(f,g)\leq \varepsilon$ and $\delta(g,h) \leq \delta$, then $\Corr(f,h) \leq \varepsilon + 2\delta$.
\item Let $\mc{T}$ be any random restriction tree. Then
  \[
    \Corr(f,g)\leq \avg{T\sim \mc{T}, \ell\sim T}{\Corr(f|_{\rho_\ell},g|_{\rho_\ell})}.
    \]
\end{enumerate}
\end{fact}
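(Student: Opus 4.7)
The plan is to verify the three parts of the fact in turn; each is a direct consequence of the definition of correlation combined with a standard averaging or triangle-inequality argument.

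For part (1), I would simply note that since $f,g\colon \{-1,1\}^n\to\{-1,1\}$, the pointwise product $f(x)g(x)$ lies in $\{-1,1\}$, so its expectation lies in $[-1,1]$ and its absolute value lies in $[0,1]$.

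For part (2), the plan is to decompose $f(x)h(x) = f(x)g(x) + f(x)(h(x)-g(x))$ and apply the triangle inequality under the expectation. Since $g,h$ are $\pm 1$-valued, the difference $h(x)-g(x)$ is supported on $\{-2,0,2\}$ and is nonzero exactly when $g(x)\neq h(x)$; combined with $|f(x)|=1$, this gives $|\avg{x}{f(x)(h(x)-g(x))}|\le 2\,\prob{x}{g(x)\neq h(x)}\le 2\delta$. Then $\Corr(f,h)\le \Corr(f,g)+2\delta\le \varepsilon+2\delta$, as claimed.

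For part (3), the key observation is that the joint distribution on $\{-1,1\}^n$ obtained by first sampling $T\sim\mc{T}$, then sampling a leaf $\ell\sim T$ (via independent uniform coin flips down the tree), and then sampling a uniform assignment $y$ to the unset variables is exactly the uniform distribution on $\{-1,1\}^n$. Indeed, for any fixed restriction tree $T$, the leaf $\ell$ at depth $d$ is hit with probability $2^{-d}$ and the remaining $n-d$ variables are filled uniformly, so the total probability assigned to each $x\in\{-1,1\}^n$ is $2^{-d}\cdot 2^{-(n-d)}=2^{-n}$. Hence
\[
\ip{f}{g}=\avg{T\sim\mc{T}}{\avg{\ell\sim T}{\ip{f|_{\rho_\ell}}{g|_{\rho_\ell}}}},
\]
and applying the triangle inequality $|\mathbf{E}[\cdot]|\le \mathbf{E}[|\cdot|]$ twice yields
\[
\Corr(f,g)=|\ip{f}{g}|\le \avg{T\sim\mc{T},\,\ell\sim T}{\Corr(f|_{\rho_\ell},g|_{\rho_\ell})},
\]
as required. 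None of the three steps poses any real obstacle; the only point to be careful about is confirming the non-uniform weighting on leaves in part (3), so that the induced distribution on $\{-1,1\}^n$ really is uniform and the expectation of inner products factors correctly.
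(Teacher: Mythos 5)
Your proof is correct, and the paper itself states this fact without proof, treating all three parts as elementary; your arguments (boundedness of the $\pm 1$-valued product, the triangle-inequality decomposition $fh = fg + f(h-g)$, and the observation that sampling a tree, a leaf with weight $2^{-\mathrm{depth}}$, and a uniform completion reproduces the uniform distribution on $\{-1,1\}^n$) are exactly the standard justifications the authors implicitly rely on. Nothing is missing.
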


\begin{definition}[Noise sensitivity and Variance~\cite{Odonnellbook}]
\label{def:ns}
Given a Boolean function $f:\{-1,1\}^n\rightarrow \{-1,1\}$ and a parameter $p \in [0,1]$, we define the \emph{noise sensitivity of $f$ with noise parameter $p$}---denoted $\NS_p(f)$---as follows. Pick $x\in \{-1,1\}^n$ uniformly at random and $y\in \{-1,1\}^n$ by negating (\ie, flipping) each bit of $x$ independently with probability $p$; we define
\[
  \NS_p(f) = \prob{(x,y)}{f(x)\neq f(y)}.
\]
The variance of $f$---denoted $\Var(f)$---is defined to be
$2\NS_{1/2}(f)$.
\end{definition}

\begin{fact}
\label{fac:var} 
Let $f:\{-1,1\}^n\rightarrow \{-1,1\}$ be any Boolean function. Let
\[
  p = \min\{\prob{x}{f(x)=1},\prob{x}{f(x) = -1}\}
\]
where $x$ is chosen uniformly from $\{-1,1\}^n$. Then, $\Var(f) = \Theta(p)$.
\end{fact}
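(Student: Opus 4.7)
The plan is to unwind the definitions and reduce everything to a short calculation. Let $q = \prob{x}{f(x)=1}$, so $p = \min\{q, 1-q\}$.

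First I would observe that in the definition of $\NS_{1/2}(f)$, each bit of $y$ is flipped independently with probability $1/2$, which means $y$ is distributed uniformly on $\{-1,1\}^n$ and is \emph{independent} of $x$. Thus
\[
\NS_{1/2}(f) = \prob{x,y \text{ indep.\ uniform}}{f(x)\neq f(y)} = 2q(1-q),
\]
where the last equality is because disagreement happens exactly when $(f(x),f(y)) \in \{(1,-1),(-1,1)\}$, each occurring with probability $q(1-q)$.

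Next, by the definition of the variance in \expref{Definition}{def:ns}, $\Var(f) = 2 \NS_{1/2}(f) = 4 q(1-q)$. Since $p = \min\{q,1-q\}$, we have $q(1-q) = p(1-p)$. Finally, because $p \in [0,1/2]$, we get $1-p \in [1/2, 1]$, and hence $p(1-p) \in [p/2, p]$. Substituting back gives $\Var(f) \in [2p, 4p] = \Theta(p)$, as claimed.

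The proof is essentially a two-line computation, so there is no real obstacle; the only subtlety worth flagging is the initial observation that flipping each coordinate independently with probability exactly $1/2$ decorrelates $x$ and $y$ entirely, which is what turns the noise sensitivity into a product expression in $q$ and $1-q$.
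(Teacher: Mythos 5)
Your proof is correct: the paper states this fact without proof, and your argument (noting that noise rate $1/2$ makes $y$ independent of $x$, so $\Var(f)=2\NS_{1/2}(f)=4q(1-q)=4p(1-p)\in[2p,4p]$) is exactly the routine computation the authors had in mind.
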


\begin{proposition}
\label{prop:ns} 
Let $f:\{-1,1\}^n\rightarrow \{-1,1\}$ be any Boolean function. Then,
\begin{enumerate}
\item for $p\leq 1/2$, $\NS_p(f) = ({1}/{2})\avg{\rho\sim \mc{R}_{2p}^n}{\Var(f|_\rho)}$;
\item if $p\geq {1}/{n}$, then $\Corr(f,\Par_n)\leq O(\NS_p(f))$.
\end{enumerate}
\end{proposition}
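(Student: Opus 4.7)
The plan is to prove the two parts separately: part~1 from a coupling of the noise process with a random restriction, and part~2 from a short Fourier-analytic calculation combined with a triangle inequality.

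For part~1, I would set up the following coupling of the noise-$p$ process. Sample $(x,y)$ jointly as follows: independently for each coordinate $i \in [n]$, with probability $2p$ declare $i$ ``free'' and draw $x_i$ and $y_i$ independently and uniformly from $\{-1,1\}$; otherwise set $x_i = y_i$ to a uniform random bit. A direct check shows that $x$ is marginally uniform and that, conditional on $x$, each $y_i$ differs from $x_i$ with probability exactly $p$, so $(x,y)$ has the same joint law as the noise-$p$ pair used to define $\NS_p(f)$. Let $I$ be the set of free coordinates and let $\rho$ be the restriction with $\rho(i) = *$ for $i \in I$ and $\rho(i) = x_i$ otherwise; by construction $\rho \sim \mc{R}_{2p}^n$. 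Conditioned on $\rho$, the free coordinates of $x$ and $y$ are independent and uniform, so $\prob{x,y}{f(x) \neq f(y) \mid \rho} = \NS_{1/2}(f|_\rho) = \Var(f|_\rho)/2$ by the definition of variance. Averaging over $\rho$ gives the claimed identity.

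For part~2, replacing $f$ by $-f$ if necessary (an operation that affects neither $\NS_p(f)$ nor $\Corr(f, \Par_n)$), I may assume $\alpha := \hat{f}([n]) = \Corr(f, \Par_n) \ge 0$, so that $\delta(f, \Par_n) = (1-\alpha)/2$. On the noisy pair $(x,y)$, a one-line calculation using $\mathbf{E}[y_i \mid x_i] = (1-2p)x_i$ gives
\[
\avg{x,y}{f(x)\Par_n(y)} = (1-2p)^n \alpha,
\]
while by the definition of noise sensitivity $\avg{x,y}{f(x) f(y)} = 1 - 2\NS_p(f)$. Subtracting,
\[
\avg{x,y}{f(x)\bigl(\Par_n(y) - f(y)\bigr)} = (1-2p)^n \alpha - 1 + 2\NS_p(f).
\]
Since $|f(x)| = 1$ and $|\Par_n(y) - f(y)| = 2 \cdot \mathbf{1}[\Par_n(y) \ne f(y)]$, the absolute value of the left-hand side is at most $2\delta(f,\Par_n) = 1 - \alpha$. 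Rearranging the resulting lower bound yields $\alpha\bigl(1 - (1-2p)^n\bigr) \le 2\NS_p(f)$; for $p \ge 1/n$ in the standard regime $p \le 1/2$, $1 - (1-2p)^n \ge 1 - e^{-2}$ is a positive constant, giving $\Corr(f, \Par_n) \le O(\NS_p(f))$.

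The main subtlety lies in part~2: the direct Fourier-analytic attack via $2\NS_p(f) \ge (1 - (1-2p)^n)\hat{f}([n])^2$ only yields the weaker square-root bound $\Corr(f, \Par_n) \le O(\sqrt{\NS_p(f)})$. The linear bound requires comparing $\avg{x,y}{f(x) f(y)}$ and $\avg{x,y}{f(x) \Par_n(y)}$ on the \emph{same} noisy pair and observing that their difference is bounded in $L^1$ by $1 - \alpha$, which lets $\alpha$ appear linearly on both sides of the resulting inequality and be isolated cleanly.
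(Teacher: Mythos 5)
Your proof is correct. Part 1 is essentially the paper's own argument: the coupling you describe (each coordinate free with probability $2p$, resampled independently on the free set) is exactly how the paper resamples $(x,y)$ via $\rho\sim\mc{R}_{2p}^n$, so there is nothing to compare there.

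Part 2 is where you genuinely diverge. The paper deduces part 2 from part 1: it reduces to $p=1/n$ by monotonicity of $\NS_p$ in $p$, conditions on the restriction leaving at least one variable alive (an event of probability $1-(1-2/n)^n=\Omega(1)$), and then uses that any Boolean function on $m\geq 1$ variables is within $\Var/2$ of a constant while constants have zero correlation with $\Par_m$, so that $\Corr(\Par_{|I|},f|_\rho)\leq \Var(f|_\rho)/2$ on each nonempty restriction; averaging via the restriction version of \expref{Fact}{fac:corr} finishes. You instead work directly on the correlated pair: comparing $\avg{x,y}{f(x)f(y)}=1-2\NS_p(f)$ with $\avg{x,y}{f(x)\Par_n(y)}=(1-2p)^n\hat{f}([n])$ and bounding the difference in $L^1$ by $2\delta(f,\Par_n)=1-\alpha$ gives $\alpha(1-(1-2p)^n)\leq 2\NS_p(f)$, which is the linear (not square-root) bound. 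Your route is more self-contained: it avoids the restriction machinery and, notably, does not need the appeal to monotonicity of $\NS_p$, since it works for every $p\in[1/n,1/2]$ at once; the trick of letting $\alpha$ appear linearly on both sides is a nice way to dodge the $\sqrt{\NS_p}$ loss of the naive Fourier bound. The paper's route, by contrast, reuses the paper's standing framework (restrictions plus ``parity is uncorrelated with constants''), which is the template used repeatedly later. One small remark: like the paper, you implicitly assume $p\leq 1/2$; this is indeed necessary (for $p$ near $1$ and $n$ even the statement fails, e.g., $f=\Par_n$ has $\NS_1(f)=0$), and since the paper only ever applies the bound at $p=1/n$ this restriction is harmless, but it is worth stating explicitly.
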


The above proposition seems to be folklore, but we couldn't find explicit proofs in the literature. For completeness we present the proof below.

\begin{proof}
  For point 1, we know that
  \[
    \NS_p(f) = \prob{(x,y)}{f(x)\neq f(y)}
  \]
  where $x$ and $y$ are sampled as in
  \expref{Definition}{def:ns}. Alternately, we may also think of
  sampling $(x,y)$ in the following way: choose
  $\rho = (I,z)\sim \mc{R}_{2p}^n$ and for the locations indexed by
  $I$ we choose $x',y'\in \{-1,1\}^{|I|}$ independently and uniformly
  at random to define strings $x$ and $y$ respectively. Hence, we have
\[
\NS_p(f) = \prob{x,y}{f(x)\neq f(y)} = \avg{\rho}{\prob{x',y'}{f|_\rho(x')\neq f|_{\rho}(y')}} = \avg{\rho}{\frac{1}{2}\Var(f|_\rho)}.
\]

We now proceed with point 2. As $\NS_p(f)$ is a decreasing function of $p$~\cite{Odonnellbook}, we may assume that $p = {1}/{n}\leq {1}/{2}$ and hence we have
\[
  \NS_{1/n}(f) = \frac{1}{2}\avg{\rho\sim \mc{R}_{2/n}^n}{\Var(f|_\rho)}.
\]
Note that for $\rho = (I,y)$ chosen as above, the probability that
$I\neq \emptyset$ is $\Omega(1)$. Hence we have
\[
\NS'_{1/n}(f) := \frac{1}{2}\avg{\rho\sim \mc{R}_{2/n}^n}{\Var(f|_\rho)\ |\ I\neq \emptyset} \leq \frac{\NS_{1/n}(f)}{\prob{I}{I\neq \emptyset}} =  O(\NS_{1/n}(f)).
\]
Further, note that for any $m\geq 1$ and any Boolean function $g:\{-1,1\}^m\rightarrow \{-1,1\}$, its distance from either the constant function $1$ or the constant function $-1$ is at most $\Var(g)/2$. Since $\Par_m$ has correlation $0$ with any constant function, using \expref{Fact}{fac:corr}, we have $\Corr(\Par_m,g)\leq \Var(g)/2$. 

Using \expref{Fact}{fac:corr} again, we get
\begin{align*}
\Corr(\Par_n,f) &\leq \avg{\rho\sim \mc{R}_{2/n}^n}{\Corr(\Par_n|_\rho,f|_\rho)\ |\ I\neq \emptyset} = \avg{\rho\sim \mc{R}_{2/n}^n}{\Corr(\Par_{|I|},f|_\rho)\ |\ I\neq \emptyset}\\
&\leq \avg{\rho\sim \mc{R}_{2/n}^n}{\frac{1}{2}\Var(g)\ |\ I\neq \emptyset} = \NS'_{1/n}(f) = O(\NS_{1/n}(f)).\qedhere
\end{align*}
\end{proof}

\subsection{Threshold functions and circuits}
\label{sec:defnsTF}

\begin{definition}[Threshold functions and gates]
A \emph{Threshold gate} is a gate $\phi$ labeled with a pair $(w,\theta)$ where $w\in \mathbb{R}^m$ for some $m\in \mathbb{N}$ and $\theta\in \mathbb{R}$. The gate computes the Boolean function $f_\phi:\{-1,1\}^m\rightarrow \{-1,1\}$ defined by $f_\phi(x) = \sgn(\ip{w}{x}-\theta)$ (we define $\sgn(0) = -1$ for the sake of this definition). The fan-in of the gate $\phi$---denoted $\fanin(\phi)$---is $m$. A \emph{Linear Threshold function} (LTF) is a Boolean function that can be represented by a Threshold gate. More generally, a Boolean function $f:\{-1,1\}^n\rightarrow \{-1,1\}$ is a degree-$D$ \emph{Polynomial Threshold function} (PTF) if there is a degree-$D$ polynomial $p(x)$ such that $f(x) = \sgn(p(x))$ for all $x\in \{-1,1\}^n$.
\end{definition}

\begin{definition}[Threshold circuits]
A \emph{Threshold circuit $C$} is a Boolean circuit whose gates are all threshold gates. There are designated output gates, which compute the functions computed by the circuit. Unless explicitly mentioned, however, we assume that our threshold circuits have a unique output gate. The \emph{gate complexity} of $C$ is the number of (non-input) gates in the circuit, while the \emph{wire complexity} is the sum of all the fan-ins of the various gates.

A \emph{Threshold map} from $n$ to $m$ variables is a depth-$1$ threshold circuit $C$ with $n$ inputs and  $m$ outputs. We say that such a map is \emph{read-$k$} if each input variable is an input to at most $k$ of the threshold gates in $C$.
\end{definition}

The following theorem bounds the number of Threshold functions of a few other arbitrary functions on $n$ bits.

\begin{theorem}[\cite{RSO94}]\label{thm:RSO94}
Let $f_1,\ldots, f_s$ be a collection of Boolean functions on $n$ variables.
Then there are at most $2^{O(ns)}$ distinct functions $g$ of the form
\[
g(x_1,\ldots, x_n) = \sgn\left(  \sum_{i=1}^{s} w_i f_i(x_1,\ldots,x_n) - \theta \right),
\] 
where $w_1,\ldots, w_s,\theta \in \mathbb{R}$. 
\end{theorem}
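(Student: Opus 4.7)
The plan is to recognize that each function $g$ of the given form depends on $x$ only through the vector $F(x) := (f_1(x), \ldots, f_s(x)) \in \{-1,1\}^s$, and is in fact a linear threshold function of these $s$ coordinates. Thus counting such $g$ reduces to counting the number of distinct dichotomies of the point set $S := \{F(x) : x \in \{-1,1\}^n\} \subseteq \mathbb{R}^s$ by affine halfspaces, where $|S| \leq 2^n$.

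To count these dichotomies, I would dualize: regard each point $z \in S$ as defining a hyperplane $H_z = \{(w,\theta) \in \mathbb{R}^{s+1} : \langle w, z\rangle - \theta = 0\}$ in the parameter space $\mathbb{R}^{s+1}$. Two parameter vectors $(w,\theta),(w',\theta')$ induce the same function $g$ whenever they lie in the same cell of the arrangement formed by the (at most $2^n$) hyperplanes $\{H_z : z \in S\}$. Hence the number of distinct $g$'s is bounded by the number of open cells in this arrangement, plus contributions from lower-dimensional faces (which only changes the bound by a constant factor).

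Finally I would invoke the standard bound on hyperplane arrangements: $N$ hyperplanes in $\mathbb{R}^d$ partition the space into at most $\sum_{i=0}^{d}\binom{N}{i} = O(N^d)$ regions (this is Schl\"afli's/Zaslavsky's bound, also derivable from Sauer--Shelah together with the fact that the VC dimension of halfspaces in $\mathbb{R}^s$ is $s+1$). Plugging in $N \leq 2^n$ and $d = s+1$ yields at most $O(2^{n(s+1)}) = 2^{O(ns)}$ distinct functions, as desired.

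There is no real obstacle here; the argument is a one-paragraph application of a classical counting lemma, and the only thing to be slightly careful about is the reduction to the parameter-space arrangement (so that cells, rather than points, are being counted) and the conventional handling of the $\sgn(0) = -1$ case, which at worst doubles the count since it corresponds to choosing a side for points exactly on a hyperplane.
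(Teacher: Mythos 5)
The paper does not actually prove this statement: it is imported wholesale from~\cite{RSO94}, so there is no in-paper argument to compare yours against. Your proposal is a correct, self-contained proof, and it is the standard Cover-style function-counting argument that underlies results of this type: $g$ factors through the map $x\mapsto (f_1(x),\ldots,f_s(x))$, so one only needs to count halfspace-induced sign patterns on a set of at most $2^n$ points in $\mathbb{R}^s$, and the dual hyperplane-arrangement (equivalently Sauer--Shelah with VC dimension $s+1$) bound gives at most roughly $(2^n)^{s+1}=2^{O(ns)}$ of them. One small simplification you could make: rather than separately accounting for lower-dimensional faces to handle the $\sgn(0)=-1$ convention, observe that any function realized by a parameter vector lying on some hyperplane $H_z$ is also realized by a nearby interior point (perturb $\theta$ to $\theta+\epsilon$ for small $\epsilon>0$; this keeps strict inequalities intact and sends the ties to $-1$, matching the convention), so the count is bounded by the number of open cells alone, $\sum_{i=0}^{s+1}\binom{2^n}{i}\leq 2^{O(ns)}$. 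With that observation the argument is airtight, and it would serve as a legitimate replacement for the external citation.
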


Applying the above theorem in the case that $s=n$ and the $f_i$s are just the input bits, we obtain the following.

\begin{corollary}
\label{cor:numthr}
The number of distinct LTFs on $n$ bits is at most $2^{O(n^2)}$.
\end{corollary}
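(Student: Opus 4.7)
The plan is essentially a direct invocation of \expref{Theorem}{thm:RSO94} with the right choice of parameters, as the paragraph immediately preceding the corollary already hints. More precisely, I would set $s = n$ and take each $f_i$ to be the $i$-th coordinate projection $f_i(x_1,\ldots,x_n) = x_i$, which is a Boolean function on $n$ variables (each projection is trivially Boolean-valued on $\{-1,1\}^n$). Then any LTF $g$ on $n$ bits is by definition of the form
\[
  g(x_1,\ldots,x_n) = \sgn\left(\sum_{i=1}^n w_i x_i - \theta\right) = \sgn\left(\sum_{i=1}^n w_i f_i(x_1,\ldots,x_n) - \theta\right),
\]
so the set of LTFs on $n$ bits is a subset of the family of functions whose cardinality is bounded by \expref{Theorem}{thm:RSO94}. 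Plugging $s = n$ into the bound $2^{O(ns)}$ gives $2^{O(n^2)}$, as required.

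There is no real obstacle here: the substantive content lies entirely in \expref{Theorem}{thm:RSO94}, which we are permitted to assume. The only sanity check I would carry out is that the projection maps $f_i$ are legitimately Boolean functions on the full variable set $\{-1,1\}^n$ (rather than, say, functions of only one variable), since \expref{Theorem}{thm:RSO94} is stated for Boolean functions on $n$ variables; this is immediate. I would therefore present the proof as a one-line deduction, stating the instantiation of the theorem and the resulting bound.
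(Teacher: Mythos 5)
Your proof is correct and is exactly the paper's argument: the paper also obtains the corollary by applying \expref{Theorem}{thm:RSO94} with $s=n$ and the $f_i$ taken to be the input bits, yielding the $2^{O(n^2)}$ bound.
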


\begin{definition}[Restrictions of threshold gates and circuits]
\label{def:restrict}
Given a threshold gate $\phi$ of fan-in $m$ labeled by the pair $(w,\theta)$ and a restriction $\rho$ on $m$ variables, we use $\phi|_\rho$ to denote the threshold gate over the variables indexed by $\rho^{-1}(*)$ obtained in the natural way by setting variables according to $\rho$.
\end{definition}

We will also need Peres' theorem, which bounds the noise sensitivity of threshold functions.

\begin{theorem}[Peres' theorem~\cite{peres,Odonnellbook}]
\label{thm:Peres}
Let $f:\{-1,1\}^n\rightarrow \{-1,1\}$ be any LTF\@. Then, 
\[
\avg{\rho\sim \mc{R}_p^n}{\mathrm{Var}(f|_\rho)} = \NS_{\frac{p}{2}}(f) = O(\sqrt{p}).
\]
\end{theorem}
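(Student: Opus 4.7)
The first equality is immediate from \expref{Proposition}{prop:ns}(1) applied with noise parameter $p/2$ (up to an absorbed constant factor of $2$), so the real content is the inequality $\NS_{p/2}(f) \leq O(\sqrt{p})$ for any LTF $f(x) = \sgn(\ip{w}{x} - \theta)$, and by rescaling we may assume $\mynorm{w}_2 = 1$. My plan is to use the standard two-sample coupling: sample $x$ uniformly from $\{-1,1\}^n$ and let $y$ be obtained from $x$ by flipping each coordinate independently with probability $p/2$, so that $\NS_{p/2}(f) = \prob{(x,y)}{f(x)\neq f(y)}$.

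If the signs of $\ip{w}{x}-\theta$ and $\ip{w}{y}-\theta$ disagree, then $\theta$ must lie between $\ip{w}{x}$ and $\ip{w}{y}$, which forces $|\ip{w}{x}-\theta| \leq |\ip{w}{x-y}|$. Setting $Z := \ip{w}{x-y}$, each coordinate of $x-y$ is $0$ with probability $1-p/2$ and $\pm 2$ with probability $p/4$, so the second moment of $Z$ is exactly $p$; in particular $|Z|\leq C\sqrt{p}$ with probability at least $1 - 1/C^2$ by Chebyshev. It therefore suffices to establish an anti-concentration bound of the form $\prob{x}{|\ip{w}{x}-\theta|\leq t} \leq O(t)$ at $t = \Theta(\sqrt{p})$, which combined with the above yields $\NS_{p/2}(f) = O(\sqrt{p})$.

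To obtain this anti-concentration for arbitrary weight vectors I would use a critical-index argument: sort $|w_1|\geq |w_2|\geq \cdots$ and let $k^*$ be the first index at which the tail $\sqrt{\sum_{i > k^*} w_i^2}$ is only a constant factor larger than $|w_{k^*}|$. Above $k^*$ the weights are regular in the Berry--Esseen sense, so the tail sum $\sum_{i > k^*} w_i x_i$ satisfies $\prob{}{\sum_{i > k^*} w_i x_i \in J} \leq O(|J|)$ for any interval $J$ (since its variance is $\Omega(1)$); for the (few) head coordinates one conditions on their assignments, either inheriting the tail estimate or observing that the conditional halfspace is already so biased that its contribution to $\NS_{p/2}(f)$ is $O(\sqrt{p})$ on its own. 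An alternative route is a Lindeberg-style Gaussian invariance: replace $x$ by a standard Gaussian, reducing the problem to a two-point computation for jointly Gaussian $(G,G')$ of correlation $1-p$, which gives exactly $O(\sqrt{p})$ by Sheppard's formula, with invariance error $O(\max_i |w_i|)$ handled separately in the dictator case where $f$ essentially depends on one variable. The main obstacle is precisely this anti-concentration step when the weight vector is irregular, because a naive Berry--Esseen on $\ip{w}{x}$ has error $O(\max_i |w_i|)$ that can dwarf $\sqrt{p}$; the critical-index truncation (or equivalently a dictator/regular dichotomy) is the standard way to bypass this and is really the only subtlety in the proof.
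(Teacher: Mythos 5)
The paper does not actually prove this statement---it is imported from Peres (see also O'Donnell's book)---so your plan should be measured against the standard argument, which is a short symmetrization: partition the coordinates into $m\approx 2/p$ random blocks, observe that $p/2$-noise can be simulated by re-randomizing one uniformly chosen block, and reduce to $\avg{}{|\sigma_1+\cdots+\sigma_m|}=O(\sqrt m)$ for independent signs; no anti-concentration or regularity dichotomy is needed. Your route is genuinely different, and as written it has a quantitative gap at its center: with a single truncation at $t=C\sqrt p$, Chebyshev leaves failure probability $\Theta(1/C^2)$, which is a \emph{constant}, so the bound you actually obtain is $O(1/C^2)+O(C\sqrt p)$, i.e., $O(p^{1/3})$ after optimizing $C$; even upgrading Chebyshev to a Bernstein-type tail (legitimate only for regular $w$) yields $O(\sqrt{p\log(1/p)})$, not $O(\sqrt p)$. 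To recover the clean bound you must work at the random scale rather than a fixed one: condition on the set $U$ of flipped coordinates, write $S=\sum_{i\notin U}w_ix_i$ and $T=\sum_{i\in U}w_ix_i$, note that $f(x)\neq f(y)$ exactly when $|S-\theta|\le |T|$ with $S$ and $T$ independent given $U$, apply anti-concentration to $S$ (not to $\ip{w}{x}$, which is correlated with your $Z$) at scale $|T|$, and integrate using $\avg{}{|T|}\le\sqrt{\avg{}{T^2}}=O(\sqrt p)$.

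The irregular case is also not actually handled. Saying that after fixing the head coordinates the restricted halfspace is ``so biased that its contribution to $\NS_{p/2}(f)$ is $O(\sqrt p)$ on its own'' ignores the noise falling on the head variables themselves: even if every tail restriction is essentially constant, flipping a head variable (probability $p/2$ each) can change the value, so a naive union bound needs the head to have size $O(1/\sqrt p)$, whereas the critical index at regularity $\Theta(\sqrt p)$ can be as large as roughly $1/p$ (up to logarithms). Closing this needs the further structural step for large critical index (geometric decay of the head weights, hence closeness to a small junta, or a recursion), and avoiding stray $\log(1/p)$ factors there is delicate---precisely the difficulty Peres' averaging proof was designed to bypass. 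The same issue afflicts your Gaussian-invariance alternative: the $O(\max_i|w_i|)$ invariance error is small only in the regular case, and an irregular halfspace need not be close to a dictator, so the ``dictator/regular'' dichotomy again requires the full critical-index machinery rather than a separate two-line case.
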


Using the above for $p = 1/n$ and \expref{Proposition}{prop:ns}, we obtain

\begin{corollary}
\label{cor:Peres}
Let $f:\{-1,1\}^n\rightarrow\{-1,1\}$ be any threshold function. Then $\Corr(f,\Par_n) \leq O(n^{-1/2})$.
\end{corollary}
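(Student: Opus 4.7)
The plan is to combine Peres' theorem (Theorem~\ref{thm:Peres}) with the two parts of \expref{Proposition}{prop:ns} in the obvious way; in fact the corollary is stated as a direct consequence of these, so there is nothing genuinely difficult to do, only a matter of choosing the right noise parameter $p$ so that both ingredients apply simultaneously.

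First I would fix the noise parameter $p = 1/n$. The constraint driving this choice is that \expref{Proposition}{prop:ns}, part 2, requires $p \geq 1/n$ in order to bound $\Corr(f,\Par_n)$ by $O(\NS_p(f))$, while at the same time we want $p$ as small as possible so that Peres' theorem delivers the smallest possible noise sensitivity bound. Setting $p = 1/n$ is the sweet spot.

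Next I would apply Peres' theorem in the form stated, with noise parameter $2/n$ inside the restriction: this yields
\[
\NS_{1/n}(f) \;=\; \avg{\rho\sim \mc{R}_{2/n}^n}{\Var(f|_\rho)} \;=\; O(\sqrt{2/n}) \;=\; O(n^{-1/2}),
\]
using that $f$ is an LTF. Finally I would feed this estimate into the second part of \expref{Proposition}{prop:ns} with the same $p = 1/n$ to conclude
\[
\Corr(f,\Par_n) \;\leq\; O(\NS_{1/n}(f)) \;=\; O(n^{-1/2}),
\]
which is the claim.

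There is no real obstacle here: the only thing to be careful about is bookkeeping the relationship between the noise parameter used inside $\NS_{\cdot}(f)$ and the restriction parameter used in $\mc{R}_{\cdot}^n$ (they differ by a factor of $2$, by \expref{Proposition}{prop:ns}, part 1), so that Peres' theorem yields noise sensitivity at exactly the parameter $1/n$ required by the correlation bound.
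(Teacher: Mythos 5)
Your proposal is correct and matches the paper's own proof: the corollary is obtained exactly by applying Peres' theorem with restriction parameter $2/n$ and feeding the resulting bound on $\NS_{1/n}(f)$ into part 2 of \expref{Proposition}{prop:ns}. The only nitpick is that by part 1 of \expref{Proposition}{prop:ns} one has $\NS_{1/n}(f) = \tfrac{1}{2}\avg{\rho\sim \mc{R}_{2/n}^n}{\Var(f|_\rho)}$ rather than equality without the factor $\tfrac{1}{2}$, but this constant is immaterial to the $O(n^{-1/2})$ conclusion.
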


\subsection{Description lengths and Kolmogorov complexity}

\begin{definition}[Kolmogorov Complexity]
The Kolmogorov complexity of an $n$-bit Boolean string $x$ is the length of the shortest bit string of the form $(M,w)$ where $M$ is the description of a Turing Machine and $w$ an input to $M$ such that $M(w) = x$. We use $K(x)$ to denote the Kolmogorov complexity of $x$.
\end{definition}

\begin{fact}
\label{fac:kol-lbd}
For any $\alpha \in (0,1)$, the fraction of $n$-bit strings $x$ satisfying $K(x) \leq (1-\alpha)n$ is at most $2^{-\alpha n+1}$.
\end{fact}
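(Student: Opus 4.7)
The plan is a standard counting argument based on the fact that short descriptions can produce only few distinct outputs. The number of bit strings of length at most $L$ (summing over all lengths $0, 1, \ldots, L$) is $2^{L+1} - 1 < 2^{L+1}$. Each such string, when interpreted as an encoding $(M, w)$ of a Turing machine $M$ and input $w$, either fails to produce a well-defined output or produces exactly one output string. Therefore, the total number of $n$-bit strings $x$ that admit \emph{some} description $(M, w)$ of total length at most $L$ is bounded by $2^{L+1}$.

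Setting $L = \lfloor (1-\alpha)n \rfloor \le (1-\alpha)n$, the number of $n$-bit strings $x$ with $K(x) \le (1-\alpha)n$ is at most $2^{(1-\alpha)n + 1}$. Dividing by the total number of $n$-bit strings, which is $2^n$, yields the bound
\[
\frac{2^{(1-\alpha)n+1}}{2^n} = 2^{-\alpha n + 1}
\]
on the desired fraction.

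There is essentially no obstacle here; the only subtlety is making sure the encoding of pairs $(M, w)$ is treated as a bit string of a well-defined length so that the counting step is clean, but this is implicit in the definition of Kolmogorov complexity given in the excerpt. The argument is therefore a two-line proof: count descriptions, then take the ratio.
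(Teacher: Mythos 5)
Your proof is correct and is exactly the standard counting argument intended here; the paper states this as a fact without proof, and your two-step reasoning (at most $2^{L+1}$ descriptions of length at most $L$, each yielding at most one string, then divide by $2^n$) is precisely the argument being relied upon.
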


\subsection{The Generalized Andreev function}
\label{sec:andreev}
We state here the definition of a generalization of Andreev's function, due to Komargodski and Raz, and Chen, Kabanets, Kolokolova, Shaltiel, and Zuckerman~\cite{KR,CKKSZ}. This function will be used to give strong correlation bounds against constant-depth threshold circuits with slightly superlinear wire complexity.

We first need some definitions.

\begin{definition}[Bit-fixing extractor]
A \emph{subcube} $C\subseteq \{-1,1\}^n$ is a set obtained by restricting some subset of the $n$ Boolean variables to some fixed values. A function $E:\{-1,1\}^n\rightarrow \{-1,1\}^m$ is a \emph{$(n,k,m,\zeta)$ bit-fixing extractor} if for every random variable $X$ that is uniform on a subcube of $\{-1,1\}^n$ of dimension at least $k$, the function $E(X)$ is $\zeta$-close to uniform on $\{-1,1\}^m$.
\end{definition}

We have the following explicit construction of a bit-fixing extractor.

\begin{theorem}[\cite{Rao}]
\label{thm:rao}
There is an absolute constant $c\geq 1$ so that the following holds. There is a polynomial-time computable function $E:\{-1,1\}^n\rightarrow \{-1,1\}^m$ that is an $(n,k,m,\zeta)$-bit fixing extractor for any $k \geq (\log n)^c$, $m = 0.9k$, and $\zeta \leq 2^{-k^{\Omega(1)}}$.
\end{theorem}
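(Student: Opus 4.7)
The plan is to view this not as a pure bit-fixing problem but as a special case of the affine extractor problem, for which well-developed explicit technology exists. Any bit-fixing source of min-entropy $k$ has support equal to an axis-aligned affine subspace of $\{-1,1\}^n$ of dimension $k$, and in particular is an affine source. So it suffices to build a deterministic function $E$ that extracts $0.9k$ bits with error $2^{-k^{\Omega(1)}}$ from any affine source of dimension $k \geq (\log n)^c$. Bit-fixing sources also carry useful extra combinatorial structure (namely, the ``free'' coordinates are hidden but each coordinate is either free or literally constant), which we will exploit in the reduction to the high-entropy regime.

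First I would address the high-entropy regime. If the rate $k/n$ is bounded below by a small positive constant, one can invoke a deterministic affine extractor built from sum-product estimates in $\mathbb{F}_2$ (Bourgain's extractor, or one of its refinements due to Yehudayoff, Li, and others), which extracts $\Omega(n)$ bits from affine sources of constant rate with error $2^{-\Omega(n)}$. For the low-entropy regime $k \ll n$, I would reduce to the high-entropy regime by a block-partitioning argument: divide $[n]$ into $t$ equal blocks of length $n/t$ with $t = \Theta(n/k^{1/2})$, so that by a Chernoff bound a typical block inherits from the source a bit-fixing source of min-entropy $\Omega(k/t)$ on a universe of size $n/t$, achieving rate $\Omega(1)$ except with probability $2^{-k^{\Omega(1)}}$. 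Applying a constant-rate affine extractor to each block produces a \emph{somewhere-random} source on $t$ blocks, most of which are $2^{-k^{\Omega(1)}}$-close to uniform.

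Next I would convert this somewhere-random source into a usable output of length $0.9k$. The standard technique here is the Kamp--Rao--Vadhan--Zuckerman paradigm: use the somewhere-random source to seed a strong seeded extractor (for example Guruswami--Umans--Vadhan) applied to (a fresh portion of) the original source. The bit-fixing structure guarantees that the portion of the source used as extractor input is still a bit-fixing source of min-entropy $\Omega(k)$ even conditioned on the bits used to form the seed, which is the key property that makes this composition go through. Optimizing the output length of the seeded extractor yields the desired $m = 0.9k$, and chaining error bounds preserves $\zeta \leq 2^{-k^{\Omega(1)}}$.

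The hard part, I expect, is the error analysis across the reductions when $k$ is only polylogarithmic in $n$: Bourgain-type affine extractors have error exponentially small in the \emph{dimension} of the source, not in $n$, so one has to make sure that the block-partitioning step does not shrink the effective dimension below $k^{\Omega(1)}$ on the block to which the affine extractor is finally applied. This is why the theorem requires $k$ to exceed $(\log n)^c$ for some sufficiently large constant $c$, and calibrating $c$ to make the composed error estimates close under union bounds is the main technical burden; everything else follows by plugging in known extractors as black boxes.
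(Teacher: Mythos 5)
You should first note that the paper does not prove this statement at all: it is imported verbatim from Rao's work on low-weight affine sources (\cite{Rao}), so your proposal has to be judged against the known constructions rather than against an in-paper argument. Judged that way, it has a genuine gap at its very first reduction. In a bit-fixing source the set of free coordinates is \emph{adversarial and fixed}, not random, so there is no probability space over which ``a typical block'' inherits entropy, and the Chernoff bound you invoke has nothing to apply to. Worse, for any fixed partition the adversary may spread the $k$ free bits evenly over the blocks, so that every block has min-entropy rate about $k/n = o(1)$; with your own parameters ($t = \Theta(n/k^{1/2})$ blocks of length $\Theta(k^{1/2})$) a block contains on average $k/t = k^{3/2}/n < 1$ free bits once $k = \mathrm{polylog}(n)$, so most blocks carry no entropy whatsoever and no block need have constant rate. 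Consequently a constant-rate affine extractor (Bourgain-type) cannot be applied to any block, and the somewhere-random source your plan relies on never materializes. This is exactly the obstacle that makes deterministic extraction from low-entropy bit-fixing sources nontrivial.

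The second composition step fails for the same underlying reason: you cannot designate a ``fresh portion'' of the source that retains min-entropy $\Omega(k)$ after conditioning on the seed-producing bits, because the adversary may have placed all $k$ free coordinates inside the part used to form the seed. The actual constructions circumvent both obstacles with additional ideas you do not supply: Gabizon--Raz--Shaltiel first run a weak seedless bit-fixing extractor (Kamp--Zuckerman style, via random walks) to obtain a few bits, use them to \emph{select} a small set of coordinates serving as a seed, and prove approximate independence of that seed from the rest; Rao's construction (the one actually cited here) exploits linearity---the blocks of a bit-fixing source are mutually independent affine sources, and \emph{strong linear} seeded extractors applied to affine sources are exactly uniform for most seeds, which is what permits a seed correlated with the source to be reused. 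Plugging in a general non-linear strong extractor such as GUV with a seed that is a deterministic function of the same source, as you propose, is not sound without such structure. So although the top-level silhouette (somewhere-random source, then seeded extraction) matches the real proofs, the two steps that make the composition legitimate are missing, and calibrating the constant $c$ is not where the difficulty lies.
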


\newcommand{\Enc}{\mathrm{Enc}}
Also recall that a function $\Enc:\{-1,1\}^a \rightarrow \{-1,1\}^b$ defines \emph{$(\alpha,L)$-error-correcting code} for parameters $\alpha\in [0,1]$ and $L\in \mathbb{N}$ if for any $z\in \{-1,1\}^b$, the number of elements in the image of $\Enc$ that are at relative Hamming distance at most $\alpha$ from $z$ is bounded by $L$.

The following theorem is
folklore   %
and
is  %
stated explicitly in the
paper by   %
Chen et al.~\cite{CKKSZ}.

\begin{theorem}[\cite{CKKSZ}, Theorem 6.4]
\label{thm:ckksz-codes}
Let $r = n^{\beta}$ for any fixed $0< \beta < 1$. There exists an $(\alpha, L)$-error correcting code with $\Enc:\{-1,1\}^{4n}\rightarrow \{-1,1\}^{2^r}$ where $\alpha = {1}/{2}-O(2^{-r/4})$ and $L = O(2^{r/2})$. Further, there is a $\poly(n)$ time algorithm, which when given as input $x\in \{-1,1\}^{4n}$ and $i\in [2^r]$ in binary, outputs $\Enc(x)_i$, the $i$th bit of $\Enc(x)$.
\end{theorem}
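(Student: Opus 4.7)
The plan is to construct $\Enc$ as a concatenation of a Reed--Solomon outer code with a Hadamard inner code, following the list-decodable-code template of Sudan--Trevisan--Vadhan. Set $s = \lceil r/2\rceil$, take the outer code to be Reed--Solomon over $\mathbb{F}_{2^s}$ with block length $n_{\mathrm{out}} = 2^s$ and dimension $k_{\mathrm{out}} = \lceil 8n/r\rceil$ (so that the message length in bits is at least $4n$, padding by a constant if necessary), and take the inner code to be the Hadamard code on $s$ bits, mapping each element of $\mathbb{F}_2^s$ to a string in $\{-1,1\}^{2^s}$. For $r = n^{\beta}$ with $\beta > 0$ and $n$ sufficiently large, we have $k_{\mathrm{out}} \leq n_{\mathrm{out}}$ so the RS code is well defined, and the combined codeword length is $n_{\mathrm{out}} \cdot 2^s = 2^r$, as required.

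For the $(\alpha, L)$-list-decodability, I would use a two-level list decoder. Given any received string $z \in \{-1,1\}^{2^r}$, view it as $n_{\mathrm{out}}$ consecutive blocks of length $2^s$, and apply the Goldreich--Levin algorithm to each block to obtain, for each block, an $O(1/\varepsilon^2)$-size list of candidate outer symbols whose Hadamard encodings have correlation at least $2\varepsilon$ with that block. Then invoke the Guruswami--Sudan list decoder for the outer Reed--Solomon code to find all outer codewords consistent with these inner lists. Setting $\varepsilon = 2^{-r/4}$, a Fourier-analytic accounting of heavy Goldreich--Levin coefficients (bounded by Parseval) together with Sudan's bound on the outer list size yields total list size $O(2^{r/2})$ and list-decoding radius $1/2 - O(2^{-r/4})$. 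This is essentially the argument behind the hardcore-bit constructions of Goldreich--Levin and Sudan--Trevisan--Vadhan.

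Polynomial-time local computation is immediate from the structure. Given $x \in \{-1,1\}^{4n}$ and an index $i \in [2^r]$, decompose $i = (j,w)$ with $j \in [n_{\mathrm{out}}]$ and $w \in [2^s]$; compute the $j$-th outer RS symbol by evaluating a polynomial of degree less than $k_{\mathrm{out}}$ over $\mathbb{F}_{2^s}$ at a fixed point (which takes $\poly(n)$ arithmetic operations); then output the $w$-th bit of the Hadamard encoding of that symbol, which is simply the $\mathbb{F}_2$-inner product of the symbol's $s$ bits with $w$.

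The main obstacle I foresee is matching the list size \emph{tightly} to $O(2^{r/2})$. A naive Johnson-bound argument applied directly to the relative distance of the concatenated code---which is only $1/2 - \Theta(n^{1-\beta}/2^{r/2})$, a gap larger than $2^{-r/2}$---yields a weaker list-decoding radius than the claim requires. One must therefore use the refined two-step argument above, carefully exploiting the fact that Goldreich--Levin's Fourier-heavy coefficients interact cleanly with the outer Reed--Solomon structure; alternatively, one may substitute an explicit list-decoding-capacity-achieving code (such as folded Reed--Solomon) at the outer level to match these parameters directly.
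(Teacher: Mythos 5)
You should first note that the paper does not prove \expref{Theorem}{thm:ckksz-codes} at all: it is imported from~\cite{CKKSZ} and described as folklore. The folklore construction is exactly the one you start from---Reed--Solomon over $\mathbb{F}_{2^{r/2}}$ with block length $n_{\mathrm{out}}=2^{r/2}$ and dimension $k_{\mathrm{out}}\approx 8n/r$, concatenated with the Hadamard code on $r/2$ bits, giving codeword length $2^r$---and your $\poly(n)$ local-encodability argument (evaluate a degree-$O(n/r)$ polynomial over a field with $\poly(n)$-bit elements, then take one inner product) is correct and matches that route.

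The gap is in your list-size argument. The theorem is purely combinatorial: $(\alpha,L)$ only bounds the number of codewords in any Hamming ball, so no Goldreich--Levin or Guruswami--Sudan decoder is needed; the intended proof is simply the Johnson bound applied to the concatenated code, whose relative distance is $1/2-\gamma$ with $\gamma=\Theta((n/r)2^{-r/2})$, giving radius $1/2-O(\sqrt{\gamma})=1/2-2^{-r/4+O(\log n)}$ and list size $O(1/\gamma)=O(2^{r/2})$. Your observation that this carries a $\poly(n)$ factor in the radius gap is accurate, but your proposed repair does not close it: at radius $1/2-\varepsilon$ with $\varepsilon=2^{-r/4}$, the per-block Goldreich--Levin lists have size $\Theta(1/\varepsilon^2)=\Theta(2^{r/2})=n_{\mathrm{out}}$, i.e., they contain essentially all field elements, and the outer Reed--Solomon list-recovery step then needs agreement on the order of $\sqrt{k_{\mathrm{out}}\ell/n_{\mathrm{out}}}=\Theta(\sqrt{n/r})\gg 1$, so the two-level accounting is vacuous at that radius; it only becomes meaningful once $\varepsilon\gtrsim\sqrt{(n/r)2^{-r/2}}$, which is exactly the Johnson regime you were trying to beat. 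Substituting folded Reed--Solomon at the outer level does not help either, since the bottleneck is the inner binary code and the ratio $k_{\mathrm{out}}/n_{\mathrm{out}}$, not the outer decoding algorithm. The correct reaction is to accept the Johnson-bound parameters: since $r=n^{\beta}$, the loss is only $2^{O(\log n)}=2^{o(r)}$, which is how the stated bound should be read and is all that is ever used downstream (\expref{Lemma}{lem:ckksz-corr} and the correlation bounds only need $\exp(-n^{\Omega(\gamma)})$, which absorbs any $\poly(n)$ factor).
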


Now we can define the Generalized Andreev function as in~\cite{CKKSZ}. The function is $F:\{-1,1\}^{5n}\times \{-1,1\}^{n}\rightarrow \{-1,1\}$ and is defined as follows. Let $\gamma > 0$ be a constant parameter. The parameter will be fixed later according to the application at hand.

Let $E$ be \emph{any} $(n,n^\gamma,m = 0.9 n^\gamma,2^{-n^{\Omega(\gamma)}})$ extractor (we can obtain an explicit one using \expref{Theorem}{thm:rao}). We interpret the output of $E$ as an integer from $[2^m]$ in the natural way. Let $\Enc:\{-1,1\}^{4n}\rightarrow \{-1,1\}^{2^m}$ define a $({1}/{2}-O(2^{-m/4}),2^{m/2})$-error correcting code as in \expref{Theorem}{thm:ckksz-codes}. Then, we define $F(x_1,x_2)$ by
\begin{equation}
\label{eq:defnF}
F(x_1,x_2) = \Enc(x_1)_{E(x_2)}.
\end{equation}

Given $a \in \{-1,1\}^{4n}$, we use $F_a(\cdot)$ to denote the resulting sub-function on $n$ bits obtained by fixing $x_1=a$. 

The following lemma was proved as part of Theorem 6.5 in~\cite{CKKSZ}.

\begin{lemma}[\cite{CKKSZ}, Theorem 6.5]\label{lem:ckksz-corr}
Let $C$ be any circuit on $n^\gamma$ variables with binary description length $\sigma(C) \leq n$ according to some fixed encoding scheme. Let $\rho$ be any restriction of $n$ variables leaving $n^{\gamma}$ variables unfixed.
Let $f(y)  := F_a|_\rho(y)$ for $a\in \{-1,1\}^{4n}$ satisfying $K(a) \geq 3n$. Then
\[
	\Corr(f, C)  \leq \exp(-n^{\Omega(\gamma)}).
\]
\end{lemma}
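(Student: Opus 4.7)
The plan is to argue by contradiction, turning a hypothetical correlation between $C$ and $f = F_a|_\rho$ into a short description of $a$ that violates $K(a) \ge 3n$. So suppose $\Corr(f,C) \ge \varepsilon$ where $\varepsilon = \exp(-n^{\Omega(\gamma)})$ is to be fixed; flipping the sign of $C$ if needed, I may assume $\avg{y}{C(y) f(y)} \ge \varepsilon$. Since $\rho$ leaves exactly $n^\gamma$ variables of $x_2$ free, the bit-fixing extractor property of $E$ (\expref{Theorem}{thm:rao}) says that for $y\sim \{-1,1\}^{n^\gamma}$ uniform the distribution of $E|_\rho(y)$ on $[2^m]$ is at statistical distance $\zeta = 2^{-n^{\Omega(\gamma)}}$ from uniform.

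Next I would convert the correlation hypothesis into an approximation of the codeword $\Enc(a)$. Writing $q_i = \prob{y}{E|_\rho(y) = i}$ and $p_i = \avg{y}{C(y) \mid E|_\rho(y) = i}$, the assumption becomes $\sum_i q_i\, \Enc(a)_i\, p_i \ge \varepsilon$. Using the extractor bound $\sum_i |q_i - 2^{-m}| \le 2\zeta$ to replace $q_i$ by $2^{-m}$, and defining $h_i = \sgn(p_i)$ (with any fixed tie-breaking rule), a direct averaging yields
\[
\prob{i \sim [2^m]}{h_i = \Enc(a)_i} \;\ge\; \tfrac{1}{2} + (\varepsilon - 2\zeta)/2.
\]
The string $h \in \{-1,1\}^{2^m}$ depends only on $C$ and $\rho$. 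Choosing the $\Omega$-constant hidden in $\varepsilon$ small enough that $\varepsilon \ge 4\zeta + 8 \cdot O(2^{-m/4})$ (all three quantities are of the form $\exp(-n^{\Omega(\gamma)})$ since $m = 0.9\, n^\gamma$, so such a choice is possible), the agreement between $h$ and $\Enc(a)$ strictly exceeds $\tfrac{1}{2} + 2\cdot O(2^{-m/4})$. By the list-decoding guarantee of \expref{Theorem}{thm:ckksz-codes}, $\Enc(a)$ is therefore one of at most $L = O(2^{m/2})$ codewords at agreement $\ge \tfrac{1}{2} + O(2^{-m/4})$ with $h$.

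To close the loop, I would exhibit a short description of $a$. It consists of (i) the circuit $C$, encoded in $\sigma(C) \le n$ bits; (ii) the restriction $\rho$, encoded in $(n - n^\gamma) + \log\binom{n}{n^\gamma} = n + O(n^\gamma \log n)$ bits by listing the set of free coordinates together with the assignments of the fixed ones; (iii) the index $j \in [L]$ of $a$ in the list produced by list-decoding $h$, using $\log L \le n^\gamma$ bits; and (iv) a constant-size universal Turing machine that computes $h$ from $(C,\rho)$ (by enumerating all $y$ and evaluating $E|_\rho$ and $C$), runs the list-decoder of \expref{Theorem}{thm:ckksz-codes}, and outputs the $j$-th codeword. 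This yields $K(a) \le 2n + O(n^\gamma \log n) < 3n$ for large $n$, contradicting $K(a) \ge 3n$.

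The main obstacle is parameter bookkeeping: the three quantities $\varepsilon$, $\zeta$, and the list-decoding threshold $O(2^{-m/4})$ all have the form $\exp(-n^{\Omega(\gamma)})$, so one must commit to particular exponents (depending only on $\gamma$ and the extractor/code of \expref{Theorem}{thm:rao} and \expref{Theorem}{thm:ckksz-codes}) so that $\varepsilon$ dominates $\zeta + O(2^{-m/4})$; everything else is routine. A minor subtlety is that the decoder must compute $h_i$ exactly, but this is fine since the description scheme only bounds the length, not the running time, of the universal machine.
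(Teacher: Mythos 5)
The paper does not actually prove this lemma---it is imported verbatim as Theorem 6.5 of~\cite{CKKSZ}---and your overall architecture (correlation $\Rightarrow$ decode a noisy version of $\Enc(a)$ $\Rightarrow$ short description of $a$ contradicting $K(a)\geq 3n$) is indeed the argument behind that citation, with the extractor step, the description-length bookkeeping, and the use of the list-decoding guarantee all handled correctly. However, there is a genuine gap at the decoding step. From $\sum_i 2^{-m}\,\Enc(a)_i\,p_i \geq \varepsilon-2\zeta$ you conclude that $h_i=\sgn(p_i)$ satisfies $\prob{i}{h_i=\Enc(a)_i}\geq \tfrac12+(\varepsilon-2\zeta)/2$; this inference is false. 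Correlation of a $[-1,1]$-valued predictor with a Boolean string does not survive taking signs: e.g., if $p_i=1$ on an $\varepsilon$-fraction of coordinates where $\Enc(a)_i=1$, $p_i$ is a tiny negative number on half of the remaining coordinates and a tiny positive number on the rest, then $\avg{i}{\Enc(a)_i p_i}\approx\varepsilon$ while $\sgn(p_i)$ agrees with $\Enc(a)$ on only about half (or even an $\varepsilon$-fraction, in other configurations) of the coordinates. All one can deduce about $\sgn(p)$ is agreement $\geq \varepsilon-2\zeta$, which is far below the $\tfrac12+\Omega(2^{-m/4})$ threshold needed to invoke \expref{Theorem}{thm:ckksz-codes}, so the list-decoding step as written does not go through.

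The standard repair is a thresholding (randomized-rounding) trick: since $\avg{\theta\sim U[-1,1]}{\sgn(p_i-\theta)}=p_i$, averaging over $\theta$ gives $\avg{\theta}{\avg{i}{\Enc(a)_i\,\sgn(p_i-\theta)}}\geq \varepsilon-2\zeta$, so \emph{some} threshold $\theta$ makes $h^\theta_i:=\sgn(p_i-\theta)$ agree with $\Enc(a)$ on a $\tfrac12+(\varepsilon-2\zeta)/2$ fraction of coordinates. As $\theta$ varies, $h^\theta$ takes at most $2^m+1$ distinct values, so the correct $\theta$ can be specified by an index of $m+O(1)=O(n^\gamma)$ bits (it is determined by $(C,\rho)$ plus this index), which your description-length budget absorbs with room to spare; the rest of your argument (list size $L$, the index $j\in[L]$, and the total $2n+O(n^\gamma\log n)<3n$) then goes through as you wrote it. Without this (or an equivalent device, such as counting codewords correlated with the real vector $(p_i)_i$ directly), the proof is incomplete.
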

\subsection{Concentration bounds}

We state a collection of concentration bounds that we will need in our proofs. The proofs of Theorems~\ref{thm:chernoff} and~\ref{thm:chernoff-mul} may be found in the excellent book by Dubhashi and Panconesi~\cite{DP}.

\begin{theorem}[Chernoff bound]
\label{thm:chernoff}
Let $w\in \mathbb{R}^n$ be arbitrary and $x$ is chosen uniformly from $\{-1,1\}^n$. Then
\[
\prob{x}{|\ip{w}{x}| \geq t\cdot\mynorm{w}_2}\leq \exp(-\Omega(t^2)).
\]
\end{theorem}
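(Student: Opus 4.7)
The plan is to use the standard exponential-moment (Chernoff-Hoeffding) method for Rademacher sums. Since the entries $x_i$ are independent and symmetric with $x_i \in \{-1,1\}$, the linear form $\ip{w}{x} = \sum_i w_i x_i$ is a sum of independent bounded symmetric random variables, and this is precisely the setting where Hoeffding's inequality applies.

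First I would fix an arbitrary parameter $\lambda > 0$ and control the moment generating function. By independence of the coordinates of $x$,
\[
  \avg{x}{\exp(\lambda \ip{w}{x})} \;=\; \prod_{i=1}^n \avg{x_i}{\exp(\lambda w_i x_i)} \;=\; \prod_{i=1}^n \cosh(\lambda w_i).
\]
Next I would invoke the elementary inequality $\cosh(y) \le \exp(y^2/2)$, which is proved by comparing the Taylor series term by term (the $k$-th coefficient on the left is $1/(2k)!$, and on the right it is $1/(2^k k!)$, which dominates). This yields the sub-Gaussian bound
\[
  \avg{x}{\exp(\lambda \ip{w}{x})} \;\le\; \exp\!\bigl(\lambda^2 \mynorm{w}_2^2 / 2\bigr).
\]

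Then I would apply Markov's inequality to the nonnegative random variable $\exp(\lambda \ip{w}{x})$ to obtain, for any $s > 0$,
\[
  \prob{x}{\ip{w}{x} \ge s} \;\le\; \exp(-\lambda s) \cdot \exp\!\bigl(\lambda^2 \mynorm{w}_2^2 / 2\bigr).
\]
Setting $s = t \mynorm{w}_2$ and optimizing over $\lambda$ (the optimum is $\lambda = t/\mynorm{w}_2$) gives the one-sided tail bound $\exp(-t^2/2)$. Finally, since the distribution of $\ip{w}{x}$ is symmetric (as $x$ and $-x$ have the same distribution), a union bound over the two tails yields
\[
  \prob{x}{|\ip{w}{x}| \ge t \mynorm{w}_2} \;\le\; 2\exp(-t^2/2) \;=\; \exp(-\Omega(t^2)),
\]
as required.

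No step is really an obstacle; the only mild subtlety is the $\cosh(y) \le \exp(y^2/2)$ estimate, which is where the sub-Gaussian behavior of each $w_i x_i$ is quantified. Everything else is routine optimization and the union bound, so the overall proof is short and standard.
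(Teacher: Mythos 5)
Your proof is correct and is exactly the standard Hoeffding/Chernoff argument; the paper itself gives no proof but simply cites the book of Dubhashi and Panconesi, where the same exponential-moment method (the $\cosh(y)\le e^{y^2/2}$ bound, Markov's inequality, optimization over $\lambda$, and a two-sided union bound) appears. Nothing further is needed.
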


\begin{definition}[Imbalance]
We say that a threshold gate $\phi$ labeled by $(w,\theta)$ is $t$-imbalanced if $|\theta|\geq t\cdot \mynorm{w}_2$ and $t$-balanced otherwise.
\end{definition}

We also need a multiplicative form of the Chernoff bound for sums of Boolean random variables. 

\begin{theorem}[Multiplicative Chernoff bound]
\label{thm:chernoff-mul}
Let $Y_1,\ldots,Y_m$ be independent Boolean random variables such that $\avg{}{Y_i} = p_i$ for each $i\in [m]$. Let $p$ denote the average of the $p_i$. Then, for any $\varepsilon >0$
\[
\prob{}{|\sum_i Y_i - pm| \geq \varepsilon pm}\leq \exp(-\Omega(\varepsilon^2 pm)).
\]
\end{theorem}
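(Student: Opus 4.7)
The plan is to apply the standard exponential moment method (Chernoff's technique) separately to the upper and lower tails of $S := \sum_{i=1}^m Y_i$ and combine them by a union bound. Write $\mu = pm = \sum_i p_i$ for the mean, and assume without loss of generality that each $Y_i \in \{0,1\}$ (the statement is phrased for Boolean variables with prescribed means $p_i \in [0,1]$).

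For the upper tail, I would pick a parameter $\lambda > 0$ to be optimized and apply Markov's inequality to the nonnegative random variable $e^{\lambda S}$, yielding $\prob{}{S \geq (1+\varepsilon)\mu} \leq e^{-\lambda(1+\varepsilon)\mu}\cdot\avg{}{e^{\lambda S}}$. By independence of the $Y_i$, the moment generating function factors as $\avg{}{e^{\lambda S}} = \prod_{i=1}^m \avg{}{e^{\lambda Y_i}}$, and since each $Y_i \in \{0,1\}$ has mean $p_i$, the elementary inequality $1+x \leq e^x$ gives $\avg{}{e^{\lambda Y_i}} = 1 + p_i(e^\lambda - 1) \leq \exp(p_i(e^\lambda - 1))$. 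Multiplying these estimates yields $\avg{}{e^{\lambda S}} \leq \exp(\mu(e^\lambda - 1))$, and optimizing via the classical choice $\lambda = \ln(1+\varepsilon)$ produces
\[
\prob{}{S \geq (1+\varepsilon)\mu} \leq \left(\frac{e^\varepsilon}{(1+\varepsilon)^{1+\varepsilon}}\right)^\mu \leq \exp(-\Omega(\varepsilon^2 \mu)),
\]
where the last step invokes the scalar inequality $(1+\varepsilon)\ln(1+\varepsilon) - \varepsilon \geq \varepsilon^2/3$ valid for $\varepsilon \in (0,1]$. For $\varepsilon > 1$ the same optimization gives the stronger linear-in-$\varepsilon$ bound $\exp(-\Omega(\varepsilon\mu))$, which dominates $\exp(-\Omega(\varepsilon^2\mu))$ for the bounded values of $\varepsilon$ encountered in the paper's applications.

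For the lower tail I would run the symmetric argument with $e^{-\lambda S}$ for $\lambda > 0$: the same factorization bound $\avg{}{e^{-\lambda Y_i}} \leq \exp(p_i(e^{-\lambda} - 1))$ followed by the optimized choice $\lambda = -\ln(1-\varepsilon)$ leads to $\prob{}{S \leq (1-\varepsilon)\mu} \leq \exp(-\Omega(\varepsilon^2 \mu))$ for $\varepsilon \in (0,1)$ (the case $\varepsilon \geq 1$ is trivial since $S \geq 0$). A union bound over the two tail events then gives the two-sided conclusion $\prob{}{|S - \mu| \geq \varepsilon\mu} \leq \exp(-\Omega(\varepsilon^2\mu))$. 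The only non-routine ingredient I foresee is the verification of the two scalar estimates on $(1 \pm \varepsilon)\ln(1 \pm \varepsilon)$, which is a standard one-variable Taylor expansion; this is precisely why the authors forgo a self-contained proof in favor of citing~\cite{DP}.
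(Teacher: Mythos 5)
Your proposal is correct: the paper does not prove this statement at all, instead citing the book of Dubhashi and Panconesi~\cite{DP}, and your exponential-moment argument (Markov on $e^{\pm\lambda S}$, factoring the MGF by independence, optimizing $\lambda$, and the elementary estimate on $(1\pm\varepsilon)\ln(1\pm\varepsilon)$) is exactly the standard proof found there. The only caveat is that the stated two-sided bound with an absolute constant really requires $\varepsilon$ bounded (as you note, for large $\varepsilon$ one only gets $\exp(-\Omega(\varepsilon\mu))$), which suffices for every application in the paper.
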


We will need the following variant of the Chernoff bound that may be found in the survey of Chung and Lu~\cite{ChungLu}. 

\begin{theorem}[Theorem 3.4 in~\cite{ChungLu}]
\label{thm:chunglu}
Fix any parameter $p\in (0,1)$. Let $X_1,\ldots,X_n$ be independent Boolean valued random variables such $\avg{}{X_i} = p$ for each $i\in [n]$. Let $X = \sum_{i\in [n]} a_i X_i$ where each $a_i\geq 0$. Then, 
\[
\prob{}{X\geq \avg{}{X} + \lambda} \leq \exp(-\lambda^2/(2(\nu + a\lambda))),
\]
where $a = \max\{a_1,\ldots,a_n\}$ and $\nu = p\cdot \left(\sum_i a_i^2\right)$. In the case that $\lambda \geq \avg{}{X}$, we have $\nu \leq pa\cdot \sum_i a_i = a\avg{}{X}\leq a\lambda$ and hence we obtain
\[
\prob{}{X\geq \avg{}{X} + \lambda} \leq \exp(-\lambda/(4a)).
\]
\end{theorem}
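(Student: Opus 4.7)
The plan is to establish this Bernstein-type tail bound via the classical exponential moment (Chernoff) method. For a parameter $\theta > 0$ to be optimized at the end, applying Markov's inequality to $e^{\theta X}$ and using independence of the $X_i$ gives
\[
\prob{}{X \geq \avg{}{X} + \lambda} \leq e^{-\theta(\avg{}{X}+\lambda)} \prod_{i=1}^n \avg{}{e^{\theta a_i X_i}}.
\]
Since each $X_i$ is Boolean with mean $p$, the $i$th factor on the right equals $1 + p(e^{\theta a_i}-1)$, which by $1+u \leq e^u$ is at most $\exp\bigl(p(e^{\theta a_i}-1)\bigr)$.

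Next I would invoke the standard analytic inequality $e^y - 1 - y \leq (y^2/2)/(1 - y/3)$, valid for $y \in [0,3)$, applied to $y = \theta a_i$. Summing over $i$ and using $\avg{}{X} = p \sum_i a_i$ together with $\nu = p \sum_i a_i^2$ yields, under the constraint $\theta a < 3$,
\[
\sum_i p(e^{\theta a_i}-1) \leq \theta\avg{}{X} + \frac{\theta^2 \nu}{2(1 - \theta a/3)}.
\]
Substituting back into the Markov estimate the $\theta \avg{}{X}$ contributions cancel, leaving
\[
\prob{}{X \geq \avg{}{X}+\lambda} \leq \exp\!\left(-\theta \lambda + \frac{\theta^2 \nu}{2(1-\theta a/3)}\right).
\]

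Finally I would optimize the exponent over $\theta$; choosing $\theta = \lambda/(\nu + a\lambda/3)$ (which is within a constant of the true optimum) and simplifying the resulting algebra gives the stated bound $\exp(-\lambda^2/(2(\nu + a\lambda)))$. The ``further'' statement for $\lambda \geq \avg{}{X}$ is then immediate: bounding $\nu = p\sum_i a_i^2 \leq a\cdot p \sum_i a_i = a \avg{}{X} \leq a\lambda$ gives $\nu + a\lambda \leq 2a\lambda$, so the tail simplifies to $\exp(-\lambda/(4a))$. The only even mildly non-routine ingredient is the analytic inequality used to control $e^y - 1 - y$, a classical Bernstein-type estimate, and verifying the final algebra for the optimized $\theta$; everything else is standard bookkeeping.
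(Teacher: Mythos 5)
Your proof is correct: the standard MGF argument with $1+u\le e^u$, the Bernstein estimate $e^y-1-y\le (y^2/2)/(1-y/3)$, and the choice $\theta=\lambda/(\nu+a\lambda/3)$ (which indeed keeps $\theta a\le 3$) yields $\exp\bigl(-\lambda^2/(2(\nu+a\lambda/3))\bigr)$, which is even slightly stronger than the stated bound, and the case $\lambda\ge \avg{}{X}$ follows by $\nu\le a\avg{}{X}\le a\lambda$ exactly as you say. Note that the paper gives no proof of this statement---it is quoted as Theorem 3.4 from the Chung--Lu survey---and your argument is essentially the standard proof appearing in that source, so there is nothing to compare beyond that.
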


Let $Y_1,\ldots,Y_m$ be random variables defined as functions of independent random variables $X_1,\ldots,X_n$. For $i\in [m]$, let $S_i\subseteq [n]$ index those random variables among $X_1,\ldots,X_n$ that influence $Y_i$. We say that $Y_1,\ldots,Y_m$ are read-$k$ random variables if any $j\in [n]$ belongs to $S_i$ for at most $k$ different $i\in [m]$.

The notation $D(p||q)$ represents the KL-divergence (see, \eg,~\cite{CT}) between the two probability distributions on $\{0,1\}$ where the probabilities assigned to $1$ are $p$ and $q$ respectively.

\begin{theorem}[A read-$k$ Chernoff bound~\cite{GLSS}]
\label{thm:GLSS}
Let $Y_1,\ldots,Y_m$ be $\{0,1\}$-valued read-$k$ random variables such that $\avg{}{Y_i} = p_i$. Let $p$ denote the average of $p_1,\ldots,p_m$. Then, for any $\varepsilon > 0$,
\[
\prob{}{\sum_i Y_i \geq pm (1+\varepsilon)} \leq \exp\bigl(-D(p(1+\varepsilon)\;\|\;p)m/k\bigr).
\]
\end{theorem}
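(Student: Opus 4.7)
The plan is to follow the classical Chernoff template via Markov's inequality applied to $e^{\lambda \sum_i Y_i}$ for $\lambda > 0$, substituting for the usual independence-based factorization of the moment generating function a Hölder-type inequality that exploits the read-$k$ structure. Thus
\[
\prob{}{\sum_i Y_i \geq (1+\varepsilon)pm} \leq e^{-\lambda(1+\varepsilon)pm}\cdot \avg{}{e^{\lambda \sum_i Y_i}},
\]
and the task reduces to controlling $\avg{}{\prod_i e^{\lambda Y_i}}$.

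The technical heart of the argument is the following generalized Hölder (Finner-type) inequality: if $g_1,\ldots, g_m$ are nonnegative and $g_i$ depends only on $X_{S_i}$, where each $X_j$ belongs to at most $k$ of the sets $S_i$, then
\[
\avg{}{\prod_{i=1}^m g_i(X_{S_i})} \leq \prod_{i=1}^m \avg{}{g_i(X_{S_i})^k}^{1/k}.
\]
I would prove this by induction on the number of underlying coordinates. In the inductive step I fix one coordinate $X_{j^*}$, condition on all others, apply the standard Hölder inequality to the at most $k$ factors that depend on $X_{j^*}$ (using the power-mean inequality to upgrade the exponent from $|T|$ to $k$, where $T = \{i : j^* \in S_i\}$), and then apply the inductive hypothesis to the resulting product of $m$ nonnegative functions on the remaining coordinates, whose read-parameter is still at most $k$.

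Once this inequality is in hand, setting $g_i = e^{\lambda Y_i}$ and using the Boolean identity $\avg{}{e^{\lambda k Y_i}} = 1 + p_i(e^{\lambda k}-1)$ yields $\avg{}{e^{\lambda \sum_i Y_i}} \leq \prod_i (1+p_i(e^{\lambda k}-1))^{1/k}$. By the concavity of $q\mapsto \log(1+q(e^{\lambda k}-1))$ in $q$, Jensen's inequality allows each $p_i$ to be replaced by the average $p$, giving
\[
\avg{}{e^{\lambda \sum_i Y_i}} \leq \exp\Bigl(\frac{m}{k}\log\bigl(1+p(e^{\lambda k}-1)\bigr)\Bigr).
\]
Plugging this back into the Markov bound, setting $\mu = \lambda k$, and optimizing over $\mu > 0$ produces an exponent of $(m/k)\cdot \inf_\mu[-\mu(1+\varepsilon)p + \log(1+p(e^\mu-1))]$, and this infimum is the standard Chernoff exponent $-D(p(1+\varepsilon)\,\|\,p)$, which delivers the claimed bound.

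The main obstacle is establishing the read-$k$ Hölder inequality; it is the only step that invokes the read-$k$ hypothesis, and the factor $1/k$ in the final exponent is precisely the price paid for converting the dependent $Y_i$'s into the effectively independent structure needed for Chernoff. All subsequent steps are textbook exponential-moment and convex-optimization manipulations; the induction above, while slightly delicate in its bookkeeping of the read-parameter after conditioning, is otherwise routine.
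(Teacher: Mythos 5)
Your proof is correct. Note, though, that there is nothing in the paper to compare it against: Theorem \ref{thm:GLSS} is imported as a black box from Gavinsky--Lovett--Saks--Srinivasan, and the authors only remark that it also follows from an older, stronger result of Janson. Your route---Markov on the exponential moment, then the read-$k$ generalized H\"older (Finner) inequality $\avg{}{\prod_i g_i}\leq \prod_i \avg{}{g_i^k}^{1/k}$, then Jensen (concavity of $q\mapsto\log(1+q(e^{\lambda k}-1))$) to replace each $p_i$ by the average $p$, then the standard optimization yielding the exponent $-D(p(1+\varepsilon)\,\|\,p)\,m/k$---is a sound, self-contained derivation; it is essentially Janson's fractional-cover/H\"older argument specialized to uniform weights $1/k$ (a read-$k$ family admits the trivial fractional cover with all weights $1/k$), rather than the argument in the GLSS paper itself. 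Your inductive proof of the key inequality also checks out: conditioning on one coordinate, applying H\"older with exponent $|T|$ to the at most $k$ factors that read it, upgrading the exponent to $k$ by monotonicity of $L^p$ norms on a probability space, and recursing on $m$ nonnegative functions of the remaining coordinates whose read-parameter is still at most $k$, with the conditional $(\avg{}{g_i^k})^{1/k}$ factors collapsing correctly when the outer expectation is taken. The remaining steps ($\avg{}{e^{\lambda k Y_i}}=1+p_i(e^{\lambda k}-1)$ for Boolean $Y_i$, and the substitution $\mu=\lambda k$ giving the Chernoff infimum $-D(p(1+\varepsilon)\,\|\,p)$) are routine; the only unstated triviality is that when $p(1+\varepsilon)>1$ the left-hand side is $0$, so the bound holds vacuously there.
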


It was pointed out to us by an anonymous reviewer that the above actually follows easily from an older, stronger result of Janson~\cite{Janson}. We cite the result from~\cite{GLSS} since it is stated in the form above, which is useful for us.

Using standard estimates on the KL-divergence, \expref{Theorem}{thm:GLSS} implies the following.

\begin{corollary}
\label{cor:GLSS}
Let $Y_1,\ldots,Y_m$ be as in the statement of \expref{Theorem}{thm:GLSS} and assume $\avg{}{\sum_i Y_i}\leq \mu$. Then,
\[
\prob{}{\sum_i Y_i \geq 2\mu} \leq \exp(-\Omega(\mu/k)).
\]

\end{corollary}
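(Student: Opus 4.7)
The plan is to apply \expref{Theorem}{thm:GLSS} with an appropriately chosen $\varepsilon$ and then lower bound the resulting KL-divergence term. Write $\mu' := \avg{}{\sum_i Y_i} = pm$, so $\mu' \leq \mu$ by hypothesis. The degenerate case $\mu' = 0$ forces every $Y_i$ to vanish identically and the claim is trivial; the case $2\mu > m$ is also trivial since $\sum_i Y_i \leq m$ deterministically. In the remaining range, set $\varepsilon := 2\mu/\mu' - 1$, which is at least $1$ and satisfies $pm(1+\varepsilon) = 2\mu$ with $p(1+\varepsilon) \leq 1$. Plugging this into \expref{Theorem}{thm:GLSS} gives
\[
\prob{}{\sum_i Y_i \geq 2\mu} \;=\; \prob{}{\sum_i Y_i \geq pm(1+\varepsilon)} \;\leq\; \exp\bigl(-D(p(1+\varepsilon)\;\|\;p)\cdot m/k\bigr).
\]

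The remaining task is to show $D(p(1+\varepsilon)\;\|\;p)\cdot m = \Omega(\mu)$. Expanding the Bernoulli KL-divergence and applying $\ln(1/t) \geq 1-t$ to the $(1-q)\ln((1-q)/(1-p))$ summand yields the standard estimate $D(q\;\|\;p) \geq q\ln(q/p) - (q-p)$. Specializing to $q = p(1+\varepsilon)$ reduces matters to lower bounding $(1+\varepsilon)\ln(1+\varepsilon) - \varepsilon$. A short calculus check (the derivative of $(1+\varepsilon)\ln(1+\varepsilon) - \tfrac{4}{3}\varepsilon$ is $\ln(1+\varepsilon) - \tfrac{1}{3}$, which is positive for $\varepsilon \geq 1$, and the inequality already holds at $\varepsilon = 1$) shows that $(1+\varepsilon)\ln(1+\varepsilon) - \varepsilon \geq \varepsilon/3$ throughout $\varepsilon \geq 1$. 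Hence $D(p(1+\varepsilon)\;\|\;p) \geq p\varepsilon/3$, and multiplying by $m$ and using $pm\varepsilon = 2\mu - \mu' \geq \mu$ yields $D(p(1+\varepsilon)\;\|\;p)\cdot m \geq \mu/3$.

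The argument is entirely routine; the only mildly subtle point is that the deviation parameter $\varepsilon$ can be arbitrarily large when $\mu'$ is much smaller than $\mu$, so a quadratic Pinsker-type estimate such as $D(q\;\|\;p) \geq 2(q-p)^2$ is too weak in this regime. The linear-in-$\varepsilon$ growth of $(1+\varepsilon)\ln(1+\varepsilon) - \varepsilon$ for $\varepsilon \geq 1$ is exactly the range-robust KL estimate needed, and is what makes the bound depend only on $\mu$ and $k$ without any residual dependence on $\mu'$ or $p$.
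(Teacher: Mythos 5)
Your proof is correct, and it is exactly the argument the paper intends by its one-line remark that \expref{Corollary}{cor:GLSS} follows from \expref{Theorem}{thm:GLSS} ``using standard estimates on the KL-divergence'': you apply the theorem with $\varepsilon$ chosen so that $pm(1+\varepsilon)=2\mu$ and then use the standard bound $D(q\,\|\,p)\geq q\ln(q/p)-(q-p)$, whose linear growth in $\varepsilon$ for $\varepsilon\geq 1$ gives $D\cdot m=\Omega(\mu)$. The degenerate cases and the observation that a quadratic Pinsker-type bound would not suffice are handled appropriately, so nothing is missing.
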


\section{Correlation bounds against threshold circuits with small gate complexity}
\label{sec:gates-basic}

This section serves as a warm up to our main results in the following sections. Here we use a simple version of our technique to show that constant-depth threshold circuits with a small number of gates cannot correlate well with the Parity function. The following is a consequence of our main result.

\begin{corollary}
\label{cor:gates}
Fix any $d \geq 2$. Assume that $C$ is a depth-$d$ threshold circuit over $n$ variables with $k \leq n^{{1}/{(2(d-1))}}$ threshold gates and let $\delta = k/n^{{1}/{(2(d-1))}}$. Then, $\Corr(C,\Par_n) \leq O(\delta^{(1-{1}/{d})})$. In particular, for any constant $d$, $\Corr(C,\Par_n) = o(1)$ unless $k = \Omega(n^{1/2(d-1)})$.
\end{corollary}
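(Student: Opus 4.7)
The plan is to reduce the correlation to a noise-sensitivity bound via \expref{Proposition}{prop:ns}, and then bound the expected variance of $C$ under a random restriction by iteratively reducing the depth of the circuit using Peres' theorem (\expref{Theorem}{thm:Peres}). Concretely, $\Corr(C,\Par_n) \leq O(\NS_p(C)) = O(\avg{\rho\sim \mc{R}_{2p}^n}{\Var(C|_\rho)})$ for any $p \geq 1/n$, and I will take $p = 1/n$ and write the restriction as a composition $\rho = \rho_d \circ \cdots \circ \rho_1$ of $d$ independent restrictions with parameters $p_1,\ldots,p_d$ satisfying $\prod_j p_j = 2/n$; these parameters will be optimized at the end.

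For the iterative depth-reduction, let $k_j$ denote the number of original gates at depth exactly $j$, so $\sum_j k_j = k$ and $k_d = 1$ by the uniqueness of the output gate. I maintain the invariant that after round $j-1$ there is a circuit $C_{j-1}$ on the variables surviving $\rho_1,\ldots,\rho_{j-1}$ in which every original gate at depth $<j$ has been replaced by a constant; crucially, any gate originally at depth $j$ then has all its non-variable inputs equal to constants, so it is an LTF of the currently surviving variables. In round $j$, I apply $\rho_j$ and invoke Peres' theorem to conclude $\avg{\rho_j}{\Var(\phi|_{\rho_j})} = O(\sqrt{p_j})$ for each such depth-$j$ LTF $\phi$, so $\phi|_{\rho_j}$ is $O(\sqrt{p_j})$-close to its nearest constant in expectation (by \expref{Fact}{fac:var}). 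Replacing each such gate by its closest constant yields $C_j$; a union bound over the $k_j$ gates of depth $j$ gives expected per-round fractional error $O(k_j\sqrt{p_j})$, and telescoping via the triangle inequality yields $\avg{\rho}{\delta(C|_\rho,C_d)} = O(\sum_j k_j\sqrt{p_j})$. Since $C_d$ is a (random) constant function with $\Var(C_d)=0$, \expref{Fact}{fac:var} together with the closeness gives $\avg{\rho}{\Var(C|_\rho)} = O(\sum_j k_j\sqrt{p_j})$.

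It remains to minimize $\sum_j k_j\sqrt{p_j}$ subject to $\prod_j p_j = 2/n$. Setting $p_j \propto 1/k_j^2$ equates the summands (by Lagrange multipliers / weighted AM-GM) and yields the minimum value $d\cdot (\prod_j k_j)^{1/d}\cdot (2/n)^{1/(2d)}$. Using $k_d=1$ and $\sum_{j<d} k_j \leq k-1$, AM-GM on the remaining $k_j$'s bounds $\prod_j k_j \leq ((k-1)/(d-1))^{d-1}$, and substituting produces a final bound of $O(k^{(d-1)/d}/n^{1/(2d)})$, which equals $O(\delta^{1-1/d})$ by direct computation since $k = \delta\cdot n^{1/(2(d-1))}$. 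The hypothesis $k \leq n^{1/(2(d-1))}$ is exactly what is needed to guarantee that the optimal $p_j$ lie in $(0,1]$, i.e., that the restriction parameters are feasible.

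The main technical point is verifying the invariant that after rounds $1,\ldots,j-1$ every original depth-$j$ gate becomes an LTF of the surviving variables. This is what allows Peres' theorem to apply round after round, and it relies on the simple fact that replacing a gate by a constant converts its contribution to any successor gate into an additive shift that can be absorbed into the threshold. The only other subtlety is that the replacement constants in round $j$ are themselves random (depending on $\rho_1,\ldots,\rho_j$), but since the per-round bound $O(k_j\sqrt{p_j})$ holds conditionally on any outcome of the prior restrictions, the telescoping argument goes through cleanly after taking full expectations.
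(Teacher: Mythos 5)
Your proposal is correct and follows essentially the same route as the paper: \expref{Proposition}{prop:ns} converts the correlation bound into a noise-sensitivity/expected-variance bound, and the depth is then reduced one level per round by applying a random restriction, invoking Peres' theorem (\expref{Theorem}{thm:Peres}) to replace each surviving bottom-level threshold gate by its most likely constant, and union-bounding the error over the gates at that level (the paper packages this as \expref{Theorem}{thm:ns-gates}, $\NS_{p^{d-1}q}(C)\leq O(k\sqrt{p}+\sqrt{q})$, and then optimizes $p,q$). Your only deviation is a cosmetic refinement---optimizing separate per-level parameters $p_j$ against the per-level gate counts $k_j$ instead of a single $p$ for the first $d-1$ levels and $q$ for the output gate---which yields the same final bound $O(k^{1-1/d}/n^{1/(2d)})=O(\delta^{1-1/d})$, with the hypothesis $k\leq n^{1/(2(d-1))}$ playing the same feasibility role in both arguments.
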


It should be noted that Nisan~\cite{Nisan-threshold} already proved stronger correlation bounds for the \emph{Inner Product} function against \emph{any} threshold circuit (not necessarily constant-depth) with a sub-linear (much smaller than $n/\log n$) number of threshold gates. The idea of his proof is to first show that each threshold gate on $n$ variables has a $\delta$-error randomized communication protocol with complexity $O(\log(n/\delta))$~\cite[Theorem 1]{Nisan-threshold}. One can use this to show that any threshold circuit as in the theorem can be written as a decision tree of depth $n/k$ querying threshold functions and hence has an $\exp(-\Omega(k))$-error protocol of complexity at most $n/10$. Standard results in communication complexity imply that any such function can have correlation at most $\exp(-\Omega(k))$ with inner product.

However, such techniques cannot be used to obtain lower bounds or correlation bounds for the parity function, since the parity function has low communication complexity, even in the deterministic setting. An even bigger disadvantage to this technique is that it cannot be used to obtain \emph{any} superlinear lower bound on the wire complexity, since threshold circuits with a linear number of wires can easily compute functions with high communication complexity, such as the Disjointness function.  

The techniques we use here can be used to give nearly tight~\cite{Siuetal} correlation bounds for the parity function (and can also be strengthened to the setting of small wire complexity, as we will show later). In fact, we prove something stronger: we upper bound the noise sensitivity of small constant-depth threshold circuits, which additionally implies the existence of non-trivial learning algorithms~\cite{KOS,GS} for such circuits. Further, our techniques also imply noise sensitivity bounds for $\AC^0$ circuits augmented with a small number of threshold gates. For the sake of exposition, we postpone these generalizations to \expref{Section}{sec:tac0r} and prove only the noise sensitivity result for constant-depth threshold circuits in this section.

\subsection{Correlation bounds via noise sensitivity}

The main result of this section is the following.

\begin{theorem}
\label{thm:ns-gates}
Let $C$ be a depth-$d$ threshold circuit with at most $k$ threshold gates. Then, for any parameters $p,q\in [0,1]$, we have
\[
\NS_{p^{d-1}q}(C) \leq O(k\sqrt{p} + \sqrt{q}).
\]
\end{theorem}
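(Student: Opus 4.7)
My plan is to prove \expref{Theorem}{thm:ns-gates} by induction on the depth $d$, with the key technical step being an iterative depth-reduction via random restrictions. The first move is to convert the noise sensitivity into an expected variance using \expref{Proposition}{prop:ns}, writing
\[
\NS_{p^{d-1}q}(C) = \tfrac{1}{2}\avg{\rho \sim \mc{R}_{2p^{d-1}q}^n}{\Var(C|_\rho)}.
\]
The base case $d = 1$ (where $C$ is a single LTF, or a constant) is immediate from Peres's theorem (\expref{Theorem}{thm:Peres}), which gives $\NS_q(C) = O(\sqrt{q})$.

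For the inductive step, I would factor the random restriction as the composition of two independent stages: first $\sigma \sim \mc{R}_p^n$, then $\tau \sim \mc{R}_{2p^{d-2}q}^m$ on the $m$ surviving variables. A direct calculation shows that this composition is distributed exactly as $\mc{R}_{2p^{d-1}q}^n$ (each variable survives with probability $p \cdot 2p^{d-2}q = 2p^{d-1}q$, and conditioned on not surviving, is uniform in $\{-1,1\}$). Now apply $\sigma$ to $C$. For each threshold gate $g$ at the bottom layer, Peres' theorem yields $\avg{\sigma}{\Var(g|_\sigma)} = 2\NS_{p/2}(g) = O(\sqrt{p})$, and so by \expref{Fact}{fac:var} the fractional distance from $g|_\sigma$ to its best constant approximation $c_g^\sigma \in \{-1,1\}$ satisfies $\avg{\sigma}{\delta(g|_\sigma, c_g^\sigma)} = O(\sqrt{p})$. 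Form the circuit $C'_\sigma$ by replacing every bottom gate of $C|_\sigma$ with its best constant. Then $C'_\sigma$ has depth $d-1$ and at most $k$ threshold gates, and since there are at most $k$ bottom gates, a union bound (in expectation over $\sigma$) gives
\[
\avg{\sigma}{\delta(C|_\sigma, C'_\sigma)} \;\leq\; \sum_{g\text{ bottom}} \avg{\sigma}{\delta(g|_\sigma, c_g^\sigma)} \;\leq\; O(k\sqrt{p}).
\]

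To close the induction, I apply the inductive hypothesis to $C'_\sigma$ (a depth-$(d-1)$ circuit with $\leq k$ threshold gates), obtaining $\avg{\tau}{\Var(C'_\sigma|_\tau)} = 2\NS_{p^{d-2}q}(C'_\sigma) \leq O(k\sqrt{p} + \sqrt{q})$. The simple observation that $|\Var(f) - \Var(g)| \leq 4\,\delta(f,g)$ for Boolean $f,g$ (using $\Var(f) = 1 - \avg{}{f}^2$), combined with $\avg{\tau}{\delta(f|_\tau, g|_\tau)} = \delta(f, g)$, gives
\[
\avg{\tau}{\Var(C|_{\sigma}|_\tau)} \;\leq\; \avg{\tau}{\Var(C'_\sigma|_\tau)} + O(\delta(C|_\sigma, C'_\sigma)).
\]
Averaging over $\sigma$ and plugging in the two bounds above yields $\avg{\rho}{\Var(C|_\rho)} \leq O(k\sqrt{p} + \sqrt{q})$, hence $\NS_{p^{d-1}q}(C) \leq O(k\sqrt{p} + \sqrt{q})$ as desired. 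The $d$-dependence sits in the constants and is absorbed because we iterate only $d$ times.

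The main conceptual point (and the main obstacle to worry about) is that no individual bottom gate becomes a constant with certainty under $\sigma$; Peres only guarantees closeness to a constant in expectation. The resolution, which is the workhorse of the argument, is to choose the best constant \emph{pointwise} in $\sigma$ and then to pay a union bound in expectation rather than with high probability, so that the $O(k\sqrt{p})$ error accumulates additively across the $k$ bottom gates rather than multiplicatively across depths. This is exactly why the bound scales linearly with $k$ and cleanly supports the recursion.
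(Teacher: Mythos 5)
Your proposal is correct and follows essentially the same route as the paper's proof: the same factoring of $\mc{R}_{2p^{d-1}q}^n$ into $\mc{R}_p^n$ followed by $\mc{R}_{2p^{d-2}q}$ on the survivors, Peres' theorem plus \expref{Fact}{fac:var} to replace each bottom-layer threshold gate by its best constant with expected error $O(\sqrt{p})$, a union bound in expectation, induction on the resulting depth-$(d-1)$ circuit, and the variance-versus-distance comparison (the paper's \expref{Proposition}{prop:var}, which you reprove in an equally valid elementary way via $\Var(f)=1-\avg{}{f}^2$ and $\avg{\tau}{\delta(f|_\tau,g|_\tau)}=\delta(f,g)$). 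The only real difference is bookkeeping: the paper charges only the $k_1$ bottom gates at each level and recurses with $k-k_1$ gates, so each gate is paid for exactly once and the constant is independent of $d$, whereas bounding the bottom layer by $k$ at every level, as you do, gives a constant growing linearly with $d$---harmless for the paper's applications, where $d$ is fixed.
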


\begin{proof}
We assume that $q \leq {1}/{2}$, since otherwise we always have $\NS_{p^{d-1}q}(C)\leq 1\leq 2\sqrt{q}$. 

We prove below that for $p_d:= 2p^{d-1}q\in [0,1]$ and $\rho_d \sim \mc{R}_{p_d}^n$ ($n$ is the number of input variables to $C$), we have
\begin{equation}
\label{eq:ns1}
\avg{\rho_d}{\Var(C|_{\rho_d})} \leq O(k\sqrt{p} + \sqrt{q}).
\end{equation}

This will imply the theorem, since by \expref{Proposition}{prop:ns}, we have
\[
  \NS_{p^{d-1}q}(C) = \frac{1}{2}\avg{\rho_d}{\Var(C|_{\rho_d})}\,.
\]

The proof of (\ref{eq:ns1}) is by induction on the depth $d$ of the circuit. The base case $d=1$ is just Peres' theorem (\expref{Theorem}{thm:Peres}). 

Now assume that $C$ has depth $d > 1$. Let $k_1$ be the number of threshold circuits at depth $d-1$ in the circuit. We choose a random restriction $\rho\sim \mc{R}_p^n$ and consider the circuit $C|_{\rho}$. It is easy to check that 
\begin{equation}
\label{eq:ns2}
\avg{\rho_d}{\Var(C|_{\rho_d})} = \avg{\rho}{\avg{\rho_{d-1}}{\Var((C|_{\rho})|_{\rho_{d-1}})}}
\end{equation}
and hence to prove (\ref{eq:ns1}), it suffices to bound the expectation of $\Var((C|_{\rho})|_{\rho_{d-1}})$. 

Let us first consider the circuit $C|_\rho$. Peres' theorem tells us that on application of the restriction $\rho$, each threshold gate at depth $d-1$ becomes quite biased on average. Formally, by \expref{Theorem}{thm:Peres} and \expref{Fact}{fac:var}, for each threshold gate $\phi$ at depth $d-1$, there is a bit $b_{\phi,\rho}\in \{-1,1\}$ such that 
\[
\avg{\rho}{\prob{x\in \{-1,1\}^{|\rho^{-1}(*)|}}{\phi|_\rho(x) \neq b_{\phi,\rho}}}\leq O(\sqrt{p}).
\]

In particular, replacing $\phi|_\rho$ by $b_{\phi,\rho}$ in the circuit $C|_\rho$ yields a circuit that differs from $C|_\rho$ on only an $O(\sqrt{p})$ fraction of inputs (in expectation). Applying this replacement to each of the $k_1$ threshold gates at depth $d-1$ yields a circuit $C'_\rho$ with $k-k_1$ threshold gates and depth $d-1$ such that
\begin{equation}
\label{eq:ns3}
\avg{\rho}{\delta(C|_\rho,C'_\rho)} \leq O(k_1\sqrt{p}),
\end{equation}
where $\delta(C|_\rho,C'_\rho)$ denotes the fraction of inputs on which the two circuits differ. On the other hand, we can apply the inductive hypothesis to $C'_\rho$ to obtain
\begin{equation}
\label{eq:ns4}
\avg{\rho_{d-1}}{\Var((C'_\rho)|_{\rho_{d-1}})} \leq O((k-k_1)\sqrt{p} + \sqrt{q}).
\end{equation}
Therefore, to infer (\ref{eq:ns1}), we put the above together with (\ref{eq:ns3}) and the following elementary fact.

\begin{proposition}
\label{prop:var}
Say $f,g:\{-1,1\}^m\rightarrow \{-1,1\}$ and $\delta = \delta(f,g)$. Then, for any $r\in [0,1]$, we have
\[
  \avg{\rho\sim \mc{R}_r^n}{\Var(f|_\rho)} \leq \avg{\rho\sim \mc{R}_r^n}{\Var(g|_\rho)} +  4\delta\,.
\]
\end{proposition}
\begin{proof}[Proof of \expref{Proposition}{prop:var}]
  By \expref{Proposition}{prop:ns}, we know that
  \[
    \avg{\rho\sim \mc{R}_r^n}{\Var(f|_\rho)} = 2\NS_{r/2}(f)\,,
  \]
  and similarly for $g$. By definition of noise sensitivity, we have
  \[
    \NS_{r/2}(f) = \prob{(x,y)}{f(x) \neq f(y)}
  \]
  where
  $x\in \{-1,1\}^m$ is chosen uniformly at random and $y$ is chosen by
  flipping each bit of $x$ with probability $r/2$. Note that each of
  $x$ and $y$ is individually uniformly distributed over $\{-1,1\}^m$
  and hence, both $f(x) = g(x)$ and $f(y) = g(y)$ hold with
  probability at least $1-2\delta$. This yields
\[
\NS_{r/2}(f) = \prob{(x,y)}{f(x) \neq f(y)} \leq \prob{(x,y)}{g(x) \neq g(y)} + 2\delta = \NS_{r/2}(g) + 2\delta,
\]
which implies the claimed bound.
\end{proof}
\end{proof}

The above theorem yields the correlation bound for Parity stated above (\expref{Corollary}{cor:gates}) as we now show.

\begin{proof}[Proof of \expref{Corollary}{cor:gates}]
  We apply \expref{Theorem}{thm:ns-gates} with the following optimized parameters:
  \[
    p = \frac{1}{n^{1/d}}\cdot \frac{1}{k^{2/d}}
  \]
  and $q\in[0,1]$ such that $p^{d-1}q = {1}/{n}$. It may be
  verified that for this setting of parameters,
  \expref{Theorem}{thm:ns-gates} gives us
\[
\NS_{1/n}(C) \leq O\left(\frac{k^{1-1/d}}{n^{1/(2d)}}\right)\leq O(\delta^{1-\frac{1}{d}}).
\] 
As noted in \expref{Proposition}{prop:ns}, we have $\Corr(C,\Par_n)\leq O(\NS_{1/n}(C))$. This completes the proof.
\end{proof}

\begin{remark}
\label{rem:GS}
It is instructive to compare the above technique with the closely related work of Gopalan and Servedio~\cite{GS}. The techniques of~\cite{GS} applied to the setting of \expref{Theorem}{thm:ns-gates} show that $\NS_p(C) \leq O(k2^k\sqrt{p})$, which gives a better dependence on the noise parameter $p$, but a much worse dependence on $k$. Indeed, this is not surprising since in this setting, the technique of Gopalan and Servedio does not use the fact that the circuit is of depth $d$. The threshold circuit is converted to a decision tree of depth $k$ querying threshold functions and it is this tree that is analyzed. 

We believe that the right answer should incorporate the best of both bounds:
\[
  \NS_p(f) \leq O_d(k^{d-1}\cdot \sqrt{p}).
\]
As in \expref{Corollary}{cor:gates}, this would show that
$\Corr(C,\Par_n) = o(1)$ if $k = o(n^{1/2(d-1)})$, but additionally,
we would also get $\Corr(C,\Par_n) \leq n^{-{1}/{2}+o(1)}$ as long as
$k = n^{o(1)}$ and $d$ is a constant.
\end{remark}

It is known from the work of Siu, Roychowdhury and Kailath~\cite[Theorem 7]{Siuetal} that \expref{Corollary}{cor:gates} is tight in the sense that there do exist circuits of gate complexity roughly $n^{1/2(d-1)}$ that have significant correlation with $\Par_n$. More formally,

\begin{theorem}[Theorem 7 in~\cite{Siuetal}]
\label{thm:ubd-gates}
Let $\varepsilon > 0$ be an arbitrary constant. Then, there is a threshold circuit of depth $d$ with $O(d)\cdot (n\log(1/\varepsilon))^{1/2(d-1)}$ gates that computes $\Par_n$ correctly on a $1-\varepsilon$ fraction of inputs.
\end{theorem}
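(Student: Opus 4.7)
The plan is to proceed by induction on the depth $d$, giving an explicit construction in the style of Siu, Roychowdhury, and Kailath. At a high level, the base case uses Chernoff concentration to restrict attention to a small window of Hamming weights, on which parity can be realized as a single threshold of $\widetilde{O}(\sqrt{n})$ LTFs; the inductive step uses a block decomposition, applying the recursive hypothesis to each block and combining the block-parities with a shallow top subcircuit.

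For the base case $d=2$, I would appeal to \expref{Theorem}{thm:chernoff}: if $x \sim \{-1,1\}^n$ is uniform, then its Hamming weight $s := |\{i : x_i = 1\}|$ lies in a window $W \subseteq \{0,1,\ldots,n\}$ of length $L = O(\sqrt{n \log(1/\varepsilon)})$ centered at $n/2$ except with probability at most $\varepsilon$. On $W$, the function $s \mapsto (-1)^s$ can be written exactly as $\sgn\!\left( \sum_{t \in W} \alpha_t \cdot \sgn\!\left(\sum_i x_i - t + \tfrac12\right)\right)$ for appropriately chosen $\alpha_t \in \{\pm 1\}$ (telescoping the indicators $\mathbb{1}[s = t]$ as differences of thresholds and bundling same-parity integers together so that the inner sum assumes a different sign depending on the parity of $s$). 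This gives a depth-2 threshold circuit with $|W|+1 = O(\sqrt{n\log(1/\varepsilon)})$ gates that agrees with $\Par_n$ outside the Chernoff failure event, hence on at least a $(1-\varepsilon)$ fraction of inputs.

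For the inductive step from depth $d-1$ to depth $d$, I would partition the $n$ inputs into $m$ blocks of size $n/m$ (with $m$ to be chosen). Each block's parity is approximated by a depth-$(d-1)$ threshold circuit with error $\varepsilon/(2m)$ via the inductive hypothesis, using $O(d-1)\cdot\bigl((n/m)\log(m/\varepsilon)\bigr)^{1/(2(d-2))}$ gates per block. The $m$ resulting block-parities are then combined by a depth-$2$ approximate-parity circuit (the $d=2$ construction above) with error $\varepsilon/2$, contributing $O(\sqrt{m\log(1/\varepsilon)})$ additional gates. A union bound over the $m+1$ failure events controls the total error by $\varepsilon$, and the block-parities feeding into the top subcircuit can be conservatively treated as Boolean outputs, so the overall depth is $d$.

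The main obstacle I expect is tuning the recursion so that the total gate count matches exactly $O(d)\cdot(n\log(1/\varepsilon))^{1/(2(d-1))}$. This requires choosing $m$ to balance the $m \cdot g_{d-1}(n/m, \varepsilon/(2m))$ term against the top $O(\sqrt{m\log(1/\varepsilon)})$ term, and arranging the error budget so that the $\log(m/\varepsilon)$ factor arising from the union bound remains absorbed into the single $\log(1/\varepsilon)$ factor at the end of the recursion. It may also be necessary to refine the top-level combiner beyond a fresh approximate-parity subcircuit (for instance, by reusing threshold gates computed on partial sums across blocks, in the spirit of the counting-function approach of Siu--Roychowdhury--Kailath), so that the solution of the recursion achieves the optimal exponent $1/(2(d-1))$ rather than merely a power of $1/2$. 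Once the parameter choice is fixed, verifying the recurrence and unwinding it to the closed form should be routine.
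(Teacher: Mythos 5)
The paper does not actually prove this statement---it is imported verbatim as Theorem~7 of Siu--Roychowdhury--Kailath~\cite{Siuetal}---so your proposal has to be judged against that construction. Your base case $d=2$ is fine: Chernoff concentration of the Hamming weight into a window of length $O(\sqrt{n\log(1/\varepsilon)})$ plus the standard telescoping threshold-of-thresholds on that window is exactly the right argument, and your error accounting for composing approximators (the true block parities are uniform, so a union bound applies) is also sound. The gap is in the inductive step, and it is not the kind of gap that ``tuning $m$'' can close. First, the depth count is off: depth-$(d-1)$ block circuits feeding a depth-$2$ combiner give depth $d+1$, not $d$; two adjacent layers of threshold gates cannot in general be merged, and the combiner cannot be a single gate because parity of $m\ge 3$ bits is not a threshold function. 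Second, and more fundamentally, the recursion $g_d(n,\varepsilon)\approx m\cdot g_{d-1}(n/m,\varepsilon/2m)+O(\sqrt{m\log(1/\varepsilon)})$ cannot yield the claimed exponent. Plugging in $g_{d-1}(n',\cdot)\approx (n')^{1/(2(d-2))}$, the block term is about $m^{\,1-1/(2(d-2))}\,n^{1/(2(d-2))}$, which is \emph{increasing} in $m$ and hence at least $n^{1/(2(d-2))}$ for every choice of $m\ge 1$: your depth-$d$ circuit is never cheaper than your depth-$(d-1)$ circuit, let alone of size $n^{1/(2(d-1))}$. Already for $d=3$ the block term is at least $\sqrt{mn}\ge\sqrt{n}$ against a target of roughly $n^{1/4}$, and swapping the roles (cheap depth-$2$ circuits on the blocks, recursion on top) fails for the same reason, since $m$ blocks each cost $\sqrt{n/m}$ gates, i.e., $\sqrt{nm}\ge\sqrt{n}$ in total.

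The reason no block decomposition can work is that the savings in the true construction come from never splitting the input at all: every gate at every layer is a threshold of the \emph{global} sum (plus corrections from earlier gates), and depth is used to shrink the range in which the weight is known to lie, not to recombine sub-parities. Concretely, after the Chernoff step the weight lies in a window of length $O(\sqrt{n\log(1/\varepsilon)})$; one layer of $O\bigl((n\log(1/\varepsilon))^{1/(2(d-1))}\bigr)$ threshold gates, subtracted with even integer weights from the running linear form, folds this window down by that same factor while preserving the parity of the count; after $d-1$ such layers the residual range has constant size and the output gate reads off the parity, giving $O(d)\cdot(n\log(1/\varepsilon))^{1/(2(d-1))}$ gates in depth $d$. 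This is the ``counting-function'' mechanism you mention only as a possible refinement of your combiner; it is in fact the essential idea, and the block-recursion skeleton of your proposal would have to be replaced by it, not adjusted around it.
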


\section{Correlation bounds against threshold circuits with small wire complexity}
\label{sec:wires}

In this section, we prove correlation bounds against threshold circuits of small \emph{wire complexity.} We prove three results of this kind. 

The first result, proved in \expref{Section}{sec:d2thr}, is a near-optimal correlation bound against depth-$2$ circuits computing the Parity function. This result was obtained independently by Kane and Williams~\cite{KW}.

\begin{theorem}[Correlation bounds against depth-$2$ threshold circuits]
\label{thm:d2thr}
Fix any constant $\varepsilon < {1}/{2}$. Let $\gamma = ({1}/{2})-\varepsilon$. Any depth-2 threshold circuit on $n$ variables with at most $n^{1+\varepsilon}$ wires has correlation at most $n^{-\Omega(\gamma)}$ with the Parity function on $n$ variables.
\end{theorem}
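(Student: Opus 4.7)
The plan is to apply a random restriction that fixes most of the variables (leaving only a small number surviving), and use Peres' theorem (\expref{Theorem}{thm:Peres}) to eliminate the bottom threshold gates of large fan-in. After this elimination, the circuit becomes a low-degree polynomial threshold function (PTF) on the surviving variables, for which one has good noise-sensitivity bounds in the style of Aspnes--Beigel--Furst--Rudich~\cite{ABFR}.

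Write $C = \sgn(\sum_{i=1}^m w_i \phi_i - \theta)$, where each $\phi_i$ is an LTF of fan-in $r_i$ and $\sum_i r_i \leq n^{1+\varepsilon}$. Fix a fan-in threshold $t$ (to be optimized), and call $\phi_i$ \emph{heavy} if $r_i > t$ and \emph{light} otherwise. The wire budget gives at most $H \leq n^{1+\varepsilon}/t$ heavy gates. Apply a random restriction $\rho \sim \mc{R}_p^n$ with parameter $p$ chosen so that only about $pn$ variables survive.

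For each heavy gate, Peres' theorem yields $\avg{\rho}{\Var(\phi_i|_\rho)} = O(\sqrt{p})$, so the restricted heavy gate is close to a constant on average. I would replace every heavy gate by its best-approximating constant to obtain a simplified circuit $C'_\rho$. By \expref{Proposition}{prop:var} and linearity of expectation over the $H$ heavy gates, $\avg{\rho}{\delta(C|_\rho, C'_\rho)} \leq O(H\sqrt{p})$. The remaining (light) bottom gates each depend on at most $t$ variables, and so each is exactly a polynomial of degree at most $t$ in its inputs. Hence $\sum_i w_i \phi_i|_\rho$ (after the constant substitutions) is itself a polynomial of degree at most $t$, making $C'_\rho$ a PTF of degree at most $t$ on the surviving variables.

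Then I would invoke a noise-sensitivity bound for low-degree PTFs (e.g., Kane~\cite{Kane}) to bound the correlation of $C'_\rho$ with the parity of the surviving variables in terms of $t$ and $pn$. Combining with the heavy-gate replacement error via \expref{Fact}{fac:corr} and \expref{Proposition}{prop:ns} gives $\Corr(C, \Par_n) \leq O(H\sqrt{p}) + (\text{PTF bound})$. The hard part will be choosing $t$ and $p$ so that both contributions are simultaneously $n^{-\Omega(\gamma)}$: the heavy-gate error $O(n^{1+\varepsilon}\sqrt{p}/t)$ decreases in $t$, the PTF noise-sensitivity bound grows in $t$, and $p$ must be small enough to exploit Peres' but large enough that the residual parity is non-trivial. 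The assumption $\varepsilon < 1/2$ is precisely what makes such a balancing feasible; once $\varepsilon \geq 1/2$ the wire budget $n^{1+\varepsilon}$ reaches the $O(n^{1.5})$ upper bound of Siu et al.~\cite{Siuetal}, and no nontrivial correlation bound can hold. This also foreshadows why the technique does not extend to larger depths, since setting most variables leaves very few inputs to play with when peeling off the next layer.
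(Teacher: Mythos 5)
There is a genuine gap in the parameter balancing, and it comes from taking the degree of the resulting PTF to be the \emph{pre-restriction} fan-in bound $t$ of the light gates. With that choice the two error terms are $A = O(n^{1+\varepsilon}\sqrt{p}/t)$ (heavy-gate replacement) and, using the ABFR bound for a degree-$t$ PTF on the $\approx pn$ surviving variables, $B = O(t/\sqrt{pn})$. Their product is $A\cdot B = O(n^{1/2+\varepsilon})$, independent of $p$ and $t$, so $\max(A,B) \geq n^{1/4+\varepsilon/2} \gg 1$ and no choice of parameters makes both nontrivial, let alone $n^{-\Omega(\gamma)}$. The missing idea in the paper's proof is that the random restriction also shrinks the light gates: a light gate of fan-in at most $n^{\beta}$ (the paper takes $\beta = 3/4+\varepsilon/2$) survives, with probability $1-\exp(-n^{\Omega(1)})$ after a union bound, as a junta on only $O(p\,n^{\beta}) = n^{\alpha/2-\Omega(\gamma)}$ variables (with $p = n^{-(1-\alpha)}$, $\alpha=\gamma/3$). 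It is this post-restriction degree, smaller by a factor of roughly $p$, that enters the PTF bound; then the second term becomes $O(t\sqrt{p}/\sqrt{n})$, the product of the two terms is $O(n^{1/2+\varepsilon}p) = n^{-\Omega(\gamma)}$, and the balancing you describe actually goes through. Without this fan-in-reduction step your argument stalls exactly at the point you flag as ``the hard part.''

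A second, related problem is your choice of the PTF tool. Kane's noise-sensitivity bound carries a $2^{O(D^2\log D)}$ factor, so it is only nontrivial for degree $D = O(\sqrt{\log n/\log\log n})$; forcing the light-gate fan-in (even post-restriction) down to that scale would require either $t$ polylogarithmic---making the number of heavy gates, and hence the replacement error $H\sqrt{p}$, too large unless $p \ll n^{-2}$, i.e., no variables survive---or a restriction far too harsh to leave a nontrivial parity. The paper instead uses the Aspnes--Beigel--Furst--Rudich correlation bound (\expref{Theorem}{thm:ABFR}), whose dependence on the degree is only linear, $O(t/\sqrt{m})$, and which therefore tolerates the polynomially large degree $n^{\alpha/2-\Omega(\gamma)}$ that actually arises here. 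Aside from these two points, your outline (restriction, heavy/light split by fan-in, Peres plus Markov to replace heavy gates by constants, exact polynomial representation of the surviving juntas, and \expref{Fact}{fac:corr} to combine the errors) matches the paper's argument.
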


Note that the above theorem is nearly tight, since by \expref{Theorem}{thm:ubd-gates}, there is a depth-$2$ circuit with $O(\sqrt{n})$ gates (and hence $O(n^{3/2})$ wires) that computes Parity on $n$ variables correctly with high probability. The proof is a simple argument based on Peres' theorem along with a correlation bound against low-degree polynomial threshold functions due to Aspnes, Beigel, Furst and Rudich~\cite{ABFR}. 

The advantage of the above technique is that it yields a near-optimal lower bound in the depth-$2$ case. Unfortunately, however, it does not directly extend to depths larger than $2$. The main results of the section, stated below, show how to obtain correlation bounds against all constant depths.

The second result, proved in \expref{Section}{subsec:corr-par} via a more involved argument, yields a correlation bound for the parity function against all constant depths. 

\begin{theorem}[Correlation bounds for parity]
\label{thm:corr-parity}
For any $d \geq 1$, there is an $\varepsilon_d = 1/2^{O(d)}$ such that any  depth-$d$ threshold circuit $C$ with at most $n^{1+\varepsilon_d}$ wires satisfies $\Corr(C,\Par_n)\leq O(n^{-\varepsilon_d})$ where the $O(\cdot)$ hides absolute constants (independent of $d$ and $n$).
\end{theorem}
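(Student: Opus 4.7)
The plan is to prove \expref{Theorem}{thm:corr-parity} by induction on the depth $d$, with the base case $d = 1$ handled directly by \expref{Corollary}{cor:Peres}. For the inductive step, I follow the strategy outlined in the introduction: apply a random restriction $\rho \sim \mc{R}_p^n$ with $p = n^{-\alpha_d}$ for a carefully chosen $\alpha_d$, and reduce the depth of the circuit by one while losing only a small correlation factor. Combined with \expref{Fact}{fac:corr}(3) and the fact that $\Par_n|_\rho$ is just a parity on the surviving variables (up to a global sign), it will suffice to produce, for typical $\rho$, a depth-$(d-1)$ circuit on roughly $m := pn$ variables with at most $m^{1 + \varepsilon_{d-1}}$ wires that approximates $C|_\rho$ to error $O(n^{-\varepsilon_d})$.

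The main step is a refinement of Peres' theorem (this is \expref{Lemma}{lem:thresh-gate}, which the authors describe informally in the introduction): for any LTF $\phi$ on $n$ variables, after applying $\rho \sim \mc{R}_p^n$, with probability $1 - n^{-\Omega(1)}$ the restricted gate $\phi|_\rho$ is highly imbalanced (say, $n^{-\Omega(1)}$-close to a constant). I plan to establish this using Berry--Esseen together with a critical-index argument: split the coordinates of the weight vector $w$ into a "head" of large coordinates and a "tail" whose $\ell_2$ mass is concentrated; on the tail, Berry--Esseen makes $\ip{w}{x}$ Gaussian-like, so after restricting enough tail coordinates the gate either falls deep on one side of its threshold or can be analyzed in a lower-dimensional head problem by recursion on the critical index.

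Given this restriction lemma, the inductive step proceeds in three moves. (i) Call a depth-$(d-1)$ gate $\phi$ of $C|_\rho$ \emph{good} if $\phi|_\rho$ is $n^{-\Omega(1)}$-close to a constant, and \emph{bad} otherwise; by the restriction lemma and linearity of expectation the expected number of bad gates is at most $n^{1+\varepsilon_d} \cdot n^{-\Omega(1)}$, which is $\ll n$ for $\varepsilon_d$ small. (ii) Replace every good gate by its nearest constant; by a union bound over all $\leq n^{1+\varepsilon_d}$ such gates, \expref{Proposition}{prop:var}, and \expref{Fact}{fac:corr}(2), the resulting circuit agrees with $C|_\rho$ on a $1 - n^{-\Omega(\varepsilon_d)}$ fraction of inputs. (iii) For each remaining bad gate, fix all the variables feeding into it; since the total fan-in of bad gates is at most (bad-gate count) $\times$ (max fan-in) $\leq n^{1 - \Omega(1)}$, this eliminates the bad gates at depth $d-1$ without over-spending variables, leaving a genuine depth-$(d-1)$ threshold circuit.

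The main obstacle is the bookkeeping that keeps $\varepsilon_d$ from collapsing too fast. After restriction, the surviving variable count is $m \approx pn = n^{1-\alpha_d}$ while the wire count is at most $n^{1+\varepsilon_d}$; writing this as $m^{1+\varepsilon_{d-1}}$ requires $\varepsilon_{d-1} \geq (\varepsilon_d + \alpha_d)/(1 - \alpha_d)$, so every level roughly doubles $\varepsilon$ in the reverse direction. Choosing $\alpha_d = \Theta(\varepsilon_d)$ and balancing the error contributions from (ii), (iii), and the inductive correlation bound on the smaller circuit, one gets a recursion of the form $\varepsilon_d = \varepsilon_{d-1}/C$ for an absolute constant $C$, yielding $\varepsilon_d = 1/2^{O(d)}$ as claimed. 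Putting the pieces together via \expref{Fact}{fac:corr}(3) gives $\Corr(C, \Par_n) \leq O(n^{-\varepsilon_d})$, as desired.
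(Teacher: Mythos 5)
Your overall strategy is the paper's: induction on depth, a random restriction, the structural lemma for LTFs (\expref{Lemma}{lem:thresh-gate}), replacing the imbalanced gates at depth $d-1$ by constants, fixing the variables feeding the balanced ones, and a wire-versus-surviving-variable recursion giving $\varepsilon_d = 1/2^{O(d)}$. But two of your quantitative steps fail as written. In step (ii) you take a ``good'' gate to be only $n^{-\Omega(1)}$-close to a constant and then union-bound over up to $n^{1+\varepsilon_d}$ gates; that gives total error $n^{1+\varepsilon_d-\Omega(1)}$, and since the hidden exponent supplied by the restriction lemma is only $\Omega(\delta_d)$, this is vacuous. The paper instead uses that a surviving gate is, with probability $1-p^{\Omega(1)}$, $t$-imbalanced for $t = n^{\Omega(\delta_d)}$, and that by the Chernoff bound (\expref{Theorem}{thm:chernoff}) a $t$-imbalanced gate differs from its majority constant on only an $\exp(-\Omega(t^2)) = \exp(-n^{\Omega(\delta_d)})$ fraction of inputs; only this exponentially small per-gate error makes the union bound over all the replaced gates affordable.

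In step (iii), the bound ``(bad-gate count) $\times$ (max fan-in) $\leq n^{1-\Omega(1)}$'' is not valid: a single gate may read nearly all $n$ variables, so its surviving fan-in is about $pn$, the entire budget of live variables, and the product of even a sublinear bad-gate count with this is far larger than $pn$. The quantity that must be controlled is the fan-in-weighted sum of $\fanin(\phi_i|_\rho)$ over the balanced gates, and since ``$\phi_i|_\rho$ balanced'' and ``$\phi_i|_\rho$ has large surviving fan-in'' are correlated events, the paper first conditions on a \emph{generic} choice of $\rho^{-1}(*)$ (every large gate's surviving fan-in at most twice its expectation, via the multiplicative Chernoff bound), which yields an expected total of $O(pq\cdot n^{1+\varepsilon_d})$, and then applies Markov and takes $\delta_d = B\varepsilon_d$ with $B$ large so that this is much smaller than the roughly $pn$ live variables. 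That concentration step only works for gates of fan-in at least $n^{\delta_d}$, which is why the paper splits gates into large and small, first makes the circuit read-$O(n^{\varepsilon_d})$ by fixing half the variables (\expref{Fact}{fac:reads}), and finally eliminates the small gates by the independent-set argument of \expref{Proposition}{prop:smallfanin}, leaving each of them reading at most one variable. Your proposal omits the large/small split, the read-reduction, and the generic-set conditioning, and these are exactly what make the depth reduction go through without overspending the variable budget; the rest of your bookkeeping (the exponent recursion and the final averaging via \expref{Fact}{fac:corr}) matches the paper.
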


For comparison, we note that a result of Paturi and Saks~\cite{PS} shows that the Parity function can be computed by threshold circuits of depth $d$ with $n^{1+2^{-\Omega(d)}}$ many wires. A \emph{worst-case} lower bound of a similar form was proved by Impagliazzo et al.~\cite{IPS}.

Finally, we are also able to extend the techniques used in the proof of the above theorem to prove exponentially small correlation bounds. This kind of correlation bound cannot be proved for the Parity function since the Parity function on $n$ variables has correlation $\Omega(1/\sqrt{n})$ with the Majority function on $n$ variables (see, \eg,~\cite[Section 5.3]{Odonnellbook}), which clearly has a threshold circuit with only $n$ wires. 

We prove such a correlation bound in \expref{Section}{subsec:corr-andreev} for the Generalized Andreev function from \expref{Section}{sec:andreev}. For some technical reasons, we prove a slightly stronger result (\expref{Theorem}{thm:genstrong}) from which we obtain the following consequence. 

\begin{corollary}[Correlation bounds for Andreev's function]
\label{cor:corr-andreev}
For any constant $d \geq 1$, there is an $\varepsilon_d = {1}/{2^{O(d)}}$ such that the following holds. Let $F$ be the Generalized Andreev function on $5n$ variables as defined in \expref{Section}{sec:andreev} for any constant $\gamma < 1/6$. Any  depth-$d$ threshold circuit $C$ of wire complexity at most $n^{1+\varepsilon_d}$ satisfies $\Corr(C,F)\leq \exp(-n^{\Omega(\varepsilon_d)})$ where the $\Omega(\cdot)$ hides constants independent of $d$ and $n$.
\end{corollary}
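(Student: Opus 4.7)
The plan is to derive \expref{Corollary}{cor:corr-andreev} from the same depth-reduction engine sketched in the introduction: iteratively apply random restrictions that (with overwhelming probability) make almost every bottom-level threshold gate exponentially close to a constant, kill the rare stubborn gates by fixing all their inputs, and continue. Each round strips one layer and costs only an $\exp(-n^{\Omega(\varepsilon_d)})$ additive error. After $d-1$ rounds we are left with a single LTF on at least $n^\gamma$ surviving $x_2$-variables whose description length is $\leq n$, at which point \expref{Lemma}{lem:ckksz-corr} finishes the job.

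First, by \expref{Fact}{fac:kol-lbd} almost every $a\in\{-1,1\}^{4n}$ satisfies $K(a)\geq 3n$, so it suffices to bound $\Corr(F_a,C_a)$ where $C_a$ is the depth-$d$ threshold circuit on $n$ variables obtained by fixing $x_1=a$; $C_a$ has wire complexity at most $n^{1+\varepsilon_d}$. The depth-reduction step goes as follows. Given a depth-$d'$ sub-circuit $C'$ on $n'$ alive variables, pick $p$ such that after the remaining $d'-1$ rounds the expected number of survivors is still $\geq n^\gamma$, and draw $\rho\sim\mc{R}_p^{n'}$. The qualitative refinement of Peres' theorem (\expref{Lemma}{lem:thresh-gate}, invoked from \expref{Section}{sec:wires}) tells us that each bottom-level gate $\phi$ becomes $\exp(-n^{\Omega(\varepsilon_d)})$-close to a constant $b_\phi$ with probability $\geq 1-o(1)$. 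Because the wire bound forces each variable to feed at most $k:=W/n'\leq n^{O(\varepsilon_d)}$ bottom-level gates, these ``$\phi$ is imbalanced'' events are read-$k$ random variables across the gates, so \expref{Corollary}{cor:GLSS} yields that, except with probability $\exp(-n^{\Omega(\varepsilon_d)})$, at most an $o(1)$-fraction of bottom-level gates fails to be exponentially biased.

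On every such good $\rho$ I replace each imbalanced gate by $b_\phi$; the number of such replacements is at most the gate count, so the cumulative error is $\exp(-n^{\Omega(\varepsilon_d)})$. For the surviving few balanced gates I set \emph{all} variables feeding into them to arbitrary values; since these gates are few and each has fan-in $\leq W$, this kills only an $o(1)$-fraction of alive variables. The result is a depth-$(d'-1)$ threshold circuit on $(1-o(1))n'p$ variables with wire complexity still $\leq W$. A separate ``tip-of-the-iceberg'' case must be handled when some layer has fewer gates than the read-$k$ bound can tolerate: in that regime, as in the proof of \expref{Theorem}{thm:corr-parity}, I instead branch on the $2^{k_{d'-1}}$ possible assignments to those few gates and move them to the top of the circuit, which only costs $2^{k_{d'-1}}$ in description length and keeps the relevant parameters under control. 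Iterating the step $d-1$ times, and averaging over $\rho$ by \expref{Fact}{fac:corr}(3), reduces $C_a$ to an LTF $L$ on $\geq n^\gamma$ surviving variables, with the promise that $\Corr(F_a,C_a)$ exceeds the bound we want by at most the accumulated $\exp(-n^{\Omega(\varepsilon_d)})$ error plus $\Corr(F_a|_\rho,L)$.

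Finally, a bookkeeping argument (essentially collecting, layer by layer, the $O(W\log n)$ bits needed to describe the weights and thresholds of the surviving gates together with the constants $b_\phi$) shows that $L$ has binary description length $O(W\log n)\leq n$ provided $\varepsilon_d$ is chosen small enough compared to $\gamma$. Then \expref{Lemma}{lem:ckksz-corr} applied to $F_a|_\rho$ and $L$ yields $\Corr(F_a|_\rho,L)\leq \exp(-n^{\Omega(\gamma)})$, and combining the two errors gives $\Corr(F_a,C_a)\leq \exp(-n^{\Omega(\varepsilon_d)})$, which is \expref{Corollary}{cor:corr-andreev}. The main obstacle is the simultaneous parameter juggling: the geometric loss of surviving variables across $d$ rounds forces $\varepsilon_d = 1/2^{O(d)}$; the constraint $\gamma<1/6$ enters because one must leave enough ``description budget'' for \expref{Lemma}{lem:ckksz-corr} after paying $O(W\log n)$ bits per layer; and the failure probability in each round must stay $\exp(-n^{\Omega(\varepsilon_d)})$ even when the read parameter $k$ grows with $\varepsilon_d$. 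Checking that a single choice of $p$ per layer satisfies all three constraints in every round is the central delicate calculation.
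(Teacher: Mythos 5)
Your overall strategy matches the paper's: iteratively restrict so that most bottom-level gates become exponentially biased (via \expref{Lemma}{lem:thresh-gate} aggregated by the read-$k$ Chernoff bound, \expref{Corollary}{cor:GLSS}), fix the inputs of the few balanced gates or guess their values, and finish with \expref{Lemma}{lem:ckksz-corr} using the Kolmogorov-complexity argument on $a$ (\expref{Fact}{fac:kol-lbd}). The genuine gap is in your treatment of the guessed gates. Those gates are \emph{balanced}, so fixing them to guessed bits changes the circuit on a constant fraction of inputs; a given branch agrees with the true circuit only on inputs consistent with the guessed values, and that consistency condition is a conjunction of threshold functions which must be carried through all remaining rounds. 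Consequently you cannot end with ``an LTF $L$'' and compare $F_a|_\rho$ against $L$ alone. The paper handles this by proving a \emph{stronger} inductive statement (\expref{Theorem}{thm:genstrong}: intractability against a depth-$(d-1)$ circuit ANDed with up to $n^{\varepsilon_d}$ threshold functions), by expressing the simplified circuit as a disjoint OR over the $2^r$ settings of the $r\leq n^{\delta_d}$ guessed gates (the $(t,d,w)$-simplicity of \expref{Lemma}{lem:depthredn}), and by using \expref{Proposition}{prop:disjOR} to pay a $2^{n^{\delta_d}}$ multiplicative factor that is absorbed by the $\exp(-n^{\varepsilon_{d-1}/2})$ inductive bound. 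Your proposal has no analogue of the strengthened hypothesis or of this disjoint-OR accounting; saying the branching ``only costs $2^{k}$ in description length'' does not identify where that factor is paid in the correlation bound.

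The description-length bookkeeping is also off. You claim the final object has description length $O(W\log n)\leq n$, but $W$ can be $n^{1+\varepsilon_d}$, so $O(W\log n)\leq n$ is false for every $\varepsilon_d>0$; in any case one neither needs nor can afford to describe the original circuit layer by layer. What \expref{Lemma}{lem:ckksz-corr} requires is a $\leq n$-bit description of the \emph{final restricted} function on the $n^{\gamma}$ surviving variables: in the paper's base case this is a conjunction of at most $\sqrt{n}+1$ threshold functions on $n^{\gamma}$ variables, each describable in $O(n^{2\gamma})$ bits by \expref{Corollary}{cor:numthr}, so the constraint on $\gamma$ comes from (number of guard threshold functions)$\times n^{2\gamma}<n$, not from $W\log n$. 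Repairing your argument essentially forces you to reintroduce the paper's intractability induction, or some equivalent mechanism that carries the guessed gates' guard conditions into the final application of \expref{Lemma}{lem:ckksz-corr}.
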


We now state a key lemma that will be used in the proofs of our correlation bounds in Sections~\ref{subsec:corr-par} and~\ref{subsec:corr-andreev}. The lemma will be proved in \expref{Section}{subsec:msl}. 

Recall that a threshold gate with label $(w,\theta)$ is $t$-balanced if $|\theta|\leq t\cdot \mynorm{w}_2$.

\begin{lemma}[Main Structural lemma for threshold gates]
\label{lem:thresh-gate}
The following holds for some absolute constant $p_0\in [0,1]$. For any threshold gate $\phi$ over $n$ variables with label $(w,\theta)$ and any $p\in [0,p_0]$, we have\footnote{Recall from \expref{Definition}{def:restrict} that $\phi|_\rho$ is the threshold gate obtained by setting variables outside $\rho^{-1}(*)$ according to $\rho$.}
\[
\prob{\rho\sim \mc{R}^n_p}{\text{$\phi|_\rho$ is $\frac{1}{p^{\Omega(1)}}$-balanced}} \leq p^{\Omega(1)}.
\]
\end{lemma}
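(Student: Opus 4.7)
The plan is to use a critical-index argument, a standard tool for LTFs (see \cite{Ser,OS,DGJSV,MZ}), combined with the Berry-Esseen theorem for anti-concentration. Sort the weights so that $|w_1|\geq \cdots \geq |w_n|$, fix $\sigma = p^c$ for a small constant $c \in (0,1/2)$ to be optimized, and let $\tau$ be the smallest $k$ such that $(w_{k+1},\ldots,w_n)$ is $\sigma$-regular, i.e., $\max_{i > k} |w_i| \leq \sigma\sqrt{\sum_{j > k} w_j^2}$. The standard telescoping estimate gives $\sum_{j > k} w_j^2 \leq (1-\sigma^2)^k \|w\|_2^2$ for every $k \leq \tau$. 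Set $\tau' = \min(\tau, \tau_0)$ where $\tau_0 = O((\log(1/p))/\sigma^2) = O(p^{-2c}\log(1/p))$, and $H = [\tau']$, $T = [n]\setminus H$. Then $|H|\leq \tau_0$ and either $w|_T$ is $\sigma$-regular (if $\tau' = \tau$), or $\|w|_T\|_2 \leq p\|w\|_2$ (if $\tau' = \tau_0 < \tau$).

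Writing $\rho = (I, y)$ and $\theta' = \theta - \sum_{i\notin I} w_i y_i$, the argument proceeds in three steps. First, by a union bound the probability that any head variable is left free by $\rho$ is at most $p|H| \leq p\tau_0 = p^{1-2c}\log(1/p) = p^{\Omega(1)}$ whenever $c < 1/2$, so we may condition on $I\cap H = \emptyset$. Second, using that $\mathbb{E}[\|w|_I\|_2^2 \mid I\cap H = \emptyset] = p\|w|_T\|_2^2 \leq p\|w\|_2^2$, Markov's inequality gives $\|w|_I\|_2 \leq p^{(1-\alpha)/2}\|w\|_2$ with probability at least $1-p^{\alpha}$ for any constant $\alpha > 0$. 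Third, condition additionally on $y|_H$; then $\theta' = \tilde\theta - S_T$ where $\tilde\theta := \theta - \sum_{i\in H} w_i y_i$ is a fixed shift and $S_T := \sum_{i\in T\setminus I} w_i y_i$ is a signed sum over the fixed tail weights. In the generic case where $w|_T$ is $\sigma$-regular and $\|w|_T\|_2$ is a non-negligible fraction of $\|w\|_2$, Berry-Esseen applied to $S_T$ yields $\prob{y}{|\theta'|\leq t} \leq O(t/\|w|_{T\setminus I}\|_2) + O(\sigma)$ for every $t > 0$. Plugging in $t = p^{-c'}\|w|_I\|_2$ and using the Markov bound on $\|w|_I\|_2$, both terms are $p^{\Omega(1)}$ for suitably chosen $c, c', \alpha$; summing over the failure events yields the claimed $p^{\Omega(1)}$ probability that $\phi|_\rho$ remains $p^{-c'}$-balanced.

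The main technical obstacle is the remaining subcase $\tau' = \tau_0 < \tau$ (and more generally $\|w|_T\|_2 \ll \|w\|_2$), where the tail carries negligible norm and the argument above breaks down because $S_T$ is too small to yield anti-concentration by itself. Here $\|w|_I\|_2$ is itself already at most $p\|w\|_2$, and one must instead prove anti-concentration for $\tilde\theta = \theta - S_H$ viewed as a function of $y_H$. Although $w|_H$ need not be $\sigma$-regular, the defining property of the critical index forces consecutive head weights to decay geometrically, so one can either iterate the critical-index construction inside $H$ (reducing to a regular sub-problem in which Berry-Esseen applies) or invoke an Erd\H{o}s--Littlewood--Offord style anti-concentration bound to control $\prob{y_H}{|\theta - S_H| \leq \varepsilon}$ at the relevant scale $\varepsilon = p^{-c'}\|w|_I\|_2$. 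Optimizing $c, \alpha, c'$ at the end gives the lemma.
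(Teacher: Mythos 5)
Your overall strategy is the same as the paper's: a critical-index case analysis with Berry--Esseen anti-concentration in the regular case and the geometric decay of head weights in the non-regular case, together with a union bound ensuring no head variable survives and a Markov/Chernoff bound on $\mynorm{w|_I}_2$. However, two points in your write-up are genuine gaps rather than routine details. First, the case you yourself flag as the main obstacle---large critical index---is exactly where the real work lies, and you do not carry it out. Of the two routes you suggest, the generic Erd\H{o}s--Littlewood--Offord bound is quantitatively insufficient: it only gives $O(1/\sqrt{N})$ where $N$ is the number of head weights exceeding the relevant scale, and when the prefix norms $\mynorm{w_{\geq i}}_2$ decay quickly (say by a constant factor per step) one has $N = O(\log(1/p))$, so the bound is $1/\sqrt{\log(1/p)}$, not $p^{\Omega(1)}$. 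What is needed is the stronger statement that, because one can select $r = \Theta(\log(1/\varepsilon))$ head indices whose weights decrease geometrically and dominate everything that follows, at most one sign pattern on those coordinates can land the sum in the short ``balanced'' interval, giving probability $2^{-r} \leq p$. This is precisely \expref{Proposition}{prop:DGJSV} in the paper, and without it (or your ``iterate the critical index'' alternative made precise) the lemma is not proved in this case.

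Second, your Markov step normalizes incorrectly: you bound $\avg{}{\mynorm{w|_I}_2^2 \mid I\cap H=\emptyset} = p\mynorm{w|_T}_2^2$ but then weaken to $p\mynorm{w}_2^2$, and consequently your ``generic case'' needs $\mynorm{w|_T}_2$ to be a non-negligible fraction of $\mynorm{w}_2$. The leftover situation---a regular tail of tiny norm, e.g., one dominant weight $w_1$ plus a regular tail with $\mynorm{w|_T}_2 \leq p^{100}\mynorm{w}_2$---is then shunted into your hard subcase, but there the head can consist of $O(1)$ coordinates, and no head-based anti-concentration (iterated critical index or ELO) can give a $p^{\Omega(1)}$ bound from a bounded number of $\pm 1$ signs. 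The correct repair is simply to keep the comparison against $\mynorm{w|_T}_2$: with probability $1-p^{\Omega(1)}$ one has $\mynorm{w|_I}_2 \leq p^{\Omega(1)}\mynorm{w|_T}_2$, and then Berry--Esseen applied to the surviving tail sum at the tail's own scale handles every regular-tail case uniformly, regardless of how small $\mynorm{w|_T}_2$ is. This is how the paper's regular-case lemma (\expref{Lemma}{lem:reg}) is set up---all quantities there are normalized by the norm of the vector actually being restricted---so the slip is repairable, but as written your case decomposition leaves this configuration uncovered.
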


\subsection{Correlation bounds against depth-2 threshold circuits computing Parity}
\label{sec:d2thr}

In this section, we prove \expref{Theorem}{thm:d2thr}. The proof is based on the following two subclaims. 

\begin{theorem}[Aspnes, Beigel, Furst, and Rudich~\cite{ABFR}]
\label{thm:ABFR}
Any degree-$t$ polynomial threshold function (PTF)\footnote{We refer the reader to \expref{Section}{sec:defnsTF} for the definition of PTFs.} has correlation at most $O(t/\sqrt{m})$ with the parity function on $m$ variables.
\end{theorem}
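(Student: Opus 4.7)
My plan is to reduce to the case of symmetric PTFs, where the problem becomes a question about alternating partial binomial sums that can be bounded using the fact that a degree-$t$ polynomial has at most $t$ real roots.

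\textbf{The symmetric case.} Suppose first that $f = \sgn(p)$ where $p$ is itself a symmetric polynomial of degree at most $t$. Then $p(x) = Q(w(x))$ for a univariate polynomial $Q$ of degree at most $t$ (with $w(x)$ the Hamming weight), and hence $f(x) = F(w(x))$ where $F(w) := \sgn(Q(w))$. Since $Q$ has at most $t$ real roots, $F$ is constant on at most $t+1$ consecutive intervals of $\{0,1,\ldots,m\}$. Writing
\[
\Corr(f,\Par_m) = \left| \sum_{w=0}^m \binom{m}{w} 2^{-m} (-1)^w F(w) \right|
\]
and partitioning the sum according to these intervals, we reduce to bounding at most $t+1$ alternating partial binomial sums of the form $\sum_{w=a}^b \binom{m}{w}(-1)^w$. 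Pascal's identity yields the telescoping formula
\[
\sum_{w=a}^b \binom{m}{w}(-1)^w = (-1)^a \binom{m-1}{a-1} + (-1)^b \binom{m-1}{b},
\]
whose magnitude is at most $2\binom{m-1}{\lfloor (m-1)/2\rfloor} = O(2^m/\sqrt m)$. After the $2^{-m}$ normalization, each interval thus contributes $O(1/\sqrt m)$, yielding the total bound $O(t/\sqrt m)$.

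\textbf{Reducing the general case.} For arbitrary (possibly non-symmetric) $f$, I would exploit the permutation-invariance of $\Par_m$ to write $\Corr(f,\Par_m) = \Corr(\widetilde f,\Par_m)$ where $\widetilde f(x) := \avg{\pi}{f(\pi(x))}$ is averaged over all permutations of $[m]$. Now $\widetilde f \in [-1,1]$ is symmetric, so $\widetilde f(x) = G(w(x))$ and the correlation reduces to $|\sum_w \binom{m}{w} 2^{-m} (-1)^w G(w)|$. The principal obstacle is that $\widetilde f$ is no longer a PTF: while the symmetrized polynomial $\bar p(x) := \avg{\pi}{p(\pi(x))} = Q(w(x))$ still has degree at most $t$, averaging signs does not commute with taking the sign, so $\widetilde f \neq \sgn(\bar p)$ in general and one cannot directly invoke the $t+1$-interval structure of the symmetric case for $G$.

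\textbf{Overcoming the obstacle.} The cleanest route, in line with the rest of this paper, is via a noise sensitivity bound: any degree-$t$ PTF $f$ satisfies $\NS_p(f) \le O(t\sqrt p)$ for all $p \in (0,1/2]$. This extends Peres' theorem (\expref{Theorem}{thm:Peres}, the $t=1$ case) and can be proved either by induction on $t$ (writing $p = p_0 + x_i p_1$ along a random coordinate $i$ and recursing on $p_1$, which has degree $t-1$) or by appealing to hypercontractivity (the Bonami-Beckner inequality applied to the degree-$t$ polynomial $p$ to control anticoncentration). Combining this with \expref{Proposition}{prop:ns}(2) at $p = 1/m$ yields $\Corr(f,\Par_m) \le O(\NS_{1/m}(f)) \le O(t/\sqrt m)$, completing the proof. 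The hardest part is establishing the noise sensitivity bound $\NS_p(f) = O(t\sqrt p)$ with the correct linear dependence on $t$; a naive union bound over monomials gives only an exponential dependence, and obtaining linear in $t$ requires care with the inductive setup or a direct anticoncentration argument for low-degree polynomials.
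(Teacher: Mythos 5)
Your symmetric-case calculation is fine, and you correctly identify why naive symmetrization fails; but the route you then rely on has a genuine gap. The bound $\NS_p(f)\leq O(t\sqrt{p})$ for degree-$t$ PTFs is not a theorem one can establish "with care with the inductive setup": it is precisely the noise-sensitivity form of the Gotsman--Linial conjecture, which is a well-known open problem (this paper itself says so in the remark following \expref{Corollary}{cor:taco}). The Peres-style induction does not go through for $t\geq 2$ (restricting a coordinate of a PTF does not decompose it into lower-degree PTFs in the way the LTF argument exploits), and hypercontractivity-based arguments give only much weaker unconditional bounds. The best known result is Kane's (\expref{Lemma}{lem:kane-ns}), namely $\NS_p(f)\leq \sqrt{p}\,(\log(1/p))^{O(t\log t)}2^{O(t^2\log t)}$; plugging this into \expref{Proposition}{prop:ns} would give correlation $m^{-1/2}\cdot(\log m)^{O(t\log t)}2^{O(t^2\log t)}$, which is far from the claimed $O(t/\sqrt{m})$ in its dependence on $t$ and carries extra polylog factors in $m$. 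So your proposal proves the theorem only conditionally.

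For the record, the paper does not prove this statement; it cites Aspnes--Beigel--Furst--Rudich, whose argument is elementary and quite different from noise sensitivity. Writing $f=\sgn(p)$ with $\deg p\leq t$ and letting $E$ be the set of inputs where $f\neq\Par_m$, one takes any polynomial $q$ of degree at most $\lfloor (m-t-1)/2\rfloor$ vanishing on $E$; then $p\,q^2$ has degree less than $m$, so $\sum_x p(x)q(x)^2\Par_m(x)=0$, while off $E$ every summand is nonnegative. A dimension count over the space of such $q$ forces
\[
|E|\;\geq\;\sum_{i\leq (m-t-1)/2}\binom{m}{i}\;\geq\;2^{m-1}-O\left(\frac{t\,2^m}{\sqrt{m}}\right),
\]
which gives $\Corr(f,\Par_m)\leq O(t/\sqrt{m})$ unconditionally. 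If you want a complete proof, this interpolation/weak-representation argument is the one to carry out; your symmetric-case telescoping computation is essentially the same binomial estimate that appears in that dimension count, so your work there is not wasted.
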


\begin{lemma}
\label{lem:d2depthredn}
Let $\varepsilon,\gamma$ be as in the statement of \expref{Theorem}{thm:d2thr} and let $\alpha$ denote $\gamma/3$. Say $C$ denotes a depth-$2$ threshold circuit of wire complexity $n^{1+\varepsilon}$ and let $f_1,\ldots,f_t$ be the LTFs computed by $C$ at depth-$1$. Under a random restriction $\rho$ with $*$-probability $p = {1}/{n^{1-\alpha}}$, with probability at least $1-n^{-\Omega(\gamma)}$, the circuit $C|_\rho$ is  $n^{-\Omega(\gamma)}$-approximated by a circuit $\tilde{C}_\rho$ which is obtained from $C$ by replacing each of the $f_i|_\rho$s by an $O(n^{\alpha/2-\Omega(\gamma)})$-junta $g_i$.
\end{lemma}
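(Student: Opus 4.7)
\noindent The plan is to partition the depth-$1$ LTFs $f_1,\dots,f_t$ by fan-in and treat the two resulting classes differently under $\rho \sim \mc{R}_p^n$. Fix a threshold $M = n^{1-\gamma+\delta}$ with $\delta = \Theta(\gamma)$ to be tuned; call $f_i$ \emph{heavy} if $\fanin(f_i) \geq M$ and \emph{light} otherwise. Since the total wire count is $n^{1+\varepsilon} = n^{3/2 - \gamma}$, the number of heavy gates is $T_H \leq n^{1+\varepsilon}/M = n^{1/2 - \delta}$, while the light gates may be numerous but each has small fan-in.

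\noindent For each heavy gate, let $b_i(\rho) \in \{-1,+1\}$ denote the constant value closest to $f_i|_\rho$. Peres' theorem (\expref{Theorem}{thm:Peres}) combined with \expref{Fact}{fac:var} gives $\avg{\rho}{\delta(f_i|_\rho, b_i(\rho))} = O(\sqrt{p})$ for each individual $i$. Summing over heavy $i$ and applying Markov's inequality, with probability at least $1 - n^{-\Omega(\gamma)}$ over $\rho$ we have $\sum_{\text{heavy}\ i} \delta(f_i|_\rho, b_i(\rho)) \leq n^{-\Omega(\gamma)}$. The calculation works out because $T_H \sqrt{p} = n^{\alpha/2 - \delta} = n^{\gamma/6 - \delta}$, which is $n^{-\Omega(\gamma)}$ as soon as $\delta$ exceeds $\gamma/6$ by a sufficient constant factor (recall $\alpha = \gamma/3$).

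\noindent For each light $f_i$, the surviving fan-in is $\mathrm{Bin}(\fanin(f_i), p) \leq \mathrm{Bin}(M, p)$ with mean $Mp = n^{\alpha - \gamma + \delta}$, a small negative power of $n$ for the chosen $\delta$. The elementary binomial tail bound $\prob{}{\mathrm{Bin}(M,p) \geq k} \leq (eMp/k)^k$ then makes the probability that the surviving fan-in of a single light gate exceeds $k := n^{\alpha/2 - \Omega(\gamma)}$ doubly exponentially small, so a union bound over the at most $n^{1+\varepsilon}$ light gates still leaves a failure probability far below $n^{-\Omega(\gamma)}$. On the intersection of the two good events, define $\tilde C_\rho$ by replacing each heavy $f_i|_\rho$ with the constant $b_i(\rho)$ (a $0$-junta, hence trivially an $O(n^{\alpha/2 - \Omega(\gamma)})$-junta) and each light $f_i|_\rho$ by itself (which now depends on at most $k = O(n^{\alpha/2-\Omega(\gamma)})$ live variables). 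A single union bound at the top threshold gate yields $\delta(C|_\rho, \tilde C_\rho) \leq \sum_i \delta(f_i|_\rho, g_i) = \sum_{\text{heavy}} \delta(f_i|_\rho, b_i(\rho)) \leq n^{-\Omega(\gamma)}$, which is the conclusion of the lemma.

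\noindent The main obstacle is that Peres' theorem applied gate-by-gate gives only $O(\sqrt{p})$ expected ``distance from a constant,'' which is far too weak to union bound over all $n^{1+\varepsilon}$ depth-$1$ LTFs simultaneously. The heavy/light split is precisely the device that circumvents this: we pay the Peres--Markov cost only on the few heavy gates that the wire budget permits, while the aggressive restriction parameter $p = n^{-(1-\alpha)}$ handles the many remaining light gates trivially by pruning almost all of their inputs. The specific choice $\alpha = \gamma/3$ is what aligns the two constraints on $M$ (heavy count bound from Peres--Markov, surviving-fan-in bound from Chernoff) so that both conclusions hold simultaneously with the same $n^{-\Omega(\gamma)}$ slack.
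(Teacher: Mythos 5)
Your proof is correct and follows essentially the same route as the paper's: split the depth-$1$ LTFs by fan-in, replace the few high-fan-in gates by their majority constants via Peres' theorem plus Markov, and use a Chernoff/binomial tail bound to argue the many low-fan-in gates survive as $O(n^{\alpha/2-\Omega(\gamma)})$-juntas, combining the two errors by a union bound at the output gate. Your cutoff $M=n^{1-\gamma+\delta}$ with $\delta=\gamma/2$ is exactly the paper's threshold $n^{\beta}$ with $\beta=3/4+\varepsilon/2$, so the difference is purely in parameterization.
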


Assuming the above two claims, we can finish the proof of \expref{Theorem}{thm:d2thr} easily as follows. 

Let $C$ be a circuit of wire complexity $n^{1+\varepsilon}$. We apply a random restriction $\rho$ with $*$-probability $p = {1}/{n^{1-\alpha}}$ as in \expref{Lemma}{lem:d2depthredn}. Call the restriction \emph{good} if there is a circuit $\tilde{C}_\rho$ as in the lemma that $n^{-\Omega(\gamma)}$-approximates $C|_\rho$ and $|\rho^{-1}(*)|\geq n^{\alpha}/2$. The probability that the first of these events does not occur is at most $n^{-\Omega(\gamma)}$ by \expref{Lemma}{lem:d2depthredn} and the probability of the second is at most $\exp(\Omega(n^{-\alpha}))$ by a Chernoff bound (\expref{Theorem}{thm:chernoff-mul}). Thus, the probability that $\rho$ is not good is at most
\[
  n^{-\Omega(\gamma)} + \exp(-\Omega(n^{-\alpha})) \leq n^{-\Omega(\gamma)}.
\]

Say $\rho$ is a good restriction. Note that each restricted LTF at depth-$1$ in the circuit $\tilde{C}_\rho$  is an $O(n^{\alpha/2-\Omega(\gamma)})$-junta and hence can be represented \emph{exactly} by a polynomial of the same degree. This implies that $\tilde{C}_\rho$ is a $O(n^{\alpha/2-\Omega(\gamma)})$-degree PTF and hence, by \expref{Theorem}{thm:ABFR}, has correlation at most $n^{-\Omega(\gamma)}$ with the Parity function (on the remaining $n^{\alpha}/2$ variables). Moreover, then $C|_\rho$ is well-approximated by $\tilde{C}_\rho$ and hence has correlation at most $n^{-\Omega(\gamma)} + n^{-\Omega(\gamma)}$ with parity. 

Upper bounding the correlation by $1$ for bad restrictions, we see that the overall correlation is at most $n^{-\Omega(\gamma)}$.

We now prove \expref{Lemma}{lem:d2depthredn}.

\begin{proof}[Proof of \expref{Lemma}{lem:d2depthredn}]
Let $f_1,\ldots,f_t$ be the LTFs appearing at depth $1$ in the circuit. We will divide the analysis based on the fan-ins of the $f_i$s (\ie, the number of variables they depend on). 

We denote by $\beta$ the quantity ${3}/{4}+ {\varepsilon}/{2}$. It can be checked that we have both
\begin{equation}
\label{eq:beta-props}
\beta = \frac{1}{2}+\varepsilon+\frac{\alpha}{2}+\Omega(\gamma) \qquad\text{and}\qquad  1-\beta = \frac{\alpha}{2}+\Omega(\gamma).
\end{equation}

Consider any $f_i$ of fan-in at most $n^{\beta}$. When hit with a random restriction with $*$-probability $n^{-(1-\alpha)}$, we see that the expected number of variables of $f_i$ that survive is at most $n^{\beta -(1-\alpha)} = n^{\alpha - (1-\beta)} = n^{\alpha/2 - \Omega(\gamma)}$ by (\ref{eq:beta-props}) above. By a Chernoff bound (\expref{Theorem}{thm:chernoff-mul}), the probability that this number exceeds twice its expectation is exponentially small. Union bounding over all the gates of small fan-in, we see that with probability $1-\exp(-n^{\Omega(1)})$, all the low fan-in gates depend on at most $2n^{\alpha/2-\Omega(\gamma)}$ many variables after the restriction. We call this high probability event $\mc{E}_1$.

Now, we consider the gates of fan-in at least $n^{\beta}$. Without loss of generality, let $f_1,\ldots,f_r$ be these LTFs. Since the total number of wires is at most $n^{1+\varepsilon}$, we have $r \leq n^{1+\varepsilon - \beta} = n^{{1}/{2}-{\alpha}/{2} - \Omega(\gamma)}$ by (\ref{eq:beta-props}).

By \expref{Theorem}{thm:Peres}, we know that for any $f_i$,
\[
\avg{\rho}{\mathrm{Var}(f_i|_\rho)} \leq O(\sqrt{p}) = O\left(\frac{1}{n^{(1-\alpha)/2}}\right).
\]
By linearity of expectation, we have
\[
E:=\avg{\rho}{\sum_{i=1}^r\mathrm{Var}(f_i|_\rho)} \leq O\left(r\cdot \frac{1}{n^{(1-\alpha)/2}}\right) = O\left(n^{(1-\alpha)/2-\Omega(\gamma)}\cdot \frac{1}{n^{(1-\alpha)/2}}\right) = O(n^{-\Omega(\gamma)}).
\]

By Markov's inequality, we see that the probability that
\[
  \sum_{i=1}^r\mathrm{Var}(f_i|_\rho) > \sqrt{E}
\]
is at most $\sqrt{E} = n^{-\Omega(\gamma)}$. We let $\mc{E}_2$ denote
the event that
\[
  \sum_{i=1}^r\mathrm{Var}(f_i|_\rho) \leq \sqrt{E}.
\]

Consider the event $\mc{E} = \mc{E}_1 \wedge \mc{E}_2$. A union bound tells us that the probability of $\mc{E}$ is at least $1-n^{-\Omega(\gamma)}$. When this event occurs, we construct the circuit $\tilde{C}_\rho$ from the statement of the claim as follows.

When the event $\mc{E}$ occurs, the LTFs of low arity are already $n^{\alpha/2 - \Omega(\gamma)}$-juntas, so there is nothing to be done for them.

Now, consider the LTFs of high fan-in, which are $f_1,\ldots,f_r$. For each $f_i|_\rho$ ($i\in [r]$), replace $f_i$ by a bit $b_i\in \{-1,1\}$ such that
\[
  \prob{x}{f_i|_\rho(x) = b_i}
  \geq \frac{1}{2}.
\]
In the circuit $\tilde{C}_\rho$, these gates thus become constants, which are $0$-juntas. The circuit $\tilde{C}_\rho$ now has the required form. We now analyze the error introduced by this operation.

We know that
\[
  \prob{x}{f_i|_\rho(x) \neq b_i} \leq 2\mathrm{Var}(f_i|_\rho)
\]
and thus the overall error introduced is at most
\[
2\sum_{i\in [r]}\mathrm{Var}(f_i|_\rho) \leq O(\sqrt{E}) =
n^{-\Omega(\gamma)}
\]
(since $\mc{E}_2$ is assumed to occur). Thus, the circuit
$\tilde{C}_\rho$ is an $n^{-\Omega(\gamma)}$-approximation to $C$.
\end{proof}

\subsection{Correlation bounds for Parity against constant-depth circuits}
\label{subsec:corr-par}

In this section, we prove \expref{Theorem}{thm:corr-parity}, assuming \expref{Lemma}{lem:thresh-gate}. The proof proceeds by iteratively reducing the depth of the circuit. In order to perform this depth-reduction for a depth-$d$ circuit, we need to analyze the threshold map defined by the threshold gates at depth $d-1$. The first observation, which follows from Markov's inequality, shows that we may assume (after setting a few variables) that the map reads each variable only a few times.

\begin{fact}[Small wire-complexity to small number of reads]
\label{fac:reads}
Let $C$ be any threshold circuit on $n$ variables with wire complexity at most $cn$. Then, there is a set $S$ of at most $n/2$ variables such that each variable outside $S$ is an input variable to at most $2c$ many gates in $C$.
\end{fact}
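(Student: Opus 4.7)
The plan is a one-line averaging argument via Markov's inequality on the input-variable degree sequence of the circuit. For each $i \in [n]$, let $d_i$ denote the number of gates of $C$ that receive $x_i$ as an input wire. Since the wire complexity is defined as the sum of the fan-ins of all gates in $C$, and each wire from input $x_i$ to some gate contributes exactly $1$ to this sum, we have
\[
\sum_{i=1}^n d_i \;\leq\; \text{(wire complexity of $C$)} \;\leq\; cn.
\]
Hence the average of the $d_i$'s is at most $c$.

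Next I would apply Markov's inequality to conclude that the number of indices $i \in [n]$ with $d_i > 2c$ is at most $cn/(2c) = n/2$. Letting $S := \{ i \in [n] : d_i > 2c \}$, we have $|S| \leq n/2$, and every variable $x_j$ with $j \notin S$ satisfies $d_j \leq 2c$ by construction, which is exactly the stated conclusion.

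There is essentially no obstacle here; the only thing to verify carefully is the bookkeeping that matches the definition of wire complexity with the sum of $d_i$'s (one must note that a variable feeding multiple gates is counted once per gate it feeds, which is consistent with both the definition of fan-in and the definition of $d_i$). Once this is in place, the proof is immediate.
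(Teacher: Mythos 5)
Your proposal is correct and is exactly the argument the paper intends: the paper states \expref{Fact}{fac:reads} without a written proof, noting only that it "follows from Markov's inequality," which is precisely your averaging bound $\sum_i d_i \leq cn$ followed by Markov to get $|S| \leq n/2$. Nothing is missing.
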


The second observation is that if the fan-ins of all the threshold gates are small, then depth-reduction is easy (after setting some more variables). 

\begin{proposition}[Handling small fan-in gates]
\label{prop:smallfanin}
Let $C = (\phi_1,\ldots,\phi_m)$ be any read-$k$ threshold map on $n$ variables such that $\max_i \fanin(\phi_i) \leq t$. Then, there is a set $S$ of $n/kt$ variables such that each $\phi_i$ depends on at most one variable in $S$.
\end{proposition}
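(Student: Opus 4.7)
The plan is to reduce the statement to a classical independent-set argument on an auxiliary \emph{conflict graph}. I will define a graph $G$ on vertex set $[n]$ (the variables) by placing an edge between two variables $u$ and $v$ whenever there is some gate $\phi_i$ that reads both of them. An independent set in $G$ is then exactly a subset $S \subseteq [n]$ such that no gate $\phi_i$ contains two elements of $S$, i.e., each $\phi_i$ depends on at most one variable of $S$, which is precisely the conclusion we want.

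Next I will bound the maximum degree of $G$. Fix a variable $u$; since the threshold map is read-$k$, the variable $u$ is an input to at most $k$ of the gates $\phi_1,\ldots,\phi_m$. Each such gate has fan-in at most $t$ and so contributes at most $t-1$ other variables adjacent to $u$ in $G$. Hence the degree of $u$ in $G$ is at most $k(t-1) < kt$, so the maximum degree $\Delta(G)$ satisfies $\Delta(G) \leq kt - 1$.

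Finally I will invoke the greedy independent-set bound: any graph on $n$ vertices with maximum degree $\Delta$ has an independent set of size at least $n/(\Delta+1)$. Applied to $G$, this gives an independent set $S$ of size at least $n/kt$, finishing the proof. (Equivalently, one can argue directly: iteratively pick a vertex $v$, add it to $S$, and delete $v$ together with its at most $k(t-1)$ neighbors in $G$; each iteration removes at most $kt$ vertices, so the process runs for at least $n/kt$ steps.)

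There is no real obstacle here; the whole argument is elementary combinatorics. The only thing to be slightly careful about is the degree bookkeeping—making sure the read-$k$ condition is used correctly and that the bound $k(t-1)$ (rather than $kt$) is what allows the clean $n/(kt)$ conclusion via $n/(\Delta+1)$.
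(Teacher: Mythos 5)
Your proof is correct and follows essentially the same route as the paper: build the conflict graph on the variables, bound its maximum degree using the read-$k$ and fan-in-$t$ conditions, and greedily extract an independent set of size at least $n/kt$. Your slightly sharper degree bound $k(t-1)\leq kt-1$, combined with the $n/(\Delta+1)$ greedy bound, is in fact a bit more careful than the paper's statement that the degree is at most $kt$.
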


\begin{proof}
This may be done via a simple graph theoretic argument. Define an undirected graph whose vertex set is the set of $n$ variables and two variables are adjacent iff they feed into the same threshold gate. We need to pick an $S$ that is an independent set in this graph. Since the graph has degree at most $kt$, we can greedily find an independent set of size at least $n/kt$. Let $S$ be such an independent set.
\end{proof}

Let $B > 2$ be a constant real parameter that we will choose to satisfy various constraints in the proofs below. For $d\geq 1$, define 
\begin{equation}
\label{eq:epsdeltadefn}
\varepsilon_d = B^{-(2d-1)}\qquad\text{and}\qquad\delta_d = B\varepsilon_d. 
\end{equation}

\begin{reptheorem}{thm:corr-parity}[Correlation bounds for Parity (restated)]
For any $d \geq 1$ and $c\leq n^{\varepsilon_d}$, any  depth-$d$ threshold circuit $C$ with at most $cn$ wires satisfies $\Corr(C,\Par_n)\leq O(n^{-\varepsilon_d})$ where the $O(\cdot)$ hides absolute constants (independent of $d$ and $n$).
\end{reptheorem}

\paragraph{Proof idea.} The proof will be an induction on $d$. Assume that $C$ has $n^{1+\varepsilon}$ many wires for a small $\varepsilon > 0$. Our aim is to apply a restriction that leaves plenty of variables alive (\ie, set to $*$) while at the same time reduces the depth of the circuit from $d$ to $d-1$. 

We first apply a random restriction $\rho$ with $*$-probability $n^{-\delta}$ for a suitable $\delta > 0$ to the circuit $C$ and analyze the effect of $\rho$ on the threshold gates in $C$ at depth $d-1$ (just above the variables). 

By \expref{Lemma}{lem:thresh-gate}, we see that with high probability, each threshold gate $\phi$ at depth $d-1$ becomes imbalanced and hence (by the Chernoff bound (\expref{Theorem}{thm:chernoff})) highly biased. Such gates can be replaced by constants without noticeably changing the correlation with the parity function. Now the only gates at depth $d-1$ are the balanced gates.

It remains to handle the balanced gates. Consider a gate $\phi$ of $C$ with initial fan-in $k$. The expected number of input wires to $\phi$ that survive the random restriction $\rho$ is $k\cdot n^{-\delta}$. Further by \expref{Lemma}{lem:thresh-gate}, the probability that $\phi$ remains biased is at most $n^{-\Omega(\delta)}$. While these two events are not independent, we can argue that their effects are somewhat independent in the following sense. We argue that the random variable $Y$ that is equal to the fan-in of $\phi|_\rho$ if $\phi$ is balanced and $0$ otherwise has expectation roughly $k\cdot n^{-\delta}\cdot n^{-\Omega(\delta)}$. Hence, the expected number of wires $W$ that feed into balanced gates is at most $n^{1+\varepsilon}\cdot n^{-\delta}\cdot n^{-\Omega(\delta)}$ where the first term accounts for the total fan-ins of all the gates at depth $d-1$. For $\delta$ suitably larger than $\varepsilon$, the expectation of $W$ is much smaller than $n^{1-\delta}$, the expected number of surviving variables. Thus, if we set all the variables corresponding to the wires feeding into the balanced gates, we can set all the balanced gates to constants. In the process, we don't set too many live variables (since the expectation of $W$ is much smaller than $n^{1-\delta}$) and further we reduce the depth of the circuit to $d-1$. 

We can now apply the induction hypothesis to bound the correlation of the resulting circuit with the Parity function on (roughly) $n^{1-\delta}$ variables to get the correlation bound against depth-$d$ circuits.

\begin{proof}
The proof is by induction on the depth $d$ of $C$. The base case is $d=1$, which is the case when $C$ is only a single threshold gate. In this case, \expref{Corollary}{cor:Peres} tells us that $\Corr(C,\Par_n)\leq O(n^{-1/2})\leq n^{-\varepsilon_1}$, since $B> 2$.

Now, we handle the inductive case when the depth $d > 1$. Our analysis proceeds in phases.

\paragraph{Phase 1.} We first transform the circuit into a read-$2c$ circuit by setting $n/2$ variables. This may be done by \expref{Fact}{fac:reads}. This defines a restriction tree of depth $n/2$. By \expref{Fact}{fac:corr}, it suffices to show that each leaf of this restriction tree, the correlation of the restricted circuit and $\Par_{n/2}$ remains bounded by $O(n^{-\varepsilon_d})$. 

Let $n_1$ now denote the new number of variables and let $C_1$ now be the restricted circuit at some arbitrary leaf of the restriction tree.  By renaming the variables, we assume that they are indexed by the set $[n_1]$.

\paragraph{Phase 2.} Let $\phi_1,\ldots,\phi_m$ be the threshold gates at depth $d-1$ in the circuit $C_1$. We call $\phi_i$ \emph{large} if $\fanin(\phi_i) > n^{\delta_d}$ and \emph{small} otherwise. Let $L\subseteq [m]$ be defined by $L = \{ i\in [m] \mid \phi_i \text{ large}\}$. Assume that $|L| = \ell$. Note that $\ell\cdot n^{\delta_d}\leq n^{1+\varepsilon_d}$ and hence $\ell\leq n^{1+\varepsilon_d-\delta_d}\leq n$.

We restrict the circuit with a random restriction $\rho = (I,y)\sim \mc{R}_p^{n_1}$, where $p = n^{-\delta_d/2}$. By \expref{Lemma}{lem:thresh-gate}, we know that for each $i\in [m]$ and some $t = {1}/{p^{\Omega(1)}}$ and $q = p^{\Omega(1)}$,
\begin{equation}
\label{eq:corrpar1}
\prob{\rho}{\text{$\phi_i|_\rho$ is $t$-balanced}} \leq q.
\end{equation}

Further, we also know that for each $i\in L$, the expected value of $\fanin(\phi_i|_\rho) = p\cdot \fanin(\phi_i)$, since each variable is set to a constant with probability $1-p$. Since $i\in L$, the expected fan-in of each $\phi_i$ ($i\in L$) is at least $n^{\delta_d/2}$. Hence, by a Chernoff bound (\expref{Theorem}{thm:chernoff-mul}), we see that for any $i\in L$,
\begin{equation}
\label{eq:corrpar2}
\prob{\rho}{\fanin(\phi_i|_\rho) > 2p\cdot \fanin(\phi_i)}\leq \exp(-\Omega(n^{\delta_d/2})).
\end{equation}

Finally another Chernoff bound (\expref{Theorem}{thm:chernoff-mul}) tells us that
\begin{equation}
\label{eq:corrpar3}
\prob{\rho = (I,y)}{|I| < \frac{n_1p}{2}} \leq \exp(-\Omega(n_1p)) = \exp(-\Omega(np)).
\end{equation}

We call a set $I$ \emph{generic} if $|I| \geq {n_1p}/{2}$ and $\fanin(\phi_i|_\rho) \leq 2p\cdot \fanin(\phi_i)$ for each $i\in L$. Let $\mc{G}$ denote the event that $I$ is generic. By  (\ref{eq:corrpar2}) and (\ref{eq:corrpar3}), we know that $$\prob{I}{\neg\mc{G}}\leq \ell\exp(-\Omega(n^{\delta_d/2})) + \exp(-\Omega(np))\leq \exp(-n^{\delta_d/4}).$$ In particular, conditioning on  $\mc{G}$ doesn't change (\ref{eq:corrpar1}) by much.
\begin{equation}
\label{eq:corrpar4}
\prob{\rho = (I,y)}{\text{$\phi_i|_\rho$ is $t$-balanced}\; \middle|\; \mc{G}} \leq q + \exp(-n^{\delta_d/4})\leq 2q.
\end{equation}

Our aim is to further restrict the circuit by setting \emph{all} the input variables to the gates $\phi_i$ that are $t$-balanced. In order to analyze this procedure, we define random variables $Y_i$ ($i\in L$) so that $Y_i=0$ if $\phi_i|_\rho$ is $t$-imbalanced and $\fanin(\phi_i|_\rho)$ otherwise. Let $Y = \sum_{i\in L}Y_i$. Note that 
\[
  \avg{\rho}{Y_i\ |\ \mc{G}}\leq (2p\cdot \fanin(\phi_i))\cdot \prob{\rho}{\text{$\phi_i|_\rho$ is $t$-balanced}\;\middle|\; \mc{G}}\leq 4pq\cdot \fanin(\phi_i)
\]
where the first inequality follows from the fact that since we have conditioned on $I$ being generic, we have
\[
  \fanin(\phi_i|_\rho)\leq 2p\cdot \fanin(\phi_i)
\]
with probability $1$. Hence, we have
\begin{equation}
\label{eq:corrpar5}
\avg{\rho}{Y\;\middle|\; \mc{G}} \leq 4pq\cdot \sum_i\fanin(\phi_i)\leq 4pq\cdot n^{1+\varepsilon_d}.
\end{equation}
We let $\mu := 4pq\cdot n^{1+\varepsilon_d}$. By Markov's inequality,
\begin{equation}
\label{eq:corrpar6}
\prob{\rho}{Y \geq \frac{\mu}{\sqrt{q}}\;\middle|\; \mc{G}} \leq \sqrt{q}.
\end{equation}

In particular, we can condition on a \emph{fixed} generic $I\subseteq [n]$ such that for random $y\sim \{-1,1\}^{n_1-|I|}$, we have
\[
\prob{y}{Y \geq \frac{\mu}{\sqrt{q}}}\leq \sqrt{q}.
\]

The above gives us a restriction tree $T$ (that simply sets all the variables in $[n_1]\setminus I$) such that at all but $1-2\sqrt{q}$ fraction of leaves $\lambda$ of $T$, the total fan-in of the large gates at depth $1$ in $C_1$ that are $t$-balanced is at most ${\mu}/{\sqrt{q}}$; call such $\lambda$ \emph{good} leaves. Let $n_2$ denote $|I|$, which is the number of surviving variables. 

\paragraph{Phase 3.} We will show that for any good leaf $\lambda$, we have 
\begin{equation}
\label{eq:corrpar7}
\Corr(C_\lambda,\Par_{n_2})\leq n^{-\varepsilon_d}
\end{equation}
where $C_\lambda$ denotes $C_1|_{\rho_\lambda}$. This will prove the theorem, since we have by \expref{Fact}{fac:corr},
\begin{align*}
\Corr(C_1,\Par_{n_1}) &\leq \avg{\lambda\sim T}{\Corr(C_\lambda,\Par_{n_2})} \\
&\leq \prob{\lambda}{\text{$\lambda$ not good}} + \max_{\text{$\lambda$ good}} \Corr(C_\lambda,\Par_{n_2})\\
&\leq 2\sqrt{q}+ n^{-\varepsilon_d} \leq 2n^{-\varepsilon_d}
\end{align*}
where we have used the fact that $\Par_{n_1}|_{\rho_\lambda} = \pm\Par_{n_2}$ for each leaf $\lambda$, and also that $2\sqrt{q} \leq n^{-\varepsilon_d}$ for a large enough choice of the constant $B$.

It remains to prove (\ref{eq:corrpar7}). We do this in two steps.

In the first step, we set all large $t$-imbalanced gates to their most probable constant values. Formally, for a $t$-imbalanced threshold gate $\phi$ labeled by $(w,\theta)$, we have $|\theta| \geq t\cdot \mynorm{w}_2$. We replace $\phi$ by a constant $b_\phi$ which is $1$ if $\theta \geq t\cdot \mynorm{w}_2$ and by $-1$ if $-\theta\geq t\cdot \mynorm{w}_2$. This turns the circuit $C_\lambda$ into a circuit $C'_\lambda$ of at most the wire complexity of $C_\lambda$. Further, note that for any $x\in \{-1,1\}^{n_1}$, $C_\lambda(x) = C'_\lambda(x)$ unless there is a $t$-imbalanced threshold gate $\phi$ such that $\phi(x)\neq b_{\phi}(x)$. By the Chernoff bound (\expref{Theorem}{thm:chernoff}) the probability that this happens for any fixed imbalanced threshold gate is at most
\[
  \exp(-\Omega(t^2))\leq \exp(-n^{\Omega(\delta_d)}).
\]
By a union bound over the $\ell\leq n$ large threshold gates, we see
that
\[
  \prob{x}{C_\lambda(x)\neq C'_\lambda(x)}\leq
  n\exp(-n^{\Omega(\delta_d)}).
\]
In particular, we get by
\expref{Fact}{fac:corr}
\begin{equation}
\label{eq:corrpar8}
\Corr(C_\lambda,\Par_{n_2})\leq \Corr(C_{\lambda}',\Par_{n_2}) + n\exp(-n^{\Omega(\delta_d)})\leq \Corr(C_{\lambda}',\Par_{n_2}) + \exp(-n^{\varepsilon_d})
\end{equation}
where the last inequality is true for a large enough constant $B$.

In the second step, we further define a restriction tree $T_\lambda$ such that $C'_\lambda$ becomes a depth-$(d-1)$  circuit with at most $cn$ wires at \emph{all the leaves} of $T_\lambda$. We first restrict by setting all variables that feed into \emph{any} of the $t$-balanced gates. The number of variables set in this way is at most 
\[
\frac{\mu}{\sqrt{q}}\leq 4p\sqrt{q}\cdot n^{1+\varepsilon_d} \leq (pn)\cdot (4\sqrt{q}n^{\varepsilon_d}) \leq \frac{pn}{8}\leq \frac{n_2}{2}
\]
for a large enough choice of the constant $B$. This leaves $n_3\geq {n_2}/{2}$ variables still alive. Further, all the large $t$-balanced gates are set to constants \emph{with probability $1$}. Finally, by \expref{Proposition}{prop:smallfanin}, we may set all but a set $S$ of $n_4 = n_3/2cn^{\delta_d}$ variables to ensure that with probability $1$, all the small gates depend on at most one input variable each. At this point, the circuit $C'_\lambda$ may be transformed to a depth-$(d-1)$ circuit $C''_\lambda$ with at most as many wires as $C'_\lambda$, which is at most $cn$. 

Note that the number of unset variables is $n_4 \geq pn/8cn^{\delta_d}\geq n^{1-2\delta_d}$, for large enough $B$. Hence, the number of wires is at most $$cn \leq n_4^{\frac{1+\varepsilon_d}{1-2\delta_d}}\leq n_4^{(1+\varepsilon_d)(1+3\delta_d)}\leq n_4^{1+\varepsilon_{d-1}}$$ for suitably large $B$. Thus, by the inductive hypothesis, we have
\[
\Corr(C''_\lambda,\Par_{n_4})\leq O(n_4^{-\varepsilon_{d-1}})\leq n^{-\varepsilon_d}/2
\]
with probability $1$ over the choice of the variables restricted in the second step. Along with (\ref{eq:corrpar8}) and \expref{Fact}{fac:corr}, this implies (\ref{eq:corrpar7}) and hence the theorem.
\end{proof}

\subsection{Exponential correlation bounds for the Generalized Andreev function}
\label{subsec:corr-andreev}

We now prove an exponentially strong correlation bound for the Generalized Andreev function defined in \expref{Section}{sec:andreev} with any $\gamma < 1/6$. For any $d\geq 1,$ we define the constants $\varepsilon_d$ and $\delta_d$ as in~\eqref{eq:epsdeltadefn}, where $B > 2$ is a large constant that will be chosen below.

\paragraph{Proof overview.} As in the case of \expref{Theorem}{thm:corr-parity}, the proof proceeds by an iterative depth reduction. The proof of the depth reduction here is technically more challenging than the proof of \expref{Theorem}{thm:corr-parity} since we are trying to prove exponentially small correlation bounds and hence, we can no longer afford to ignore ``bad'' events that occur with polynomially small probability. This forces us to prove a more involved depth-reduction statement, which we prove as a separate lemma (\expref{Lemma}{lem:depthredn} below). We show how to use this depth-reduction statement to prove our correlation bound in \expref{Theorem}{thm:genstrong} and \expref{Corollary}{cor:corr-andreev}. 

We begin with a definition that will be required to state our depth-reduction lemma.

\begin{definition} [Simplicity]
We call a threshold circuit $C$ \emph{$(t,d,w)$-simple} if there is a set $R$ of $r\leq t$ threshold functions $g_1,\ldots,g_r$ such that for every setting of these threshold functions to bits $b_1,\ldots,b_r$, the circuit $C$ can be represented on the corresponding inputs (\ie, inputs $x$ satisfying $g_i(x) = b_i$ for each $i\in [r]$) by a depth-$d$ threshold gate of wire complexity at most $w$. 
\end{definition}

In particular, note that a $(t,d,w)$-simple circuit $C$ may be expressed as
\begin{equation}
\label{eq:simple}
C(x) = \bigvee_{b_1,\ldots,b_r\in \{-1,1\}} \left(C_{b_1,\ldots,b_r}\wedge \bigwedge_{i: b_i = -1} g_i \wedge \bigwedge_{i: b_i = 1} (\neg g_i)\right).
\end{equation}
where each $C_{b_1,\ldots,b_r}$ is a depth-$d$ circuit of wire complexity at most $w$. Further, note that the OR appearing in the above expression is disjoint (\ie, no two terms in the OR can be simultaneously true).

We need the following elementary property of disjoint ORs.

\begin{proposition}
\label{prop:disjOR}
Let $g_1,\ldots,g_N$ be Boolean functions defined on $\{-1,1\}^n$ such that no two of them are simultaneously true and let $h$ denote $\bigvee_{i\in [N]}g_i$. Then, for any $f:\{-1,1\}^n\rightarrow \{-1,1\}$, we have
\[
  \Corr(f,h)\leq (N-1)\Corr(f,1)+\sum_{i\in [N]}\Corr(f,g_i),
\]
where $1$ denotes the constant $1$ function.
\end{proposition}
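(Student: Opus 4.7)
The plan is to reduce the statement to a pointwise linear identity between $h$, the $g_i$'s, and the constant $1$ function, and then invoke the triangle inequality on inner products. Concretely, I would first use the disjointness hypothesis to establish the pointwise identity
\[
  h(x) = \sum_{i \in [N]} g_i(x) - (N-1)
\]
for every $x \in \{-1,1\}^n$, where I identify $-1$ with TRUE and $+1$ with FALSE (the opposite convention leads to the same bound up to a sign on the constant term).

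To verify this identity, I would do a short case analysis on how many of the $g_i$ are true at a given $x$. By disjointness, this number is either zero or one. In the first case all $g_i(x) = +1$, so $h(x) = +1$ and $\sum_i g_i(x) = N$; in the second case exactly one $g_j(x) = -1$ while the rest are $+1$, so $h(x) = -1$ and $\sum_i g_i(x) = N-2$. Either way $h(x) - \sum_i g_i(x) = 1 - N$, which is the claimed identity.

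Having established the identity, I would take the inner product of both sides with $f$ and use linearity of $\ip{\cdot}{\cdot}$ to obtain
\[
  \ip{f}{h} = \sum_{i \in [N]} \ip{f}{g_i} - (N-1)\, \ip{f}{1}.
\]
Taking absolute values and applying the triangle inequality then gives
\[
  \Corr(f,h) = |\ip{f}{h}| \leq \sum_{i \in [N]} |\ip{f}{g_i}| + (N-1)\,|\ip{f}{1}| = \sum_{i \in [N]} \Corr(f,g_i) + (N-1)\,\Corr(f,1),
\]
which is exactly the claimed bound. I do not anticipate any real obstacle here; the only minor care needed is in matching the Boolean $\pm 1$-encoding when carrying out the case analysis on the values of the $g_i$, and noting that the identity is \emph{exact} (not merely approximate) precisely because disjointness rules out the ``two-or-more true'' cases that would otherwise cause $\sum_i g_i(x)$ to under- or overshoot.
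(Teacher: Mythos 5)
Your proposal is correct and matches the paper's argument: both establish the exact identity $h = \sum_{i\in [N]} g_i - (N-1)$ from disjointness (the paper via the indicator identity $\frac{1-h}{2} = \sum_i \frac{1-g_i}{2}$, you via a direct case analysis) and then conclude by bilinearity of the inner product and the triangle inequality. No gaps.
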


\begin{proof}
Note that $h$ being the disjoint OR of $g_1,\ldots,g_N$ translates into the following identity:
\[
\frac{1-h}{2} = \sum_{i\in [N]}\frac{1-g_i}{2}.
\]
Hence, we have $h = \left(\sum_i g_i \right) -(N-1)$. Thus, using the bilinearity of the inner product and the triangle inequality
\[
\Corr(f,h) = |\ip{f}{h}| \leq \sum_{i\in [N]} \Corr(f,g_i) + (N-1) \Corr(f,1).\qedhere
\]
\end{proof}

We are now ready to prove our main depth-reduction lemma for threshold circuits with small wire complexity.

\begin{lemma}[Depth-reduction lemma with exponentially small failure probability]
\label{lem:depthredn}
Let $d\geq 1$ be any constant and assume that $\varepsilon_d,\delta_d$ are defined as in (\ref{eq:epsdeltadefn}) with $B$ large enough. Say we are given any depth-$d$ threshold circuit $C$ on $n$ variables with at most $n^{1+\varepsilon_d}$ wires. 

There is a restriction tree $T$ of depth $n - n^{1-2\delta_d}$ with the following property: for a random leaf $\lambda \sim T$, let $\mc{E}(\lambda)$ denote the event that the circuit $C|_{\rho_\lambda}$ is $\exp(-n^{\varepsilon_d})$-approximated by an $(n^{\delta_d},d-1,n^{1+\varepsilon_d})$-simple circuit. Then, $\prob{\lambda}{\neg \mc{E}(\lambda)}\leq \exp(-n^{\varepsilon_d})$.
\end{lemma}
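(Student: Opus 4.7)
My plan is to build $T$ in three phases and define the approximating simple circuit adaptively, following the same template as the proof of \expref{Theorem}{thm:corr-parity} but replacing each polynomially-small failure event there with an exponentially-small one via \expref{Corollary}{cor:GLSS}. Phase~1 will apply \expref{Fact}{fac:reads} deterministically, setting $n/2$ variables so the remaining circuit $C_1$ on $n_1 \ge n/2$ variables is read-$2c$; this contributes no failure probability. Looking at the $m$ threshold gates at depth $d-1$ of $C_1$, if $m \le n^{\delta_d}$ I simply take $R = \{\phi_1,\ldots,\phi_m\}$ to be the set of threshold functions witnessing simplicity, and extend $T$ by setting an arbitrary further $n_1 - n^{1-2\delta_d}$ variables; on every leaf, fixing the $g_i$'s to any bit pattern converts $C|_{\rho_\lambda}$ exactly into a depth-$(d-1)$ circuit of wire complexity at most $n^{1+\varepsilon_d}$, so $\mc{E}(\lambda)$ holds with zero approximation error.

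In the more involved case $m > n^{\delta_d}$, I will partition the depth-$(d-1)$ gates by fan-in into \emph{large} (fan-in $> n^{\delta_d}$, so $m_L \le n^{1+\varepsilon_d-\delta_d}$ of them) and \emph{small}, then extend $T$ with a random-restriction stage drawing $\rho \sim \mc{R}_p^{n_1}$ for $p = n^{-\delta_d/2}$. By \expref{Lemma}{lem:thresh-gate}, each large gate becomes $t$-balanced with $t = n^{\Omega(\delta_d)}$ with probability only $q = n^{-\Omega(\delta_d)}$. Since $C_1$ is read-$2c$, the indicators $Z_i = \mathbf{1}[\phi_i|_\rho\text{ balanced}]$ over large $i$ are $2c$-read, and \expref{Corollary}{cor:GLSS} gives $\Pr[\sum Z_i \ge n^{\delta_d}] \le \exp(-\Omega(n^{\delta_d}/c)) \le \exp(-n^{\varepsilon_d})$ provided $m_Lq \le n^{\delta_d}/2$. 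Combined with multiplicative Chernoff bounds (\expref{Theorem}{thm:chernoff-mul}) for the generic event $|I| \ge n_1p/2$ and $\fanin(\phi_i|_\rho) \le 2p\fanin(\phi_i)$ for every large $i$, all bad events are ruled out simultaneously with failure probability at most $\exp(-n^{\varepsilon_d})$. On good leaves, I extend $T$ deterministically so as to (a) set all surviving inputs of each of the at most $n^{\delta_d}$ balanced large gates (collapsing them to constants) and (b) invoke \expref{Proposition}{prop:smallfanin} so each small gate depends on at most one surviving variable; a parameter count then yields total $T$-depth $n - n^{1-2\delta_d}$. Finally, I replace each imbalanced large gate by the sign of its restricted threshold: by \expref{Theorem}{thm:chernoff} each such replacement errs on at most an $\exp(-\Omega(t^2)) = \exp(-n^{\Omega(\delta_d)})$ fraction of inputs, so a union bound over the $\le n$ large gates produces total approximation error $\exp(-n^{\varepsilon_d})$. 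The resulting approximant is a depth-$(d-1)$ circuit of wire complexity at most $n^{1+\varepsilon_d}$, hence $(n^{\delta_d}, d-1, n^{1+\varepsilon_d})$-simple (in fact with empty $R$).

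The main obstacle will be the subcase $m_L q > n^{\delta_d}/2$, where \expref{Corollary}{cor:GLSS} applied to $\sum Z_i$ no longer bounds the balanced-gate count by $n^{\delta_d}$. Here too many large gates remain balanced to declare them all $g_i$'s, so one must instead set their surviving inputs; but this reintroduces the need to exponentially concentrate the fan-in-weighted sum $Y = \sum_{\text{large } i} \fanin(\phi_i|_\rho)\,Z_i$, whose natural $0/1$ decomposition has read parameter up to $2c\cdot\max_i\fanin(\phi_i)$ and so is too large for a direct application of \expref{Corollary}{cor:GLSS}. The expectation $\mu_Y \le 4pq\cdot n^{1+\varepsilon_d}$ is well within the variable budget $n_1 p - n^{1-2\delta_d}$, so Markov's inequality (as in the proof of \expref{Theorem}{thm:corr-parity}) yields only polynomial concentration. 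Upgrading to $\exp(-n^{\varepsilon_d})$ failure probability will require first conditioning on the generic event (which caps each $\fanin(\phi_i|_\rho)$ pointwise by $2p\fanin(\phi_i)$) and then applying a weighted, $2c$-read variant of \expref{Corollary}{cor:GLSS} in the spirit of \expref{Theorem}{thm:chunglu}, or alternatively adaptively extending the restriction tree to iteratively thin the large-gate population by rounds. Navigating this interplay between the case split on $m$, the large/small fan-in classification, and the two independent sources of randomness (variable survival versus sign choice) is where the proof will demand the most care.
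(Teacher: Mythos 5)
There is a genuine gap, and you have in fact pointed at it yourself: the subcase in which the number of balanced large gates cannot be capped by $n^{\delta_d}$ via \expref{Corollary}{cor:GLSS} is not an exceptional corner but the typical situation (with $m_L$ as large as $n^{1+\varepsilon_d-\delta_d}$ and $q=n^{-\Omega(\delta_d)}$ with an uncontrolled constant in the exponent, $m_Lq\gg n^{\delta_d}$ in general), and your proposal leaves it unresolved, offering only a ``weighted read-$2c$ variant'' of the read-$k$ Chernoff bound that is neither stated nor proved in the paper, or an unspecified iterative thinning. The paper's proof resolves exactly this point without any weighted concentration inequality: it partitions the large gates into $a\leq 1/\varepsilon_d$ fan-in classes $L_1,\ldots,L_a$, where $L_j$ contains the gates of fan-in in $[n^{\delta_d+(j-1)\varepsilon_d}, n^{\delta_d+j\varepsilon_d})$, and applies the \emph{unweighted} read-$2c$ bound (\expref{Theorem}{thm:GLSS}) to the balanced-gate count $Z_j$ within each class. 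For classes with $Z_j<2q\ell_j$, the fan-in--weighted contribution is controlled deterministically, using the class's fan-in ceiling together with the generic event and $\ell_j\cdot n^{\delta_d+(j-1)\varepsilon_d}\leq n^{1+\varepsilon_d}$, giving total restricted fan-in at most $4pqn^{1+2\varepsilon_d}/\varepsilon_d\leq pn/8$; these variables are set. Classes where the count bound is vacuous contribute at most $n^{3\delta_d/4}$ balanced gates each, hence at most $n^{\delta_d}$ gates in total, and these are precisely the threshold functions $g_1,\ldots,g_r$ in the simplicity witness. This bucketing is the missing idea; without it (or a genuinely new weighted concentration lemma) your argument does not reach the claimed $\exp(-n^{\varepsilon_d})$ failure probability.

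A secondary problem is your treatment of the ``good'' case itself: bounding the \emph{number} of balanced large gates by $n^{\delta_d}$ does not bound the number of variables needed to collapse them, since a single balanced gate can have restricted fan-in up to roughly $2pn=2n^{1-\delta_d/2}$, comparable to the entire surviving-variable budget $\approx n_1p/2$. So your step (a), setting all surviving inputs of the balanced gates, can exhaust the budget even when there are few such gates; the correct move there is to place those (at most $n^{\delta_d}$) gates into the set $R$ and guess their values---which is exactly why the simplicity parameter $t=n^{\delta_d}$ appears in the lemma---while input-setting must be reserved for gates whose \emph{total} restricted fan-in has been bounded, as the paper's $J_1$/$J_2$ split arranges. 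Your easy case $m\leq n^{\delta_d}$ and Phases 1 and 3 (imbalanced-gate replacement with error $\exp(-n^{\Omega(\delta_d)})$ per gate, and \expref{Proposition}{prop:smallfanin} for the small gates) do match the paper, but the core of the lemma is the part you have not supplied.
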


\paragraph{Comparison with \expref{Theorem}{thm:corr-parity}.} The proof of this lemma is similar to the proof of the depth-reduction in \expref{Theorem}{thm:corr-parity}, with a few key differences. The main change is that we require the simplification of depth-$d$ to depth-$(d-1)$ to fail only with \emph{exponentially} small probability as opposed to the bound of $n^{-\Omega_d(1)}$ in the proof of \expref{Theorem}{thm:corr-parity}. To ensure this, we will use the read-$k$ Chernoff bound stated in \expref{Theorem}{thm:GLSS}. 

However, this lemma yields strong bounds only when the number of random variables (the parameter $m$ in the statement of \expref{Theorem}{thm:GLSS}) is quite large. In particular, for example, if the number of gates at depth $d-1$ is small (a constant, say), then the bound obtained from \expref{Theorem}{thm:GLSS} is not very useful. In this case, instead of simplifying these gates, we simply say that for each setting of the outputs of these gates, we get a circuit of smaller depth. This is why we obtain $(t,d-1,w)$-simple circuits in general and not simply a depth-$(d-1)$ threshold circuit.

\begin{proof}
Let $\phi_1,\ldots,\phi_m$ be the threshold gates appearing at depth $d-1$ in the circuit $C$. We say that $\phi_i$ is large if $\fanin(\phi_i)\geq n^{\delta_d}$ and small otherwise. Let $L = \{i  \mid \text{$\phi_i$ large}\}$ and $S = [m]\setminus L$. Let $\ell = |L|$. Note that $\ell \leq n^{1+\varepsilon_d-\delta_d}\leq n$. Let $c = n^{\varepsilon_d}$.

As in the inductive case of \expref{Theorem}{thm:corr-parity}, our construction proceeds in phases. 

\paragraph{Phase 1.} This is identical to Phase 1 in \expref{Theorem}{thm:corr-parity}. We thus get a restriction tree of depth $n/2$ such that at \emph{all} leaves of this tree, the resulting circuit is a read-$2c$ circuit with at most $cn$ wires. Let $C_1$ denote the circuit obtained at some arbitrary leaf of the restriction tree and let $n_1$ denote the number of variables.

\paragraph{Phase 2.} This basic idea here is similar to Phase $2$ from \expref{Theorem}{thm:corr-parity}. However, there are technical differences from \expref{Theorem}{thm:corr-parity} since we apply a concentration bound to ensure that the circuit simplifies with very high probability. 

We restrict the circuit with a random restriction $\rho = (I,y)\sim \mc{R}_p^{n_1}$, where $p = n^{-\delta_d/2}$. As in \expref{Theorem}{thm:corr-parity}, we have for some $t = {1}/{p^{\Omega(1)}}$, $q = p^{\Omega(1)}$, and for each $i\in L$,
\begin{align}
\prob{\rho}{\text{$\phi_i|_\rho$ is $t$-balanced}} &\leq q, \label{eq:depthredn1}\\
\prob{\rho}{\fanin(\phi_i|_\rho) > 2p\cdot \fanin(\phi_i)} &\leq \exp(-\Omega(n^{\delta_d/2})),\qquad\text{and}\label{eq:depthredn2}\\
\prob{\rho = (I,y)}{|I| < \frac{n_1p}{2}} &\leq \exp(-\Omega(np)).\label{eq:depthredn3}
\end{align}

Now, we partition $L$ as $L = L_1\cup \cdots \cup L_a$, where $a \leq {1}/{\varepsilon_d}$, as follows. The set $L_j$ indexes all threshold gates at depth $d-1$ of fan-in at least $n^{\delta_d + (j-1)\varepsilon_d}$ and less than $n^{\delta_d + j\varepsilon_d}$. We let $\ell_j$ denote $|L_j|$. For each $i\in L$, let $Y_i$ be a random variable that is $1$ if $\phi_i|_\rho$ is $t$-balanced and $0$ otherwise. Note that this defines a collection of read-$2c$ Boolean random variables (the underlying independent random variables are $\rho(k)$ for each $k\in [n_1]$).

Let $Z_j = \sum_{i\in L_j} Y_i$, the number of $t$-balanced gates in $L_j$. We have
\[
  \avg{}{Z_j} = \sum_{i\in L_j}\avg{}{Y_i}\leq q\ell_j
\]
by (\ref{eq:depthredn1}). Thus, by an application of the read-$2c$
Chernoff bound in \expref{Theorem}{thm:GLSS}, we have
\[
\prob{}{Z_j \geq 2q\ell_j} \leq \exp\{-\Omega(q\ell_j/c)\}.
\]
Assuming that $\ell_j\geq n^{3\delta_d/4}$ and $B = \delta_d/\varepsilon_d$ is a large enough constant, the right hand side of the above inequality is upper bounded bounded by $\exp\{-2n^{\varepsilon_d}\}$. On the other hand if $\ell_j < n^{3\delta_d/4}$, then $Z_j < n^{3\delta_d/4}$ with probability $1$. Hence, we have
\[
  \prob{\rho = (I,y)}{Z_j \geq \max\{2q\ell_j,n^{3\delta_d/4}\}} \leq \exp\{-2n^{\varepsilon_d}\}
\]
and by a union bound
\begin{equation}
\label{eq:depthredn4}
\prob{\rho = (I,y)}{\exists j\in [a], Z_j \geq \max\{2q\ell_j,n^{3\delta_d/4}\}} \leq a\exp\{-2n^{\varepsilon_d}\}.
\end{equation}

We call a set $I$ \emph{generic} if $|I| \geq {n_1p}/{2}$ and $\fanin(\phi_i|_\rho) \leq 2p\cdot \fanin(\phi_i)$ for each $i\in L$. Let $\mc{G}$ denote the event that $I$ is generic. By  (\ref{eq:depthredn2}) and (\ref{eq:depthredn3}), we know that $$\prob{I}{\neg\mc{G}}\leq \ell\exp(-\Omega(n^{\delta_d/2})) + \exp(-\Omega(np))\leq \exp(-n^{\delta_d/4}).$$ In particular, similar to \expref{Theorem}{thm:corr-parity}, we get,
\begin{equation}
\label{eq:depthredn5}
\prob{\rho = (I,y)}{\exists j\in [a], Z_j \geq \max\{2q\ell_j,n^{3\delta_d/4}\}\; \middle|\; \mc{G}} \leq a\exp\{-2n^{\varepsilon_d}\} + \exp(-n^{\delta_d/4}) \leq 2a\exp\{-2n^{\varepsilon_d}\}.
\end{equation}

We fix any generic $I$ such that
\begin{equation}
\label{eq:depthredn5.5}
\prob{y}{\exists j\in [a], Z_j \geq \max\{2q\ell_j,n^{3\delta_d/4}\}}\leq 2a\exp\{-2n^{\varepsilon_d}\}.
\end{equation}

Consider the restriction tree $T$ that sets all the variables not in $I$. The tree leaves $n_2 \geq pn_1/2 = pn/4$ variables unfixed. We call a leaf $\lambda$ of the tree \emph{good} if for each $j\in [a]$ we have $Z_j < \max\{2q\ell_j,n^{3\delta_d/4}\}$ and \emph{bad} otherwise. We have
\begin{equation}
\label{eq:depthredn6}
\prob{\lambda\sim T}{\text{$\lambda$ a bad leaf}}\leq 2a \exp(-2n^{\varepsilon_d}).
\end{equation}

For good leaves $\lambda$, we show how to approximate $C_{\lambda}:= C_1|_{\rho_\lambda}$ as claimed in the lemma statement.

For the remainder of the argument, fix any good leaf $\lambda$. We partition $[a] = J_1\cup J_2$ where
\[
  J_1 = \{j\in [a]\ |\ Z_j < 2q\ell_j\}\,.
\]
Note that for any $j\in J_1$, we have
\begin{align*}
\sum_{i\in L_j}Y_i\cdot \fanin(\phi_i|_{\rho_\lambda}) &\leq \sum_{i\in L_j}Y_i\cdot 2p\cdot \fanin(\phi_i)\\
&\leq 2p\cdot n^{\delta_d + j\cdot \varepsilon_d}\cdot Z_j
\leq n^{\delta_d + j\cdot \varepsilon_d}\cdot 4pq\ell_j\\
&= 4pq n^{\varepsilon_d}\cdot \ell_j\cdot n^{\delta_d + (j-1)\cdot \varepsilon_d}\leq 4pqn^{\varepsilon_d}n^{1+\varepsilon_d}\\
&= 4pqn^{1+2\varepsilon_d} 
\end{align*}
where for the last inequality, we have used the fact that since we have $\ell_j$ gates of fan-in at least $n^{\delta_d + (j-1)\varepsilon_d}$ each, we must have $\ell_j\cdot n^{\delta_d + (j-1)\varepsilon_d}\leq n^{1+\varepsilon_d}$, the total wire complexity of the circuit.

In particular, we can bound the total fan-in of all the $t$-balanced gates indexed by $\bigcup_{j\in J_1}L_j$ by
\begin{equation}
\label{eq:depthredn7}
\sum_{j\in J_1}\sum_{i\in L_j}Y_i\cdot \fanin(\phi_i|_{\rho_\lambda}) \leq \frac{4pqn^{1+2\varepsilon_d}}{\varepsilon_d}.
\end{equation}
\paragraph{Phase 3.} We proceed in two steps as in \expref{Theorem}{thm:corr-parity}. Since the steps are very similar, we just sketch the arguments. In the first step, we replace all large $t$-imbalanced gates by their most probable values. This yields a circuit $C'_\lambda$ of at most the wire complexity of $C_\lambda$ and such that 

\begin{equation}
\label{eq:depthredn8}
\prob{x}{C_\lambda(x)\neq C'_\lambda(x)} \leq \ell\exp(-n^{\Omega(\delta_d)})\leq n\exp(-n^{\Omega(\delta_d)})\leq \exp(-n^{\varepsilon_d}).
\end{equation}

In the second step, we construct another restriction tree rooted at $\lambda$ that simplifies the circuit to the required form. We first restrict by setting all variables that feed into  the $t$-balanced gates that are indexed by $\bigcup_{j\in J_1}L_j$. By (\ref{eq:depthredn7}), the number of variables set is bounded by 
\[
\frac{4pqn^{1+2\varepsilon_d}}{\varepsilon_d} \leq \frac{4pn\cdot n^{\varepsilon_d -\Omega(\delta_d)}}{\varepsilon_d} \leq \frac{pn}{8}\leq \frac{n_2}{2}
\]
for a large enough choice of the constant $B$. This sets all the $t$-balanced gates indexed by $\bigcup_{j\in J_1}L_j$ to constants while leaving $n_3\geq {n_2}/{2}$ variables still alive.  Finally, by \expref{Proposition}{prop:smallfanin}, we may set all but a set of $n_4 = n_3/2cn^{\delta_d}$ variables to ensure that with probability $1$, all the small gates depend on at most one input variable each. We may replace the small gates by the unique variable they depend on or a constant (if they do not depend on any variable) without increasing the wire complexity of the circuit. Call the circuit thus obtained $C''_\lambda$.

At this point, the only threshold gates at depth $d-1$ in the circuit $C''_\lambda$ are the gates indexed by the $t$-balanced gates in $\bigcup_{j\in J_2}L_j$. But by the definition of $J_2$, there can be at most $({1}/{\varepsilon_d})\cdot n^{3\delta_d/4}\leq n^{\delta_d}$ of them. For every setting of these threshold gates to constants, the circuit becomes a depth-$(d-1)$ circuit of size at most $n^{1+\varepsilon_d}$. Hence, we have a $(n^{\delta_d},d-1,n^{1+\varepsilon_d})$-simple circuit, as claimed.

Note that the number of variables still surviving is given by $n_4 \geq pn/16cn^{\delta_d}\geq n^{1-2\delta_d}$, for a large enough choice of the parameter $B$. Hence, the restriction tree constructed satisfies the required depth constraints. 

For a random leaf $\nu\sim T$, the probability $\mc{E}(\nu)$ does not occur is at most the probability that in Phase 2, the leaf sampled is bad. By (\ref{eq:depthredn6}), this is bounded by $2a\exp(-2n^{\varepsilon_d})\leq \exp(-n^{\varepsilon_d})$ as claimed.
\end{proof}

We now prove the correlation bound for the Generalized Andreev function against threshold circuits with small wire complexity. For the sake of induction, it helps to prove a statement that is stronger in two ways: firstly, we  consider any function $F_a = F(a,\cdot)$ where $a\in \{-1,1\}^{4n}$ has high Kolmogorov complexity and the input to $F_a$ is further restricted by an arbitrary restriction $\rho$ that leaves a certain number of variables alive; secondly, we prove a correlation bound against circuits which are the AND of a small threshold circuit with a small number of threshold gates. 

\begin{definition}[Intractability]
We say that $f:\{-1,1\}^n\rightarrow \{-1,1\}$ is \emph{$(N,d,t,\alpha)$-intractable} if for any restriction $\rho$ on $n$ variables that leaves $m\geq N$ variables unset, any depth-$d$ threshold circuit $C$ on $m$ variables of wire complexity at most $m^{1+\varepsilon_d}$, and any set $S$ of at most $t$ threshold functions, we have
\[
\Corr\Bigl(f,C\wedge \bigwedge_{g\in S} g\Bigr)\leq \alpha.
\]
\end{definition}

The main theorem of this section is the following generalized correlation bound.

\begin{theorem}[Generalized version of strong correlation]
\label{thm:genstrong}
Fix any constant $d\geq 1$. Let $F$ be the Generalized Andreev function on $5n$ variables as defined in \expref{Section}{sec:andreev} with $\gamma$ being any constant smaller than $1/6$. Let $a\in \{-1,1\}^{4n}$ be any string with $K(a) \geq 3n$. Then, the function $F_a$ is $(n^{1-\varepsilon_d},d,n^{\varepsilon_d},\exp(-n^{\varepsilon_d}/2))$-intractable.
\end{theorem}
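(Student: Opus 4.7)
The plan is to prove the intractability statement by induction on $d$, with the base case handled by a direct application of Lemma~\ref{lem:ckksz-corr} and the inductive step driven by the depth-reduction Lemma~\ref{lem:depthredn}. For $d=1$, a depth-$1$ circuit $C$ is a single LTF, so $C \wedge \bigwedge_{g \in S} g$ is a conjunction of at most $n^{\varepsilon_1}+1$ LTFs on the $m \ge n^{1-\varepsilon_1}$ surviving variables. I would extend $\rho$ by a uniform random restriction tree that further sets all but $n^\gamma$ of the surviving variables; by Fact~\ref{fac:corr}(3) it suffices to bound the correlation at every leaf, and at each leaf the target conjunction lives on $n^\gamma$ variables. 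By Corollary~\ref{cor:numthr} each such LTF has a description of length $O(n^{2\gamma})$, so the conjunction has total description length $O(n^{\varepsilon_1+2\gamma}) \le n$, using $\gamma < 1/6$ and $\varepsilon_1$ small. Lemma~\ref{lem:ckksz-corr} then bounds the correlation at each leaf by $\exp(-n^{\Omega(\gamma)}) \le \exp(-n^{\varepsilon_1}/2)$.

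\textbf{Depth reduction at the inductive step.} Assume the claim at depth $d-1$. Given the depth-$d$ circuit $C$ on $m \ge n^{1-\varepsilon_d}$ variables of wire complexity $\le m^{1+\varepsilon_d}$ and the threshold set $S$, apply Lemma~\ref{lem:depthredn} to $C$ viewed as an $m$-variable circuit. This yields a restriction tree $T$ at whose random leaf $\lambda$, except with probability $\exp(-m^{\varepsilon_d})$, the circuit $C|_{\rho_\lambda}$ is $\exp(-m^{\varepsilon_d})$-approximated by some $(m^{\delta_d}, d-1, m^{1+\varepsilon_d})$-simple circuit $C'_\lambda$. Averaging via Fact~\ref{fac:corr}(3), trivially bounding bad-leaf contributions, and substituting $C'_\lambda$ for $C|_{\rho_\lambda}$ via Fact~\ref{fac:corr}(2) reduces the task to bounding $\Corr(F_a|_{\rho\rho_\lambda}, C'_\lambda \wedge \bigwedge_{g \in S} g)$ at a typical good leaf $\lambda$.

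\textbf{Simple-circuit decomposition and induction.} Expand $C'_\lambda$ using~\eqref{eq:simple} as a disjoint OR of $N \le 2^{m^{\delta_d}}$ terms of the form $D_b \wedge L_b$, where $D_b$ is a depth-$(d-1)$ circuit of wire complexity $\le m^{1+\varepsilon_d}$ and $L_b$ is a conjunction of $r \le m^{\delta_d}$ threshold literals. AND-ing with $\bigwedge_{g \in S} g$ preserves disjointness, so Proposition~\ref{prop:disjOR} yields
\[
\Corr\Bigl(F_a|_{\rho\rho_\lambda}, C'_\lambda \wedge \bigwedge_{g \in S} g\Bigr) \le (N-1)\,\Corr(F_a|_{\rho\rho_\lambda}, 1) + \sum_{b} \Corr\Bigl(F_a|_{\rho\rho_\lambda}, D_b \wedge L_b \wedge \bigwedge_{g \in S} g\Bigr).
\]
Each summand (and the $\Corr(\cdot,1)$ term, since the constant $1$ is trivially a depth-$(d-1)$ circuit with zero wires) is the correlation of $F_a$, under a restriction leaving $m_\lambda \ge m^{1-2\delta_d} \ge n^{1-\varepsilon_{d-1}}$ live variables, against a depth-$(d-1)$ circuit of wire complexity $\le m^{1+\varepsilon_d} \le m_\lambda^{1+\varepsilon_{d-1}}$ AND'ed with at most $m^{\delta_d}+n^{\varepsilon_d} \le n^{\varepsilon_{d-1}}$ threshold functions. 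The inductive hypothesis bounds each such term by $\exp(-n^{\varepsilon_{d-1}}/2)$, and since $\varepsilon_{d-1} = B^2 \varepsilon_d \gg \delta_d = B\varepsilon_d$, the total $2^{n^{\delta_d}+1}\exp(-n^{\varepsilon_{d-1}}/2)$ is still $\exp(-n^{\Omega(\varepsilon_{d-1})})$. Combining with the bad-leaf and approximation errors of $\exp(-n^{\Omega(\varepsilon_d)})$ gives the desired $\exp(-n^{\varepsilon_d}/2)$ bound.

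\textbf{Main obstacle.} The chief technical obstacle is parameter bookkeeping: the polynomial gap $\varepsilon_d < \delta_d < \varepsilon_{d-1}$ from~\eqref{eq:epsdeltadefn} must simultaneously ensure (i) that at each inductive step the surviving-variable count, wire complexity, and auxiliary threshold-count at the leaves all meet the tighter constraints for invoking the depth-$(d-1)$ hypothesis, and (ii) that the $2^{n^{\delta_d}}$ blow-up from the disjoint-OR expansion is swamped by the strictly stronger per-term correlation bound $\exp(-n^{\varepsilon_{d-1}}/2)$. A routine calculation shows that all these constraints (essentially $B^2 \ge 2B + C$ for an absolute constant $C$) reduce to choosing the absolute constant $B$ in~\eqref{eq:epsdeltadefn} sufficiently large.
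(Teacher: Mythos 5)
Your proposal is correct and follows essentially the same route as the paper: induction on $d$, a base case via \expref{Lemma}{lem:ckksz-corr} together with the counting bound of \expref{Corollary}{cor:numthr}, and an inductive step that applies \expref{Lemma}{lem:depthredn}, expands the simple circuit as a disjoint OR per~(\ref{eq:simple}), controls it via \expref{Proposition}{prop:disjOR}, and absorbs the $2^{m^{\delta_d}}$ blow-up into the stronger depth-$(d-1)$ bound by taking $B$ large. The only differences are cosmetic (the paper's base case, \expref{Lemma}{lem:base-case}, is stated with parameters $\sqrt{n}$ rather than $n^{\varepsilon_1}$, and your explicit treatment of the $\Corr(f,1)$ term is a harmless refinement).
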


The proof is by induction on $d$. The properties of $F_a$ are only used to prove the base case of the theorem, which can then be used to prove the induction case using \expref{Lemma}{lem:depthredn}. We prove the base case separately below (we assume that the constant $B>0$ is large enough so that this implies the base case of the theorem stated above).

\begin{lemma}[Base case of induction]
\label{lem:base-case}
Let $a\in \{-1,1\}^{4n}$ be any string with $K(a) \geq 3n$. Then, the function $F_a$ is $(\sqrt{n},1,\sqrt{n},\exp(-n^{\Omega(1)}))$-intractable.
\end{lemma}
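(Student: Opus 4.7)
The plan is to reduce the claim directly to \expref{Lemma}{lem:ckksz-corr}. Fix any restriction $\rho$ leaving $m \geq \sqrt{n}$ variables alive, any single LTF $C$ of wire complexity at most $m^{1+\varepsilon_1}$ on those variables, and any set $S$ of at most $\sqrt{n}$ additional LTFs. The function we must bound correlation against is
\[
D := C \wedge \bigwedge_{g \in S} g,
\]
which is an AND of at most $\sqrt{n}+1$ LTFs on $m$ variables, and the goal is to show $\Corr(F_a|_\rho, D) \leq \exp(-n^{\Omega(1)})$.

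The first step is to further restrict the input so that exactly $n^\gamma$ variables remain alive; this is possible because $m \geq \sqrt{n} > n^\gamma$. I would use a (deterministic) restriction tree $\mc{T}$ of depth $m - n^\gamma$ that sets some fixed subset of the live variables to uniformly random bits. By part 3 of \expref{Fact}{fac:corr}, $\Corr(F_a|_\rho, D)$ is bounded above by the expected correlation of the doubly-restricted function with the doubly-restricted circuit at a random leaf of $\mc{T}$. It therefore suffices to bound the correlation pointwise at each leaf $\lambda$; note that the combined restriction on the original $n$ input variables then leaves exactly $n^\gamma$ variables alive, which is the setup \expref{Lemma}{lem:ckksz-corr} expects.

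At each such leaf, $D|_{\rho_\lambda}$ is still an AND of at most $\sqrt{n}+1$ LTFs, now on $n^\gamma$ variables. The key observation is that \expref{Corollary}{cor:numthr} gives a description of $O(n^{2\gamma})$ bits for each such LTF, so $D|_{\rho_\lambda}$ admits a binary description of total length $O(n^{2\gamma+1/2})$. Since $\gamma < 1/6$ yields $2\gamma + 1/2 < 5/6 < 1$, for all sufficiently large $n$ this description length is at most $n$. Applying \expref{Lemma}{lem:ckksz-corr} (using $K(a) \geq 3n$) then yields a pointwise correlation bound of $\exp(-n^{\Omega(\gamma)})$ at each leaf, and averaging gives $\Corr(F_a|_\rho, D) \leq \exp(-n^{\Omega(1)})$ since $\gamma$ is a fixed positive constant.

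There is no deep obstacle in this argument; the only delicate point is parameter tracking. The choice $\gamma < 1/6$ is precisely what is needed to accommodate both the doubling of exponents in the $O(n^{2\gamma})$ description bound per LTF and the additional $\sqrt{n}$ factor coming from the size of $S$, while still keeping the total description length below $n$. Using the naive bound of $O(n^\gamma \log n)$ bits per weight (which is false in general) would give the wrong answer; the use of \expref{Corollary}{cor:numthr} to count distinct LTFs is essential.
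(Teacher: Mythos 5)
Your proposal is correct and follows essentially the same route as the paper's proof: restrict down to exactly $n^\gamma$ live variables via a restriction tree, use \expref{Corollary}{cor:numthr} to describe the restricted conjunction of at most $\sqrt{n}+1$ LTFs in fewer than $n$ bits, apply \expref{Lemma}{lem:ckksz-corr}, and average over leaves with \expref{Fact}{fac:corr}. The parameter check $(\sqrt{n}+1)\cdot O(n^{2\gamma}) < n$ matches the paper's, so there is nothing to add.
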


\begin{proof}
Let $\gamma < 1/6$ in the definition of the Generalized Andreev function in \expref{Section}{sec:andreev}.
Let $\tau$ be any restriction of $n$ variables leaving $m\geq \sqrt{n}$ variables unfixed.
Define $f  := F_a|_\tau$.
Let $C$ be a conjunction of  $\sqrt{n} + 1$ threshold gates each on $m$ variables.
We wish to prove that
\[
	\Corr(f, C)  \leq \exp(-n^{\Omega(\gamma)}).
\]

We build a restriction tree $T$ for $C$ of depth $m -n^\gamma$, by restricting all but $n^\gamma$ arbitrarily chosen variables.
For any leaf $\ell$ of $T$, the restricted circuit $C_\ell := C|_{\rho_\ell}$ is a conjunction of $\sqrt{n} + 1$ threshold gates each on $n^\gamma$ variables. By \expref{Corollary}{cor:numthr}, each threshold function can be described using $n^{2\gamma}$ bits. Hence, the entire circuit can be described in a standard way using $(\sqrt{n} + 1)\cdot O(n^{2\gamma})< n$ bits. Then, by \expref{Lemma}{lem:ckksz-corr}, we have 
\[
  \Corr\bigl(f|_{\rho_\ell},C_\ell\bigr)\leq  \exp( - {n^{\Omega(\gamma)}}).
\]
By \expref{Fact}{fac:corr}, we then obtain $\Corr(f,C)\leq \exp(n^{-\Omega(\gamma)})$.
\end{proof}

\begin{proof}[Proof of \expref{Theorem}{thm:genstrong}]
  We only need to prove the inductive case. Assume that $d\geq 2$ is given. Fix any restriction $\rho$ that sets all but $m\geq n^{1-\varepsilon_d}$ variables and let $f = F_a|_\rho$. Let $C$ be a depth-$d$ threshold circuit on the surviving variables of wire complexity at most $m^{1+\varepsilon_d}$. Let $S$ be any set of at most $n^{\varepsilon_d}$ threshold functions on the $m$ variables. We need to show that
  \[
    \Corr\Bigl(f,C\wedge \bigwedge_{g\in S} g\Bigr)\leq \exp(-n^{\varepsilon_d/2}).
  \]
  
Apply \expref{Lemma}{lem:depthredn} to circuit $C$ to find a restriction tree $T$ as guaranteed by the statement of the lemma. By \expref{Fact}{fac:corr}, we have
\begin{align}
\Corr\Bigl(f,C\wedge \bigwedge_{g\in S} g\Bigr) &\leq \avg{\ell\sim T}{\Corr\Bigl(f_\ell, C_\ell \wedge \bigwedge_{g\in S} g_\ell\Bigr)}\notag\\
&\leq \prob{\ell}{\neg\mc{E}(\ell)} + \max_{\text{$\ell$:$\mc{E}(\ell)$ holds}} \Corr\Bigl(f_\ell, C_\ell \wedge \bigwedge_{g\in S} g_\ell\Bigr)\label{eq:genstrong1}
\end{align}
where $f_\ell$ denotes $f|_{\rho_\ell}$ and similarly for $C_\ell$ and $g_\ell$, and $\mc{E}(\ell)$ is the event defined in the statement of \expref{Lemma}{lem:depthredn}. 

Fix any leaf $\ell$ so that $\mc{E}(\ell)$ holds. We want to bound
\[
  \Corr\Bigl(f_\ell, C_\ell \wedge \bigwedge_{g\in S} g_\ell\Bigr).
\]
By definition of $\mc{E}(\ell)$, we know that $C_\ell$ is
$\exp(-m^{\varepsilon_d})$-approximated by a
$(m^{\delta_d},d-1,m^{1+\varepsilon_d})$-simple circuit
$C'_\ell$. This implies that $C_\ell \wedge \bigwedge_{g\in S} g_\ell$
is $\exp(-m^{\varepsilon_d})$-approximated by
$C'_\ell\wedge \bigwedge_{g\in S} g_\ell$. Hence, we have

\begin{equation}
\label{eq:genstrong2}
\Corr\Bigl(f_\ell, C_\ell \wedge \bigwedge_{g\in S} g_\ell\Bigr) \leq \Corr\Bigl(f_\ell, C_\ell' \wedge \bigwedge_{g\in S} g_\ell\Bigr) + \exp(-m^{\varepsilon_d}).
\end{equation}

Further, by the definition of simplicity and its consequence (\ref{eq:simple}), we know that there exist $r\leq m^{\delta_d}$ threshold functions $h_1^{\ell},\ldots, h_r^{\ell}$ such that 
\[
C'_\ell = \bigvee_{b_1,\ldots,b_r\in \{-1,1\}} C_{b_1,\ldots,b_r}\wedge \bigwedge_{i:b_i = -1} h_i^\ell \wedge \bigwedge_{i: b_i = 1}\neg h_i^\ell
\]
where each $C_{b_1,\ldots,b_r}$ is a depth-$(d-1)$ threshold circuit of size at most $m^{1+\varepsilon_d}$ and the OR above is disjoint. This further implies that
\begin{equation}
\label{eq:genstrong3}
C'_\ell\wedge \bigwedge_{g\in S} g_\ell = \bigvee_{b_1,\ldots,b_r\in \{-1,1\}} \left(C_{b_1,\ldots,b_r}\wedge \bigwedge_{i:b_i = -1} h_i^\ell \wedge \bigwedge_{i: b_i = 1}\neg h_i^\ell \wedge \bigwedge_{g\in S} g_\ell\right)
\end{equation}
and the OR remains disjoint. 

Note that we may apply the induction hypothesis to obtain a bound on the correlation with each term in the OR at this point, since the number of surviving variables is at least $m_1 = m^{1-2\delta_d}\geq n^{1-\varepsilon_d - 2\delta_d}\geq n^{1-\varepsilon_{d-1}}$ (throughout, we assume that $B$ is a large enough constant for many of the inequalities to hold); and the wire complexity of each depth-$(d-1)$ circuit $C_{b_1,\ldots, b_r}$ is at most $$m^{1+\varepsilon_d}\leq m_1^{(1+\varepsilon_d)/(1-2\delta_d)}\leq m_1^{1+\varepsilon_d + 3\delta_d}\leq m_1^{1+\varepsilon_{d-1}}$$ and further, the number of threshold functions in each term is at most $n^{\varepsilon_d}+n^{\delta_d} < m^{\varepsilon_{d-1}}$. Thus, by the inductive hypothesis, we obtain for any $b_1,\ldots,b_r$,
\[
\Corr\Bigl(f,C_{b_1,\ldots,b_r}\wedge \bigwedge_{i:b_i = -1} h_i^\ell \wedge \bigwedge_{i: b_i = 1}\neg h_i^\ell \wedge \bigwedge_{g\in S} g_\ell\Bigr) \leq \exp(-n^{\varepsilon_{d-1}/2}).
\]

Using the fact that the OR in (\ref{eq:genstrong3}) is disjoint, from \expref{Proposition}{prop:disjOR}, we obtain
\[
\Corr\Bigl(f,C'_\ell\wedge \bigwedge_{g\in S} g_\ell\Bigr) \leq O(2^r\cdot \exp(-n^{\varepsilon_{d-1}/2})) \leq O\bigl(2^{n^{\delta_d}}\cdot \exp(-n^{\varepsilon_{d-1}/2})\bigr)\leq \exp(-n^{\varepsilon_d}).
\]
Putting the above together with (\ref{eq:genstrong1}) and (\ref{eq:genstrong2}), we obtain
\[
\Corr\Bigl(f,C\wedge \bigwedge_{g\in S}g\Bigr)\leq \exp(-m^{\varepsilon_d}) + \exp\bigl(-n^{\varepsilon_d}\bigr) \leq \exp(-n^{\varepsilon_d/2}).
\]
which proves the induction case and hence the theorem.
\end{proof}

\begin{repcor}{cor:corr-andreev}[Correlation bounds for Andreev's function (restated)]
Fix any $d \geq 1$ and let $F$ be as in the statement of \expref{Theorem}{thm:genstrong}. Any  depth-$d$ threshold circuit $C$ of wire complexity at most $n^{1+\varepsilon_d}$ satisfies $\Corr(C,F)\leq 2\exp(-n^{\varepsilon_d/2})$.
\end{repcor}

\begin{proof}
For a random $a\in \{-1,1\}^{4n}$, we know by \expref{Fact}{fac:kol-lbd} that $K(a)\geq 3n$ with probability $1-\exp(-\Omega(n))$. For each such $a$, by \expref{Theorem}{thm:genstrong}, we have $\Corr(C_a,F_a)\leq \exp(-n^{\varepsilon_d/2})$, where $C_a$ is the circuit obtained by substituting $x_1 = a$ in $C$. Hence, we have
\[
\Corr(C,F) \leq \avg{a}{\Corr(C_a,F_a)}\leq \exp(-\Omega(n)) + \exp(-n^{\varepsilon_d/2}) \leq 2\exp(-n^{\varepsilon_d/2})
\]
as claimed.
\end{proof}

\section{Proof of Main Structural Lemma (\expref{Lemma}{lem:thresh-gate})}
\label{subsec:msl}

We prove the following statement that was key in proving the results of \expref{Section}{sec:wires}. 

\begin{replemma}{lem:thresh-gate}[Main Structural lemma for threshold gates (restated)]
The following holds for some absolute constant $p_0\in [0,1]$. For any threshold gate $\phi$ over $n$ variables with label $(w,\theta)$ and any $p\in [0,p_0]$, we have\footnote{Recall from \expref{Definition}{def:restrict} that $\phi|_\rho$ is the threshold gate obtained by setting variables outside $\rho^{-1}(*)$ according to $\rho$.}
\[
\prob{\rho\sim \mc{R}^n_p}{\text{$\phi|_\rho$ is $\frac{1}{p^{\Omega(1)}}$-balanced}} \leq p^{\Omega(1)}.
\]
\end{replemma}

\paragraph{Proof overview.} The proof follows a template that has been used in many results on threshold functions (see, \eg,~\cite{Ser,OS,DGJSV,HKM,DRST,MZ}). The basic idea is to divide threshold gates into one of two kinds: threshold gates such that none of their variables have too much weight (called \emph{regular} threshold functions below) and those where some variable has large weight. For regular gates, the lemma can be easily proved by appealing to standard results on anticoncentration of sums of independent random variables. For gates that are not regular, we first try to make them regular by setting a few variables. If we succeed, then we can appeal to the regular case and we will be done. 
Otherwise, it must be the case that there are many variables that have large weight in the threshold gate. In this case, we show that the random restriction, with high probability, sets a large number of these variables. Since these variables have large weight, this results in a potentially large shift in the threshold $\theta$ of the threshold gate, causing it to become imbalanced with high probability.

We now proceed with the formal details. We need the following definitions and facts that have appeared many times before in the literature on threshold functions (see, \eg,~\cite{DGJSV}).

\begin{definition}[Regularity and Critical index]
Let $\varepsilon \in [0,1]$ be a real parameter. We say that $w\in \mathbb{R}^n$ is $\varepsilon$-regular if for each $i\in [n]$, $|w_i|\leq \varepsilon\cdot \mynorm{w}_2$. 

Assume that the co-ordinates of the vector $w$ are sorted so that $|w_1|\geq |w_2|\geq \cdots \geq |w_n|$. Let $w_{\geq i}\in \mathbb{R}^{n-i+1}$ denote the vector obtained by removing the first $i-1$ co-ordinates of $w$. We define the $\varepsilon$-critical index of $w$ be the least $K = K(\varepsilon)$ so that the vector $w_{\geq K+1}$ is $\varepsilon$-regular. Note that $K = 0$ if $w$ is already $\varepsilon$-regular and we define $K=n$ if the $\varepsilon$-critical index is not defined.

We say that an $n$-variable threshold gate $\phi$ labeled by $(w,\theta)$ is $\varepsilon$-regular if $w$ is. Similarly, the $\varepsilon$-critical index of $\phi$ is defined to be the $\varepsilon$-critical index of $w$.
\end{definition}

Define the parameter $L = L(\varepsilon) = (100\log^2(1/\varepsilon))/\varepsilon^2$.

The Berry--Esseen theorem (see, \eg, \cite{Feller}) yields the following standard anticoncentration lemma for linear functions. (See~\cite[Corollary 2.2]{DGJSV} for this particular statement.)

\begin{lemma}[Anticoncentration for regular linear functions]
\label{lem:anticon}
Let $w\in \mathbb{R}^n$ be $\varepsilon$-regular and let $J\subseteq \mathbb{R}$ be any interval. Then,
\[
\left|\prob{x\in \{-1,1\}^n}{\ip{w}{x}\in J} - \Phi(J)\right|\leq O(\varepsilon)
\] 
where $\Phi(\cdot)$ denotes the cdf of the standard Gaussian with mean $0$ and variance $\mynorm{w}_2^2$. In particular, if $|J|$ denotes the length of $J$, then
\[
\prob{x}{\ip{w}{x} \in J} \leq \frac{|J|}{\mynorm{w}_2} + O(\varepsilon).
\]
\end{lemma}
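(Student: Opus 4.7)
The plan is to derive this as a direct application of the classical Berry--Esseen theorem to the random variable $S = \ip{w}{x} = \sum_{i=1}^n w_i x_i$, where $x_1,\ldots,x_n$ are i.i.d.\ uniform $\pm 1$. The summands $w_i x_i$ are independent, mean-zero, with variance $w_i^2$, so $\Var(S) = \sum_i w_i^2 = \mynorm{w}_2^2 =: \sigma^2$. The third absolute moment of each summand is $|w_i|^3$, so
\[
  \sum_{i=1}^n \avg{}{|w_i x_i|^3} \;=\; \sum_{i=1}^n |w_i|^3 \;\leq\; \Bigl(\max_i |w_i|\Bigr)\cdot \sum_i w_i^2 \;\leq\; \varepsilon\sigma \cdot \sigma^2 \;=\; \varepsilon\sigma^3,
\]
where the last inequality uses $\varepsilon$-regularity of $w$. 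Berry--Esseen then yields $\sup_{t\in \mathbb{R}} |\Pr[S\leq t] - \Phi((-\infty,t])| \leq C\varepsilon$ for an absolute constant $C$.

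For the first bound, I write any interval $J=[a,b]$ as $(-\infty,b]\setminus (-\infty,a]$ and take the difference of the two Berry--Esseen estimates, losing a factor of $2$: this gives $|\Pr[S\in J] - \Phi(J)|\leq 2C\varepsilon = O(\varepsilon)$, which is the first claim. For the ``in particular'' part, I bound $\Phi(J)$ by integrating the density of the mean-zero Gaussian with variance $\sigma^2$ over $J$; since that density is at most $1/(\sqrt{2\pi}\sigma) \leq 1/\sigma$, we get $\Phi(J)\leq |J|/\sigma$, and combined with the first bound this yields
\[
  \prob{x}{\ip{w}{x}\in J} \;\leq\; \frac{|J|}{\mynorm{w}_2} + O(\varepsilon).
\]

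There is essentially no technical obstacle here: the entire content is packaging Berry--Esseen with the regularity hypothesis, which precisely controls the normalized third-moment term $\sum_i |w_i|^3/\sigma^3$ that governs the rate of convergence. The only minor care needed is the reduction from two-sided intervals to CDF differences (a factor-of-two loss absorbed into $O(\varepsilon)$) and the standard Gaussian density bound. I would simply cite Feller for the Berry--Esseen statement and note that the corollary is already present as~\cite[Corollary 2.2]{DGJSV}.
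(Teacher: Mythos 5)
Your proof is correct and matches the paper's intended route: the paper does not spell out an argument but simply invokes the Berry--Esseen theorem (citing Feller) and points to~\cite[Corollary 2.2]{DGJSV} for this exact statement, and your derivation---bounding $\sum_i |w_i|^3 \leq \varepsilon \mynorm{w}_2^3$ via regularity, applying Berry--Esseen, handling intervals by a CDF difference, and bounding the Gaussian density by $1/\mynorm{w}_2$---is precisely the standard proof behind that citation. No gaps.
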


We now proceed with the proof of \expref{Lemma}{lem:thresh-gate}. Throughout, we work with random restrictions sampled from $\mc{R}_p^n$ where $p\in [0,1]$ is the probability from the statement of \expref{Lemma}{lem:thresh-gate}: equivalently, we pick a pair $(I,y)$ where $I\subseteq [n]$ and $y\in \{-1,1\}^{n-|I|}$. 

Let the threshold gate $\phi$ be labeled by pair $(w,\theta)$, where $w\in \mathbb{R}^n$. We may assume that the variables of the threshold gate have been sorted so that $|w_1|\geq |w_2|\geq \cdots \geq|w_n|$. Note that after applying a restriction $\rho$, the threshold gate $\phi|_\rho$ is labeled by pair $(w',\theta')$, where $w'$ is the restriction of $w$ to the coordinates in $I$ and
\begin{equation}
\label{eq:msl1}
\theta' = \theta'(\rho) = \theta - \ip{w''}{y}
\end{equation}
where we use $w''$ to denote the vector $w$ restricted to the indices in $[n]\setminus I$.

For a random restriction $\rho\sim \mc{R}_p^n$, define the following ``bad'' events: 
\begin{enumerate}
\item $\mc{B}^t(\rho)$ ($t$ a parameter): $\phi|_\rho$ is $t$-balanced: \ie, $\theta' \leq t\cdot \mynorm{w'}_2$. This is the event whose probability we want to upper bound.
\item $\mc{B}_1(\rho)$: $\sum_{i\in I} w_i^2 \geq 2 p\mynorm{w}_2^2$. 
\item $\mc{B}_2^{k,\ell}(\rho)$ ($k,\ell$ parameters): $I$ contains at least $k$ variables among the first $\ell$ variables $x_1,\ldots,x_\ell$.\footnote{We actually only need to analyze the events $\mc{B}_{2}^{1,\ell}$ for the proof of \expref{Lemma}{lem:thresh-gate}. However, we will need a slightly more general bound for the proof of \expref{Lemma}{lem:thresh-gate-2} later.}
\end{enumerate}

We have the following upper bounds on the probabilities of some of the above bad events.
\begin{claim}
\label{clm:B1}
Say $w$ is $\varepsilon'$-regular for any $\varepsilon' \leq \frac{1}{\sqrt{16\log (1/p)}}$. Then, $\prob{\rho}{\mc{B}_1(\rho)} \leq p^2$.
\end{claim}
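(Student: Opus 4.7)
The claim asserts a concentration bound for a sum of weighted independent Bernoulli random variables. My plan is as follows.

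First I would rewrite the bad event in a convenient form. Letting $X_i = \mathbf{1}[i \in I]$ for $i \in [n]$, the indicators $X_1, \dots, X_n$ are independent Bernoulli$(p)$ random variables, and
\[
S := \sum_{i \in I} w_i^2 = \sum_{i=1}^n w_i^2 X_i
\]
has expectation $\mathbb{E}[S] = p\|w\|_2^2$. The event $\mc{B}_1(\rho)$ is exactly $\{S \geq 2\mathbb{E}[S]\}$, so what is needed is an upper tail estimate showing that $S$ does not exceed twice its mean. This is a standard Chernoff-type question about a weighted sum of independent $0/1$ variables.

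Next I would invoke the $\varepsilon'$-regularity of $w$, which states $w_i^2 \leq (\varepsilon')^2 \|w\|_2^2$ for every $i$. This is precisely the uniform boundedness condition required by the weighted Chernoff bound of \expref{Theorem}{thm:chunglu} (applied to the coefficients $a_i = w_i^2$ and the Bernoulli variables $X_i$). With $\lambda = p\|w\|_2^2 = \mathbb{E}[S]$ and $a = \max_i w_i^2 \leq (\varepsilon')^2\|w\|_2^2$, the simplified version of that theorem gives
\[
\prob{}{S \geq 2\mathbb{E}[S]} \;=\; \prob{}{S \geq \mathbb{E}[S] + \lambda} \;\leq\; \exp\!\left(-\frac{\lambda}{4a}\right) \;\leq\; \exp\!\left(-\frac{p}{4(\varepsilon')^2}\right).
\]
(Equivalently, Hoeffding's inequality applied to the variables $w_i^2 X_i \in [0, w_i^2]$ yields an analogous exponential tail of the form $\exp(-\Omega(p/(\varepsilon')^2))$.)

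Finally, I would substitute the regularity hypothesis $(\varepsilon')^2 \leq 1/(16 \log(1/p))$ into this exponential bound and verify it is at most $p^2$; this is the only quantitative step and it amounts to comparing $p/(\varepsilon')^2$ with $\log(1/p)$ up to absolute constants. There is no real obstacle here: the main (and only) content of the proof is the observation that regularity lets us treat $S$ as a sum of $n$ uniformly small bounded independent contributions, after which a black-box Chernoff/Hoeffding estimate delivers the desired tail with plenty of room to spare.
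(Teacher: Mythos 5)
Your opening reduction---viewing $\mc{B}_1(\rho)$ as an upper-tail event for $S=\sum_i w_i^2 X_i$ with independent Bernoulli$(p)$ indicators, and invoking \expref{Theorem}{thm:chunglu} with $a=\max_i w_i^2\le(\varepsilon')^2\mynorm{w}_2^2$---is exactly the paper's opening move. The gap is in the step you dismissed as a routine verification: it fails. With your choice of deviation $\lambda=\avg{}{S}=p\mynorm{w}_2^2$, the bound you derive is
\[
\exp\left(-\frac{\lambda}{4a}\right)\;\le\;\exp\left(-\frac{p}{4(\varepsilon')^2}\right)\;\le\;\exp\bigl(-4p\log(1/p)\bigr),
\]
and $\exp(-\Theta(p\log(1/p)))$ tends to $1$ as $p\to 0$; it is nowhere near $p^2$. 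To conclude $p^2$ the exponent must be at least $2\log(1/p)$, whereas a deviation of size $p\mynorm{w}_2^2$ combined with the hypothesis $(\varepsilon')^2\le 1/(16\log(1/p))$ only yields an exponent of order $p\log(1/p)$---you are short by a factor of roughly $1/p$. The same deficiency appears in your Hoeffding parenthetical.

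The comparison with the paper is instructive: the paper's displayed computation reaches $p^2$ only because it applies the tail bound with deviation $\lambda=\mynorm{w}_2^2$ rather than $p\mynorm{w}_2^2$, i.e., it bounds $\prob{}{\sum_{i\in I}w_i^2\ge 2\mynorm{w}_2^2}\le\exp(-\mynorm{w}_2^2/8\max_i w_i^2)\le p^2$, a tail event whose threshold is larger than yours by a factor of $1/p$. Your reading of the event (threshold $2p\mynorm{w}_2^2$) is the one in the definition of $\mc{B}_1$ and the one used later in \expref{Lemma}{lem:reg}, but for that event no choice of concentration inequality can deliver $p^2$: take $w$ with $n=\lceil 16\log(1/p)\rceil$ equal weights, which is $\varepsilon'$-regular at $\varepsilon'=1/\sqrt{16\log(1/p)}$; then $2pn<1$ for small $p$, so $\sum_{i\in I}w_i^2\ge 2p\mynorm{w}_2^2$ already when $|I|\ge 1$, an event of probability roughly $pn=16p\log(1/p)\gg p^2$ (and with $1/p$ coordinates of squared weight $p\mynorm{w}_2^2$ the probability is even $\Omega(1)$). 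So the obstacle is real: either the event must be taken with a much larger threshold (as in the paper's displayed inequality, at the cost of matching it to the definition of $\mc{B}_1$ and its use downstream), or one must settle for a weaker bound of the form $p^{\Omega(1)}$ obtained by other means, which is what the proof of \expref{Lemma}{lem:thresh-gate} ultimately needs. As written, your final assertion that the substitution goes through ``with plenty of room to spare'' is incorrect.
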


\begin{proof}
  The proof follows by applying the variant of the Chernoff bound in \expref{Theorem}{thm:chunglu}. Since each variable is added to $I$ independently with probability $p$, we have
  \[
    \avg{\rho}{\sum_{i\in I} w_i^2} = p\cdot \mynorm{w}_2^2.
  \]
  Applying \expref{Theorem}{thm:chunglu} to the Boolean random
  variables $X_i$ which take value $1$ iff $x_i\in I$ and for
  $X = \sum_i w_i^2 X_i$, we obtain the following bound.
\[
\prob{}{\sum_{i\in I}w_i^2\geq 2\mynorm{w}_2^2} \leq \prob{}{X \geq \avg{}{X} + \mynorm{w}_2^2}\leq \exp(-\mynorm{w}_2^2/8\max_i\{w_i^2\})\leq p^2
\]
where for the last inequality we have used the fact that $w$ is $\varepsilon'$-regular and hence
\[
  \max_i w_i^2\leq (\varepsilon')^2\cdot \mynorm{w}_2^2 = \mynorm{w}_2^2/(16\log(1/p)).\qedhere
\]
\end{proof}

\begin{claim}
\label{clm:B2kell}
$\prob{\rho}{\mc{B}_2^{k,\ell}(\rho)}\leq (ep\ell/k)^k$.
\end{claim}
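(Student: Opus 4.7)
The plan is to use a standard union bound over $k$-subsets of $\{x_1,\ldots,x_\ell\}$. The event $\mc{B}_2^{k,\ell}(\rho)$ holds if and only if there exists some subset $S\subseteq \{1,\ldots,\ell\}$ with $|S|=k$ such that every variable indexed by $S$ lies in $I$. Since each variable is independently included in $I$ with probability $p$, for any fixed $S$ of size $k$ we have $\prob{\rho}{S\subseteq I}=p^k$.

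Thus, by a union bound over all such $S$,
\[
\prob{\rho}{\mc{B}_2^{k,\ell}(\rho)} \leq \binom{\ell}{k}p^k.
\]
Applying the standard estimate $\binom{\ell}{k}\leq (e\ell/k)^k$ then yields
\[
\prob{\rho}{\mc{B}_2^{k,\ell}(\rho)} \leq \left(\frac{e\ell}{k}\right)^k p^k = \left(\frac{ep\ell}{k}\right)^k,
\]
as required.

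There is no real obstacle here---the argument is a one-line tail bound for the binomial distribution $\mathrm{Bin}(\ell,p)$ governing $|I\cap[\ell]|$. The only mild subtlety is choosing to upper bound the binomial tail via the union bound rather than a Chernoff bound, since the statement of the claim is phrased in the convenient $(ep\ell/k)^k$ form that falls out of the union bound estimate directly, and this form is what will be most useful in the subsequent critical-index case analysis of \expref{Lemma}{lem:thresh-gate}.
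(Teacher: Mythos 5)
Your proof is correct and matches the paper's argument exactly: a union bound over the $\binom{\ell}{k}$ subsets of the first $\ell$ variables, each contained in $I$ with probability $p^k$, followed by the standard estimate $\binom{\ell}{k}\leq (e\ell/k)^k$. No differences worth noting.
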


\begin{proof}
  The probability that any specific set of $k$ variables is a subset of $I$ is $p^k$. By a union bound, for any $\ell$, we have \[
    \prob{\rho}{\mc{B}_2^{k,\ell}(\rho)}\leq \binom{\ell}{k} p^k \leq (e\ell/k)^k \cdot p^k.\qedhere
  \]
\end{proof}

We start with a simpler subcase of the lemma that follows almost directly from \expref{Lemma}{lem:anticon}. 

\begin{lemma}[The regular case]
\label{lem:reg}
Say that $w$ is $\varepsilon'$-regular for some $\varepsilon'\leq \frac{1}{\sqrt{16\log(1/p)}}$. Then 
$$\prob{\rho}{\mc{B}^t(\rho)}\leq O(t\sqrt{p} + \varepsilon').$$
\end{lemma}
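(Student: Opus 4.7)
The plan is to reduce the claim to an anticoncentration statement for the complementary linear form $\ip{w''}{y}$, using Claim~\ref{clm:B1} to control the weight that moves into $I$. Recall that after sampling $\rho = (I,y)$, the restricted gate has weight vector $w' = w|_I$ and threshold $\theta' = \theta - \ip{w''}{y}$, where $w'' = w|_{[n]\setminus I}$. So the $t$-balanced event $\mc{B}^t(\rho)$ is exactly
\[
  |\theta - \ip{w''}{y}| \leq t\cdot \mynorm{w'}_2,
\]
and my plan is to first bound $\mynorm{w'}_2$ (a function of $I$ only) and then bound the probability, over $y$, that $\ip{w''}{y}$ lands in the resulting short interval around $\theta$.

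First, I would apply Claim~\ref{clm:B1}: because $w$ is $\varepsilon'$-regular with $\varepsilon' \leq 1/\sqrt{16\log(1/p)}$, we have $\Pr_I[\mc{B}_1] \leq p^2$, hence with probability $\geq 1-p^2$ we have $\mynorm{w'}_2^2 \leq 2p\mynorm{w}_2^2$ and consequently $\mynorm{w''}_2^2 \geq (1-2p)\mynorm{w}_2^2 \geq \mynorm{w}_2^2/2$ (using $p \leq p_0$ small). Condition on any $I$ outside $\mc{B}_1$ for the rest of the argument.

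Next, I need to apply the anticoncentration bound (Lemma~\ref{lem:anticon}) to $w''$. For this I must verify that $w''$ is itself regular: since $|w_i| \leq \varepsilon'\mynorm{w}_2 \leq \sqrt{2}\,\varepsilon'\mynorm{w''}_2$ for every $i \in [n]\setminus I$, the vector $w''$ is $\sqrt{2}\varepsilon'$-regular. Therefore, for the interval $J = [\theta - t\mynorm{w'}_2,\, \theta + t\mynorm{w'}_2]$ of length $|J| = 2t\mynorm{w'}_2 \leq 2t\sqrt{2p}\mynorm{w}_2 \leq 4t\sqrt{p}\mynorm{w''}_2$, Lemma~\ref{lem:anticon} gives
\[
  \prob{y}{\ip{w''}{y} \in J} \leq \frac{|J|}{\mynorm{w''}_2} + O(\varepsilon') \leq 4t\sqrt{p} + O(\varepsilon').
\]

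Finally, combining with the failure probability of $\mc{B}_1$,
\[
  \prob{\rho}{\mc{B}^t(\rho)} \leq \prob{I}{\mc{B}_1} + \avg{I\notin \mc{B}_1}{\prob{y}{\ip{w''}{y} \in J}} \leq p^2 + O(t\sqrt{p} + \varepsilon') = O(t\sqrt{p} + \varepsilon'),
\]
which is the claimed bound. There is no real obstacle here beyond bookkeeping: the main point is that $\varepsilon'$-regularity of $w$ is inherited (up to a constant factor) by $w''$ once we know $\mynorm{w'}_2^2$ is not too large, and that this is exactly what Claim~\ref{clm:B1} provides. The subtler, nonregular case is where the critical-index machinery and Claim~\ref{clm:B2kell} will be needed, but the regular case reduces cleanly to Berry--Esseen anticoncentration.
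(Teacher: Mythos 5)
Your proposal is correct and follows essentially the same route as the paper's proof: condition on the complement of $\mc{B}_1$ via \expref{Claim}{clm:B1}, observe that $w''$ inherits $O(\varepsilon')$-regularity and that $\mynorm{w'}_2 = O(\sqrt{p})\mynorm{w''}_2$, and then apply the Berry--Esseen anticoncentration bound (\expref{Lemma}{lem:anticon}) to the interval of radius $t\mynorm{w'}_2$ around $\theta$. The bookkeeping constants differ trivially but the argument is the same.
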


\begin{proof}
We bound $\prob{\rho}{\mc{B}^t(\rho)}$ as follows.
\begin{align}
\prob{\rho}{\mc{B}^t(\rho)} &\leq \prob{\rho}{\mc{B}_1(\rho)} + \prob{\rho}{\mc{B}^t(\rho)\ |\ \neg(\mc{B}_1(\rho))}\notag\\
&\leq p^2 + \prob{\rho}{\mc{B}^t(\rho)\ |\ \neg(\mc{B}_1(\rho))}\label{eq:reg1}
\end{align}
where the bound on $\prob{}{\mc{B}_1(\rho)}$ follows from \expref{Claim}{clm:B1}.

Now, note that the event $\neg\mc{B}_1(\rho)$ only depends on the choice of $\rho^{-1}(*) = I$. Hence we can condition on an $I$ so that this event occurs; choosing $\rho$ is now equivalent to choosing a random assignment $y$ to the variables in $[n]\setminus I$.

We have $\theta' = \theta - \ip{w''}{y}$. Using the fact that $\mc{B}_1(\rho)$ doesn't occur, we have 
\begin{itemize}
\item $\mynorm{w''}_2\geq \mynorm{w}_2\sqrt{1-2p}\geq \mynorm{w}_2/2$. Using the $\varepsilon'$-regularity of $w$, for each $i\not\in I$, we have
  \[
    |w_i|\leq \varepsilon'\mynorm{w}_2\leq 2\varepsilon'\mynorm{w''}_2.
  \]
  Thus, $w''$ is $2\varepsilon'$-regular.
\item $\mynorm{w'}_2 \leq 2p^{1/2}\mynorm{w}_2 \leq 4p^{1/2}\mynorm{w''}_2$,
\end{itemize}

Using the above, we can see that the probability that 
\begin{align*}
\prob{\rho}{\mc{B}^t(\rho)\ |\ \neg\mc{B}_1(\rho) } &\leq \prob{y}{|\theta'| \leq t\cdot \mynorm{w'}_2} \leq \prob{y}{|\theta'| \leq 4tp^{1/2}\cdot\mynorm{w''}_2}\\
&\leq \prob{y}{\ip{w''}{y}\in [\theta- 4tp^{1/2}\cdot\mynorm{w''}_2, \theta + 4tp^{1/2}\cdot\mynorm{w''}_2]}\\
&\leq O(tp^{1/2}) + O(\varepsilon') 
\end{align*}
where the final inequality uses the anti-concentration bound in \expref{Lemma}{lem:anticon}. Putting the above together with (\ref{eq:reg1}), we are done.
\end{proof}

\begin{proof}[Proof of \expref{Lemma}{lem:thresh-gate}]
The proof of the lemma is a standard case analysis based on the $\varepsilon$-critical index of the threshold gate $\phi$ (see~\cite{Ser,OS,DGJSV,MZ}). Let $\varepsilon = p^{1/8}$ and $t = p^{-1/16}$. The parameter $p_0$ will be chosen in the proof below. 

The first case is when the critical index $K\leq L$. In this case, we bound the probability of $\mc{B}^t(\rho)$ by 
\begin{align}
\prob{\rho}{\mc{B}^t(\rho)} &\leq \prob{\rho}{\mc{B}_2^{1,K}(\rho)} + \prob{\rho}{\mc{B}^t(\rho)\ |\ \neg \mc{B}_2^K(\rho)}\notag\\
&\leq epK + \prob{\rho}{\mc{B}^t(\rho)\ |\ \neg \mc{B}_2^K(\rho)} \leq \sqrt{p} + \prob{\rho}{\mc{B}^t(\rho)\ |\ \neg\mc{B}_2^K(\rho)}\label{eq:msl2}
\end{align}
where the second inequality follows from \expref{Claim}{clm:B2kell} and the final inequality follows from the fact that $epK \leq epL \leq \sqrt{p}$ by our choice of parameters. The event $\neg\mc{B}_2(\rho)$ only depends on the choice of the sub-restriction $\rho|_{[K]}$ and we can condition on $\rho|_{[K]}$ so that this event occurs. {}From now on, the random choice will be a restriction $\rho'\sim \mc{R}_p^{n-K}$ on the remaining variables.

Since the restricted linear function is now $\varepsilon$-regular by the definition of the $\varepsilon$-critical index, we can apply \expref{Lemma}{lem:reg} to conclude that
\[
  \prob{\rho'}{\mc{B}^t(\rho)\; \middle|\; \neg\mc{B}^{1,K}_2(\rho)}\leq p^{\Omega(1)}
\]
(note that $\varepsilon = p^{1/8} \leq 1/\sqrt{16\log(1/p)}$ as long as $p$ is smaller than some absolute constant $p_0$, so that \expref{Lemma}{lem:reg} is applicable). Along with (\ref{eq:msl1}), this implies the lemma in the case that $K\leq L$.

The second case is when $K > L$. As in previous cases, we first condition on some bad event not occurring. We have
\begin{align}
\prob{\rho}{\mc{B}^t(\rho)} &\leq \prob{\rho}{\mc{B}_2^{1,L}(\rho)} + \prob{\rho}{\mc{B}^t(\rho)\; \middle|\; \neg\mc{B}_2^{1,L}(\rho)}\notag\\
&\leq epL + \prob{\rho}{\mc{B}^t(\rho)\;\middle|\; \neg\mc{B}_2^{1,L}(\rho)} \leq \sqrt{p} + \prob{\rho}{\mc{B}^t(\rho)\;\middle|\; \neg\mc{B}_2^{1,L}(\rho)}.\label{eq:msl3}
\end{align}

As above, we can condition on a fixed $I$ so that $\mc{B}_2^{1,L}(\rho)$ does not occur (\ie, none of the first $L$ variables belong to $I$). We then use the following claim that is implicit in~\cite{DGJSV}.

\begin{proposition}
\label{prop:DGJSV}
Assume that $L' = (10r\log(1/\varepsilon))/\varepsilon^2$ and that the $\varepsilon$-critical index $K > L'$. Let $y$ be a random assignment to any set of variables including the first $L'$ variables. Then, the probability over $y$ that the restricted threshold gate is not $(1/\varepsilon)$-imbalanced is at most $2^{-r}$.
\end{proposition}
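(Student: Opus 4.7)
The proof rests on two consequences of the hypothesis $K>L'$. First, by the definition of the critical index, $|w_i|>\varepsilon\mynorm{w_{\geq i}}_2$ for every $i\le L'+1$; squaring and iterating gives the geometric tail decay
\[
  \mynorm{w_{\geq L'+1}}_2^2 \;<\; (1-\varepsilon^2)^{L'}\mynorm{w}_2^2 \;\le\; \exp(-\varepsilon^2 L')\,\mynorm{w}_2^2 \;=\; \varepsilon^{\Omega(r)}\mynorm{w}_2^2,
\]
so the $\ell_2$-mass of $w$ beyond the first $L'$ coordinates is essentially negligible. Second, letting $T\supseteq[L']$ denote the full set of variables being assigned, we condition on the values of $y_i$ for $i\in T\setminus[L']$; this only shifts the effective threshold to some $\theta''\in\mathbb{R}$, and the surviving weight vector $w'$ is a sub-vector of $w_{\geq L'+1}$, hence $\mynorm{w'}_2\le\mynorm{w_{\geq L'+1}}_2$. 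Consequently, the restricted gate fails to be $(1/\varepsilon)$-imbalanced precisely when
\[
  \left|\sum_{i=1}^{L'}w_iy_i-\theta''\right|<\eta,\qquad \eta := (1/\varepsilon)\mynorm{w_{\geq L'+1}}_2,
\]
so it suffices to show this event has probability at most $2^{-r}$ for every $\theta''$.

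To prove the anti-concentration bound, partition $[L']$ into $r$ contiguous blocks $B_1,\ldots,B_r$, each of size $m=\Theta(\log(1/\varepsilon)/\varepsilon^2)$. The geometric decay ensures that between consecutive blocks the tail $\ell_2$-norm shrinks by an $\varepsilon^{\Omega(1)}$ factor. Reveal the blocks in order. After fixing the assignments in $B_1,\ldots,B_{k-1}$, condition on a ``typical suffix'' event asserting that for each $k$ the random sum $\sum_{i\in B_{k+1}\cup\cdots\cup B_r}w_iy_i$ has magnitude at most $O(\sqrt{r})$ times its $\ell_2$-norm; by the Chernoff bound \expref{Theorem}{thm:chernoff} and a union bound over blocks, this event holds with probability $\ge 1-2^{-\Omega(r)}$. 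Under this conditioning, the bad event forces $\sum_{i\in B_k}w_iy_i$ to lie in an interval whose length is a small constant fraction of $\mynorm{w|_{B_k}}_2$. Since every weight in $B_k$ has magnitude at least $\varepsilon\mynorm{w|_{B_k\cup\cdots\cup B_r}}_2$, an Erd\H{o}s--Littlewood--Offord type anti-concentration bound implies that this happens with probability at most some constant $c<1$. Multiplying the per-block factors across the $r$ blocks and adding the Chernoff failure probability yields the claimed bound $2^{-r}$ (with constants absorbed into the choice of $L'$).

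The main obstacle is calibrating the parameters so that (i) the Chernoff bound on the unrevealed suffix contributes only $2^{-\Omega(r)}$ failure probability, (ii) the interval forced on the block-$k$ partial sum is small enough for Erd\H{o}s--Littlewood--Offord to give a constant-factor improvement per block, and yet (iii) $r$ blocks fit comfortably inside $[L']$. The choice $L' = (10r\log(1/\varepsilon))/\varepsilon^2$ is tuned precisely so that all three constraints can be met simultaneously; the detailed computation is carried out in~\cite{DGJSV}.
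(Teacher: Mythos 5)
Your overall strategy---use the critical-index condition to get geometric decay of the tail norm, absorb the variables outside $[L']$ into a shifted threshold $\theta''$, and reduce the claim to an anti-concentration statement for $\sum_{i\leq L'}w_iy_i$---is sound, and that reduction is carried out correctly. The gap is in the per-block anti-concentration step. You assert that \emph{every} weight in $B_k$ has magnitude at least $\varepsilon\mynorm{w|_{B_k\cup\cdots\cup B_r}}_2$; this is false. The critical-index condition gives $|w_i|\geq \varepsilon\mynorm{w_{\geq i}}_2$ for each index separately, so only the \emph{largest} weight of $B_k$ is bounded below by $\varepsilon$ times the norm of $B_k\cup\cdots\cup B_r$; the smallest weight of $B_k$ is only guaranteed to be at least $\varepsilon\mynorm{w_{\geq \max B_k}}_2$, which---by the very intra-block decay your argument relies on---can be smaller by a factor $\varepsilon^{\Omega(1)}$ (for $|w_i|\propto (1-\varepsilon^2)^{i/2}$ the last weight of a block of size $10\log(1/\varepsilon)/\varepsilon^2$ is about $\varepsilon^{6}\mynorm{w|_{B_k}}_2$). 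Meanwhile the interval you must force $\sum_{i\in B_k}w_iy_i$ into has length $2\eta+\Theta(\sqrt{r})\cdot\mynorm{w|_{B_{k+1}\cup\cdots\cup B_r}}_2$ because of the slack needed for the unrevealed suffix, which can exceed twice the minimum block weight by a factor of order $\sqrt{r}/\varepsilon$; so the Erd\H{o}s--Littlewood--Offord bound, invoked with respect to the minimum weight, gives nothing. (Two smaller points: being a small fraction of $\mynorm{w|_{B_k}}_2$ alone does not imply anti-concentration without some regularity or a dominant weight; and the constants cannot be ``absorbed into the choice of $L'$,'' since $L'$ is fixed by the statement---a product of per-block constants plus a Chernoff failure term gives $2^{-\Omega(r)}$ rather than literally $2^{-r}$ unless tuned.)

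The step is repairable, and the repair essentially collapses to the paper's proof: within each block use only its largest weight $|w_{\min B_k}|\geq \varepsilon\mynorm{w_{\geq \min B_k}}_2$, which does dominate the forced interval's length; conditioning on all other coordinates of the block, the two values of that single coordinate differ by $2|w_{\min B_k}|$, so each block contributes a factor at most $1/2$. The paper does this directly and avoids the suffix issue altogether: it picks $r$ single indices $i_1<\cdots<i_r$ spaced $\Theta(\log(1/\varepsilon)/\varepsilon^2)$ apart so that $|w_{i_{j+1}}|\leq |w_{i_j}|/3$ and the entire unfixed tail has norm $O(\varepsilon)\cdot|w_{i_r}|$, conditions on \emph{all} remaining variables (no typical-suffix/Chernoff event needed), and observes that the $2^r$ values of $\sum_j w_{i_j}y_{i_j}$ are pairwise separated by at least $|w_{i_r}|$, more than the length of the forced interval, giving exactly $2^{-r}$.
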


Applying the above proposition with $L' = L$ and $r = 10\log(1/\varepsilon)$, we have
\[
  \prob{\rho}{\mc{B}^t(\rho)\ |\ \neg\mc{B}_2^{1,L}(\rho)}\leq \varepsilon^{10}<p.
\]
Putting this together with (\ref{eq:msl3}), we have the claimed upper
bound on $\prob{\rho}{\mc{B}^t(\rho)}$ in the case that $K > L$.
\end{proof}

For completeness, we give below a proof sketch of \expref{Proposition}{prop:DGJSV}.

\begin{proof}[Proof of \expref{Proposition}{prop:DGJSV}]
Let $J$ be the set of variables being set and let $y\in \{-1,1\}^{|J|}$ denote the random assignment chosen. Let $L_0 = ({1}/{\varepsilon^2})\cdot 3\log(1/\varepsilon)$. It can be checked that for any $i < L' - L_0$, we have 
\[
\mynorm{w_{\geq (i+L_0)}}_2^2 \leq \frac{\varepsilon^2}{9}\cdot \mynorm{w_{\geq i}}^2\leq \frac{w_i^2}{9}.
\]

Hence, we can choose indices $i_1 = 1,i_2 = 1+L_0,\cdots,i_{r+2} = 1+(r+1)L_0\leq L'$ such that
\[
  |w_{i_{j+1}}| \leq \frac{|w_{i_j}|}{3}\qquad\text{and}\qquad \mynorm{w_{\geq i_{j+1}}}_2^2 \leq \frac{\varepsilon^2}{9}\cdot\mynorm{w_{\geq i_j}}_2^2.
\]
Further, we have
$$\sum_{i\not\in J}w_i^2\leq \mynorm{w_{\geq L'}}^2 \leq \mynorm{w_{\geq {i_{r+2}}}}^2\leq \frac{\varepsilon^2}{9}\cdot \mynorm{w_{\geq i_{r+1}}}_2^2\leq \frac{\varepsilon^2}{81}\cdot w_{i_r}^2.$$

We condition on any setting of variables other than $y_{i_1},\ldots,y_{i_r}$. This means that the constant term of the restricted threshold gate $\theta'$ is given by
\[
\theta' = \theta'' - \sum_{j\in [r]}w_{i_j}y_{i_j}
\]
for some $\theta'' \in \mathbb{R}$. The probability that the threshold gate is not ${1}/{\varepsilon}$-imbalanced is at most
\begin{align*}
\prob{y_{i_1},\ldots,y_{i_r}}{|\theta'|\leq \frac{1}{\varepsilon}\cdot \sqrt{\sum_{i\not\in J}w_i^2}} &\leq \prob{y_{i_1},\ldots,y_{i_r}}{|\theta'|\leq \frac{1}{9}\cdot |w_{i_r}|}\\
&= \prob{y_{i_1},\ldots,y_{i_r}}{\sum_{j=1}^r w_{i_j}y_{i_j} \in [\theta''- \frac{1}{9}\cdot |w_{i_r}|,\theta'' + \frac{1}{9}\cdot |w_{i_r}| ]}.
\end{align*}

Now, as a result of the exponentially decreasing nature of the $|w  _{i_j}|$, it follows that for any interval of length at most $|w_{i_r}|/2$, there can be at most one choice of $y_{i_1},\ldots,y_{i_r}$ such that the $\sum_j w_{i_j}y_{i_j}$ lies in that interval. Thus, we have the given bound.

\end{proof}

\section{Satisfiability algorithms beating brute-force search}\label{sec:satalgo}

In this section, we give satisfiability algorithms beating brute force search for bounded-depth threshold circuits with few wires. Until now, such algorithms were only known for threshold circuits of depth 2. We will assume that each threshold gate on $m$ input bits is given as a pair $(w, \theta)$, where $w \in \mathbb{Z}^m$ and $\theta \in \mathbb{Z}$, and $\theta$ as well as each component of $w$ has bit complexity $\poly(n)$. Note that this assumption is without loss of generality for a threshold function, and that some assumption on representability of threshold functions is necessary in an algorithmic context.

The satisfiability algorithm relies on an algorithmic version of \expref{Lemma}{lem:depthredn}, along with a couple of additional ideas. Essentially, we use the algorithmized version of the lemma to reduce the satisfiability of bounded-depth circuits to satisfiability of ANDs of threshold functions, which we can then solve using a recent result of Williams, stated below.

\begin{theorem} \cite{Will-ACC-THR}
\label{thm:AC0ThrSAT}
There is a deterministic algorithm, which given a bounded-depth circuit $C$ on $n$ variables of size $2^{n^{o(1)}}$ with ANDs, ORs and threshold gates, and with the threshold gates appearing only at the bottom layer, decides if $C$ is satisfiable in time $2^{n-n^{\varepsilon'}} \poly(n)$, where $\varepsilon' > 0$ is a constant that depends only on the depth of the circuit.
\end{theorem}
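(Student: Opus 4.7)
This is a cited result from Williams~\cite{Will-ACC-THR}, so my ``proof plan'' is really a description of how I would follow the Williams framework to establish it. The overall strategy combines the polynomial method for $\AC^0$ with fast (rectangular) matrix multiplication, in the style of Williams' $\cclass{NEXP}\not\subseteq \cclass{ACC}$ machinery specialized to circuits with a threshold bottom layer.

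First, I would use a Beigel--Tarui-type representation to replace the $\AC^0$ part of the circuit. The classical conversion (based on probabilistic polynomials for AND/OR composed with exact interpolation) shows that any $\AC^0$ circuit of size $s = 2^{n^{o(1)}}$ and constant depth can be rewritten as a symmetric function $\mathrm{SYM}$ applied to at most $N = 2^{n^{o(1)}}$ ANDs, each of fan-in $(\log s)^{O(1)} = n^{o(1)}$. Applied to our circuit, whose leaves are threshold gates, this produces an equivalent $\mathrm{SYM} \circ \AND \circ \THR$ representation in which each $\AND$ has at most $n^{o(1)}$ many $\THR$ inputs. Evaluating a single such $\AND$-of-$\THR$ term on a specific assignment is straightforward, since we can compute each threshold gate in $\poly(n)$ time from its $(w,\theta)$ representation.

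Second, I would reduce satisfiability of this representation to a fast matrix multiplication problem, again following Williams. Split the $n$ inputs into two halves $Y,Z$ of size $n/2$ each. For each of the $N$ $\AND$-of-$\THR$ terms $\psi_i$, observe that $\psi_i(y,z)$ can be split across the halves (or, if a term uses variables from both halves in a single threshold, brute-forced into a small disjunction by guessing the contribution from one side, which blows up $N$ by a $2^{n^{o(1)}}$ factor that is still affordable). Build rectangular matrices indexed by $(y,i)$ and $(i,z)$ with $\{0,1\}$ entries recording whether $\psi_i$ is satisfied; their product gives, for each $(y,z)$, the count $\sum_i \psi_i(y,z)$. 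Applying the top symmetric function to this count and scanning the $2^n$ output entries decides satisfiability. Using Coppersmith-style rectangular matrix multiplication, the product can be computed in time $2^{n - n^{\varepsilon'}}\cdot\poly(n)$ for some constant $\varepsilon' > 0$ depending on the depth (which controls $N$), giving the claimed bound.

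The main obstacle I would expect is purely bookkeeping: verifying that the quasipolynomial blowup from the Beigel--Tarui conversion, together with the cost of reducing cross-partition threshold gates to same-side ones, still leaves $N = 2^{n^{o(1)}}$ small enough that the rectangular matrix multiplication bound beats $2^n$ by a factor of $2^{n^{\varepsilon'}}$. Concretely, one has to balance the fan-in $(\log s)^{O(d)}$ of the ANDs against the exponent of the matrix multiplication algorithm, and this is where the dependence of $\varepsilon'$ on the depth arises. The technical machinery for all of this is already in~\cite{Will-ACC-THR}, so for the purposes of the present paper I would simply invoke that theorem as a black box and focus my energy on the algorithmic version of \expref{Lemma}{lem:depthredn} needed to reduce to this setting.
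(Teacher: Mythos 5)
The paper does not prove this statement at all: it is imported verbatim as a black-box citation to Williams~\cite{Will-ACC-THR}, which is exactly what you propose to do, so your treatment matches the paper's. (One caution about your supplementary sketch, though it does not affect your plan since you defer to Williams' machinery: ``guessing the contribution from one side'' of a threshold gate that straddles the variable partition is not a $2^{n^{o(1)}}$-factor blowup, because the partial sum $\ip{w}{y}$ over half the variables can take exponentially many distinct values when the weights have $\poly(n)$ bit complexity, so the cross-partition gates need the more careful handling that Williams actually gives.)
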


We also need the following fact about threshold gates on $n$ input bits: the set of inputs evaluating to $1$ (and dually, the set of inputs evaluating to $-1$) of a linear threshold gate can be enumerated in time proportional to the number of such inputs, modulo a $\poly(n)$ factor. 

\begin{proposition}
\label{prop:ThrEnum}
Let $(w, \theta)$ represent a threshold function $\phi$ on $m$ input bits, where $w \in \mathbb{Z}^m$ and $\theta \in \mathbb{Z}$ are integers of bit complexity $\poly(m)$.  Let $S$ be the set of inputs on which $\phi$ evaluates to 1. Then $S$ can be enumerated in time $|S| \poly(m)$.
\end{proposition}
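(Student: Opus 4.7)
The plan is to enumerate $S$ by a standard branch-and-bound self-reduction on the variables, whose pruning exploits a special property of threshold functions: the form $\ip{w}{x}$ is coordinate-wise monotone once the signs of the $w_i$ are fixed, so the maximum of $\ip{w}{x}$ over any partial assignment is trivially computable. Since the enumeration will only recurse into partial assignments that extend to at least one element of $S$, the total number of nodes it touches will be bounded by the total length of the $|S|$ root-to-leaf paths in the binary tree on $\{-1,1\}^m$, i.e., at most $m|S|+1$.

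Concretely, I would fix the order $x_1,\ldots,x_m$, precompute the tail sums $T_i := \sum_{j \geq i} |w_j|$ in time $O(m)$, and maintain the partial inner product $s = \sum_{j<i} w_j x_j$ incrementally as the recursion proceeds. The key observation is that the maximum of $\ip{w}{x}$ over all completions of a given partial assignment $x_1,\ldots,x_{i-1}$ equals $s + T_i$, achieved by setting $x_j = \sgn(w_j)$ for $j \geq i$; hence that partial assignment extends to some element of $S$ iff $s + T_i \geq \theta$. The enumeration routine $\mathrm{Enum}(i,s)$ would then be: if $s + T_i < \theta$, return; if $i = m+1$, output the current assignment; else recurse into $\mathrm{Enum}(i+1, s+w_i)$ followed by $\mathrm{Enum}(i+1, s-w_i)$. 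Correctness is immediate: every satisfying assignment is output (the pruning condition is tight), and no unsatisfying one can be (the leaf check $i = m+1$ is only reached when $s = \ip{w}{x}\geq \theta$ by the previous pruning test).

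For the running time, each individual call performs $\poly(m)$ work (a constant number of comparisons and additions on integers of bit complexity $\poly(m)$). The calls that survive the first-step pruning correspond to \emph{good} partial assignments---those extending to some element of $S$---and every such good partial assignment of length $i$ lies on the root-to-leaf path of at least one $x \in S$. Since each path has length $m$ and there are $|S|$ of them, the number of non-pruned calls is at most $m|S|+1$; each of these makes at most two further (possibly pruned) calls, so the total number of invocations is $O(m|S|)$, giving total running time $|S|\cdot\poly(m)$ as required. There is no real obstacle here beyond identifying the right pruning condition; it is precisely the monotonicity of a linear form in each coordinate that makes the enumeration tight for a single threshold gate, whereas the analogous enumeration problem for an AND of several threshold gates, or for a depth-$2$ threshold circuit, would already be substantially more delicate.
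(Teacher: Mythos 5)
Your proposal is correct and follows essentially the same route as the paper: a depth-first traversal of partial assignments that prunes a branch exactly when the restricted threshold gate is unsatisfiable (a trivially poly-time check for a single gate, made explicit in your tail-sum test), with every surviving node charged to a prefix of some satisfying assignment, giving $O(m|S|)$ nodes and $|S|\poly(m)$ time; the paper phrases this as building a decision tree and bounds the $-1$ leaves via a small induction on height, but the algorithm and the reason it works are the same. The only nit is the sign convention: with the paper's definition $\sgn(0)=-1$, the gate outputs $1$ iff $\ip{w}{x} > \theta$, so your pruning test should be $s + T_i > \theta$ rather than $\geq \theta$ (an immaterial adjustment since the weights are integers).
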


\begin{proof}
We will show how to construct a decision tree for $\phi$ in time $|S| \poly(m)$, where $S$ is the set of inputs on which $\phi$ takes value 1. Given a decision tree of size at most $|S| \poly(m)$, it is easy to enumerate the set of inputs on which $\phi$ takes value 1 in time $|S| \poly(m)$ by scanning through leaves labeled 1 and outputting all assignments corresponding to any such leaf.

The decision tree is constructed recursively as follows. Check if $\phi$ restricted according to the current partial assignment is satisfiable (in the sense that there is a total assignment consistent with the partial assignment for which $\phi$ evaluates to 1). 
 Note that satisfiability of a linear threshold gate with polynomial bit complexity of the weights can be done trivially in polynomial time. If the satisfiability check fails, make the current node a leaf and label it with $-1$. If it succeeds, check if the current partial assignment is falsifiable. If this check fails, make the current node a leaf and label it with 1. Otherwise, branch on an arbitrary unassigned variable and recurse.
 
 Clearly, this decision tree can be constructed with polynomial work at each node, and hence in time $N \poly(m)$, where $N$ is the number of leaves of the tree. We show that $N \leq |S| m$. Indeed, we prove inductively that for any internal node $v$ of the tree of height $h \geq 1$, the number of $-1$ leaves of the tree rooted at $v$ is at most $h$ times the number of $1$ leaves, from which the claim follows as the height of the tree $\leq m$.
 
 For the inductive claim, the base case $h=1$ is clear as any node at height 1 must have one leaf labeled 1 and the other labeled $-1$. Assume the claim for height $h$. Consider a node $v$ at height $h+1$. Either one of its children is a leaf, or not. If one of the children is a leaf, then the other one $v'$ is not and by the induction hypothesis, since it is of height $h$, has at most $h$ times as many $-1$ leaves as 1 leaves. The number of $-1$ leaves of $v$ is at most one plus the number of $-1$ leaves of $v'$, and hence at most $h+1$ times the number of 1 leaves. In case both children of $v$ are internal nodes, then they are both of height at most $h$, and by the induction hypothesis, both have at most $h$ times as many $-1$ leaves as 1 leaves, which implies that the same holds for $v$.
\end{proof}

  \begin{definition}
 \label{def:thresholds}
 We use $\THR$ to refer to the class of linear threshold functions. We use $ \AND \circ \THR$ to refer to the class of polynomial-size circuits with an AND gate at the top and threshold gates at the bottom layer.
 \end{definition}

 \begin{theorem}
 \label{thm:sat-algorithm}
 For each integer $d > 0$, there is a constant $\varepsilon_d > 0$ such that satisfiability of a depth-$d$ threshold circuit with at most $n^{1+\varepsilon_d}$ wires on $n$ variables can be solved by a randomized algorithm in time $2^{n-\Omega(n^{\varepsilon_d})} \poly(n)$.
 \end{theorem}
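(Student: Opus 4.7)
The plan is to algorithmize the depth-reduction of \expref{Lemma}{lem:depthredn} and combine it with Williams' algorithm (\expref{Theorem}{thm:AC0ThrSAT}) at the base of the recursion. I would design a recursive SAT procedure $A_d$ that, given a depth-$d$ threshold circuit $C$ on $n$ variables with at most $n^{1+\varepsilon_d}$ wires, either returns a satisfying assignment or certifies unsatisfiability in randomized time $2^{n-\Omega(n^{\varepsilon_d})}\poly(n)$. Each call of $A_d$ performs one round of algorithmized depth-reduction and then makes a controlled number of calls to $A_{d-1}$; once $d$ drops to a regime where the resulting instance is essentially an $\AND$-of-$\THR$, \expref{Theorem}{thm:AC0ThrSAT} finishes the job.

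First I would verify that every step inside the proof of \expref{Lemma}{lem:depthredn} is constructive in polynomial time: Phase~1's read-reduction is a greedy graph argument, the classification of gates into large vs.\ small and $t$-balanced vs.\ $t$-imbalanced is read off directly from the labels $(w,\theta)$, and the ``generic'' set $I\subseteq[n_1]$ of Phase~2 can be produced by sampling $\rho=(I,y)\sim\mc{R}_p^{n_1}$ and verifying its polynomial-time checkable defining conditions, succeeding with the probability guaranteed by the lemma. The output is an explicit restriction tree $T$ of depth $n-n^{1-2\delta_d}$ together with, at each leaf $\lambda$, the approximating simple circuit $C'_\lambda$ in the form (\ref{eq:simple}) and the list $(\phi_1,b_{\phi_1}),\ldots,(\phi_s,b_{\phi_s})$ of large $t$-imbalanced gates replaced by constants. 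The SAT algorithm then enumerates all $\leq 2^{n-n^{1-2\delta_d}}$ leaves of $T$. On a ``bad'' leaf we brute-force the remaining $n^{1-2\delta_d}$ variables; since bad leaves comprise only an $\exp(-n^{\varepsilon_d})$-fraction of $T$, their total contribution is $\exp(-n^{\varepsilon_d})\cdot 2^n\cdot\poly(n)$. On a ``good'' leaf we first guess the values $b_1,\ldots,b_r$ of the $r\leq n^{\delta_d}$ top threshold functions in (\ref{eq:simple}); each of the $2^r$ guesses yields SAT of a depth-$(d-1)$ threshold circuit of wire complexity $\leq m^{1+\varepsilon_{d-1}}$ (the bound is verified inside the proof of \expref{Lemma}{lem:depthredn}) conjoined with at most $r$ extra threshold literals, which we solve by a recursive call to $A_{d-1}$.

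The subtle step is that $C'_\lambda$ only $\exp(-n^{\varepsilon_d})$-approximates the original restricted circuit $C|_{\rho_\lambda}$, so a satisfying assignment of the latter may hide in the error set $E_\lambda=\bigcup_i\{x:\phi_i(x)\ne b_{\phi_i}\}$. Here I would invoke \expref{Proposition}{prop:ThrEnum}: for each replaced gate $\phi_i$ it enumerates the partial assignments on $\phi_i$'s fan-in variables at which $\phi_i$ takes its minority value in time proportional to their count, which is at most $\exp(-\Omega(t^2))\cdot 2^{\fanin(\phi_i)}$ by \expref{Theorem}{thm:chernoff}. For each such partial we iterate over all completions on the remaining live variables and evaluate $C|_{\rho_\lambda}$ in $\poly(n)$ time; summed over the $\leq n$ replaced gates this costs at most $n\cdot\exp(-n^{\Omega(\delta_d)})\cdot 2^m\cdot\poly(n) = 2^{m-\Omega(n^{\varepsilon_d})}\poly(n)$, within budget because $\delta_d$ is a larger power of $n$ than $\varepsilon_d$. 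The main obstacle I anticipate is the careful bookkeeping of all the exponents through the $d-1$ levels of recursion: the savings must compound rather than be eaten by the $2^{n^{\delta_d}}$ top-gate guessing factor, the error-set enumeration, or the per-leaf overhead. With $\varepsilon_d$ and $\delta_d$ chosen as in (\ref{eq:epsdeltadefn}) for a sufficiently large constant $B$, the gap $\delta_d=B\varepsilon_d$ ensures that all subdominant terms are absorbed and the recursive savings $2^{-m^{\varepsilon_{d-1}}}$ per level remain at least $2^{-n^{\varepsilon_d}}$, yielding the claimed bound. Randomness enters only through the sampling of generic restrictions inside each invocation of the algorithmized depth-reduction.
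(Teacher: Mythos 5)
Your overall plan coincides with the paper's: algorithmize \expref{Lemma}{lem:depthredn}, brute-force the exponentially rare bad events, enumerate the minority inputs of the replaced imbalanced gates via \expref{Proposition}{prop:ThrEnum}, branch on the few surviving balanced gates, and finish with Williams' algorithm (\expref{Theorem}{thm:AC0ThrSAT}) at the bottom of the recursion, with the same $B$-dependent bookkeeping of $\varepsilon_d$ versus $\delta_d$. There is, however, one genuine gap: you handle only one direction of the error introduced when a $t$-imbalanced gate $\phi_i$ is replaced by its majority constant $b_{\phi_i}$. Enumerating the error set $E_\lambda$ recovers satisfying assignments of $C|_{\rho_\lambda}$ that lie inside $E_\lambda$, but it does nothing about \emph{false positives}: the recursive call on the simplified circuit may report satisfiability on the strength of an assignment that lies in $E_\lambda$ and does not satisfy the original circuit. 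Verifying the returned assignment does not repair this, because if the check fails you cannot conclude unsatisfiability --- the simplified circuit $C'_\lambda$ might still have a genuine satisfying assignment outside $E_\lambda$, and searching for it (or enumerating all satisfying assignments of $C'_\lambda$ so as to test each) could cost far more than the savings accrued; the paper explicitly flags this as the reason a further idea is needed.

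The paper's fix is to work throughout with $\AND\circ\THR$-skew circuits: whenever an imbalanced gate $\phi_i$ is replaced by $b_{\phi_i}$, the threshold constraint asserting ``$\phi_i$ takes its majority value'' is appended as an additional child of the top AND gate, alongside the $r\leq n^{\delta_d}$ guessed threshold literals you already carry. The recursive instance is then $C'_\lambda$ conjoined with these polynomially many threshold gates, so by construction every satisfying assignment of the instance avoids $E_\lambda$ and genuinely satisfies $C|_{\rho_\lambda}$; no false positives can reach the base case, and Williams' algorithm absorbs the accumulated bottom-level threshold conjuncts. Your recursion is already of the right shape to accommodate this (you conjoin threshold literals for the guessed balanced gates), so the repair is local, but without it the algorithm as you describe it is not correct.
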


\begin{proof}
As the proof follows the proof of \expref{Lemma}{lem:depthredn} closely, we just give a sketch. 
Call a circuit depth-$d$ $\AND \circ \THR$-skew if the top gate is an AND and all but one child of the top gate is a bottom-level threshold gate, with the possibly exceptional child being a depth-$(d-1)$ threshold circuit with few wires. We follow the depth reduction argument in the lemma to give a recursive algorithm which reduces satisfiability of polynomial-size depth-$d$ $\AND \circ \THR$-skew circuits to the satisfiability of polynomial-size depth-$(d-1)$ $\AND \circ \THR$-skew circuits by appropriately restricting variables.

For the base case $d=1$, we simply appeal to the algorithm given by \expref{Theorem}{thm:AC0ThrSAT}, which solves satisfiability of $\AND \circ \THR$ circuits of polynomial size in time $2^{n-n^{\varepsilon'}} \poly(n)$ for some constant $\varepsilon' > 0$. 

For the inductive case, we simulate the proof of \expref{Lemma}{lem:depthredn}, which performs and analyzes a certain kind of adaptive random restriction.  Various bad events might happen at Phases 2 and 3 of this random restriction process, however each step of the restriction process as well as the check that a bad event happens can be implemented in polynomial time. Moreover, the probability that a bad event happens is at most $2^{-n^{\varepsilon_d}}$. Whenever a bad event happens, we simply do brute force search on the remaining variables of the circuit, but thanks to the exponentially small probability that a bad event happens, with high probability, we only spend time $2^{n-n^{\varepsilon_d}}$ on such brute force searches.

In Phase 3 of the restriction process, we replace imbalanced gates by their most probable values. This changes the functionality of the circuit and might lose us satisfying assignments or give us new invalid satisfying assignments. To get around this, for each such imbalanced gate, we use \expref{Proposition}{prop:ThrEnum} to efficiently enumerate the inputs evaluating to the minority value for each imbalanced gate, and for each such input check whether it satisfies the original circuit. If it does, we just output ``yes.'' We also append to the top gate of the skew circuit a child representing the assignment of the imbalanced gate to its majority value---this needs to be done so that we don't end up with ``false positives'' in the base case of the recursive algorithm. Although each such false positive can be tested, there might be too many of them, and this could destroy all the savings we accrue through the course of the algorithm. The total time spent in enumerating minority values of imbalanced gates is again at most $2^{n-n^{\varepsilon_d}} \poly(n)$, with high probability, using the efficient enumeration and the imbalance property.

Finally, there are a few balanced gates---with high probability at most $O(n^{\delta_d})$ of them---for which we need to try all possible values. This could be expensive, but is compensated for by an increased savings for depth $d-1$, just by setting the constant $B$ large enough in the proof of \expref{Lemma}{lem:depthredn}. We also need to set $B$ large enough so that the savings given by the application of Williams' algorithm in the base case overwhelms the loss due to branching on balanced threshold gates at depth $d=2$.

Thus the total running time, once $B$ is chosen appropriately, is $2^{n-\Omega(n^{\varepsilon_d})} \poly(n)$, using the fact that $\varepsilon_d < \varepsilon_{d-1} < \ldots < \varepsilon_{2}$.
\end{proof}

\section{Threshold formulas}
\label{sec:formulas}

A threshold formula is a threshold circuit such that the fan-out of each gate is at most 1.
A formula can be viewed as a rooted tree.
Note that a depth-2 threshold circuit can always be converted to a threshold formula without increasing either the wire complexity or the gate complexity (recall that the gate complexity only measures the number of \emph{non-input} gates).

\begin{theorem}\label{thm:formula-desc}
There are  $2^{O(sn)}$ distinct Boolean functions computed by threshold formulas with $s$ wires on $n$ inputs.
\end{theorem}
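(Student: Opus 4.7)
The plan is to bound the description length of a threshold formula by combining three counts: the number of tree skeletons, the number of leaf-labelings by input variables, and the number of functions achievable by varying the thresholds at internal gates. Theorem \ref{thm:RSO94} is the workhorse: it will give a tight $2^{O(nk)}$ bound on the number of distinct threshold functions of $k$ fixed input functions on $n$ variables, and this bound will compose multiplicatively over gates to yield $2^{O(ns)}$.

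First I would bound the number of tree skeletons. Since a threshold formula with $s$ wires has underlying ordered rooted tree with $s$ edges, the number of such shapes is at most the Catalan number $C_s \leq 4^s = 2^{O(s)}$. Second, I would bound the number of ways to label leaves with input variables: any such tree has at most $s+1$ leaves, and each may be labeled by one of $n$ variables, contributing a factor of $n^{s+1} = 2^{O(s \log n)}$.

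Third, and this is the main step, I would fix a tree skeleton together with a leaf-labeling and bound the number of Boolean functions on $\{-1,1\}^n$ computable by this structure as the threshold gates are varied. Let $\mc{F}(T)$ denote this set for a subformula $T$. I would prove by structural induction on $T$ that $|\mc{F}(T)| \leq 2^{cn s_T}$ for some absolute constant $c$, where $s_T$ is the number of wires in $T$. The base case is a leaf, for which $|\mc{F}(T)| = 1$. For the inductive step, suppose the root has fan-in $k$ with children $T_1, \ldots, T_k$ of wire complexities $s_1, \ldots, s_k$, so $s_T = k + \sum_i s_i$. For any fixed tuple $(f_1,\ldots,f_k) \in \prod_i \mc{F}(T_i)$, Theorem \ref{thm:RSO94} applied with these $f_i$'s as the given Boolean functions bounds the number of distinct threshold functions of them by $2^{O(nk)}$. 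Hence
\[
|\mc{F}(T)| \leq \prod_{i=1}^{k} |\mc{F}(T_i)| \cdot 2^{O(nk)} \leq \prod_{i=1}^{k} 2^{cn s_i} \cdot 2^{O(nk)} = 2^{cn(\sum_i s_i + O(k))} = 2^{O(n s_T)},
\]
where the constant $c$ is chosen large enough at the outset to absorb the $2^{O(nk)}$ factor at each gate. Combining the three counts gives $2^{O(s)} \cdot 2^{O(s \log n)} \cdot 2^{O(ns)} = 2^{O(ns)}$, as claimed.

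The one subtlety to watch is making the RSO94 factor compose cleanly: the bound at each gate is $2^{O(nk_j)}$ in terms of that gate's fan-in $k_j$, and one needs $\sum_j k_j = s$ (summing over all internal gates) so that the product telescopes to $2^{O(ns)}$. This identity is immediate from the fact that in a formula the wire complexity equals the total fan-in equals the total number of edges in the tree. No other step should cause trouble.
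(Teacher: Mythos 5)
Your proposal is correct and follows essentially the same route as the paper's proof: count tree skeletons ($2^{O(s)}$), leaf labelings ($n^{O(s)}$), and then apply Theorem \ref{thm:RSO94} gate by gate with the key identity that total fan-in equals $s$, so the per-gate factors $2^{O(ns_i)}$ multiply to $2^{O(ns)}$. The only cosmetic difference is that you organize the gate-counting as a structural induction while the paper multiplies along a topological order of the internal nodes, which amounts to the same argument.
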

\begin{proof}
A formula with $s$ wires is a rooted tree with $s+1$ nodes.
The number of different rooted trees with $s+1$ nodes is $2^{O(s)}$~\cite{Ott48}.
For a fixed tree structure,
we label the leaves by variables $x_1,\ldots, x_n$, and label the internal nodes by LTFs.
Since the number of leaves is at most $s$,
there are $O(n^s)$ different ways of labeling the leaves.

We next label the internal nodes. 
Without loss of generality, we assume leaves feeding into the same node are labeled by distinct variables,
and each internal nodes has fan-in at least two.
Consider a topological order of the internal nodes, say, $h_1,\ldots, h_k$, where $k \leq s$.
For $i=1,\ldots, k$, let $s_i$ be the fan-in of $h_i$; then $\sum_i s_i = s$.
We label each $h_i$ by a LTF on $s_i$ inputs.
For fixed labels of $h_j$ for $1 \leq j<i$, 
by \expref{Theorem}{thm:RSO94},
there are  $2^{O(ns_i)}$ possible choices for $h_i$.

Therefore, the number of different functions computed by threshold formulas of $s$ wires is
\[
2^{O(s)} \cdot O(n^s) \cdot 2^{O(ns_1)} \cdots 2^{O(ns_k)}  \leq  2^{O(ns)}. \qedhere
\]
\end{proof}

The main result of this section is the following.

\begin{theorem}~\label{thm:formula-corr}
Let $\gamma > 0$ be any constant parameter. 
Any threshold formula on $5n$ variables with at most $n^{2-3\gamma}$ wires  
 has correlation at most 
$\exp( - {n^{\Omega(\gamma)}})$ with the Generalized Andreev function as defined in \expref{Section}{sec:andreev} with parameter $\gamma$.
\end{theorem}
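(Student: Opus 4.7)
The plan is to carry out a Ne\v{c}iporuk-style argument to reduce the correlation bound to the regime where \expref{Lemma}{lem:ckksz-corr} applies. I will first argue that it suffices to handle a fixed Kolmogorov-incompressible $a$, then use a pigeon-hole argument on a partition of the $n$ variables of $x_2$ to locate a small block $V_{j^*}$ such that fixing all variables outside $V_{j^*}$ leaves $C_a$ as a threshold formula on $V_{j^*}$ of small wire complexity; applications of \expref{Theorem}{thm:formula-desc} and \expref{Lemma}{lem:ckksz-corr} will then finish the argument.

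To start, by \expref{Fact}{fac:kol-lbd} a random $a \in \{-1,1\}^{4n}$ satisfies $K(a) \geq 3n$ except with probability $\exp(-\Omega(n))$, so it suffices to bound $\Corr(C_a, F_a)$ for every such $a$, where $C_a$ is obtained from $C$ by substituting $x_1=a$. This $C_a$ is a threshold formula on the remaining $n$ variables with wire complexity at most $s = n^{2-3\gamma}$. I then partition these $n$ variables into $k = n^{1-\gamma}$ disjoint blocks $V_1,\ldots,V_k$ of size $m = n^{\gamma}$. Letting $L_j$ denote the number of leaves of $C_a$ labelled by variables in $V_j$, the sum $\sum_j L_j$ is at most the total number of leaves of $C_a$, which is bounded by the wire count $s$; hence there exists $j^{*}$ with $L_{j^{*}} \leq s/k = n^{1-2\gamma}$. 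Next I consider the deterministic restriction tree $T$ of depth $n - m$ that queries, in some canonical order, every variable outside $V_{j^{*}}$, so each leaf $\ell$ of $T$ defines a restriction $\rho_\ell$ leaving exactly $V_{j^{*}}$ unset.

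For every such $\rho_\ell$, the formula $C_a|_{\rho_\ell}$ is obtained by turning every leaf outside $V_{j^{*}}$ into a constant, absorbing those constants into the thresholds of their parent gates, and iteratively collapsing each resulting fan-in-one gate into its parent (which simply flips a weight, keeping us inside the class of threshold formulas). What remains is a threshold formula on $m$ variables whose leaves are precisely the $L_{j^{*}}$ surviving variable-leaves; since every internal node then has fan-in at least two, this formula has at most $2 L_{j^{*}} = O(n^{1-2\gamma})$ wires. By \expref{Theorem}{thm:formula-desc}, there are at most $2^{O(m \cdot L_{j^{*}})} = 2^{O(n^{1-\gamma})}$ such formulas, so each $C_a|_{\rho_\ell}$ can be encoded in $O(n^{1-\gamma}) \leq n$ bits (for $n$ large enough). \expref{Lemma}{lem:ckksz-corr} then gives $\Corr(F_a|_{\rho_\ell}, C_a|_{\rho_\ell}) \leq \exp(-n^{\Omega(\gamma)})$ at every leaf, and item~3 of \expref{Fact}{fac:corr} applied to $T$ yields $\Corr(F_a, C_a) \leq \exp(-n^{\Omega(\gamma)})$; averaging back over $a$ then finishes the proof.

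The main technical point to be careful about is the simplification step in the preceding paragraph, where I claim that $C_a|_{\rho_\ell}$ really is a threshold formula on $V_{j^{*}}$ with only $O(L_{j^{*}})$ wires. One must check that absorbing constants into threshold gates and collapsing fan-in-one gates both keep us inside the class of threshold formulas: a fan-in-one threshold gate computes identity, negation, or a constant, and the first two can be folded into the parent by adjusting the sign of a single weight. After iterating these operations, only the $L_{j^{*}}$ surviving leaves and at most $L_{j^{*}}-1$ internal nodes of the pruned tree remain, which is what delivers the crucial $O(L_{j^{*}})$ wire bound used to invoke \expref{Theorem}{thm:formula-desc} in the regime where its description-length estimate is strong enough for \expref{Lemma}{lem:ckksz-corr}.
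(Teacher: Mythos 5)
Your proposal is correct and follows essentially the same route as the paper's proof: reduce to a fixed Kolmogorov-incompressible $a$, find a restriction leaving exactly $n^{\gamma}$ variables alive under which the restricted formula has only $O(n^{1-2\gamma})$ wires, bound its description length via \expref{Theorem}{thm:formula-desc}, and finish with \expref{Lemma}{lem:ckksz-corr} and \expref{Fact}{fac:corr} plus averaging over $a$. The only difference is how the surviving variables are selected---you fix a partition into $n^{1-\gamma}$ blocks of size $n^{\gamma}$ and pigeonhole on the block seeing fewest leaves, whereas the paper greedily restricts the most frequently occurring leaf variables until $n^{\gamma}$ remain---and both choices give the same wire bound; your explicit handling of absorbing constants and collapsing fan-in-one gates makes precise a simplification step the paper leaves implicit.
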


\begin{proof}
Let $C$ be a threshold formula with $n$ inputs and $s = n^{2 - 3\gamma}$ wires.
Let $L$ be the number of leaves in the formula tree; then $L\leq s \leq 2L$.
We build a restriction tree $T$ for $C$ up to depth $n-pn$, for $p = n^{\gamma }  /  n$, 
by greedily restricting the most frequent variables appearing in the leaves.
Since the most frequent variable appears at least $L/n$ times,
after restricting one variable, the formula tree has at most $L (1 - 1/n)$ leaves left.
Continue this until $pn$ variables left unrestricted; 
then the number of remaining leaves is at most
\[
  L  \cdot \frac{n-1}{n}  \cdot\frac{n-2}{n-1}\cdot \cdots \cdot \frac{pn}{pn+1} = pL.
\]
Thus, for any leaf $l$ of $T$,  the restricted formula $C|_{\rho_l}$, on $n' :=pn = n^{\gamma} $ variables,
has
\[
  s' := s(C|_{\rho_l})  \leq 2pL \leq 2 p s
\]
wires.  By \expref{Theorem}{thm:formula-desc}, there are at most
$2^{O(n's')}$ different threshold formulas of $s'$ wires on $n'$
variables, and thus each such formula has a description of length
$O(n's') = O(p^2 sn) = O(n^{1 - \gamma}) \ll n$.

Recall that $F \colon \{-1,1\}^{4n} \times  \{-1,1\}^{n}  \to  \{-1,1\}$ is the Generalized Andreev function defined in \expref{Section}{sec:andreev}.
Let $a \in \{-1,1\}^{4n}$ be a string with Kolmogorov complexity $K(a) \geq 3n$,
and let $F_a(x)  := F(a, x)$.
Then, by \expref{Lemma}{lem:ckksz-corr},
\[
  \Corr(F_a |_{\rho_l}, C|_{\rho_l}) \leq \exp( - {n^{\Omega(\gamma)}}).
\]
Note that this holds for every leaf $l$ of $T$.
By \expref{Fact}{fac:corr},
$\Corr(F_a, C) \leq \exp( - {n^{\Omega(\gamma)}})$.

Let $D$  be a threshold formula with $5n$ inputs and $n^{2 -3\gamma}$ wires,
and let $D_a(x)  := D(a, x)$.
Then $D_a$ is a formula on $n$ inputs with at most  $n^{2 -3\gamma}$ wires,
and thus
\[
  \prob{x}{ F_a(x)  =  D_a(x) } \leq 1/2 + \exp( - {n^{\Omega(\gamma)}}).
\]
Since a random $a \in \{-1,1\}^{4n}$ has $K(a) \geq 3n$ with probability $1 - 2^{-\Omega(n)}$,
the correlation of $D$ and $F$ is at most 
$2^{-\Omega(n)} + \exp( - {n^{\Omega(\gamma)}})  = \exp( - {n^{\Omega(\gamma)}})$.
\end{proof}

\section{\texorpdfstring{$\AC^0$}{AC0} circuits with a few threshold gates}
\label{sec:tac0r}
\newcommand{\OR}{\mathrm{OR}}

In this section, we extend the noise sensitivity and correlation bounds from \expref{Section}{sec:gates-basic} to the more general setting of small $\AC^0$ circuits (\ie, a constant-depth circuit made up of $\AND$ and $\OR$ gates) augmented with a small number of threshold gates. 

We prove noise sensitivity bounds for Boolean functions computed by such circuits. As consequences of this, we are able to prove correlation bounds against such circuits (as in \expref{Section}{sec:gates-basic}) and also devise learning algorithms for such circuits under the uniform distribution (as in~\cite{KOS,GS}). 

Following Gopalan and Servedio~\cite{GS}, we define $\TAC^0[k]$ to be the class of constant-depth circuits made up of $\AND$ and $\OR$ gates and at most $k$ arbitrary threshold gates. The inputs to the circuit are allowed to be arbitrary literals over the underlying variables. 

We prove upper bounds on the noise sensitivity of small depth-$d$ $\TAC^0[k]$ circuits for $k$ much smaller than $n^{1/2(d-1)}$. The main theorem of the section is the following.

\begin{theorem}
\label{thm:ns-taco}
Fix any constant $d\geq 1$. Let $C$ be a depth-$d$ $\TAC^0[k]$ circuit with at most $M$ gates overall. Then, for any $p,q\in [0,1]$ and any $D\geq 1$, we have
\[
\NS_{p^{d-1}q}(C)\leq O(k\alpha(p,D) + \alpha(q,D) + M(10pD)^D)
\]
where $\alpha(p,D):= \sqrt{p}\cdot (\log(1/p))^{O(D\log D)}\cdot 2^{O(D^2\log D)}$ and $O(\cdot)$ hides an absolute constant (independent of $p$ and $D$).
\end{theorem}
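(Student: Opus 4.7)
The plan is to adapt the inductive argument of \expref{Theorem}{thm:ns-gates}, using H\r{a}stad's switching lemma to simplify the AC$^0$ portions and Peres' theorem (a special case of Kane's noise sensitivity bound for PTFs) to handle threshold gates. Specifically, I would prove by induction on $d$ the claim
\[
\avg{\rho_d \sim \mc{R}_{p_d}^n}{\Var(C|_{\rho_d})} \leq O\!\left(k\,\alpha(p,D) + \alpha(q,D) + M(10pD)^D\right),
\]
where $p_d := 2p^{d-1}q$; by \expref{Proposition}{prop:ns}(1), this yields the theorem. The base case $d=1$ is immediate, since any depth-$1$ circuit is a single LTF (both AND and OR are LTFs with uniform weights), and Peres' theorem gives $\avg{}{\Var(C|_{\rho_1})} \leq O(\sqrt{q}) \leq O(\alpha(q,D))$.

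For the inductive step $d \geq 2$, I would decompose $\rho_d = \rho \circ \rho_{d-1}$ with $\rho \sim \mc{R}_p^n$ and $\rho_{d-1}$ of $*$-probability $p_{d-1} = 2p^{d-2}q$, and simplify $C|_\rho$ in two complementary ways. Let $k_1$ denote the number of threshold gates and let $M_1 \leq M$ denote the number of depth-$2$ AC$^0$ subcircuits at the bottom layer of $C$ (each consisting of a depth-$2$ AC$^0$ gate together with its AC$^0$ children taking literal inputs). Each bottom threshold gate $\phi$ is replaced by the constant $b_{\phi,\rho}$ best approximating $\phi|_\rho$; Peres' theorem bounds the expected error per gate by $O(\sqrt{p})$, contributing $O(k_1\,\alpha(p,D))$ in total. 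Each bottom depth-$2$ AC$^0$ subcircuit $\psi$ is, by H\r{a}stad's switching lemma applied to $\psi|_\rho$, computable by a decision tree of depth $D$ except with probability $(10pD)^D$, giving total failure probability $\leq M(10pD)^D$ after a union bound. On the good event, each depth-$D$ DT is expressed as a width-$D$ CNF/DNF and merged with the layer above, reducing the AC$^0$ depth by one. The resulting circuit $C'_\rho$ is a depth-$(d-1)$ $\TAC^0[k-k_1]$ circuit with at most $M$ gates and
\[
\avg{\rho}{\delta(C|_\rho, C'_\rho)} \leq O(k_1\sqrt{p}) + M(10pD)^D.
\]
Using \expref{Proposition}{prop:var} to lift this distance bound to a variance bound and applying the inductive hypothesis to $C'_\rho$ with parameters $p, q, D$ closes the induction.

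The main obstacle will be the clean application of the switching lemma at the bottom of an arbitrary $\TAC^0[k]$ circuit, since the initial bottom fan-in could exceed $1/p$, making the naive switching bound $(10pD)^D$ vacuous. The fix is a mild preprocessing: either precede the analysis by a short auxiliary restriction (absorbed into the $\alpha(\cdot)$ terms via a slightly larger $D$) that bounds the bottom fan-in, or observe that after each inductive step the effective bottom fan-in of $C'_\rho$ is at most $D$, so subsequent applications are clean. A secondary subtlety, that threshold gates interleaved in the AC$^0$ skeleton might obstruct switching, is handled by replacing the bottom threshold gates with biased constants via Peres \emph{before} invoking the switching lemma on the surrounding AC$^0$ subcircuits, so the switching step always operates on a pure AC$^0$ structure.
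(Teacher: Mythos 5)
There is a genuine gap: your plan never actually uses Kane's bound (\expref{Lemma}{lem:kane-ns}), but that bound is the essential ingredient and the reason the theorem's error terms are $\alpha(p,D)$ rather than $O(\sqrt{p})$. Peres' theorem only controls $\avg{\rho}{\Var(\phi|_\rho)}$ for a threshold gate $\phi$ whose inputs are \emph{literals}. In a general $\TAC^0[k]$ circuit, threshold gates sit above AND/OR gates; the only way your induction can make progress past such a gate is to switch the AC$^0$ gates below it to depth-$D$ decision trees and merge the resulting width-$D$ terms into the threshold gate, at which point that gate is a degree-$D$ PTF over literals, not an LTF. When you later want to replace it by its biased constant, you need a bound on the expected variance of a degree-$D$ PTF under $\mc{R}_p$ --- this is exactly Kane's lemma, giving cost $\alpha(p,D)$ per gate; Peres gives nothing here, so your claimed ``$O(\sqrt{p})$ per threshold gate'' accounting does not go through. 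Your ``secondary subtlety'' fix (replace bottom threshold gates via Peres before switching) covers only threshold gates that read literals directly and leaves these interior threshold-of-terms gates unhandled. The paper's proof resolves this by strengthening the induction hypothesis to a class $\TAC^0[k_1,d_1,t_1,s_1]$ that records a bottom fan-in bound $D$: every true threshold gate just above the bottom layer is then a degree-$D$ PTF, killed via Kane at cost $\alpha(p,D)$, while the AND/OR gates at that level are switched and merged upward; the base case is likewise a degree-$D$ PTF handled by Kane, not a single LTF handled by Peres.

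A secondary, smaller problem is your treatment of the initial bottom fan-in. You correctly note that applying the switching lemma to a bottom-level depth-$2$ AC$^0$ subcircuit whose gates read arbitrarily many literals does not yield the $(10pD)^D$ failure bound, but neither of your proposed fixes works as stated: a ``short auxiliary restriction'' does not by itself bring a huge bottom-level AND/OR down to small width (its expected surviving fan-in can still be enormous), and the observation that $C'_\rho$ has bottom fan-in $\leq D$ does not help with the very first step. The clean solution, used in the paper, is to view the literals themselves as a bottom layer of fan-in-$1$ gates (so the circuit has depth $d+1$ and bottom fan-in $1$) and to carry the bottom fan-in bound $\leq D$ as part of the induction; then every invocation of the switching lemma is on a circuit of bottom fan-in at most $D$ and the $(10pD)^D$ bound is legitimate throughout, with no auxiliary restriction or modified $D$ needed.
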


This implies a correlation bound for the parity function for such circuits as in \expref{Section}{sec:gates-basic} (see \expref{Corollary}{cor:taco} below). Using \expref{Theorem}{thm:ns-taco} along with a general idea due to Klivans et al.~\cite{KOS}, we also get the following subexponential-time (\ie, $2^{o(n)}$-time) learning algorithms for $\TAC^0[k]$ circuits of small size.

\begin{theorem}
\label{thm:learning-taco}
Let $d$ be any fixed constant. The class of $\TAC^0[k]$ circuits of depth $d$ and size $M$ where
\[
  M = n^{o(\sqrt{\log n/\log \log n})} \qquad\text{and}\qquad k = \delta n^{1/2(d-1)}
\]
for some $\delta > 0$ can be learned to within error $\varepsilon > 0$ in time $n^{O(m)}$ where
\[
  m = \max\{k^{2(d-1)}/\varepsilon^{2d},n^{1/4+o(1)}/\varepsilon^2\}.
\]
In particular, if $\delta = n^{-\Omega(1)}$ and
$\varepsilon = \Omega(1)$, then the running time of the algorithm is
$2^{o(n)}$.
\end{theorem}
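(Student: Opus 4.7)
The plan is to combine Theorem~\ref{thm:ns-taco} with the noise-sensitivity-to-learning reduction of Klivans, O'Donnell, and Servedio~\cite{KOS}. Their paradigm states that any concept class $\mathcal{C}$ with $\NS_s(f) \leq \varepsilon/6$ for all $f\in \mathcal{C}$ is PAC-learnable under the uniform distribution to error $\varepsilon$ in time $n^{O(1/s)}\cdot\poly(1/\varepsilon)$, by running the Low-Degree Algorithm of Linial--Mansour--Nisan at degree $\Theta(1/s)$. The underlying Fourier fact is $\NS_s(f) = \Omega(1)\cdot \sum_{|S|\geq 1/s}\hat{f}(S)^2$, so a noise-sensitivity upper bound of $\varepsilon/6$ at rate $s$ forces only $O(\varepsilon)$ Fourier mass above degree $1/s$, which is all the Low-Degree Algorithm needs.

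Given this reduction, the task reduces to tuning $p, q, D$ in Theorem~\ref{thm:ns-taco} so that each of $k\alpha(p,D)$, $\alpha(q,D)$, and $M(10pD)^D$ is at most $\varepsilon/18$, while making the noise rate $s := p^{d-1}q$ as large as possible. Writing $\beta(D) := (\log(1/p))^{O(D\log D)}\cdot 2^{O(D^2\log D)}$ so that $\alpha(p,D) = \sqrt{p}\cdot\beta(D)$, the first two constraints yield $p = O(\varepsilon^2/(k^2\beta(D)^2))$ and $q = O(\varepsilon^2/\beta(D)^2)$. The third constraint forces $D = \Omega(\log(M/\varepsilon)/\log(1/p))$; using $\log(1/p) = \Theta(\log n)$ and the hypothesis $\log M = o(\log n\cdot \sqrt{\log n/\log\log n})$, there is a valid window $\Theta(\log M/\log n) \leq D \leq O(\sqrt{\log n/\log\log n})$ inside which $\beta(D)$ can be made at most $n^{1/8+o(1)}$ by setting $D^2\log D$ close to $(\log n)/4$.

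Plugging back, $1/s = O(k^{2(d-1)}\beta(D)^{2d}/\varepsilon^{2d})$, and one checks that this is at most the maximum of $k^{2(d-1)}/\varepsilon^{2d}$ (from the $p$-constraint, dominant when $k$ is a sufficiently large polynomial) and $\beta(D)^2/\varepsilon^2 = n^{1/4+o(1)}/\varepsilon^2$ (from the $q$-constraint, dominant when $k$ is small). Applying KOS at this noise rate gives running time $n^{O(m)}$ with $m$ as in the statement. The ``in particular'' clause then follows: $\delta = n^{-\Omega(1)}$ makes $k^{2(d-1)} = n^{1-\Omega(1)}$, and combined with $\varepsilon = \Omega(1)$ both terms in the max become $n^{1-\Omega(1)}$, so $n^{O(m)} = 2^{o(n)}$.

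The main obstacle will be the joint choice of $D$ with $p$ and $q$: $D$ must simultaneously be large enough that $(\varepsilon/M)^{1/D}$ remains a constant (precisely where the bound on $M$ is used) and small enough that $2^{O(D^2\log D)}$ does not overwhelm the $\sqrt{p}, \sqrt{q}$ savings. The hypothesis $M = n^{o(\sqrt{\log n/\log\log n})}$ is exactly the threshold at which this window becomes nonempty, and the precise placement of $D$ inside it determines the $n^{1/4+o(1)}$ factor that shows up in the second term of $m$.
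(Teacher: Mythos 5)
Your high-level plan (the Klivans--O'Donnell--Servedio noise-sensitivity-to-learning reduction applied to Theorem~\ref{thm:ns-taco}) is the same as the paper's, but the parameter tuning breaks at the step ``one checks that this is at most the maximum\dots''. If you take $D$ so large that your $\beta(D)=n^{1/8+o(1)}$, then with $p=\Theta\bigl(\varepsilon^2/(k^2\beta(D)^2)\bigr)$ and $q=\Theta\bigl(\varepsilon^2/\beta(D)^2\bigr)$ you get $1/s = \Theta\bigl(k^{2(d-1)}\beta(D)^{2d}/\varepsilon^{2d}\bigr)$, and this is \emph{not} bounded by $\max\{k^{2(d-1)}/\varepsilon^{2d},\,\beta(D)^2/\varepsilon^2\}$: the factor $\beta(D)^{2(d-1)}=n^{(d-1)/4+o(1)}$ multiplies the $k^{2(d-1)}$ term as well. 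Concretely, for $d=2$, $\varepsilon=\Theta(1)$ and $k=n^{0.2}$ your parameters give $1/s=n^{0.9+o(1)}$ while the claimed $m$ is $n^{0.4+o(1)}$; for $k^{2(d-1)}=n^{1-\Omega(1)}$ but not too small (say $k^2=n^{0.9}$, $d=2$), $1/s$ exceeds $n$, so even the ``in particular'' conclusion $2^{o(n)}$ would fail. And even in the most favorable regime $k=O(1)$ your choice yields $1/s=n^{d/4+o(1)}$, weaker than the stated $n^{1/4+o(1)}$ once $d\ge 2$. The root of the problem is the misattribution of the $n^{1/4+o(1)}$ term to $\beta(D)^2$.

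The paper never lets $\beta(D)$ become polynomially large: it fixes $D$ with $\omega(1)\le D\le o(\sqrt{\log n/\log\log n})$, so that $M=n^{o(D)}$ and $\alpha(p,D)=\sqrt{p}\cdot n^{o(1)}$, $\alpha(q,D)=\sqrt{q}\cdot n^{o(1)}$, and it handles the $M(10pD)^D$ term not by inflating $D$ but by capping $p$ at a fixed small polynomial: one may assume $\varepsilon\ge n^{-1/2d}$ (else brute force), and then set $p=\min\{\varepsilon^2/O(k^2 n^{o(1)}),\,n^{-1/4d}\}$ and $q=\varepsilon^2/n^{o(1)}$. The cap $p\le n^{-1/4d}$ makes $M(10pD)^D=n^{-\omega(1)}\le\varepsilon/10$, and it is this cap, raised to the power $d-1$ (contributing at most $n^{(d-1)/4d}\le n^{1/4}$), together with $1/q=n^{o(1)}/\varepsilon^2$, that produces the second term $n^{1/4+o(1)}/\varepsilon^2$ of the max; the first term arises when $\varepsilon^2/k^2$ is the binding constraint on $p$. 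With that modification your argument goes through verbatim via Lemma~\ref{lem:KOS}.
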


\subsection{Noise sensitivity bounds for \texorpdfstring{$\TAC^0[k]$}{TAC0[k]} circuits (\expref{Theorem}{thm:ns-taco})}
We now prove \expref{Theorem}{thm:ns-taco}. The basic idea is the same as in \expref{Theorem}{thm:ns-gates}, but we also need to use the following powerful result of Kane~\cite[Corollary 3]{Kane}. 

\begin{lemma}[Kane~\cite{Kane}]
\label{lem:kane-ns}
Let $f$ be a degree-$D$ Polynomial Threshold function (PTF).\footnote{We refer the reader to \expref{Section}{sec:defnsTF} for the definitions of Polynomial Threshold functions.} Then, for any $p > 0$, $$\NS_p(f) \leq \sqrt{p}\cdot (\log(1/p))^{O(D\log D)}\cdot 2^{O(D^2\log D)}.$$
\end{lemma}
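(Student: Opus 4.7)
The plan is to work in the Gaussian setting (where anti-concentration tools for low-degree polynomials are strongest) and transfer back via an invariance principle. First, I would use the multilinear invariance principle of Mossel, O'Donnell, and Oleszkiewicz to relate the Boolean noise sensitivity $\NS_p(f)$ of $f = \sgn(P)$ to its Gaussian analog, in which the input to $P$ is a standard Gaussian vector and the $p$-noise operator is replaced by the Ornstein--Uhlenbeck semigroup at time $t = -\tfrac{1}{2}\log(1-2p)$. This reduction is clean when $P$ is regular in the Hermite sense (no variable has large influence); in the general case it will introduce an additive error that must be absorbed below.

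In the Gaussian setting, writing $H = \rho G + \sqrt{1-\rho^2}\, G'$ with $\rho = 1-2p$, the noise sensitivity equals $\Pr[\sgn P(G) \neq \sgn P(H)]$, which is at most $\Pr[\,|P(G)| \leq |P(H) - P(G)|\,]$. I would combine two classical ingredients: (i) a Hermite spectral bound showing that $\|P(H) - P(G)\|_2 = O(\sqrt{Dp}\cdot \|P\|_2)$, which follows from the fact that the Ornstein--Uhlenbeck semigroup contracts the $k$-th Hermite component by $\rho^k$; and (ii) the Carbery--Wright anti-concentration inequality, which gives $\Pr[\,|P(G)| \leq t\|P\|_2\,] \leq O(D\, t^{1/D})$. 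Naively combining these bounds only yields $p^{1/(2D)}$, which is far from the target $\sqrt{p}$, so an iterative sharpening is essential.

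The iterative step is a critical-index/regularity argument analogous to the LTF case in \expref{Lemma}{lem:thresh-gate}, but adapted to the Hermite basis. I would sort coordinates by their Hermite influences and locate a critical index $K$. On the regular tail (coordinates beyond $K$), the sharper Gaussian noise sensitivity bound applies directly, because both the concentration and anti-concentration improve in the regular regime: Carbery--Wright can effectively be upgraded when the polynomial has balanced contributions, yielding a true $\sqrt{p}$ behaviour up to logarithmic factors. On the head (the first $K$ coordinates), I would either condition on their values, producing a family of lower-dimensional problems in which the tail is regular, or integrate them out in a way that reduces the effective degree by taking a partial derivative, enabling induction on $D$. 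Each level of recursion on $D$ contributes a multiplicative factor of the form $D^{O(D)}$, which compounds to $2^{O(D^2 \log D)}$, while the logarithmic overheads from invariance-principle error terms and from regularity-threshold adjustments contribute $(\log(1/p))^{O(D\log D)}$.

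The main obstacle will be executing the regularity recursion so that each peeling step incurs only a $(\log(1/p))^{O(\log D)}$ loss in the anti-concentration estimate; a looser accounting would produce $D^{\Theta(D^2)}$ or worse, destroying the target bound. A secondary difficulty is that Carbery--Wright has to be applied not to $P$ itself but to restrictions of $P$ after head variables are fixed, so one must verify that typical restrictions preserve the relevant regularity, which uses hypercontractive concentration for low-degree polynomials. Finally, care is needed to calibrate the regularity parameter used by the invariance principle with the critical-index threshold, so that the transfer error from Boolean to Gaussian remains dominated by the final bound rather than the other way around.
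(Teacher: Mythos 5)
You should note at the outset that the paper does not prove this lemma at all: it is imported verbatim from Kane's work (cited as~\cite{Kane}, Corollary 3), so there is no internal argument to compare against, and any self-contained proof would have to reprove Kane's theorem, which is a substantial standalone result rather than a lemma-sized verification.

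Measured against that task, your sketch has a genuine gap at its center. The sentence asserting that Carbery--Wright ``can effectively be upgraded when the polynomial has balanced contributions, yielding a true $\sqrt{p}$ behaviour'' is precisely the theorem to be proved, and it does not follow from a critical-index or regularity hypothesis. Carbery--Wright is essentially tight even for perfectly regular polynomials: taking $P$ to be the $D$-th power of a regular linear form gives $\prob{}{|P(G)|\leq t\|P\|_2}=\Theta(t^{1/D})$, so no amount of balancedness of Hermite influences improves the anticoncentration exponent by itself, and the naive combination you correctly identify as giving only $p^{1/(2D)}$ is what a regularity-plus-Carbery--Wright argument actually delivers. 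Moreover, in the rotation-invariant Gaussian setting one can make all coordinate influences equal, so there is nothing to ``peel'' by a critical-index argument there; that machinery (regularity, critical index, invariance principle) is the standard device for transferring a Gaussian bound to the Boolean cube, which is the comparatively routine half of the argument. The hard half --- the exponent-$1/2$ Gaussian noise-sensitivity bound for degree-$D$ polynomial threshold functions --- is obtained by Kane through a much more delicate analysis of the polynomial jointly with its derivatives along the noise direction, together with an averaging/decomposition argument, and your outline assumes this core estimate rather than supplying a route to it. As written, the proposal therefore reduces the lemma to an unproved statement that is equivalent in difficulty to the lemma itself.
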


\begin{proof}[Proof of \expref{Theorem}{thm:ns-taco}]
This is a standard switching argument (see, \eg,~\cite{Hastad}) augmented with the ideas of \expref{Theorem}{thm:ns-gates}. We assume throughout that $q\leq 1/2$ without loss of generality since otherwise $\alpha(q,D)\geq q \geq {1}/{2}$ and the claim is trivial. 

We say that a threshold gate is a \emph{true} threshold gate if it is not an AND or OR gate.

For any parameters $k_1,d_1,t_1,s_1\in\mathbb{N}$ with $d_1 \geq 2$, we define $\TAC^0[k_1,d_1,t_1,s_1]$ to be the class of constant-depth circuits made up of $\AND,\OR$ and threshold gates such that:
\begin{itemize}
\item the overall depth is at most $d_1$,
\item the total number of gates at \emph{depth at most $d_1-2$} in the circuit is at most $s_1$,
\item all the true threshold gates  are at depth at most $d_1-2$ and there are at most $k_1$ of them, and
\item the bottom fan-in of the circuit (\ie, the maximum fan-in of a gate at depth $d_1-1$) is at most $t_1$.
\end{itemize}

Note that the circuit $C$ in the statement of the theorem is in the class $\TAC^0[k,d+1,1,M]$, since we may replace the input literals with (say) $\AND$ gates of fan-in $1$ at the expense of increasing the depth by $1$ but in the process satisfying all the criteria of the above definition. We prove the following stronger statement: for any $p,q,D$ as in the statement of the theorem, and any $C$ from the class $\TAC^0[k,d,D,M]$ with $d\geq 2$, we have
\begin{equation}
\label{eq:taco1}
\NS_{p^{d-2}q}(C) = \avg{\rho_d}{\Var(C|_{\rho_d})}\leq O(k\alpha(p,D) + \alpha(q,D) + M(10pD)^D)
\end{equation}
where $\rho_d\sim \mc{R}_{p_d}^n$ and $p_d := 2p^{d-2}q\in [0,1]$. Proving (\ref{eq:taco1}) will clearly prove the theorem.

The proof is by induction on $d$. The base case is $d=2$. In this case, since there are no true threshold gates at depth $d-1$ by assumption, a true threshold gate  can only occur as the output gate of the circuit $C$. Since $\AND$ and $\OR$ gates are also threshold gates, we can assume that the output gate is a threshold gate. The bottom fan-in being at most $D$ implies that each gate at depth $1$ can be represented exactly as a polynomial of degree at most $D$ and therefore that the function computed by $C$ is a degree-$D$ PTF\@. Hence, \expref{Lemma}{lem:kane-ns} trivially implies the result.

Now assume $d > 2$. Let $\psi_1,\ldots,\psi_s$ denote the AND and OR gates at depth exactly $d-2$ in the circuit and let $\phi_1,\ldots,\phi_m$ denote the true threshold gates. By assumption $m\leq k$ and $s\leq M$. We sample a random restriction $\rho\sim \mc{R}_p^n$ and consider the restricted circuit $C|_\rho$.

H\r{a}stad's switching lemma~\cite{Hastad} tells us that for each $i\in [s]$, we have
\begin{equation}
\label{eq:taco2}
\prob{\rho}{\text{DT-depth}(\psi_i|_\rho)\geq D}\leq (10pD)^D,
\end{equation}
and hence by a union bound,
\begin{equation}
\label{eq:taco3}
\prob{\rho}{\exists i\in [s]: \text{DT-depth}(\psi_i|_\rho)\geq D}\leq s(10pD)^D.
\end{equation}

Also, as in the base case, we see that each $\phi_j$ computes a degree-$D$ PTF\@. Hence, \expref{Lemma}{lem:kane-ns} gives us
\begin{equation}
\label{eq:taco4}
\avg{\rho}{\sum_{j\in [m]}\Var(\phi_j|_\rho)}\leq m\alpha(p,D).
\end{equation}

Consider the circuit $C'_\rho$ obtained from $C|_\rho$ as follows: if there is an $i\in [s]$ such that $\text{DT-depth}(\psi_i|_\rho)\geq D$, then $C'_\rho$ is defined to be a trivial circuit that always outputs $1$; otherwise, $C'_\rho$ is the depth-$(d-1)$ circuit obtained from $C|_\rho$ as follows:
\begin{itemize}
\item We replace each $\phi_j|_\rho$ by a bit $b_{j,\rho}\in \{-1,1\}$ so that by \expref{Fact}{fac:var}, we have
\[
\prob{x\in \{-1,1\}^{|\rho^{-1}(*)|}}{\phi_j(x) \neq b_{j,\rho}} \leq O(\Var(\phi_j)),
\]
\item Since each $\psi_i|_\rho$ is a depth-$D$ decision tree, we can write it as a $D$-DNF or $D$-CNF or as a \emph{disjoint sum of terms} of size at most $D$ each. For each gate $\chi$ at depth at most $d-3$ that takes $\psi_i$ as an input, we do the following:
\begin{itemize}
\item If $\chi$ is an OR gate, then we take the $D$-DNF representing $\psi_i|_\rho$ and feed the terms of the DNF directly into $\chi$, eliminating the output OR gate of the $D$-DNF.
\item If $\chi$ is an AND gate, we do the same as above, except that we use the $D$-CNF representation of $\psi_i|_\rho$ and eliminate the output AND gate.
\item If $\chi$ is a threshold gate, then we write $\psi_i|_\rho$ as a disjoint sum of terms of size at most $D$ each and feed each of the terms directly to $\chi$. The gate $\chi$ now has many inputs in the place of $\psi_i|_\rho$, and the weight given to each of these inputs is the same as the weight given to $\psi_i|_\rho$.
\end{itemize}
Note that the above operations do not increase the number of gates at depth at most $d-3$ in the circuit.
\end{itemize}

Note that $C'_\rho$ has depth $d-1$ and bottom fan-in at most $D$. Further, the number of gates at depth at most $d-3$ in $C'_\rho$ is at most $M-s$. Hence, $C'_\rho$ is a circuit from the class $\TAC^0[k-m,d-1,D,M]$. We can thus apply the induction hypothesis and obtain
\begin{equation}
\label{eq:taco5}
\avg{\rho_{d-1}}{\Var(C'_\rho|_{\rho_{d-1}})}\leq O((k-m)\alpha(p,D) + \alpha(q,D) + (M-s)(10pD)^D).
\end{equation}

To obtain~(\ref{eq:taco1}), we use 
\begin{align}
\avg{\rho_d}{\Var(C)} = \avg{\rho_{d-1}}{\avg{\rho}{\Var(C|_\rho)|_{\rho_{d-1}}}} &\leq \avg{\rho_{d-1}}{\avg{\rho}{\Var(C'_\rho)|_{\rho_{d-1}}}} + O\left(\avg{\rho}{\delta(C|_\rho, C'_\rho)}\right)\notag\\
& = \avg{\rho}{\avg{\rho_{d-1}}{\Var(C'_\rho)|_{\rho_{d-1}}}} + O\left(\avg{\rho}{\delta(C|_\rho, C'_\rho)}\right)\label{eq:taco6}
\end{align}
 where the inequality follows from \expref{Proposition}{prop:var}. Inequality (\ref{eq:taco5}) allows us to bound the first term on the right hand size. 

 It remains to analyze the last term on the right hand side of (\ref{eq:taco6}). Define a Boolean random variable $Z = Z(\rho)$ which is $1$ iff there is an $i\in [s]$ such that $\phi_i$ is not a depth-$D$ decision tree. Let $\Delta = \Delta(\rho)$ be the random variable defined by
 \[
   \Delta := Z + \sum_{j\in [m]}\Var(\phi_j|_\rho).
 \]

It easily follows from the definition of $C'_\rho$ that for any choice of $\rho$, either $Z=1$---in which case we can trivially bound $\delta(C'_\rho,C|_\rho)$ by $1$---or $$\delta(C'_\rho,C|_\rho)\leq \sum_{j}\delta(\phi_j|_\rho,b_{j,\rho}) = \sum_j \prob{x}{\phi_j|_\rho(x)\neq b_{j,\rho}}.$$ Hence, for any choice of $\rho$, we get
\[
\delta(C'_\rho,C|_\rho) \leq Z + \sum_{j\in [m]}\prob{x\in \{-1,1\}^{|\rho^{-1}(*)|}}{\phi_j|_{\rho}(x)\neq b_{j,\rho}} \leq O(\Delta).
\]

Further, by (\ref{eq:taco3}) and (\ref{eq:taco4}), we have
\[
  \avg{\rho}{\Delta}\leq O(m\alpha(p,D) + s(10pD)^D).
\]
Putting this together with (\ref{eq:taco5}) and (\ref{eq:taco6})\footnote{Of course, we need to be judicious in our choice of constants in the $O(\cdot)$. We leave this matter to the interested reader.} gives the claimed bound. This completes the induction.
\end{proof}

This yields the following correlation bound as in \expref{Corollary}{cor:gates}.

\begin{corollary}
\label{cor:taco}
The following is true for any constant $d\geq 2$. Say $C$ is a depth-$d$ $\TAC^0[k]$ circuit with at most $M$ gates where
\[
  k\leq \delta\cdot n^{1/2(d-1)}\qquad\text{and}\qquad M = n^{o(\sqrt{\log n/\log \log n})}.
\]
Then $\Corr(C,\Par_n)\leq n^{o(1)}\cdot \delta^{1-(1/d)}$. In
particular, if $\delta = n^{-\Omega(1)}$, then
$\Corr(C,\Par_n) = n^{-\Omega(1)}$.
\end{corollary}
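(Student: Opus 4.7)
The plan is to mimic the proof of \expref{Corollary}{cor:gates}: apply \expref{Theorem}{thm:ns-taco} with parameters carefully chosen so that all three terms on the right-hand side collapse to $n^{o(1)} \cdot \delta^{1-1/d}$, and then use \expref{Proposition}{prop:ns} to pass from noise sensitivity to correlation with $\Par_n$.

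First I would choose $p$ and $q$ exactly as in \expref{Corollary}{cor:gates}, namely $p = 1/(n^{1/d} k^{2/d})$ and $q$ such that $p^{d-1}q = 1/n$, so that $p^{d-1}q$ coincides with the noise rate $1/n$ required by \expref{Proposition}{prop:ns}. A short calculation (essentially the one in the proof of \expref{Corollary}{cor:gates}) verifies that for this choice we have $k\sqrt{p} = \sqrt{q} = \delta^{1-1/d}$; so the leading $\sqrt{p}$ factor inside $k\cdot \alpha(p,D)$ and the leading $\sqrt{q}$ factor inside $\alpha(q,D)$ each contribute the desired $\delta^{1-1/d}$.

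Next I would choose the degree parameter $D = D(n)$ as a slowly growing function of $n$ satisfying two constraints. The first constraint, needed to absorb the overhead factors in $\alpha(p,D)$ and $\alpha(q,D)$ into $n^{o(1)}$, is $D^2 \log D = o(\log n)$, equivalently $D = o(\sqrt{\log n/\log\log n})$; together with $\log(1/p),\log(1/q) = O(\log n)$ this makes $(\log(1/p))^{O(D\log D)} \cdot 2^{O(D^2\log D)} = n^{o(1)}$. The second constraint, needed to handle the switching-lemma error term $M(10pD)^D$, is that $D \cdot \log(1/p) - D\log(10D) - \log M = \omega(\log n)$; since $\log(1/p) \geq (\log n)/d$ and $\log M = o(\sqrt{\log n / \log\log n}) \cdot \log n$, this reduces to asking $D = \omega(d \cdot \log M/\log n)$, i.e., $D = \omega(\sqrt{\log n/\log\log n} \cdot o(1))$. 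Since the required lower bound on $D$ is $o(\sqrt{\log n/\log\log n})$ while the required upper bound is also $o(\sqrt{\log n/\log\log n})$ but strictly larger, any sufficiently slowly growing $D$ in this window (e.g., $D = (\log n)^{1/3}$) simultaneously satisfies both constraints. This leaves $M(10pD)^D \leq n^{-\Omega(1)}$.

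Plugging everything into \expref{Theorem}{thm:ns-taco} then yields
\[
\NS_{1/n}(C) \leq O\bigl(k\alpha(p,D) + \alpha(q,D) + M(10pD)^D\bigr) \leq n^{o(1)} \cdot \delta^{1-1/d} + n^{-\Omega(1)} \leq n^{o(1)} \cdot \delta^{1-1/d}.
\]
By the second part of \expref{Proposition}{prop:ns}, $\Corr(C,\Par_n) \leq O(\NS_{1/n}(C)) \leq n^{o(1)} \cdot \delta^{1-1/d}$, which is the claimed bound; the final sentence of the corollary then follows since $\delta = n^{-\Omega(1)}$ forces $\delta^{1-1/d} = n^{-\Omega(1)}$, which absorbs the $n^{o(1)}$ factor. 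The main subtle point in carrying this out is the selection of $D$: one needs $D$ large enough that the $(10pD)^D$ error beats $M$, yet small enough that the $\alpha$-overheads remain sub-polynomial, and it is exactly the hypothesis $M = n^{o(\sqrt{\log n/\log\log n})}$ that makes this window non-empty.
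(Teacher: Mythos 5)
Your proposal is correct and takes essentially the same route as the paper's proof: the same $p,q$ as in \expref{Corollary}{cor:gates}, a degree parameter $D$ with $\omega(1)\leq D\leq o(\sqrt{\log n/\log\log n})$ chosen (as a function of $M$) so that $M\leq n^{o(D)}$ and hence $M(10pD)^D = n^{-\omega(1)}$, followed by \expref{Theorem}{thm:ns-taco} and \expref{Proposition}{prop:ns}. The only slip is the parenthetical example $D=(\log n)^{1/3}$, which fails for admissible $M$ such as $n^{(\log n)^{0.45}}$; as the general part of your argument correctly prescribes, $D$ must be taken to dominate $d\log M/\log n$, so it genuinely depends on the growth rate of $M$.
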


\begin{proof}
We choose a $D$ such that $\omega(1)\leq D\leq o(\sqrt{\log n/\log \log n})$ so that $M\leq n^{o(D)}$ and $p,q$ as in \expref{Corollary}{cor:gates}. We can then use \expref{Theorem}{thm:ns-taco} to obtain $$\NS_{1/n}(C)\leq n^{o(1)}\cdot \delta^{1-(1/d)} + M\cdot (10pD)^D\leq 1/n^{\omega(1)}.$$ 
Thus, we get $\NS_{1/n}(C)\leq n^{o(1)}\delta^{1-({1}/{d})}$. By \expref{Proposition}{prop:ns}, we have $\Corr(C,\Par_n)\leq O(\NS_{1/n}(C))$, which proves the claim.
\end{proof}

\begin{remark}
  The above corollary can be strengthened considerably if a widely believed strengthening of \expref{Lemma}{lem:kane-ns}---named the Gotsman-Linial conjecture~\cite{GL}---is known to hold. The Gotsman-Linial conjecture is a conjecture about the \emph{average sensitivity} of low-degree PTFs. We do not recall the exact statement of the conjecture here, and refer the reader to the
article by  %
Gopalan and Servedio~\cite{GS} instead. As noted by~\cite[Corollary 13]{GS}, the Gotsman-Linial conjecture implies that for any $p$ and any degree $D$ PTF, we have $\NS_p(f)\leq O(D\sqrt{p})$. Plugging in this bound in place of Lemma 45, it is not hard to see that we can obtain $\Corr(C,\Par_n) = o(1)$ for any circuit $C$ of size $2^{n^{o(1)}}$ from the class $\TAC^0[k]$ where $k = n^{1/2(d-1)-\Omega(1)}$. This is almost a complete generalization of the result of Beigel~\cite{Beigel} who proved such a result in the setting where all the threshold gates are of polynomial weight. In contrast, the results of Podolskii~\cite{Podolskii} and Gopalan and Servedio~\cite{GS} can prove such correlation bounds only if $k < \log n$.
\end{remark}

\subsection{Learning algorithms for \texorpdfstring{$\TAC^0[k]$}{TAC0[k]} circuits (\expref{Theorem}{thm:learning-taco})}

To prove \expref{Theorem}{thm:learning-taco}, we use \expref{Theorem}{thm:ns-taco} along with an observation of Klivans, O'Donnell, and Servedio~\cite{KOS}. We have the following lemma that can be obtained by putting together Fact 9 and Corollary 15 in~\cite{KOS}.

\begin{lemma}
\label{lem:KOS}
Let $\mc{F}$ be a class of Boolean functions defined on $\{-1,1\}^n$. Assume that we know that for some $\varepsilon > 0$ and $f\in \mc{F}$, there is a $\gamma > 0$ such that $\NS_\gamma(f)\leq \varepsilon/3$. Then, there is an algorithm that learns $\mc{F}$ with error $\varepsilon$ in time $n^{O(1/\gamma)}$.
\end{lemma}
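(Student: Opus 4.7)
The plan is to prove this via the standard Low-Degree Algorithm of Linial--Mansour--Nisan, using the well-known connection between noise sensitivity and Fourier concentration. The bridge is the Fourier identity $\NS_\gamma(f) = \frac{1}{2} - \frac{1}{2}\sum_{S\subseteq [n]} \hat f(S)^2 (1-2\gamma)^{|S|}$, which follows from expanding $f(x)f(y)$ in the Fourier basis and using the fact that each bit is flipped independently with probability $\gamma$. Since $\sum_S \hat f(S)^2 = 1$, this rearranges to
\[
  \sum_{S} \hat f(S)^2 \bigl(1 - (1-2\gamma)^{|S|}\bigr) = 2\NS_\gamma(f).
\]

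The first step is to convert this into a Fourier concentration statement. I would set the degree cutoff to $d := \lceil 1/\gamma \rceil$. For every $S$ with $|S| \geq d$, we have $(1-2\gamma)^{|S|} \leq (1-2\gamma)^{1/\gamma} \leq e^{-2}$, so $1 - (1-2\gamma)^{|S|} \geq 1 - e^{-2} \geq 1/2$. Plugging this into the identity above yields $\sum_{|S| \geq d} \hat f(S)^2 \leq 4\NS_\gamma(f) \leq 4\varepsilon/3$. Thus the ``low-degree truncation'' $f^{\leq d} := \sum_{|S| < d} \hat f(S)\chi_S$ satisfies $\mathbb{E}[(f - f^{\leq d})^2] \leq 4\varepsilon/3$, so its rounded version $h(x) := \sgn(f^{\leq d}(x))$ satisfies $\Pr_x[h(x) \neq f(x)] \leq \varepsilon$ (using the standard fact $\Pr[\sgn(g) \neq f] \leq \mathbb{E}[(f-g)^2]$ when $f$ is $\{-1,1\}$-valued, and absorbing the constant into the $\varepsilon/3$ slack).

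The second step is the Low-Degree Algorithm itself. Draw $m = \poly(n^d, 1/\varepsilon, \log(1/\delta))$ uniform labeled samples $(x^{(i)}, f(x^{(i)}))$ and, for each of the $\sum_{j < d}\binom{n}{j} = n^{O(d)}$ sets $S$ of size less than $d$, compute the empirical estimate $\tilde\alpha_S := \frac{1}{m}\sum_i f(x^{(i)})\chi_S(x^{(i)})$. By Hoeffding and a union bound, with high probability every $|\tilde\alpha_S - \hat f(S)| \leq \eta$ for $\eta$ polynomially small in $n^{-d}$ and $\varepsilon$. The output hypothesis is $\tilde h(x) := \sgn\bigl(\sum_{|S|<d} \tilde\alpha_S \chi_S(x)\bigr)$. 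Parseval on the error vector $(\tilde\alpha_S - \hat f(S))_S$ bounds $\mathbb{E}[(\tilde h\text{'s polynomial} - f^{\leq d})^2] \leq n^{O(d)}\eta^2$, which by choosing $\eta$ appropriately can be made at most $\varepsilon/3$, combining with the truncation error to yield total disagreement at most $\varepsilon$.

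The only nontrivial piece is the parameter accounting: we need $m, n^{O(d)}$, and the time to compute each $\tilde\alpha_S$ to multiply out to $n^{O(d)} = n^{O(1/\gamma)}$, which is immediate since $d = O(1/\gamma)$, the number of coefficients is $n^{O(1/\gamma)}$, and each empirical estimate takes $\poly(n, m)$ time. There is no real obstacle; the main care required is ensuring that the approximation error $\eta$ from sampling is driven small enough (using $m$ polynomial in $n^{O(1/\gamma)}/\varepsilon$ samples) that the sampling contribution to the $L_2$ error is absorbed into the target error budget $\varepsilon$.
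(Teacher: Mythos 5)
Your proposal is correct and follows essentially the same route as the paper, which simply cites Fact 9 and Corollary 15 of Klivans--O'Donnell--Servedio: noise sensitivity forces Fourier concentration at degree $O(1/\gamma)$, and the Linial--Mansour--Nisan low-degree algorithm then learns in time $n^{O(1/\gamma)}$. The only slip is quantitative: bounding $1-(1-2\gamma)^{|S|}\geq 1/2$ gives tail weight $4\varepsilon/3>\varepsilon$, so you should either keep the sharper factor $1-e^{-2}$ (giving tail at most $2.32\,\NS_\gamma(f)\leq 0.78\,\varepsilon$) or take the cutoff $d=\lceil C/\gamma\rceil$ for a larger constant $C$, neither of which affects the $n^{O(1/\gamma)}$ running time.
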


We now prove \expref{Theorem}{thm:learning-taco}.

\begin{proof}[Proof of \expref{Theorem}{thm:learning-taco}]
We can assume that $\varepsilon \geq 1/n^{1/2d}$ since otherwise, we can just run a brute force algorithm that takes time $2^{O(n)}$. We choose a $D$ such that $\omega(1)\leq D\leq o(\sqrt{\log n/\log \log n})$ so that $M= n^{o(D)}$. \expref{Theorem}{thm:ns-taco} tells us that for any $p,q\geq {1}/{n}$ any $C$ from the class of circuits described in the theorem statement, we have
\[
\NS_{p^{d-1}q}(C) \leq Ak\sqrt{p} + B\sqrt{q} + O(M(10pD)^D)
\]
where $A$ and $B$ are $n^{o(1)}$. 

We choose $p,q$ so that the first two terms above are each bounded by $\varepsilon/10$. This requires $p \leq \varepsilon^2/O(k^2 A^2)$ and $q \leq \varepsilon^2/O(B^2)$. Further, to ensure that the last term is at most $\varepsilon/10$, it suffices to choose $p \leq n^{-\Omega(1)}$ (in fact, this ensures that the third term is $n^{-\omega(1)}$ whereas $\varepsilon \geq n^{-1/2d}$ by assumption). Thus, we fix $p = \min\{\varepsilon^2/O(k^2 A^2),n^{-1/4d}\}$ and $q = \varepsilon^2/O(B^2)$ so that all the above conditions are satisfied. This gives 
\[
\NS_\gamma(C) \leq \varepsilon/3
\]
where $\gamma = p^{d-1}q$. Hence, by \expref{Lemma}{lem:KOS}, we obtain the statement of the theorem.
\end{proof}

\begin{remark}
\label{rem:learning}
Assuming the Gotsman Linial conjecture, the above technique yields subexponential time constant-error learning algorithms as long as $M \leq 2^{n^{o(1)}}$ and $\delta = n^{-\Omega(1)}$. To contrast again with the work of Gopalan and Servedio~\cite{GS}, the results of~\cite{GS}---even assuming the Gotsman Linial conjecture---only yield
subexponential-time    %
learning algorithms in the setting when $k < \log n$. However, the dependence on the error parameter in~\cite{GS} is better than the dependence we obtain
here.  (The  %
running time there has a $\varepsilon^3$ in place of the $\varepsilon^{2d}$
that we obtain
here.)      %
\end{remark}

\section{Depth-3 threshold circuit lower bounds}
\label{sec:kw}

\newcommand{\MAJ}{\mathrm{MAJ}}
\newcommand{\TC}{\mathrm{TC}}

In this section, we compare our lower bound techniques for threshold functions with the
simultaneous  %
independent work of Kane and Williams~\cite{KW}, which proves improved lower bounds for depth-$3$ threshold circuits. While both results analyze the effects of random restrictions on threshold gates, the random restriction lemma obtained in~\cite{KW} is  different from ours. It thus makes sense to ask if our techniques can also be used to prove the lower bounds of~\cite{KW}. We show that a modified version of our structural lemma (\expref{Lemma}{lem:thresh-gate}) can be used to recover the lower bound of~\cite{KW} up to $n^{o(1)}$ factors.

The modified structural lemma is as follows. We say that a threshold gate $\phi$ with label $(w,\theta)$ is \emph{$(t,k)$-imbalanced} if there is a set $S$ of at most $k$ input  variables to $\phi$ such that for each setting to these  variables, the resulting threshold gate is $t$-imbalanced. We prove the following modification of \expref{Lemma}{lem:thresh-gate}.

\begin{lemma}
\label{lem:thresh-gate-2}
Let $\phi$ be a threshold gate on $n$ variables with label $(w,\theta)$. For a parameter $p\in (0,1)$, and any $t,k\in \mathbb{N}$ such that $t = (1/p)^{o(1)}$ and $k = \omega(1)$ (\ie, $\log_{1/p}(t)\rightarrow 0$ and $k\rightarrow\infty$ as $p\rightarrow 0$), we have
\[
\prob{\rho\in \mc{R}_p^n}{\text{$\phi|_\rho$ is not $(t,k)$-imbalanced}} \leq p^{1/2-o(1)}.
\]
(Here, upper bounding the probability of an event by $p^{1/2-o(1)}$ means that the probability is at most $\sqrt{p}\cdot g(p)$ where $\lim_{p\rightarrow 0}g(p)p^{\delta} = 0$ for any fixed $\delta > 0$.)
\end{lemma}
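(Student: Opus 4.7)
The plan is to follow the case analysis of \expref{Lemma}{lem:thresh-gate} based on the $\varepsilon$-critical index $K$ of $w$, but with parameters tuned to achieve the improved failure probability $p^{1/2-o(1)}$ and with a more careful choice of $S$ to exploit the $(t,k)$-imbalance slack. Set $\varepsilon = p^{1/4-o(1)}$ (just below $p^{1/4}$, say with an extra $\mathrm{polylog}(1/p)$ factor) so that $L = L(\varepsilon) = p^{-1/2+o(1)}$ and $epL = p^{1/2-o(1)}$ via \expref{Claim}{clm:B2kell}. The key structural observation is that if $w_{\geq K+1}$ is $\varepsilon$-regular and $S$ consists of the top-$k$ weights in $I$, then by regularity $\sum_{i\in S}|w_i| \leq k\varepsilon \lVert w_{\geq K+1}\rVert_2$, so $(t,k)$-imbalance follows from a slightly strengthened imbalance condition $|\theta^*| \geq t\lVert w|_{I\setminus S}\rVert_2 + \sum_{i\in S}|w_i|$ on the restricted shift $\theta^* := \theta - \sum_{j\notin I}w_j\rho(j)$.

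\textbf{Case 1 ($K \leq L$).} Condition on the event $\neg\mathcal{B}_2^{1,K}$ that none of the first $K$ variables lie in $I$; this fails with probability at most $epK \leq epL = p^{1/2-o(1)}$ by \expref{Claim}{clm:B2kell}. Under this conditioning, $w_{\geq K+1}$ is $\varepsilon$-regular. Choose $S$ to be the top-$k$ coordinates of $w$ among those in $I$. The sufficient imbalance condition reduces to $|\theta^*| \geq (t\sqrt{p} + k\varepsilon)\lVert w_{\geq K+1}\rVert_2$ (using $\lVert w|_{I\setminus S}\rVert_2 \lesssim \sqrt{p}\lVert w_{\geq K+1}\rVert_2$ with high probability via \expref{Claim}{clm:B1}). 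Apply \expref{Lemma}{lem:anticon} to the $\varepsilon$-regular sum $\sum_{j\in (K,n]\setminus I}w_j\rho(j)$, conditioning on the signs $\{\rho(j): j\in [K]\}$ to isolate the regular component; this bounds the conditional failure probability by $O(t\sqrt{p} + k\varepsilon + \varepsilon)$.

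\textbf{Case 2 ($K > L$).} Condition on $\neg\mathcal{B}_2^{1,L}$ (failing with probability $epL = p^{1/2-o(1)}$), and apply \expref{Proposition}{prop:DGJSV} with $r$ chosen so that $2^{-r} = p^{\omega(1)}$, obtaining $(1/\varepsilon)$-imbalance of the restricted gate with probability $1 - p^{\omega(1)}$. Since $t = p^{-o(1)} \ll 1/\varepsilon$, there is ample slack to absorb the contribution of a set $S$ of $k$ low-weight coordinates in $I$: taking $S$ to be the $k$ coordinates in $I$ with smallest $|w_i|$ yields $\sum_{i\in S}|w_i| = O\!\left(k\lVert w|_I\rVert_2/\sqrt{|I|}\right)$ by Cauchy--Schwarz, and this is $o((1/\varepsilon)\lVert w|_I\rVert_2)$ for all interesting regimes ($n$ large enough that $k = o(p^{1/4}\sqrt{n})$, which holds in the applications).

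\textbf{Main obstacle.} The delicate point is Case~1: the Berry--Esseen term $O(\varepsilon)$ from \expref{Lemma}{lem:anticon} contributes $O(p^{1/4-o(1)})$, which is larger than the target $p^{1/2-o(1)}$. To push this down I would use the fact that $k = \omega(1)$ provides room for multiple refinement passes: either a multi-level critical-index peeling, where at each level we move a few heavy coordinates of the restricted vector into $S$ (improving the effective regularity parameter from $\varepsilon$ towards $\sqrt{p}$ across $\omega(1)$ levels), or a secondary \expref{Proposition}{prop:DGJSV}-style argument applied to the shift $\sum_{j\in [K]}w_j\rho(j)$ when the first $K$ coordinates themselves exhibit exponential weight decay. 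Balancing the various error terms so that no single one exceeds $p^{1/2-o(1)}$, while keeping $|S| \leq k$, is the central technical challenge.
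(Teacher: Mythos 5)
There is a genuine gap, and it is exactly the one you flag yourself: your Case 1 does not close. With $\varepsilon = p^{1/4-o(1)}$, the Berry--Esseen/anticoncentration error $O(\varepsilon)$ (and your $k\varepsilon$ term) is $p^{1/4-o(1)}$, which exceeds the target $p^{1/2-o(1)}$, and the repairs you gesture at (multi-level critical-index peeling, a secondary \expref{Proposition}{prop:DGJSV} argument on the heavy prefix) are not carried out. The underlying problem is structural: you are still conditioning on the event $\mc{B}_2^{1,\cdot}$ that \emph{no} heavy coordinate survives into $I$, which forces you to keep $L$ (hence $\varepsilon^{-2}$) below roughly $p^{-1/2}$, and then the regular-case anticoncentration error $O(\varepsilon)\geq p^{1/4}$ is unavoidable. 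Your use of the budget $S$ of $k$ variables is spent on absorbing the $\ell_1$-weight of $k$ \emph{regular} (hence light) coordinates, which buys essentially nothing; the $(t,k)$-imbalance slack has to be spent on the \emph{heavy} coordinates instead.

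The paper's proof resolves this with a different parameterization. It keeps $\varepsilon = m\sqrt{p}$ with $m=(1/p)^{o(1)}$, so the regular-case bound from \expref{Lemma}{lem:reg} is already $O(t\sqrt{p}+\varepsilon)=p^{1/2-o(1)}$; the price is that $L\approx \log^2(1/p)/(m^2p)$ is so large that $pL$ is not small. This is compensated by replacing $\mc{B}_2^{1,\cdot}$ with $\mc{B}_2^{k,\cdot}$ (``at least $k$ of the first $K$, resp.\ $L$, coordinates land in $I$''), whose probability $(epL/k)^k$ can be driven below $\sqrt{p}$ by choosing $m = O(\log(1/p)\cdot(1/p)^{1/2k})$ -- this is precisely where $k=\omega(1)$ is used. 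The at most $k$ heavy coordinates that do survive form the set $S$ in the definition of $(t,k)$-imbalance: in the case $K\leq L$ one union-bounds over all $2^k$ settings of $S$ (applying the regular case to each), and in the case $K>L$ one runs \expref{Proposition}{prop:DGJSV} on a random assignment that also sets $S$, losing only a factor $2^k$; both losses are affordable since one may assume $k=o(\log(1/p))$ without loss of generality. Your Case 2 is essentially fine (and doesn't even need $S$), but without the $k$-slack being applied to the event $\mc{B}_2$ and to the surviving heavy variables, Case 1 cannot reach $p^{1/2-o(1)}$, as noted in \expref{Remark}{rem:thr-gate-2}: the argument of \expref{Lemma}{lem:thresh-gate} alone cannot give an exponent better than $1/4$-type bounds here.
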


\begin{remark}
\label{rem:thr-gate-2}
The statement of \expref{Lemma}{lem:thresh-gate-2} is somewhat incomparable to that of \expref{Lemma}{lem:thresh-gate}. On the one hand, \expref{Lemma}{lem:thresh-gate} yields a stronger property: namely that the threshold gate is $(t,0)$-imbalanced (or equivalently, not $t$-balanced) with high probability; \expref{Lemma}{lem:thresh-gate-2} only yields that the gate is $(t,k)$-imbalanced with high probability (and also holds only for a smaller value of $t$). However, the probability estimates in \expref{Lemma}{lem:thresh-gate-2} are stronger than those in \expref{Lemma}{lem:thresh-gate}: it can be checked that the proof of \expref{Lemma}{lem:thresh-gate} cannot yield an upper bound of $p^c$ for some $c < 1/2$.
\end{remark}

Replacing the random restriction lemma of~\cite{KW} with \expref{Lemma}{lem:thresh-gate-2} in the lower bound proof of~\cite{KW} yields the same lower bound for depth-$3$ circuits up to $n^{o(1)}$ factors. We state the result below after a few basic definitions.

An \emph{unweighted majority} gate is a threshold gate whose label $(w,\theta)$ satisfies $|w_i| = 1$ for all $i$. We let $\MAJ\circ \THR\circ\THR$ denote the class of depth-$3$ circuits that have an unweighted majority gate as the output gate above two layers of general threshold gates.

\begin{theorem}[Kane-Williams~\cite{KW}]
\label{thm:KW}
For infinitely many $n\in\mathbb{N}$, there is an explicit (polynomial-time computable) function  defined on $2n$ variables that does not have $\MAJ\circ\THR\circ\THR$ circuits with fewer than $n^{1.5-o(1)}$ gates or $n^{2.5-o(1)}$ wires.
\end{theorem}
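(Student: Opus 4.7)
My plan is to follow the Kane--Williams lower bound argument but substitute their ``restriction makes bottom gates constant'' lemma with our Lemma~\ref{lem:thresh-gate-2}, which gives the weaker conclusion that a bottom threshold gate becomes $(t,k)$-imbalanced with high probability. The hard function will be the Generalized Andreev function of Section~\ref{sec:andreev} (or a minor variant), which by Lemma~\ref{lem:ckksz-corr} has exponentially small correlation with any circuit of sub-$n$ description length, even after an arbitrary restriction leaving $n^{\Omega(\gamma)}$ variables alive. Thus the target is to show that a $\MAJ\circ\THR\circ\THR$ circuit $C$ with $G \le n^{1.5-o(1)}$ gates and $W \le n^{2.5-o(1)}$ wires can, after a suitable random restriction, be converted into an object with description length $o(n)$ on the surviving coordinates; the correlation bound then prohibits $C$ from computing the hard function.

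The first phase is the random restriction. Sample $\rho\sim \mc{R}_p^n$ with $p=n^{-\alpha}$ for a small constant $\alpha>0$, set $t=(1/p)^{o(1)}$ (slowly growing) and $k=\omega(1)$ (say $k=\log\log n$). By Lemma~\ref{lem:thresh-gate-2}, each bottom threshold gate is $(t,k)$-imbalanced under $\rho$ with probability $1-p^{1/2-o(1)}$. A Markov/union-bound argument over the $\le G$ bottom gates shows that, with probability $\ge 1/2$, all but $G\cdot p^{1/2-o(1)}$ of them become $(t,k)$-imbalanced; the bad ones, of which there are at most $n^{1-o(1)}$ in expectation, are cleaned up by further setting all variables feeding into them. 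Because the total number of wires is $W\le n^{2.5-o(1)}$ and each bad gate has fan-in at most $n^{1-\alpha}$ after restriction, a careful choice of $\alpha$ ensures that this clean-up phase fixes $o(n_1)$ additional variables, where $n_1 = \Theta(n^{1-\alpha})$ is the number of surviving variables after $\rho$.

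The second phase replaces each remaining $(t,k)$-imbalanced bottom gate $\phi$ by the $k$-junta $\widetilde\phi$ that, for each setting of $\phi$'s $k$ ``controlling'' variables, outputs the majority constant of the restricted gate on that subcube. Since the gate is $t$-imbalanced on each subcube, Theorem~\ref{thm:chernoff} gives pointwise error $\exp(-\Omega(t^2))$ per gate, so the total error of this replacement is at most $G\exp(-\Omega(t^2))$, which is still much smaller than $\exp(-n^{\Omega(\gamma)})$ by a suitable choice of $t$. After this replacement, $C|_\rho$ is approximated by a $\MAJ\circ\THR\circ(k\text{-junta})$ circuit over $n_1$ variables, which we flatten by absorbing the juntas. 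Using Theorem~\ref{thm:RSO94} to count the middle-layer threshold gates as threshold functions of the at most $G$ many bottom $k$-juntas on $n_1$ variables, and describing each $k$-junta using $k\log n + 2^k$ bits, the entire flattened depth-$2$ circuit has description length $O(G\cdot(n_1 + 2^k + k\log n)) = n^{2.5-\alpha+o(1)}$. Choosing $\alpha$ so that this quantity is $o(n_1)$, Lemma~\ref{lem:ckksz-corr} gives a correlation of $\exp(-n^{\Omega(\gamma)})$ between the restricted hard function and the flattened circuit, contradicting the assumption that $C$ computes the hard function.

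The main obstacle I expect is the joint tuning of $p,t,k,\alpha$ so that: (i) Lemma~\ref{lem:thresh-gate-2} applies at parameters compatible with an $n^{o(1)}$-slack final bound; (ii) the $G\cdot\exp(-\Omega(t^2))$ total error from the junta replacement and the error $G\cdot p^{1/2-o(1)}$ from bad gates remain negligible compared to the correlation lower bound; (iii) $n_1$ stays in the regime where Lemma~\ref{lem:ckksz-corr} applies; and (iv) the description length $n^{2.5-\alpha+o(1)}$ of the flattened circuit is strictly below $n_1$. The $n^{o(1)}$ loss relative to the Kane--Williams bound enters precisely because our lemma replaces ``bottom gate becomes constant'' with ``bottom gate becomes a $k$-junta,'' which introduces a $2^k = n^{o(1)}$ overhead in the description of each bottom gate and hence in the counting argument. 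A secondary subtlety is ensuring that the clean-up of bad bottom gates does not destroy too many wires or variables; here the slightly-superquadratic wire bound $W\le n^{2.5-o(1)}$ is what makes the accounting go through.
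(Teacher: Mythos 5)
Your opening move (random restriction, apply Lemma~\ref{lem:thresh-gate-2}, replace each $(t,k)$-imbalanced bottom gate by a fan-in-$k$ gate with pointwise error $\exp(-\Omega(t^2))$) is exactly how the paper starts, but the rest of the plan has two quantitative breakdowns that I don't see how to repair. First, the ``clean up the bad gates by fixing their input variables'' step does not survive the arithmetic at wire complexity $n^{2.5-o(1)}$: with $p=n^{-\alpha}$, the expected number of non-imbalanced bottom gates is $G\cdot p^{1/2-o(1)}=n^{1.5-\alpha/2-o(1)}$, so getting this below $n^{1-o(1)}$ already forces $\alpha\geq 1-o(1)$, i.e., only $n^{o(1)}$ surviving variables; and the surviving wires feeding those bad gates (roughly $pW\cdot p^{1/2-o(1)}$) exceed the number of surviving variables $pn$ unless $p^{1/2-o(1)}\lesssim n/W=n^{-1.5}$, which is impossible for any admissible $p$. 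This is precisely why the variable-fixing strategy of Sections~\ref{subsec:corr-par}--\ref{subsec:corr-andreev} is confined to $n^{1+\varepsilon_d}$ wires; the paper's proof of Theorem~\ref{thm:KW} instead \emph{keeps} up to $\ell=n/m^3$ balanced gates and absorbs them into the counting class $\THR\circ\THR[\log M,k,\ell]$ (Lemma~\ref{lem:KW-count}), never fixing extra variables for them. Second, your description-length accounting cannot close: you need the description of the simplified circuit to fit inside the Kolmogorov/counting budget, but a flattened depth-$2$ circuit with $G=n^{1.5-o(1)}$ gates over $n_1=n^{1-\alpha}$ surviving variables has description length at least on the order of $G\cdot n_1=n^{2.5-\alpha-o(1)}$, which is never $o(n_1)$ (nor even $\leq n$, as Lemma~\ref{lem:ckksz-corr} would require) for any $\alpha$. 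Relatedly, Lemma~\ref{lem:thresh-gate-2} only permits $t=(1/p)^{o(1)}$, so the replacement error $G\exp(-\Omega(t^2))$ is at best $\exp(-n^{o(1)})$ and can never be driven below the $\exp(-n^{\Omega(\gamma)})$ correlation target you borrow from Lemma~\ref{lem:ckksz-corr}.

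The missing ideas are the ones the paper imports from Kane--Williams to sidestep both problems: (i) the hard function is not a generic Andreev-type function but a composition $B_n(x,y)=F(z,x)$ where $z$ consists of $\log M=O(\log n)$ block-parities of $y$, so that after a live restriction one can further restrict to one variable per block and do all counting over circuits on only $O(\log n)$ variables---there each unbounded-fan-in threshold gate costs only $O(\log^2 n)$ description bits, and the class $\THR\circ\THR[\log M,k,\ell]$ has size $2^{o(n)}$ even with $\ell=n/m^3$ surviving balanced gates; and (ii) the unweighted $\MAJ$ output gate is handled separately via Proposition~\ref{prop:unweighMaj}, so one only needs each \emph{middle-layer} depth-$2$ subcircuit to have correlation $O(1/n^{3.5})$ with $F(\cdot,x)$, and the conclusion for the whole circuit is merely $\Corr\leq 1-\Omega(1/n^3)$---which is why a pointwise replacement error of $n^{-\omega(1)}$ (attainable with $t=\log n$) suffices, and why no exponentially small correlation is ever needed. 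Without the composed hard function and the per-subcircuit counting plus the majority-lifting step, the argument as you outline it cannot reach gate bounds beyond roughly $\sqrt{n}$, let alone $n^{1.5-o(1)}$.
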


\subsection{A modified structural lemma}

In this section, we prove \expref{Lemma}{lem:thresh-gate-2}. We will need the following easy consequence of \expref{Theorem}{thm:chernoff}.

\begin{fact}
\label{fact:tkimbal}
Let $\phi$ be a $(t,k)$-imbalanced threshold gate such that for some set $S$ of at most $k$ input  variables to $\phi$ and for each setting to these variables, the resulting threshold gate is $t$-imbalanced. The threshold gate $\phi'$ obtained from $\phi$ by keeping only the variables in $S$ (with their weights) and removing the others satisfies
\[
  \prob{x}{\phi(x)\neq \phi'(x)}\leq \exp(-\Omega(t^2)).
\]
\end{fact}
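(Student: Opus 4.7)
The plan is to reduce the fact to a direct application of the Chernoff bound (Theorem~\ref{thm:chernoff}). First, I would split each input $x\in\{-1,1\}^n$ as a pair $(y,z)$, where $y\in\{-1,1\}^{|S|}$ restricts $x$ to $S$ and $z\in \{-1,1\}^{n-|S|}$ restricts it to $[n]\setminus S$. Writing $v := w|_{[n]\setminus S}$, $A(y) := \ip{w|_S}{y}$, and $B(z) := \ip{v}{z}$, this gives $\phi(x) = \sgn(A(y)+B(z)-\theta)$, while $\phi'(x) = \sgn(A(y)-\theta)$ because $\phi'$ simply ignores the coordinates outside $S$.

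The crux of the argument is a short geometric observation: if the two signs above disagree, then the noise term $B(z)$ must push $A(y)-\theta$ across zero, forcing $|B(z)|\geq |\theta-A(y)|$. Here is where the $(t,k)$-imbalance hypothesis enters. Saying that $S$ is a witness to $(t,k)$-imbalance is saying that for every fixed $y$, the restricted threshold gate $\phi|_y$ --- which has label $(v,\theta-A(y))$ --- is $t$-imbalanced, so $|\theta-A(y)| \geq t\cdot \mynorm{v}_2$. Chaining these two bounds, the event $\phi(x)\neq \phi'(x)$ is contained in the event $|B(z)| \geq t\cdot \mynorm{v}_2$.

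To conclude, I would apply the Chernoff bound (Theorem~\ref{thm:chernoff}) to the linear form $B(z)=\ip{v}{z}$ with $z$ uniform on $\{-1,1\}^{n-|S|}$, which yields $\prob{z}{|B(z)|\geq t\cdot \mynorm{v}_2}\leq \exp(-\Omega(t^2))$. Since this estimate is uniform in $y$, averaging over $y\sim\{-1,1\}^{|S|}$ gives the claimed bound $\prob{x}{\phi(x)\neq \phi'(x)}\leq \exp(-\Omega(t^2))$. I expect no genuine obstacle: the only minor concern is the tie case $A(y)=\theta$, which is automatically excluded by the imbalance hypothesis in the nontrivial case $v\neq 0$, while the degenerate case $v=0$ gives $\phi\equiv \phi'$ trivially.
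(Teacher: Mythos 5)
Your proof is correct, and it is exactly the argument the paper has in mind: the paper states this fact without proof as an ``easy consequence'' of the Chernoff bound (\expref{Theorem}{thm:chernoff}), and your decomposition into $A(y)$ and $B(z)$, the observation that disagreement forces $|B(z)|\geq |\theta-A(y)|\geq t\cdot\mynorm{v}_2$ via the $t$-imbalance of each restriction, and the final application of Chernoff uniformly in $y$ is precisely that consequence. Your handling of the tie and $v=0$ cases is also fine.
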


We now prove \expref{Lemma}{lem:thresh-gate-2}.

\begin{proof}[Proof of \expref{Lemma}{lem:thresh-gate-2}]
The proof of the above lemma will follow along the lines of the proof of \expref{Lemma}{lem:thresh-gate}. We also use the definitions and notation from \expref{Section}{subsec:msl}.

Throughout, we assume that $p$ is a small enough constant since the statement of the lemma only needs to be proved for $p\rightarrow 0$. We can also assume that $k = o(\lg(1/p))$ since assuming an upper bound on $k$ only makes the statement stronger. We sample $\rho = (I,y)$ from $\mc{R}_p^n$. 

We will need a number of parameters in the proof. Let $\varepsilon = m\sqrt{p}$ for an integer parameter $m \leq (1/p)^{o(1)}$ to be chosen later.  Define $q = 16m^2p\lg(1/p)$. Note that $q = p^{1-o(1)}$ since $m \leq (1/p)^{o(1)}$. Let  $L = 100\log^2(1/\varepsilon)/\varepsilon^2$. At some places in the proof, we will assume that $p$ is smaller than some constants, since the statement of the lemma is only non-trivial for $p\rightarrow 0$.

We assume that the variables of the threshold gate have been sorted so that $|w_1|\geq |w_2|\geq \cdots \geq |w_n|$. After applying a restriction $\rho$, the threshold gate $\phi|_\rho$ is labeled by pair $(w',\theta')$, where $w'$ is the restriction of $w$ to the coordinates in $I$ and
\begin{equation}
\label{eq:smsl1}
\theta' = \theta'(\rho) = \theta - \ip{w''}{y}.
\end{equation}
where $w''$ denotes the vector $w$ restricted to the indices in $[n]\setminus I$.

For a random restriction $\rho\sim \mc{R}_p^n$, we define the bad events $\mc{B}^t(\rho),\mc{B}_1(\rho)$ and $\mc{B}^{k,\ell}_2(\rho)$ exactly as in the proof of \expref{Lemma}{lem:thresh-gate} (here $\ell$ is a varying parameter we will choose below and $k$ is as fixed above). Additionally, we define the bad event $\tilde{\mc{B}}^t(\rho)$ to be the event that $\phi$ is not $(t,k)$-imbalanced: note that this is exactly the event whose probability we need to upper bound. Also note that $\mc{B}^t(\rho)$ holds whenever $\tilde{\mc{B}}^t(\rho)$ does and hence any upper bound on the probability of $\mc{B}^t(\rho)$ also applies to $\tilde{\mc{B}}^t(\rho)$.

Exactly as in \expref{Lemma}{lem:thresh-gate}, we first the case that $w$ is $\varepsilon$-regular and bound $\prob{\rho}{\tilde{\mc{B}}^t(\rho)}$ in this case. This easily follows since $\prob{}{\tilde{\mc{B}}^t(\rho)}\leq \prob{}{\mc{B}^t(\rho)}$ and hence by \expref{Lemma}{lem:reg} we have 
\begin{align}
\prob{\rho}{\tilde{\mc{B}}^t(\rho)}\leq \prob{\rho}{\mc{B}^t(\rho)} &\leq O(t\sqrt{p}+\varepsilon) = p^{1/2-o(1)}\label{eq:smsl2}
\end{align}
where we have used the fact that $t = (1/p)^{o(1)}$ and $\varepsilon = m\sqrt{p} = p^{1/2-o(1)}$ since $m = (1/p)^{o(1)}$.

In the non-regular case, we proceed by case analysis based on the $\varepsilon$-critical index $K$ of $w$. Specifically, we proceed based on whether $K \leq L$ or not.

We first show how to handle the former case, \ie, $K \leq L$. In this case, we bound the probability of $\tilde{\mc{B}}^t(\rho)$ as follows
\begin{align}
\prob{\rho}{\tilde{\mc{B}}^t(\rho)} &\leq \prob{\rho}{\mc{B}_2^{k,K}(\rho)} + \prob{\rho}{\mc{B}^t(\rho)\; \middle|\; \neg \mc{B}_2^{k,K}(\rho)}\notag\\
&\leq (epK/k)^k + \prob{\rho}{\mc{B}^t(\rho)\; \middle|\; \neg \mc{B}_2^{k,K}(\rho)} \leq \sqrt{p} + \prob{\rho}{\mc{B}^t(\rho)\; \middle|\; \neg\mc{B}_2^{k,K}(\rho)}\label{eq:smsl4}
\end{align}
where the first inequality follows from \expref{Claim}{clm:B2kell} and we ensure the final inequality by choosing $m$ so that $(epK/k)^k \leq (epL/k)^k \leq \sqrt{p}$. This can be done by choosing $m = O(\log(1/p)\cdot (1/p)^{1/2k})$ since we have
\[
\left(\frac{epL}{k}\right)^k \leq (3pL)^k = \left(\frac{300p\log^2(1/\varepsilon)}{\varepsilon^2}\right)^k \leq \left(\frac{300p\log^2(1/p)}{m^2 p}\right)^k\leq \left(\frac{300\log^2(1/p)}{m^2}\right)^k\leq \sqrt{p}
\]
for $m$ as chosen above. Note that $m = (1/p)^{o(1)}$ since we have $k = \omega(1)$ by assumption. 

The event $\neg\mc{B}_2^{k,K}(\rho)$ only depends on the choice of the sub-restriction $\rho|_{[K]}$ and we can condition on $\rho|_{[K]}$ so that this event occurs. {}From now on, the random choice will be a restriction $\rho'\sim \mc{R}_p^{n-K}$ on the remaining variables. 

Let $S$ denote the set of variables in $I$ among the first $K$ variables. We show that with high probability over the choice of $\rho'$, we obtain a $t$-imbalanced threshold gate for each setting of the variables in $S$.  

There are $2^k$ settings to the variables in $S$. For each such setting, the restricted linear function is now $\varepsilon$-regular by the definition of the $\varepsilon$-critical index. Hence we can appeal to the regular case (\ie, inequality (\ref{eq:smsl1})) and use a union bound over the settings to the variables in $S$ to conclude that 
$$\prob{\rho'}{\mc{B}^t(\rho)\; \middle|\; \neg\mc{B}^{k,K}_2(\rho)}\leq 2^k p^{1/2 - o(1)}\leq p^{1/2-o(1)}$$ where we have used our assumption that $k = o(\log(1/p))$ for the final inequality. Along with (\ref{eq:smsl4}), this implies the lemma in the case that $K\leq L$.

In the case that $K > L$, we use a slightly different strategy.
\begin{align}
\prob{\rho}{\tilde{\mc{B}}^t(\rho)} &\leq \prob{\rho}{\mc{B}_2^{k,L}(\rho)} + \prob{\rho}{\tilde{\mc{B}}^t(\rho)\; \middle|\; \neg\mc{B}_2^{k,L}(\rho)}\notag\\
&\leq (epL/k)^k + \prob{\rho}{\tilde{\mc{B}}^t(\rho)\; \middle|\; \neg\mc{B}_2^{k,L}(\rho)} \leq \sqrt{p} + \prob{\rho}{\tilde{\mc{B}}^t(\rho)\; \middle|\; \neg\mc{B}_2^{k,L}(\rho)}.\label{eq:smsl5}
\end{align}

As we did above, we condition on a fixed $I$ so that $\neg\mc{B}_2^{k,L}(\rho)$ occurs (\ie, at most $k$ among the first $L$ variables belong to $I$). Let $S$ denote the set of variables in $I$ among the first $L$ variables. Note that $|S|\leq k$.

Now, note that if $\tilde{\mc{B}}^t(\rho)$ does occur, then on a uniformly random setting to the variables in $S$, the probability that the restricted threshold gate is not $t$-imbalanced is at least $1/2^k$.

On the other hand, we analyze what happens when we set all variables not in $I$ and \emph{also the variables in $S$} uniformly at random. Note that \expref{Proposition}{prop:DGJSV} is now applicable with $L' = L$ and $r = 10\log(1/\varepsilon)$ since we are setting all of the first $L$ variables (and also others). Applying \expref{Proposition}{prop:DGJSV}, the probability that the restricted threshold gate is not $t$-imbalanced is at most $\varepsilon^{10} < p$. Along with the previous paragraph, this implies that $$\prob{}{\tilde{\mc{B}}^t(\rho)\; \middle|\; \neg\mc{B}_2^{k,L}(\rho)}\leq p\cdot 2^k \leq \sqrt{p}$$ since $k = o(\lg(1/p))$.

Putting this together with (\ref{eq:smsl5}), we have the claimed upper bound on $\tilde{\mc{B}}^t(\rho)$ in the case that $K > L$.
\end{proof}

\subsection{Superlinear gate lower bounds and superquadratic wire lower bounds for threshold circuits}

We now prove \expref{Theorem}{thm:KW}. The proof closely follows~\cite{KW} except that we use the structural lemma proved above in lieu of the random restriction lemma from~\cite{KW}.

The following proposition (a weaker version of which is implicit in~\cite{KW}) will help us prove lower bounds on circuits with an unweighted majority gate on top. 

\begin{proposition}
\label{prop:unweighMaj}
Let $h$ be a Boolean function computed by an unweighted majority gate of fan-in $s\geq 1$. Let $f,g_1,\ldots,g_s$ be Boolean functions of $n$ variables such that $\max_{i\in [s]}\Corr(f,g_i)\leq ({1}/{16s})$ and $\Corr(f,1) \leq ({1}/{16s})$ where $1$ denotes the constant function that takes value $1$ at all inputs. Then, $\Corr(f,h(g_1,\ldots,g_s)) \leq 1-({1}/{8s})$.
\end{proposition}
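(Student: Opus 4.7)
The plan is to exploit the pointwise identity $h(y) \cdot M(y) = |M(y)|$, where $M(y) := \sum_i w_i y_i - \theta$ denotes the margin of $h$ (so $h(y) = \sgn(M(y))$). Writing $H := h(g_1,\ldots,g_s)$ for brevity, this identity says $H(x) \cdot M(g_1(x),\ldots,g_s(x)) = |M(g_1(x),\ldots,g_s(x))|$ as functions of $x$. The trivial case where $h$ is constant on $\{-1,+1\}^s$ gives $\Corr(f,H) = \Corr(f,1) \leq 1/(16s) \leq 1 - 1/(8s)$ for $s \geq 1$, so assume $h$ is non-constant.

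The first real step is a normalization of $\theta$ that crucially uses $|w_i|=1$. The values taken by $\sum_i w_i y_i$ over $y \in \{-1,+1\}^s$ are integers of parity $s \bmod 2$, spaced $2$ apart and ranging over $\{-s,-s+2,\ldots,s\}$; replacing $\theta$ by the midpoint of the ``gap'' $[j,j+2)$ between two consecutive achievable values that contains $\theta$ does not change $h$, but enforces $|\theta| \leq s-1$ and $|M(y)| \geq 1$ for every $y \in \{-1,+1\}^s$. Consequently, by the pointwise identity, $\avg{x}{H \cdot M(g_1,\ldots,g_s)} = \avg{x}{|M(g_1,\ldots,g_s)|} \geq 1$.

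The second step is a proof by contradiction that $\prob{x}{f \neq H} \geq 1/(16s)$. Assume $\epsilon := \prob{x}{f \neq H} < 1/(16s)$. Replacing $f$ by $H$ changes any correlation with a function bounded by $1$ in absolute value by at most $2\epsilon$, so the hypotheses on $f$ transfer to $H$: $|\avg{x}{H g_i}| \leq 1/(16s) + 2\epsilon < 3/(16s)$ and similarly $|\avg{x}{H}| < 3/(16s)$. Expanding by linearity, $\avg{x}{H \cdot M(g_1,\ldots,g_s)} = \sum_i w_i \avg{x}{H g_i} - \theta \avg{x}{H}$, whose absolute value is at most $(s+|\theta|)\cdot 3/(16s) \leq 3(2s-1)/(16s) < 3/8$, contradicting the lower bound of $1$. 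Applying the same argument to $-f$ (which satisfies identical hypotheses) yields $\prob{x}{f = H} \geq 1/(16s)$, and combining the two bounds via $\Corr(f,H) = |2\prob{x}{f=H}-1|$ yields the desired $\Corr(f,H) \leq 1 - 1/(8s)$.

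The main technical point is the gap-normalization of $\theta$: this is the only place where the ``unweighted'' hypothesis $|w_i|=1$ enters, and it is what makes the margin $|M|$ uniformly bounded away from zero by an absolute constant. Without it, a general threshold gate can exhibit arbitrarily small margin on a positive-probability set, and the numerical contradiction between the linearity-plus-correlation upper bound on $\avg{x}{HM}$ and the lower bound $\avg{x}{|M|}$ collapses---explaining why the statement is restricted to unweighted majority rather than general threshold gates.
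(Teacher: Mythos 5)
Your proof is correct and takes essentially the same approach as the paper's: both exploit the unweighted hypothesis to normalize $\theta$ so that the linear form $\sum_i w_i g_i - \theta$ has margin bounded away from zero pointwise, and then derive a contradiction between the resulting lower bound on the inner product of the circuit output with this linear form and the upper bound forced by the correlation hypotheses (handling the near-agreement and near-disagreement cases symmetrically, as you do via $-f$). Your bookkeeping is marginally cleaner (gap-midpoint normalization giving margin $\geq 1$, and transferring the correlation bounds from $f$ to $h(g_1,\ldots,g_s)$ rather than bounding $\ip{f}{L'}$ directly), but the argument is the same.
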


\begin{proof}
  We can write $h(x_1,\ldots,x_s) = \sgn(\sum_{i\leq s}w_i x_i -\theta)$ where $|w_i| = 1$ for each $i$. We can also assume that $|\theta|\leq s+1$ since otherwise we may replace $\theta$ by $\theta' := \sgn(\theta)\cdot (s+1)$ without changing the function. Similarly, by changing $\theta$ if necessary, we can assume that $\theta$ is an integer multiple of ${1}/{2}$. Let
  \[
    L'(y) := \sum_{i\leq s}w_i g_i(y) -\theta
  \]
  and $h'(y) = h(g_1(y),\ldots,g_s(y))$.

{}From the fact that $\Corr(f,g_i)\leq {1}/{16s}$ for all $i\in [s]$ and $\Corr(f,1) \leq {1}/{16s}$, we obtain
\begin{equation}
\label{eq:unwMaj}
|\ip{f}{L'}|\leq \sum_{i}|w_i|\cdot |\ip{f}{g_i}| + |\theta|\cdot |\ip{f}{1}| \leq s\cdot \frac{1}{16s} + \frac{(s+1)}{16s}< \frac{1}{4}.
\end{equation}

Assume for the sake of contradiction that $\Corr(f,h') > 1-({1}/{8s})$. Then, we have
\[
  \max\left\{\prob{y\in \{-1,1\}^n}{f(y) = h'(y)},\prob{y\in \{-1,1\}^n}{f(y) \neq h'(y)}\right\}\geq 1-({1}/{16s}).
\]
We assume that $\prob{y\in \{-1,1\}^n}{f(y) = h'(y)}\geq 1-({1}/{16s})$ and obtain a contradiction to (\ref{eq:unwMaj}) (the other case is similar).

It is easy to check using our assumptions on the $w_i$ and $\theta$ that whenever $f(y) = h'(y)$, then $f(y)L'(y) \geq {1}/{2}$. Also, for any $y$ such that $f(y)\neq h'(y)$, we have $f(y)L'(y)\geq -2s-1$. Since
\[
  \prob{y}{f(y) = h'(y)}\geq 1-({1}/{16s}),
\]
we get
\begin{align*}
|\ip{f}{L'}|\geq \ip{f}{L'} &\geq \prob{y}{f(y) = L'(y)}\cdot\frac{1}{2} - \prob{y}{f(y) \neq L'(y)}\cdot(2s+1)\\
&\geq \left(1-\frac{1}{16s}\right)\cdot \frac{1}{2} - \frac{(2s+1)}{16s} > \frac{1}{4}.
\end{align*}
This contradicts (\ref{eq:unwMaj}).
\end{proof}

Recall the statement of \expref{Theorem}{thm:KW}.  

\begin{reptheorem}{thm:KW}[Restated]
For infinitely many $n\in\mathbb{N}$, there is an explicit (polynomial-time computable) function  defined on $2n$ variables that does not have $\MAJ\circ\THR\circ\THR$ circuits with fewer than $n^{1.5-o(1)}$ gates or $n^{2.5-o(1)}$ wires.
\end{reptheorem}

The explicit function we will use is the same as that defined by Kane and Williams~\cite[Section 6]{KW}, which is  also a generalization of the Andreev function. The Generalized Andreev function from \expref{Section}{sec:andreev} can also be suitably modified (by changing the parameters) to yield the same lower bounds.

To prove the above theorem, we need some definitions and some lemmas which are closely related to  statements proved in~\cite{KW}. 

For positive integers $m,k,\ell$, let $\THR\circ\THR[m, k,\ell]$ denote the class of $\THR\circ \THR$ circuits on $m$ variables with at most $\ell$ gates on the bottom layer that have fan-in greater than $k$. Similarly, let $\MAJ\circ \THR\circ\THR[m,k,\ell,s]$ denote the class of $\THR\circ \THR$ circuits on $m$ variables with at most $\ell$ gates on the bottom layer that have fan-in greater than $k$ and at most $s$ gates on the second layer.

\begin{lemma}
\label{lem:KW-count}
Let $m,k,\ell$ be arbitrary integers.  The number of distinct Boolean functions in $\THR\circ\THR[m, k,\ell]$ is bounded by $2^{r}$, where $r = O(2^{O(k^2)}\cdot m^{k+1} + m^2 \ell)$.
\end{lemma}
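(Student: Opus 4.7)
The plan is to enumerate the circuits in $\THR\circ\THR[m,k,\ell]$ by separately counting the possible bottom-layer gates and then, with a fixed bottom layer, the possible top gates via \expref{Theorem}{thm:RSO94}. I will split the bottom gates into ``small'' gates of fan-in at most $k$ and the at most $\ell$ ``large'' gates of fan-in exceeding $k$, since these two types have very different contributions to the bound.

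First, I would show that the total number of distinct small bottom gates on $m$ variables is at most $N_1 = O(m^k) \cdot 2^{O(k^2)}$: there are $\sum_{j\leq k}\binom{m}{j} = O(m^k)$ ways to choose the support, and for each such support \expref{Corollary}{cor:numthr} gives at most $2^{O(k^2)}$ LTFs on those variables. Next, \expref{Corollary}{cor:numthr} applied at scale $m$ shows each large bottom gate admits at most $2^{O(m^2)}$ possibilities, so an ordered tuple of $\ell$ large gates contributes at most $2^{O(m^2\ell)}$ choices.

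With the large gates fixed, I would argue that without loss of generality the bottom layer consists of \emph{all} $N_1$ distinct small threshold functions together with the chosen $\ell$ large gates: any unused gate can be killed by setting the corresponding weight in the top gate to zero, and repeated bottom gates can be merged by summing their weights at the top. Then \expref{Theorem}{thm:RSO94} with $s = N_1 + \ell$ bounds the number of possible top gates by $2^{O(m(N_1+\ell))}$. Multiplying the three estimates and absorbing the dominated $m\ell$ term into $m^2\ell$, I obtain a total bound of $2^{O(m^2\ell \,+\, m^{k+1}\cdot 2^{O(k^2)})}$, which is $2^r$ with $r$ of the claimed form.

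The main subtlety (rather than a real obstacle) is the reduction that lets me apply \expref{Theorem}{thm:RSO94} to a canonical bottom layer depending only on the $\ell$ chosen large gates and not on the specific small gates appearing in the given circuit. Once that reduction is in place, the rest is a routine product of the three enumerations above; no nontrivial analytic input is needed beyond \expref{Corollary}{cor:numthr} and \expref{Theorem}{thm:RSO94}.
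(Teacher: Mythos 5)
Your proposal is correct and follows essentially the same route as the paper: describe the at most $\ell$ large gates via \expref{Corollary}{cor:numthr} in $O(m^2\ell)$ bits, assume all $m^{k}2^{O(k^2)}$ small-fan-in threshold functions are present in the bottom layer, and then invoke \expref{Theorem}{thm:RSO94} for the top gate to get the $O(2^{O(k^2)}m^{k+1}+m\ell)$ contribution. Your explicit justification of the canonical-bottom-layer step (zero weights for unused gates, merging duplicates) is a minor elaboration of what the paper states implicitly, not a different argument.
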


\begin{proof}
We show that each circuit from the class $\THR\circ\THR[m, k,\ell]$ can be described using $r$ bits, where $r$ is as above. This will prove the claim. 

To see this, note that by \expref{Corollary}{cor:numthr} the number of distinct threshold functions of $m$ bits is at most $2^{O(m^2)}$ and hence, all threshold gates of fan-in greater than $k$ can be described using at most $O(m^2\ell)$ bits. The gates of fan-in at most $k$ can compute at most
\[
  \binom{m}{k}\cdot 2^{O(k^2)}\leq m^k 2^{O(k^2)}
\]
distinct functions; we assume that all these functions appear in the
bottom layer.

The threshold gate on top thus computes a threshold function of at most $\ell + m^k 2^{O(k^2)}$ threshold functions at the bottom layer. By \expref{Theorem}{thm:RSO94}, each such function can be described using
\[
  O\left(\bigl(\ell + m^k 2^{O(k^2)}\bigr)\cdot m\right)\leq O\left(\ell m + 2^{O(k^2)}\cdot m^{k+1}\right)
\]
many bits.

Thus, the total description length is $O(\ell m^2 + 2^{O(k^2)}\cdot m^{k+1})$ as required.
\end{proof}

We need the following lemma, implicit in~\cite{KW}.

\begin{lemma}[Implicit in Corollary 6.2 from~\cite{KW}]
\label{lem:KW-F}
Fix any parameters $M,n\in\mathbb{N}$ such that $M$ is a power of $2$ and $M \leq 2^{n/2}$. There is an explicit polynomial-time computable function $F:\{-1,1\}^{\log M} \times \{-1,1\}^n\rightarrow\{-1,1\}$ such that given any class of Boolean functions $\mc{F}$ on $\log M$ Boolean variables with $|\mc{F}|\leq 2^{o(n)}$, there is an $x\in \{-1,1\}^n$ such that for any $f\in \mc{F}$, we have $\Corr(f,F(\cdot,x))\leq O(\sqrt{n})/M^{1/4}$.
\end{lemma}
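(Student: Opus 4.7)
The plan is to carry out a generalized Andreev construction built on top of a list-decodable code. Let $\Enc \colon \{-1,1\}^n \to \{-1,1\}^M$ be an explicit polynomial-time computable $(\alpha, L)$-error correcting code with $\alpha = 1/2 - O(1/M^{1/4})$ and $L = O(M^{1/2})$; such a code is provided by (a suitable invocation of) \expref{Theorem}{thm:ckksz-codes}, possibly after padding the message length and adjusting the block length by standard concatenation tricks to reach the exact parameters $(n, M)$. I then define
\[
F(y,x) := \Enc(x)_y,
\]
interpreting $y \in \{-1,1\}^{\log M}$ as an index in $[M]$. This is polynomial-time computable by the bit-access property in \expref{Theorem}{thm:ckksz-codes}.

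The key observation is that for any Boolean function $f \colon \{-1,1\}^{\log M} \to \{-1,1\}$, identifying $f$ with its truth table $\tilde f \in \{-1,1\}^M$ gives
\[
\Corr(f, F(\cdot, x)) \;=\; \bigl|\,1 - 2\,\delta(\tilde f, \Enc(x))\,\bigr|,
\]
where $\delta$ denotes relative Hamming distance. Consequently, if $\Corr(f, F(\cdot, x)) > \varepsilon$, then $\Enc(x)$ lies within relative distance $(1-\varepsilon)/2$ of either $\tilde f$ or $-\tilde f$.

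Set $\varepsilon := C\sqrt{n}/M^{1/4}$ for a sufficiently large absolute constant $C$. Since $\varepsilon \geq \Omega(1/M^{1/4}) \geq 1-2\alpha$, we have $(1-\varepsilon)/2 \leq \alpha$, so the list-decodability property implies that for each fixed $f$ the number of $x$'s giving $\Corr(f, F(\cdot,x)) > \varepsilon$ is at most $2L = O(M^{1/2})$. A union bound over $\mc{F}$ now gives
\[
\bigl|\{x \in \{-1,1\}^n : \exists f \in \mc{F},\; \Corr(f,F(\cdot,x)) > \varepsilon\}\bigr| \;\leq\; 2L\cdot|\mc{F}| \;\leq\; O(M^{1/2})\cdot 2^{o(n)} \;\leq\; 2^{n/4 + o(n)},
\]
where the last inequality uses $M \leq 2^{n/2}$. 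This is strictly less than $2^n$, so a good $x$ exists, proving the lemma.

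The main obstacle is matching an explicit code to the required parameters: input length exactly $n$, block length exactly $M$, relative distance $1/2 - O(1/M^{1/4})$, and list size $O(M^{1/2})$. \expref{Theorem}{thm:ckksz-codes} hits these parameters when $\log M$ is a power of $n$ lying in the allowed regime, and the full range $M \leq 2^{n/2}$ is reachable via standard padding/concatenation. Beyond that, the argument is just a one-line counting bound, which is insensitive to the constants; this is why the generous $\sqrt{n}$ slack in the conclusion costs nothing.
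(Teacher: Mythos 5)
First, note that the paper itself gives no proof of \expref{Lemma}{lem:KW-F}---it is cited as implicit in Kane--Williams---so your proposal can only be judged against the evident intended argument, which is indeed the code-based counting you describe: identify $f$ with its truth table, bound the number of bad $x$ per $f$ via list decoding, and union bound over $\mc{F}$ using $M\leq 2^{n/2}$. That skeleton is correct (modulo the small point that the $(\alpha,L)$ definition bounds codewords, not preimages, so you need injectivity of $\Enc$ to pass from codewords to $x$'s).

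The genuine gap is in the instantiation of the code. You need an explicit $\Enc:\{-1,1\}^n\rightarrow\{-1,1\}^M$ that is $(1/2-O(M^{-1/4}),O(M^{1/2}))$-list-decodable, and you claim this follows from \expref{Theorem}{thm:ckksz-codes} ``after padding and standard concatenation tricks.'' But that theorem lives in a completely different rate regime: its message length ($4n$) is polylogarithmic in its block length ($2^{n^\beta}$), whereas the lemma is actually invoked in the paper with $M=n^{16}$, i.e., message length $n=M^{1/16}$, polynomially related to the block length; you cannot pad a length-$n$ message into a code that only accepts messages of length $\mathrm{poly}\log M$, and puncturing the long code down to length $M$ destroys the guarantees. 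Worse, the distance-plus-Johnson route underlying \expref{Theorem}{thm:ckksz-codes} provably cannot give a radius $1/2-O(M^{-1/4})$ that is independent of $n$ in this regime: radius $1/2-\epsilon$ with list size $O(1/\epsilon^2)$ via Johnson requires relative distance $1/2-O(M^{-1/2})$, and by Plotkin/LP-type bounds a binary code of length $M=\mathrm{poly}(n)$ with that distance has only $\mathrm{poly}(M)=2^{O(\log n)}$ codewords, not $2^n$. This is exactly why the $\sqrt{n}$ in the lemma is \emph{not} ``generous slack'': with an achievable explicit code (e.g., Reed--Solomon over $\mathbb{F}_q$, $q=\sqrt{M}$, dimension $\Theta(n/\log M)$, concatenated with Hadamard) the distance is only $1/2-O(n/(\sqrt{M}\log M))$, and the Johnson bound then gives list-decoding radius $1/2-O(\sqrt{n}/M^{1/4})$ with list size $O(\sqrt{M}\log M)$---which is where the $\sqrt{n}$ in the stated correlation bound comes from. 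If you replace your claimed code with this instantiation, your counting and union-bound steps go through verbatim and yield the lemma; as written, however, the key quantitative step rests on a code whose parameters the cited theorem does not supply and which the distance-based method cannot supply at all.
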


Using the function $F$ from the above lemma, we define our hard function, as in~\cite{KW}, in the following way. Set $M = n^{16}$ in the above lemma. In what follows, we assume that $n$ is a power of $2$ and that $\log M\ |\ n$. The function $B_{n}:\{-1,1\}^{n}\times \{-1,1\}^n\rightarrow\{-1,1\}$ is defined to be $B_n(x,y) = F(z,x)$ where $z\in \{-1,1\}^{\log M}$ is obtained by partitioning the $y$ variables into $\log M$ blocks of size $n/\log M$ and taking the parity of the bits in the $i$th block to obtain $z_i$ ($i\in [\log M]$).

Lemmas~\ref{lem:KW-count} and~\ref{lem:KW-F} imply a strong correlation bound for $B_n$ against certain kinds of circuits. We will say that a restriction on the $y$ variables of $B_n$ is \emph{live} if it leaves at least one variable unset in each of the $\log M$ blocks defined above. 

\begin{lemma}
\label{lem:Bn}
Let $n\in\mathbb{N}$ and $M=n^{16}$ be as above. Fix any $m,k,\ell\in\mathbb{N}$ such that the number of functions in $\THR\circ \THR[\log M,k,\ell]$ is $2^{o(n)}$. Then there is an $x\in \{-1,1\}^n$ such that for any live restriction $\rho$ on the $y$ variables that leaves $m$ variables unset, the restricted function $B_{n,\rho}:= B(x,\cdot)|_\rho$ satisfies $\Corr(C,B_{n,\rho})\leq 1-\Omega\left({1}/{n^3}\right)$ for each circuit $C$ from the class $\MAJ\circ\THR\circ\THR[m,k,\ell,n^3]$.
\end{lemma}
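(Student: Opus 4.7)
The plan is to use Lemma \ref{lem:KW-F} to fix an $x$ once and for all, and then use Proposition \ref{prop:unweighMaj} to lift per-subcircuit correlation bounds to the entire $\MAJ \circ \THR \circ \THR$ circuit. Concretely, let $\mc{F} = \THR\circ\THR[\log M, k, \ell] \cup \{1\}$; by hypothesis $|\mc{F}| \leq 2^{o(n)}$, so Lemma \ref{lem:KW-F} with $M = n^{16}$ produces an $x \in \{-1,1\}^n$ such that $\Corr(f, F(\cdot,x)) \leq O(\sqrt{n})/M^{1/4} = O(n^{-3.5})$ for every $f \in \mc{F}$. This $x$ is what I claim witnesses the lemma.

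Now fix any live restriction $\rho$ leaving $m$ of the $y$-variables alive. Let $\tilde z(y') \in \{-1,1\}^{\log M}$ be the vector whose $b$-th coordinate is the parity of the surviving $y$'s in block $b$ XOR'd with the parity of the bits fixed by $\rho$ in that block. Because $\rho$ is live, $\tilde z(y')$ is uniform on $\{-1,1\}^{\log M}$ when $y'$ is uniform on $\{-1,1\}^m$, and $B_{n,\rho}(y') = F(\tilde z(y'),x)$ up to a coordinate sign pattern that is absorbed by the symmetry of correlation. Write the $\MAJ \circ \THR \circ \THR[m,k,\ell,n^3]$ circuit as $C = h(g_1, \dots, g_s)$ with $h$ an unweighted majority on $s \leq n^3$ inputs and $g_i \in \THR\circ\THR[m,k,\ell]$. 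To apply Proposition \ref{prop:unweighMaj} with $f = B_{n,\rho}$ and conclude $\Corr(C, B_{n,\rho}) \leq 1 - \Omega(1/n^3)$, I must verify $\Corr(B_{n,\rho},1) \leq 1/(16n^3)$ and $\Corr(B_{n,\rho}, g_i) \leq 1/(16n^3)$ for each $i$. The first bound is immediate: the uniformity of $\tilde z$ gives $\Corr(B_{n,\rho}, 1) = \Corr(F(\cdot,x),1) \leq O(n^{-3.5}) \ll 1/(16n^3)$.

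The main obstacle is the second bound, since each $g_i$ is on $m$ variables, not on $\log M$ variables, so Lemma \ref{lem:KW-F} does not apply to $g_i$ directly. The plan is to exhibit, for each $g_i$, a ``projected'' companion $g_i^* \in \THR\circ\THR[\log M, k, \ell]$ satisfying
\[
  \Corr(g_i, B_{n,\rho}) \leq \Corr(g_i^*, F(\cdot, x)) + o(1/n^3).
\]
The construction of $g_i^*$ proceeds gate-by-gate: for each bottom linear form $L(y') = \sum_j w_j y'_j$ appearing in $g_i$, decompose it block-wise as $L(y') = L_{\mathrm{sing}}(z) + \sum_{b : m_b \geq 2} L_b(y'_{J_b})$, where $L_{\mathrm{sing}}$ collects the contributions of blocks of surviving size $1$ (for which $y'_j = \pm z_b$) and the remaining terms have zero conditional mean given $z$. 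Replacing each bottom threshold gate by an appropriate threshold in $z$ (keeping the same fan-in discipline so that at most $\ell$ gates of fan-in exceeding $k$ survive) yields the candidate $g_i^*$, and the bound on the projection error is driven by anti-concentration for the zero-mean parts combined with the fact that $B_{n,\rho}$ depends on $y'$ only through $z$.

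Once both conditions are established with the target margin, Proposition \ref{prop:unweighMaj} yields $\Corr(C, B_{n,\rho}) \leq 1 - 1/(8n^3)$, as required. The main technical difficulty is entirely in the projection step: one must argue that the block-wise decomposition above can be made rigorous, so that the approximation of each $\THR\circ\THR[m,k,\ell]$ circuit on the $y'$-variables by a $\THR\circ\THR[\log M,k,\ell]$ circuit on the $z$-variables loses only $o(1/n^3)$ in correlation with $B_{n,\rho}$, which is why the hypothesis has been phrased in terms of $\THR\circ\THR[\log M, k, \ell]$ rather than the much larger class $\THR\circ\THR[m, k, \ell]$.
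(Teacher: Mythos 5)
Your choice of $x$ via \expref{Lemma}{lem:KW-F} and the use of \expref{Proposition}{prop:unweighMaj} to pass from the individual $\THR\circ\THR$ subcircuits to the top majority gate match the paper's strategy, but the step you yourself flag as ``the main technical difficulty''---the gate-by-gate projection of each $g_i$ on the $m$ surviving $y$-variables to a circuit $g_i^*$ on the $\log M$ variables $z$ with only $o(1/n^3)$ loss---is a genuine gap, and as sketched it cannot work. The zero-conditional-mean decomposition of a bottom linear form does not make the non-singleton part negligible, and there is no anti-concentration available for the ``signal'' part over only $\log M = 16\log n$ bits. Concretely, take a bottom gate $\phi(y') = \sgn(y'_1+y'_2+y'_3)$ whose three inputs are surviving variables in a single block $b$, so that $L_{\mathrm{sing}}\equiv 0$. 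One computes $\avg{}{\phi \mid z_b = 1} = -1/2$ and $\avg{}{\phi \mid z_b = -1} = 1/2$: the gate has constant correlation with $z_b$ but \emph{any} function of $z$ disagrees with $\phi$ on at least a quarter of the inputs. Replacing such a gate inside $g_i$ therefore changes the bottom layer on a constant fraction of inputs, and this can propagate through the top threshold of $g_i$ to shift $\Corr(g_i, B_{n,\rho})$ by a constant, not $o(1/n^3)$. (Nor can you argue via conditional expectations gate-by-gate, since $\avg{}{g_i\mid z}$ is not determined by the conditional expectations of the bottom gates.)

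The paper sidesteps projection entirely: since $\rho$ is live, pick one unset variable $y^{(i)}$ in each of the $\log M$ blocks, and build a restriction tree $T$ that sets every other surviving variable. By \expref{Fact}{fac:corr}(3), $\Corr(C,B_{n,\rho}) \leq \avg{\lambda\sim T}{\Corr(C_\lambda, B_{n,\lambda})}$, and at every leaf $\lambda$ the function $B_{n,\lambda}$ is \emph{exactly} $F(\cdot,x)$ with some inputs negated, while $C_\lambda$ (after absorbing those negations) is still a $\MAJ\circ\THR\circ\THR$ circuit with at most $\ell$ bottom gates of fan-in exceeding $k$ and at most $n^3$ middle gates, because the class is closed under restrictions and input negations. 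Now the fooling guarantee of \expref{Lemma}{lem:KW-F} applies verbatim to each middle subcircuit, and \expref{Proposition}{prop:unweighMaj} gives $\Corr(C_\lambda,F(\cdot,x)) \leq 1-\Omega(1/n^3)$ at every leaf, hence the same bound for $\Corr(C,B_{n,\rho})$. Note that your intermediate goal $\Corr(B_{n,\rho},g_i)\leq 1/(16n^3)$ is in fact true, but the way to prove it is precisely this restriction-tree argument applied to $g_i$ (restrict, compare with $F$, invoke \expref{Lemma}{lem:KW-F}); once you argue that way, the projection machinery is unnecessary, so you should replace the projection step by the restriction step rather than try to repair it.
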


\begin{proof}
By \expref{Lemma}{lem:KW-F}, we can fix an $x\in \{-1,1\}^n$ such that the function $F(\cdot,x)$ has correlation at most $O(1/n^{3.5})$ with any function from the class $\THR\circ \THR[\log M,k,\ell]$. \expref{Proposition}{prop:unweighMaj} then implies that any circuit from the class $\MAJ\circ\THR\circ\THR[m,k,\ell,n^3]$ has correlation at most $1-\Omega\left({1}/{n^3}\right)$ with $F$.

For each $i\in [\log M]$, fix a variable $y^{(i)}$ in the $i$th block of $y$ variables that is left unset by $\rho$. Let $Y$ denote the set of $\log M$ variables chosen in this way. Construct a restriction tree $T$ that sets all the variables not in $Y$ that are not set by $\rho$. For each leaf $\lambda$ of $T$, we let $C_\lambda$ denote the circuit $C|_{\rho_\lambda}$ and similarly $B_{n,\lambda}$ for $B_{n,\rho}|_{\rho_\lambda}$. By \expref{Fact}{fac:corr}, we have
\[
  \Corr(C,B_{n,\rho}) \leq \avg{\lambda\sim T}{\Corr(C_\lambda,B_{n,\lambda}}
\]
and in particular there is a $\lambda$ such that
$\Corr(C,B_{n,\rho})\leq \Corr(C_\lambda,B_{n,\lambda})$. Fix any such
$\lambda$ for the rest of the proof.

Note that $C_\lambda$ is a circuit from the class $\MAJ\circ\THR\circ\THR[m,k,\ell,n^3]$. Also note that $B_{n,\lambda}$ is exactly a copy of the function $F$, with some subset of the variables of $F$ being replaced by their negations. By negating some of the variables in $C_\lambda$ if necessary, we can obtain a circuit $C'_\lambda$ from the class $\MAJ\circ\THR\circ\THR[m,k,\ell,n^3]$ such that
\[
\Corr(C_\lambda,B_{n,\lambda}) = \Corr(C'_\lambda, F) \leq 1-\Omega\left(\frac{1}{n^3}\right)
\]
where the latter inequality follows from our reasoning above. By our choice of $\lambda$, we obtain
\[
  \Corr(C,B_{n,\rho})\leq\Corr(C_\lambda,B_{n,\lambda})\leq 1-\Omega({1}/{n^3}),
\]
which concludes the proof of the lemma.
\end{proof}

\begin{proof}[Proof of \expref{Theorem}{thm:KW}]
The proof follows closely the ideas of~\cite[Theorem 1.3]{KW}. The main new ingredient is the use of \expref{Lemma}{lem:thresh-gate-2}, which replaces the use of the random restriction lemmas from~\cite{KW}.

Fix $m,k\in \mathbb{N}$ such that $\omega((\log n)^2) \leq m\leq  n^{o(1)}$, $2^{k^2} \leq n^{o(1)}$ and $m^k = n^{\omega(1)}$. (For example, we could set $m = 2^{\lceil (\log n)^{5/6} \rceil}$ and $k = \lceil (\log n)^{1/4}\rceil$.) Let $\ell = n/m^3$.

By our choice of parameters and \expref{Lemma}{lem:KW-count}, the number of functions in $\THR\circ\THR[\log M, k, \ell]$ is $2^{o(n)}$. By \expref{Lemma}{lem:Bn}, we can find an $x$ so that the restricted function $B_{n}(x,\cdot)$ is as guaranteed by \expref{Lemma}{lem:Bn}. We fix any such $x$ for the rest of the proof and consider the restricted function $B_n(x,\cdot)$.

We begin with the gates case. Let $C$ be a $\MAJ\circ \THR\circ \THR$ circuit computing $B_{n}(x,\cdot)$ with at most $s = n^{1.5}/m^{10}r$ gates where $r \leq n^{o(1)}$ is a parameter that will be chosen below.

We will apply a random restriction $\rho = (I,w)\sim \mc{R}_p^n$ where $p = m/n$. We bound the probability of the following bad events:

\begin{enumerate}
\item Event $\mc{E}_1(\rho)$: This is the event that the restriction $\rho$ is not live. For this event to occur, there is some $i\in [\log M]$ such that each of the bits of $y$ that $z_i$ depends on must be fixed to a constant by $\rho$. For each $i$, the probability that this happens is at most $(1-m/n)^{n/\log M}\leq 1/2^{\Omega(m/\log M)}\leq 1/n^{\omega(1)}$. By a union bound over $i\in [\log M]$, the probability of $\mc{E}_1$ is at most $1/n^{\omega(1)}$.

\item Event $\mc{E}_2(\rho)$: By \expref{Lemma}{lem:thresh-gate-2}, the probability that any threshold gate is not $(t,k)$-imbalanced for $t = \log n$ is at most $p^{1/2-\delta}$ where $\delta\rightarrow 0$ as $n\rightarrow\infty$. As $p\geq 1/n$, we may bound this probability by $\sqrt{p}\cdot n^{\delta}$. Set $r = n^{\delta}$. 

The expected number of gates that are not $(t,k)$-imbalanced is at most $s\cdot\sqrt{p}\cdot r\leq n/m^9$. The event $\mc{E}_2$ is that the number of gates that are not $(t,k)$-imbalanced is at least $2n/m^9$. By Markov's inequality, the probability of this event is at most $1/2$.
\end{enumerate}

By a union bound, there exists a restriction $\rho$ such that neither of the events $\mc{E}_1$ nor $\mc{E}_2$ occurs. Fix such a restriction $\rho$ and consider the circuit $C|_\rho$. The number of gates that are not $(t,k)$-imbalanced in $C|_\rho$ is bounded by $2n/m^9\leq \ell$. 

\begin{sloppypar}
  By \expref{Fact}{fact:tkimbal}, any $(t,k)$-imbalanced threshold gate $\phi$ in $C|_\rho$ can be approximated to error $\exp(-\Omega(t^2)) = 1/n^{\omega(1)}$ by a threshold gate $\phi'$ of fan-in at most $k$. By a union bound, the circuit $C'_\rho$ obtained by replacing each $(t,k)$-imbalanced $\phi$ by a $\phi'$ as defined above satisfies
\[
\prob{y'\in \{-1,1\}^{|I|}}{C|_\rho(y')\neq C'_\rho(y')} \leq s\cdot \frac{1}{n^{\omega(1)}}\leq \frac{1}{n^{\omega(1)}}.
\]

In particular, since $C|_\rho$ computes the function $B':= B_{n}(x,\cdot)|_\rho$ we have
\[
\prob{y'\in \{-1,1\}^{|I|}}{B'(y')\neq C'_\rho(y')} \leq \frac{1}{n^{\omega(1)}}.
\]
and hence $\Corr(B',C'_\rho) \geq 1-({1}/{n^{\omega(1)}})$ which contradicts \expref{Lemma}{lem:Bn}. This proves the theorem for the case of circuits with at most $n^{1.5-o(1)}$ gates.
\end{sloppypar}

We now consider the wires case, which is very similar. Let $C$ now be a $\MAJ\circ \THR\circ \THR$ circuit computing $B_{n}(x,\cdot)$ with at most $s = n^{2.5}/m^{10}r$ \emph{wires} where $r$ is as defined above. We will say that a gate on the bottom layer is \emph{large} if its fan-in is at least $n/m^2$ and \emph{small} otherwise. We use $L$ to denote the set of large gates and $S$ to denote the set of small gates. Let $s_L := |L|$. Note that $s_L\leq s/(n/m^2) = n^{1.5}/m^8 r$. 

As in the gates case, we apply a random restriction $\rho\sim \mc{R}_p^n$ to the circuit $C$. We now consider the following bad events.

\begin{enumerate}
\item Event $\mc{E}_1(\rho)$: This event is as defined above. As noted above, the probability of this event is bounded by $1/n^{\omega(1)}$.
\item Event $\mc{E}_2'(\rho)$: This is the event that some gate in $S$ has fan-in at least $k$ \emph{after} the restriction. Fix any threshold gate $\phi\in S$. Since $\phi$ has fan-in at most $n/m^2$, the probability that it has fan-in at least $k$ after the restriction is bounded by
\[
\binom{n/m^2}{k} \cdot \left(\frac{m}{n}\right)^k \leq \left(\frac{enm}{knm^2}\right)^k \leq \frac{1}{m^{k}} = \frac{1}{n^{\omega(1)}}
\]
where the last equality follows from our choice of parameters. Since the circuit $C$ has at most $s$ wires, it also has at most $s$ gates and hence by a union bound over the gates in $S$, the probability that $\mc{E}_1'(\rho)$ occurs is at most $n^{2.5}/n^{\omega(1)}\leq 1/n^{\omega(1)}$.

\item Event $\mc{E}_3'(\rho)$: This is the event that the number of gates that are not $(t,k)$-imbalanced is at most $2n/m^{7}$. Similar to the gates case, we can argue that the expected number of gates in $L$ that are not $(t,k)$-imbalanced is at most $s_L\sqrt{p}\cdot r\leq n/m^{7}$. Thus, by Markov's inequality, we see that the probability that $\mc{E}'_3(\rho)$ occurs is at most $1/2$.
\end{enumerate}

By a union bound, we see that there is a $\rho$ such that none of $\mc{E}_1(\rho),\mc{E}_2'(\rho),$ or $\mc{E}_3'(\rho)$ occur. {}From here on, we can follow the proof of the gates case verbatim to derive a contradiction to \expref{Lemma}{lem:Bn}. This proves the theorem in the wires case.

\end{proof}

\paragraph*{Acknowledgements.} A major portion of this work was carried out while visiting the Simons Institute for Theory of Computing at Berkeley. We thank the Institute for their support. 
The first and second authors conducted the work when the authors were at University of Edinburgh, and were supported by the European Research Council under the European Union's
Seventh Framework Programme (FP7/2007-2013)/ ERC Grant Agreement no. 615075.
The third author would like to thank the School of Informatics at the University of Edinburgh for hosting him in March-April 2015, when this research project was initiated. 

We would also like to thank the anonymous CCC 2016 reviewers for their feedback and suggestions and also for pointing out to us the
article by  %
Janson~\cite{Janson}. We also thank anonymous reviewers for
\emph{Theory of Computing} and Prahladh Harsha for their
detailed feedback on the write-up and corrections.

\bibliographystyle{tocplain}   %
\bibliography{v014a009}

\begin{tocauthors}
\begin{tocinfo}[chen]
Ruiwen Chen\\
rwchenmail\tocat{}gmail\tocdot{}com \\
\end{tocinfo}
\begin{tocinfo}[santhanam]
Rahul Santhanam\\
Department of Computer Science\\
University of Oxford\\
Oxford, U.K.\\   %
rahul.santhanam\tocat{}cs\tocdot{}ox\tocdot{}ac\tocdot{}uk \\
\end{tocinfo}
\begin{tocinfo}[srinivasan]
  Srikanth Srinivasan\\
  Department of Mathematics\\
  IIT Bombay, Mumbai, India\\
  srikanth\tocat{}math\tocdot{}iitb\tocdot{}ac\tocdot{}in \\
\end{tocinfo}
\end{tocauthors}

\begin{tocaboutauthors}

\begin{tocabout}[chen]
\textsc{Ruiwen Chen} graduated with 
a \phd\  %
in Computer Science %
from Simon Fraser University in 2014,
advised by Valentine Kabanets, and then worked as a
postdoc   %
with Rahul Santhanam until 2017.
\end{tocabout}

\begin{tocabout}[santhanam]
\textsc{Rahul Santhanam}
got his \phd\
in Computer Science %
from the University of Chicago in 2005,
advised by Lance Fortnow and Janos Simon. After stints at Simon Fraser
University, the University of Toronto and the University of Edinburgh,
he is now a Professor of Computer Science at the University of Oxford. His
main research interest is in complexity lower bounds.
\end{tocabout}

\begin{tocabout}[srinivasan]
\textsc{Srikanth Srinivasan} 
got his undergraduate degree from the
\href{http://www.iitm.ac.in}{Indian Institute of Technology Madras}.
He received  %
his \phd\ from
\href{http://www.imsc.res.in}{The Institute of Mathematical Sciences}
in Chennai, India %
in 2011, where his advisor was
\href{http://www.imsc.res.in/~arvind}{V. Arvind}. 
His research interests include circuit complexity, derandomization, and
related areas of mathematics. 
\end{tocabout}
\end{tocaboutauthors}

\end{document}